\documentclass[12pt,reqno]{amsart}
\pdfoutput=1 
\usepackage{latexsym}
\usepackage{amsfonts}
\usepackage{amsthm,amssymb,mathtools,amsfonts,amsmath}
\usepackage{color}
\usepackage{graphicx}
\usepackage{url}
\usepackage{enumerate}
\usepackage{tikz-cd}
\usepackage{hyperref}
\usepackage[foot]{amsaddr}
\usepackage{stackrel}
\usepackage{cancel}
\usepackage{braket}
\usepackage[a4paper, margin=2cm]{geometry}
\usepackage{enumitem}
\usepackage{booktabs}
\usepackage{caption} 
\usepackage{arydshln}
\DeclareMathOperator{\tr}{tr}
\usepackage{comment}

\usepackage{nicematrix}
\usepackage{mathrsfs}  

\usepackage{ytableau}

\ytableausetup{
  mathmode,
  boxsize=0.9em,
  centertableaux
}

\usepackage{tikz}
\usetikzlibrary{arrows.meta}

\usepackage{thmtools}
\declaretheorem[parent=section,  Refname={Thm.}, refname={Thm.}]{theorem}
\declaretheorem[parent=section,numbered=no,name=Theorem]{theorem*}
\declaretheorem[numberlike=theorem, name=Proposition,  Refname={Prop.}, refname={Prop.}]{proposition}
\declaretheorem[name=Proposition, numbered=no]{proposition*}
\declaretheorem[numberlike=theorem, name=Remark,  Refname={Rem.}, refname={Rem.}]{remark}
\declaretheorem[numberlike=theorem, name=Example, Refname={Ex.}, refname={Ex.}]{example}
\declaretheorem[numberlike=theorem, name=Observation, Refname={Obs.}, refname={Obs.}]{observation}
\declaretheorem[numberlike=theorem, name=Lemma, Refname={Lem.}, refname={Lem.}]{lemma}
\declaretheorem[numberlike=theorem, name=Definition, Refname={Def.}, refname={Def.}]{definition}
\declaretheorem[numberlike=theorem, name=Corollary, Refname={Cor.}, refname={Cor.}]{corollary}

\newcommand*{\ketbra}[2]{\lvert #1 \rangle \langle #2 \rvert}

\DeclareMathOperator{\id}{id}

\newcommand*{\lrbracket}[1]{\left(#1\right)}

\newcommand*{\HS}{\mathcal{H}}
\newcommand*{\ZZ}{\mathcal{Z}}

\newcommand*{\Al}{\mathcal{A}}
\newcommand*{\Bl}{\mathcal{B}}
\newcommand*{\Xl}{\mathcal{X}}

\newcommand*{\revisioncolor}{blue}

\title{\large Fixed points in de Finetti hierarchies}

\author{Gereon Koßmann$^{1, *}$}
\email{kossmann@physik.rwth-aachen.de}
\address{$^1$ Institute for Quantum Information, RWTH Aachen University, Aachen, Germany}
\author{Julius A. Zeiss$^{1, *}$}
\email{jzeiss@physik.rwth-aachen.de}
\thanks{$^*$ The authors contributed equally to this work.}

\begin{document}

\begin{abstract}
    De Finetti theorems convert permutation symmetry into approximate mixtures of product states and thereby justify a wide range of reductions in classical and quantum statistics. In this work we study de Finetti hierarchies in which the feasible states are additionally constrained to be fixed points of quantum channels, a condition that subsumes invariance under arbitrary compact symmetry groups. Combining the mean-ergodic theorem with the structure theory of conditional expectations, we prove a tight bound on the entanglement-assisted classical capacity of the dual of a conditional expectation, block-wise distortion bounds for informationally complete measurements adapted to fixed point algebras, and an exact type-based refinement of the chain rule for permutation-invariant states. From these tools we derive several de Finetti theorems: a double-sided extension theorem with $O\left(\sqrt{\log n}/n\right)$ convergence, an interpolation theorem whose dimension dependence is governed solely by the block structure of the fixed point algebras and which recovers the dimension-independent classical behavior for maximal tori, and a Bose-symmetric variant. Exploiting Schur-Weyl duality and Gelfand-Tsetlin bases, we further show that the rounding scheme producing certifiably good separable inner approximations for (constrained) separability problems can be implemented in time polynomial in $1/\epsilon$ for fixed local dimensions, complementing the known efficient outer hierarchies. Applications to bilinear optimization under symmetries and to approximate quantum error correction are discussed.
\end{abstract}

\maketitle
\tableofcontents
\section{Introduction}

Permutation invariant statistical processes tend to be mixtures of independent and identically distributed ones.
By translating permutation symmetries into mixtures of products, de Finetti theorems justify a wide range of reductions in statistical theories. The core idea of a de Finetti theorem, named after its very first investigator \cite{deFinetti1931, alvarezmelis2015translationthecharacteristicfunction}, for a permutation invariant random variable $X_1 \times X_2 \times \ldots \times X_n$ is to show that local statistical dependencies break down if the length $n$ tends to infinity, so that the joint distribution can be replaced by a mixture of distributions of independent and identically distributed random variables. Subsequent refinements for finite-level approximations in $n$ have been shown in \cite{Diaconis1977,Diaconis1980} and a first version with non-commutative statistics in \cite{Raggio89}. 

In quantum information theory, de Finetti theorems and the closely related structure of the symmetric subspace \cite{harrow2013churchsymmetricsubspace} have become a standard tool with applications across the field: they underpin the approximation hierarchies for separability problems of Doherty, Parrilo and Spedalieri \cite{Doherty_2004}, the security analysis of quantum cryptographic protocols via exponential and one-and-a-half quantum de Finetti theorems \cite{Knig2005, Christandl2007}, tomography and channel estimation, and, more recently, converging hierarchies for bilinear optimization problems such as approximate quantum error correction \cite{Berta2021, kossmann2025_aqec}. Beyond quantum theory, de Finetti-type statements have been established for general probabilistic theories and abstract convex bodies \cite{Plvala2025, Aubrun2024, zeiss2026finitefinetticonvexbodies}, which delineates how much of the phenomenon is due to convex geometry alone.

Structurally, one distinguishes infinite de Finetti theorems, which characterize the states admitting extensions of every length $n$ as exact mixtures of product states \cite{deFinetti1931, Raggio89}, from finite (approximate) de Finetti theorems, which quantify the distance of an $n$-extendible state to the set of such mixtures. More precisely, in the finite setting one considers a state on $k$ subsystems that extends to a permutation invariant state on $n \geq k$ subsystems, and bounds its trace distance to mixtures of $k$-fold product states; throughout, the parameter $k$ thus counts the retained subsystems on which the approximation is stated, while $n$ counts the length of the assumed extension, and the error vanishes as $n\to\infty$ for fixed $k$. On the classical side, the finite theory goes back to Diaconis and Freedman \cite{Diaconis1977, Diaconis1980}, whose bounds are notably independent of the alphabet size. In the quantum setting, three families of proof techniques have emerged. Group-theoretic proofs based on the representation theory of the symmetric group and the symmetric subspace \cite{Knig2005, Christandl2007, harrow2013churchsymmetricsubspace} yield the best known dependence $O\left(d^2 k /n\right)$ for $k$-copy statements in local dimension $d$, but the approximating states are obtained non-constructively and interact poorly with additional constraints. Information-theoretic proofs based on the self-decoupling of the mutual information under measurements \cite{Brando2016, Brandao_2017, jee2020quasi, Berta2021} typically give an $O\left(\sqrt{\log\left(d\right)/n}\right)$ rate, but produce explicit, certifiably good measurement-based candidate states that remain feasible for linear and semidefinite constraints; this compatibility is what makes them the method of choice for converging semidefinite programming hierarchies for constrained separability problems \cite{Berta2021, kossmann2025_aqec, zeiss2025approximatingfixedsizequantum}. Finally, convex-geometric arguments \cite{Aubrun2024, zeiss2026finitefinetticonvexbodies} extend the finite theory to cones and convex bodies.  

The first two quantum proof routes therefore realize complementary virtues at quantifiable costs. The group-theoretic route attains the linear-in-$1/n$ rate $O\left(d^2 k/n\right)$, but it is non-constructive and interacts poorly with additional constraints; the information-theoretic route is constructive and constraint-compatible, but it converges only as $O\left(\sqrt{\log\left(d\right)/n}\right)$ and additionally incurs dimension-dependent prefactors through the distortion constants of the informationally complete measurements involved. A central aim of the present work is to close this gap between the constrained (information-theoretic) and the unconstrained (group-theoretic) de Finetti theorems. We ask: \emph{are there regimes in which one can benefit from the information-theoretic proof route --- namely, retain its compatibility with constraints and its certifiably good rounding schemes --- while recovering the convergence rates provided by the unconstrained group-theoretic de Finetti theorems?} Our answer is affirmative up to logarithmic factors in $n$: for double-sided and Bose-symmetric extensions, the fixed point techniques developed in this work upgrade the information-theoretic rate to $O\left(\sqrt{\log n}/n\right)$ (\autoref{thm:double_extended_de_Finetti}, \autoref{thm:bose_de_Finetti}; see \autoref{example_double_extension} for a direct comparison with \cite{Christandl2007}), while the approximating states remain explicit, measurement-based, feasible under the imposed constraints, and efficiently computable (\autoref{thm:efficient_rounding}).

The starting point of this article is the observation that in the information-theoretic proof route the permutation symmetry itself is used only in a single, final step, while all remaining steps hold for arbitrary states. This suggests treating symmetries not as a global assumption but as one instance of a much more flexible class of constraints: fixed point equations $\Phi\left(\rho\right) = \rho$ for quantum channels $\Phi$, which include invariance under arbitrary compact groups via the Haar twirl, and whose fixed point sets carry the structure of finite-dimensional $C^*$-algebras. Systematically replacing symmetry arguments by fixed point arguments, we obtain sharpened mutual information bounds, refined chain rules, and adapted measurement constructions, which in turn yield new de Finetti theorems and, on the algorithmic side, an efficient implementation of the associated rounding schemes. A detailed account of the challenges and of our results is given in the next section.

\section{Challenges and results}\label{sec:challenges_and_results}

Constrained de Finetti theorems, as pioneered in \cite{Berta2021}, approximate optimization problems over constrained separable states by converging hierarchies of semidefinite programs. Two inefficiencies of the current state of the art motivate this work. First, on the analytic side, the known convergence guarantees of the measurement-based information-theoretic proof technique scale as $O\left(1/\sqrt{n}\right)$ and carry a dimension dependence through the distortion constants of informationally complete measurements, even in situations where additional structure --- for instance classicality of one subsystem, Bose symmetry, or invariance under a local symmetry group --- is known to allow for much better behavior \cite{Diaconis1977, Christandl2007}. Second, on the computational side, the outer relaxations can be evaluated in time polynomial in the hierarchy level $n$ for fixed local dimensions \cite{Chee2025, kossmann2025_aqec, zeiss2025approximatingfixedsizequantum}, whereas the certifying inner candidates have so far only been represented in an exponentially large computational basis.

Our guiding principle for both problems is to identify symmetries as fixed points of quantum channels: a state is invariant under a compact group $\mathcal{V}$ of unitaries if and only if it is a fixed point of the corresponding Haar twirl, and, more generally, the fixed point set of any channel with a full-rank invariant state is the range of a conditional expectation onto a finite-dimensional $C^*$-algebra (cf.\ \autoref{thm:ergodic_projection_condexp}). This algebraic normal form makes symmetries, classicality constraints and genuinely quantum constraints amenable to one uniform treatment; conditional expectations are, in particular, idempotent channels, whose structural and approximation theory has recently attracted independent interest \cite{delsol2025emulationcapacityidempotentchannels, kitaev2025almostidempotentquantumchannelsapproximate}.

\subsection{Results}

Our contributions can be grouped as follows.
\begin{enumerate}
     \item \emph{Capacity bounds for conditional expectations.} In \autoref{thm:upper_bound_entanglement_assisted_classical_capacity} we show that the entanglement-assisted classical capacity of the dual $E^*$ of a conditional expectation onto a fixed point algebra with block dimensions $\left(d_1,\ldots,d_k\right)$ is tightly bounded by $\log\left(\sum_{i=1}^k d_i^2\right)$; the capacity formula for this channel class is essentially known from the direct sum constructions of \cite[Prop.~1]{Fukuda2007} (cf.\ also \cite[Prop.~5]{Gao2018}), and the novelty lies in its identification, via the mean-ergodic theorem (\autoref{lem:fixed points_equal_ergodic_projection}), with the maximal mutual information of arbitrary bipartite states subject to a fixed point constraint, together with a classical-side-information variant (\autoref{prop:classical_capacity_conditional_expectation}) adapted to the measured settings of our de Finetti arguments. Applied to Haar twirls (\autoref{cor:entanglement_assisted_classcial_capacity_group_averages}), this bounds the mutual information of any invariant state by the multiplicities of the group action and, for the symmetric group, replaces the trivial bound $2n\log d$ by the polylogarithmic bound $\left(d^2-1\right)\log\left(n+d\right)$.
    \item \emph{Fixed-point-adapted distortion bounds and chain rules.} \autoref{prop:distortion_conditional_expectation} constructs, for every channel with a full-rank fixed point, an informationally complete measurement whose distortion constant is governed by the largest block of the fixed point algebra rather than by the ambient dimension. \autoref{prop:classification_of_types} and \autoref{prop:chain_rule} show that for permutation-invariant states the conditioning registers in the self-decoupling chain rule can be compressed exactly to their types, reducing the number of distinct summands from exponential to polynomial in the number of measured systems.
    \item \emph{De Finetti theorems under fixed point constraints.} Combining these tools we prove a double-sided de Finetti theorem with $O\left(\sqrt{\log n}/n\right)$ convergence (\autoref{thm:double_extended_de_Finetti}), an interpolation de Finetti theorem in which the dimension dependence is carried entirely by the block data of the local fixed point algebras and which recovers dimension-independent behavior in the classical (maximal torus) case (\autoref{thm:interpolation_deFinetti}), a $k$-copy version with side information (\autoref{thm:proper_deFinetti}), and a Bose-symmetric double-sided version (\autoref{thm:bose_de_Finetti}).
    \item \emph{An efficient inner sequence for $\operatorname{SEP}$ and $\operatorname{cSEP}$.} In \autoref{sec:efficient_inner_sequence} we show that the de Finetti-based rounding scheme can be executed in time polynomial in $n$ for fixed local dimensions: the type-coarse-grained measurements, all required partial traces and the resulting separable candidates are computed directly in the Schur-Weyl block picture (\autoref{thm:efficient_rounding}), including Bose-symmetric and constrained variants (\autoref{cor:bose_efficient_rounding}, \autoref{cor:constrained_efficient_rounding}). Together with the efficient outer relaxations of \cite{kossmann2025_aqec, zeiss2025approximatingfixedsizequantum}, this yields additive $\epsilon$-error algorithms with matching inner certificates in time $\operatorname{poly}\left(1/\epsilon\right)$.
    \item \emph{Applications.} In \autoref{sec:applications} we derive convergence rates for symmetry-reduced hierarchies for bilinear optimization problems and give a Bose-symmetric hierarchy for approximate quantum error correction that avoids the Kronecker-coefficient couplings reported as the main implementation obstacle in \cite{kossmann2025_aqec}.
\end{enumerate}

\subsection{Previous work}

The literature on de Finetti theorems can be organized along a few axes. Concerning the presence of \emph{side information}, the classical statements \cite{Diaconis1977, Diaconis1980} and the symmetric-subspace-based quantum statements \cite{Knig2005, Christandl2007} concern the extended systems alone, whereas the information-theoretic route \cite{Brando2016, Brandao_2017, Berta2021, jee2020quasi} naturally accommodates an arbitrary reference system $A$, which is essential for optimization applications; our results belong to the latter family and quantify how constraints on $A$ improve the bounds. Concerning \emph{infinite versus finite} statements, the infinite theorems \cite{deFinetti1931, Raggio89} give exact representations, while the finite theorems trade exactness for explicit rates; all results of this paper are finite and quantitative. Concerning \emph{proof methods}, group-theoretic arguments \cite{Knig2005, Christandl2007, harrow2013churchsymmetricsubspace} achieve the best known rates but non-constructively, information-theoretic arguments \cite{Brando2016, Brandao_2017, Berta2021} are constructive and constraint-compatible, and convex-geometric arguments \cite{Aubrun2024, zeiss2026finitefinetticonvexbodies} isolate the role of the underlying state spaces; we combine the second route with representation theory to import some of the quantitative advantages of the first. Concerning \emph{constraints}, converging hierarchies for constrained separability problems were established in \cite{Berta2021} and developed into a symmetry-reduction framework in \cite{Chee2025, kossmann2025_aqec, zeiss2025approximatingfixedsizequantum}; the present work extends the constraint class to arbitrary fixed point conditions and supplies the missing efficient inner sequence. Finally, concerning \emph{quantum versus general probabilistic theories}, finite de Finetti theorems for cones and convex bodies \cite{Plvala2025, Aubrun2024, zeiss2026finitefinetticonvexbodies} show that a dimension dependence is unavoidable in general; our interpolation theorem locates the quantum problem on the spectrum between the dimension-free classical case and the fully quantum case in terms of the block structure of the constraint algebras.

\section{Notation and preliminaries}\label{sec:notation_preliminaries}

Throughout this article we use the following notation and conventions. Operator-algebraic preliminaries on finite-dimensional $C^*$-algebras and conditional expectations are collected in \autoref{subsec:definitions_notations}, a recap on fixed points of quantum channels is given in the appendix (cf.\ \autoref{thm:ergodic_projection_condexp}), and the representation-theoretic background is deferred to \autoref{sec:Young_diagrams_tableaux}, \autoref{sec:weyls_dimension_formula} and \autoref{sec:subgroup_adapted_basis}.

\paragraph{\textbf{General conventions.}}
We write $\mathbb{N} \coloneqq \lbrace 1,2,\ldots\rbrace$ and $\mathbb{N}_0 \coloneqq \mathbb{N}\cup \lbrace 0\rbrace$, and for $n\in\mathbb{N}$ we abbreviate $\left[n\right]\coloneqq\lbrace 1,\ldots, n\rbrace$. The cardinality of a finite set $\mathcal{Z}$ is denoted by $\lvert \mathcal{Z}\rvert$ and $\operatorname{conv}$ denotes the convex hull. Throughout, $\log$ denotes the logarithm to base $2$ and $\ln$ the natural logarithm. We use the Landau symbol $O\left(\cdot\right)$ and write $\operatorname{poly}\left(n\right)$ for quantities bounded by a polynomial in $n$. The Kronecker delta is denoted by $\delta_{ij}$, an (isometric) embedding by $\hookrightarrow$, and $\cong$ denotes an isomorphism, where the category (Hilbert spaces, groups, algebras or representations) is clear from the context; a group $G$ decorating the symbol, as in $\cong_G$ or $\stackrel{G}{\cong}$, indicates an isomorphism of $G$-representations. For a linear map $F$ we write $\operatorname{ran}\left(F\right)$ for its range and $\ker\left(F\right)$ for its kernel. Finite strings are abbreviated by $z_k^m \coloneqq \left(z_k,z_{k+1},\ldots,z_m\right)$, and the same convention applies to collections of systems, e.g.\ $B_1^n = B_1B_2\cdots B_n$; the objects $z_1^0$ and $B_1^0$ are understood as the empty string and the trivial system, respectively.

\paragraph{\textbf{Hilbert spaces and operators.}}
All Hilbert spaces in this work are finite-dimensional complex Hilbert spaces. Quantum systems are labeled by capital Latin letters, and we write $\mathcal{H}_A$ for the Hilbert space of system $A$ and $d_A \coloneqq \dim \mathcal{H}_A$ for its dimension; composite systems correspond to tensor products, $\mathcal{H}_{AB} \coloneqq \mathcal{H}_A\otimes \mathcal{H}_B$ and $\mathcal{H}_{B_1^n} \coloneqq \mathcal{H}_B^{\otimes n}$. The letter $R$ is reserved for purifying reference systems and $E$ for environments. For vector spaces $V,W$ over a field $\mathbb{K}$ we denote by $\operatorname{Hom}_{\mathbb{K}}\left(V,W\right)$ the space of $\mathbb{K}$-linear maps from $V$ to $W$ and set $\operatorname{End}_{\mathbb{K}}\left(V\right)\coloneqq \operatorname{Hom}_{\mathbb{K}}\left(V,V\right)$. In this article we restrict to $\mathbb{K}=\mathbb{C}$ and simply write $\operatorname{Hom}\left(V,W\right)$ and $\operatorname{End}\left(V\right)$; in particular, $\operatorname{End}\left(\mathbb{C}^d\right)$ is identified with the algebra $M_d\left(\mathbb{C}\right)$ of complex $d\times d$ matrices, and for a Hilbert space $\mathcal{H}$ we write $\mathcal{B}\left(\mathcal{H}\right)=\operatorname{End}\left(\mathcal{H}\right)$ for the algebra of bounded linear operators on $\mathcal{H}$. We use Dirac notation and fix a computational (orthonormal) basis $\lbrace \ket{i}\rbrace_{i\in\left[d\right]}$ of $\mathbb{C}^d$, with product basis $\ket{i_1}\otimes\ket{i_2}\otimes\cdots\otimes\ket{i_n} = \ket{i_1,i_2,\ldots,i_n}$ of $\left(\mathbb{C}^d\right)^{\otimes n}$; rank-one operators are written as $\ketbra{\psi}{\varphi}$. The adjoint of an operator $X$ is denoted by $X^\dagger$, whereas $a^*$ denotes the involution of an abstract $*$-algebra; $X^T$ and $\bar{X}$ denote the transpose and the entry-wise complex conjugate with respect to the computational basis. The identity operator on $\mathcal{H}_A$ is denoted by $1_A$ (also $1_d$ on $\mathbb{C}^d$, or simply $1$), and the same symbol is used for the unit of an abstract algebra; it is to be distinguished from the identity map $\operatorname{id}_A:\mathcal{B}\left(\mathcal{H}_A\right)\to \mathcal{B}\left(\mathcal{H}_A\right)$. The trace is denoted by $\tr$ and the partial trace over a subsystem $A$ by $\tr_A$; we endow $\mathcal{B}\left(\mathcal{H}\right)$ with the Hilbert--Schmidt inner product $\langle X,Y\rangle \coloneqq \tr\left[X^\dagger Y\right]$. We write $\lVert X\rVert_1 \coloneqq \tr\left[\sqrt{X^\dagger X}\right]$ for the Schatten-$1$-norm (trace norm) and $\lVert X\rVert_\infty$ (or simply $\lVert X\rVert$) for the operator norm. For Hermitian operators $X,Y$ the partial order $X\succeq Y$ means that $X-Y$ is positive semidefinite, and $X\succ0$ means that $X$ is positive definite.

\paragraph{\textbf{Quantum states.}}
The set of quantum states (density operators) on $\mathcal{H}$ is
\begin{align}
    \mathcal{S}\left(\mathcal{H}\right) \coloneqq \lbrace \rho \in \mathcal{B}\left(\mathcal{H}\right) \ \vert \ \rho \succeq 0, \ \tr\left[\rho\right] = 1\rbrace.
\end{align}
Subscripts indicate the systems a state acts on, e.g.\ $\rho_{AB_1^n}\in \mathcal{S}\left(\mathcal{H}_A\otimes \mathcal{H}_B^{\otimes n}\right)$, and reduced states are obtained by tracing out the omitted systems, e.g.\ $\rho_A = \tr_{B_1^n}\left[\rho_{AB_1^n}\right]$. Pure states are identified with their rank-one density operators, $\psi = \ketbra{\psi}{\psi}$, and a purification of $\rho_A \in \mathcal{S}\left(\mathcal{H}_A\right)$ is a pure state $\psi_{RA}$ with $\tr_R\left[\psi_{RA}\right] = \rho_A$. The maximally mixed state on $\mathcal{H}$ is $\tau_{\mathcal{H}} \coloneqq 1_{\mathcal{H}}/\dim\left(\mathcal{H}\right)$, abbreviated $\tau$ whenever no confusion can arise (the symbols $\tau_A$ and $\tau_B$ are reserved for the type maps introduced below). The maximally entangled state on $\mathcal{H}_L\otimes \mathcal{H}_L$ is $\Phi_{LL} = \ketbra{\Phi_{LL}}{\Phi_{LL}} \in \mathcal{S}\left(\mathcal{H}_L\otimes \mathcal{H}_L\right)$ with $\ket{\Phi_{LL}}\coloneqq d_L^{-1/2}\sum_{i=1}^{d_L}\ket{i}\otimes \ket{i}$. The fidelity between quantum states $\rho,\sigma \in \mathcal{S}\left(\mathcal{H}\right)$ is defined as $\mathcal{F}\left(\rho,\sigma\right)\coloneqq \lVert \sqrt{\rho}\sqrt{\sigma}\rVert_1^2$ (see e.g.\ \cite{Wilde2016}). For a bipartite system with cut $A : B$, the set of separable states is
\begin{align}
    \operatorname{SEP}\left(A:B\right)\coloneqq \operatorname{conv}\lbrace \rho_A\otimes \sigma_B \ \vert \ \rho_A \in \mathcal{S}\left(\mathcal{H}_A\right), \ \sigma_B\in \mathcal{S}\left(\mathcal{H}_B\right)\rbrace.
\end{align}
More generally, we consider \emph{constrained separability} ($\operatorname{cSEP}$) problems \cite{Berta2021, zeiss2025approximatingfixedsizequantum, kossmann2025_aqec, jee2020quasi}, i.e.\ optimizations of linear functionals over sets of the form\begin{align}\label{eq:def_cSEP_set}    \operatorname{cSEP}_{\mathcal{C}_A,\,\mathcal{C}_B}\left(A:B\right)\coloneqq \operatorname{conv}\lbrace \rho_A\otimes \sigma_B \ \vert \ \rho_A \in \mathcal{C}_A, \ \sigma_B\in \mathcal{C}_B\rbrace \subseteq \operatorname{SEP}\left(A:B\right),\end{align}where $\mathcal{C}_A\subseteq \mathcal{S}\left(\mathcal{H}_A\right)$ and $\mathcal{C}_B\subseteq \mathcal{S}\left(\mathcal{H}_B\right)$ are constraint sets specified by finitely many linear conditions on the product factors. Typical instances are fixed point constraints $\Phi\left(\rho_A\right) = \rho_A$ for given channels --- the case studied in this work, cf.\ the set $\Sigma\left(A:B\right)$ introduced in \autoref{sec:deFinetti_theorems_constraints} --- and constraints with fixed marginals $\Theta_{A_L \rightarrow C_{A_L}}\left(\rho_A\right) = W_{C_{A_L}}\otimes \rho_{A_R}$ with respect to a bipartition $\mathcal{H}_A = \mathcal{H}_{A_L}\otimes \mathcal{H}_{A_R}$, as they arise for non-local games with fixed-dimensional entanglement assistance \cite[App.\ B]{zeiss2025approximatingfixedsizequantum} or in approximate quantum error correction \cite{kossmann2025_aqec}. 

\paragraph{\textbf{Channels, duals and fixed points.}}
A linear map $\Phi:\mathcal{B}\left(\mathcal{H}_A\right)\to \mathcal{B}\left(\mathcal{H}_{A^\prime}\right)$ is written $\Phi_{A\to A^\prime}$ whenever input and output systems are to be emphasized. A quantum channel in the Schr\"odinger picture is a completely positive and trace-preserving (CPTP) map, whereas in the Heisenberg picture channels are unital completely positive maps. The Hilbert--Schmidt dual (adjoint) $\Phi^*$ of a linear map $\Phi$ is determined by $\tr\left[\Phi^*\left(\rho\right)a\right] = \tr\left[\rho\, \Phi\left(a\right)\right]$ for all $\rho,a\in \mathcal{B}\left(\mathcal{H}\right)$; it interchanges the two pictures. Composition of maps is denoted by $\circ$, with iterates $\Phi^{\circ t}$, $t \in \mathbb{N}_0$. The set of fixed points of $\Phi$ is
\begin{align}
    \operatorname{Fix}\left(\Phi\right)\coloneqq \ker\left(\Phi - \operatorname{id}\right) = \lbrace X \in \mathcal{B}\left(\mathcal{H}\right) \ \vert \ \Phi\left(X\right) = X\rbrace.
\end{align}
We abbreviate the data-processing inequality by DPI. Conditional expectations onto unital $*$-subalgebras, which arise as ergodic projections of channels admitting a full-rank fixed point, are introduced in \autoref{subsec:definitions_notations} (see also \autoref{thm:ergodic_projection_condexp}).

\paragraph{\textbf{Classical systems, measurements and post-measurement conventions.}}
Classical systems (registers) are denoted by the capital letters $X,Y,Z$ (and $T$ for the type registers introduced below), with associated finite alphabets written in calligraphic letters, e.g.\ $\mathcal{Y}$ and $\mathcal{Z}$; elements of alphabets, i.e.\ measurement outcomes, are denoted by the corresponding lowercase letters, e.g.\ $y\in\mathcal{Y}$ and $z \in \mathcal{Z}$. The set of probability distributions on a finite set $\mathcal{Z}$, identified with the standard simplex in $\mathbb{R}^{\mathcal{Z}}$, is denoted by $\mathcal{M}_1\left(\mathcal{Z}\right)$. A classical register $Z$ with alphabet $\mathcal{Z}$ is represented by the Hilbert space $\mathcal{H}_Z \coloneqq \mathbb{C}^{\lvert \mathcal{Z}\rvert}$ with a fixed orthonormal basis $\lbrace \ket{z}\rbrace_{z\in\mathcal{Z}}$, and a classical-quantum (cq) state on $AZ$ is of the form
\begin{align}
    \rho_{AZ} = \sum_{z \in \mathcal{Z}} p\left(z\right)\, \rho_{A\vert z}\otimes \ketbra{z}{z},
\end{align}
where $p \in \mathcal{M}_1\left(\mathcal{Z}\right)$ is the outcome distribution and $\rho_{A \vert z} \in \mathcal{S}\left(\mathcal{H}_A\right)$ denotes the conditional state given the outcome $z$; analogously we write $p\left(z_1^m\right)$ and $\rho_{A\vert z_1^m}$ for outcome strings $z_1^m \in \mathcal{Z}^m$. A positive operator-valued measure (POVM) on $\mathcal{H}_B$ with outcomes in $\mathcal{Z}$ is a family $\lbrace M_{B\vert z}\rbrace_{z\in \mathcal{Z}}\subseteq \mathcal{B}\left(\mathcal{H}_B\right)$ of positive semidefinite operators with $\sum_{z\in\mathcal{Z}} M_{B\vert z} = 1_B$. The associated measurement (channel) is the map
\begin{align}
    \mathcal{M}_B : \mathcal{S}\left(\mathcal{H}_B\right)\to \mathcal{M}_1\left(\mathcal{Z}\right), \quad \rho \mapsto \left(\tr\left[M_{B\vert z}\, \rho\right]\right)_{z \in \mathcal{Z}},
\end{align}
which we freely identify with the quantum-classical channel $\rho \mapsto \sum_{z\in \mathcal{Z}} \tr\left[M_{B\vert z}\,\rho\right] \ketbra{z}{z}$. A measurement is called informationally complete (IC) if the map $\mathcal{M}_B$, extended linearly to $\mathcal{B}\left(\mathcal{H}_B\right)$, is injective, and minimal informationally complete (MIC) if in addition $\lvert \mathcal{Z}\rvert = d_B^2$; MIC measurements exist in every finite dimension \cite{DAriano2004}. Following \cite{jee2020quasi,Brandao_2017,Lami_2018}, a constant $c\left(\mathcal{M}_B\right)>0$ is called a distortion bound for $\mathcal{M}_B$ if
\begin{align}
    \lVert X_{AB}\rVert_1 \leq c\left(\mathcal{M}_B\right)\, \lVert \left(\operatorname{id}_A\otimes \mathcal{M}_B\right)\left(X_{AB}\right)\rVert_1
\end{align}
for every auxiliary system $A$ and every traceless Hermitian operator $X_{AB}\in \mathcal{B}\left(\mathcal{H}_A\otimes \mathcal{H}_B\right)$. Throughout, Alice holds the $A$-systems and Bob the $B$-systems; the post-measurement registers on Alice's side are denoted by $Y$ and those on Bob's side by $Z$. In a hierarchy of $n$ subsystems, the integers $m_A$ and $m_B$ with $0\leq m_A,m_B\leq n-1$ denote the number of subsystems measured on Alice's, respectively Bob's, side.

\paragraph{\textbf{Types.}}
For $m \in \mathbb{N}_0$ and $r\in \mathbb{N}$ the type set is
\begin{align}
    \mathcal{T}_{m,r}\coloneqq \Big\lbrace t = \left(t_1,\ldots,t_r\right)\in \mathbb{N}_0^{r} \ \Big\vert \ \sum_{i=1}^{r} t_i = m\Big\rbrace, \qquad \lvert \mathcal{T}_{m,r}\rvert = \binom{m+r-1}{r-1} \leq \left(m+1\right)^{r-1},
\end{align}
i.e.\ the set of weak compositions of $m$ into $r$ parts (see e.g.\ \cite[Sec.~11.1]{Cover2006}). For a finite alphabet $\mathcal{Z}$ we identify $\mathcal{T}_{m,\lvert\mathcal{Z}\rvert}$ with $\lbrace t \in \mathbb{N}_0^{\mathcal{Z}} \ \vert \ \sum_{z\in\mathcal{Z}}t_z = m\rbrace$ and define the type map
\begin{align}
    \tau: \mathcal{Z}^m \to \mathcal{T}_{m,\lvert \mathcal{Z}\rvert}, \quad \tau\left(z_1^m\right)_z \coloneqq \lvert \lbrace k \in \left[m\right] \ \vert \ z_k = z\rbrace\rvert,
\end{align}
which counts how often each outcome appears in a string. The type class of $t\in \mathcal{T}_{m,\lvert \mathcal{Z}\rvert}$ is $\mathcal{C}\left(t\right)\coloneqq \tau^{-1}\left(t\right)$ with $\lvert \mathcal{C}\left(t\right)\rvert = m!/\prod_{z\in\mathcal{Z}}t_z!$; two strings have the same type if and only if they are related by a permutation, such that types enumerate the $S_m$-orbits of $\mathcal{Z}^m$. We write $\tau_A$ and $\tau_B$ for the type maps applied to Alice's and Bob's outcome strings, and $T_{m_A}$ (also $T_{m_A,\lvert\mathcal{Y}\rvert}$, if the alphabet is to be emphasized) and $T_{m_B}$ for the classical registers carrying the types $\tau_A\left(Y_2^{m_A+1}\right)$ and $\tau_B\left(Z_2^{m_B+1}\right)$. If the alphabet is clear from the context we abbreviate $\mathcal{T}_{m_A} \equiv \mathcal{T}_{m_A,\lvert \mathcal{Y}\rvert}$ and $\mathcal{T}_{m_B}\equiv \mathcal{T}_{m_B, \lvert \mathcal{Z}\rvert}$.

\paragraph{\textbf{Entropic quantities.}}
The Shannon entropy of $p\in\mathcal{M}_1\left(\mathcal{Z}\right)$ is $H\left(p\right)\coloneqq -\sum_{z\in\mathcal{Z}}p\left(z\right)\log p\left(z\right)$ and the von Neumann entropy of $\rho \in \mathcal{S}\left(\mathcal{H}\right)$ is $H\left(\rho\right)\coloneqq -\tr\left[\rho \log \rho\right]$; for multipartite states we write $H\left(A\right)_\rho \coloneqq H\left(\rho_A\right)$ and $H\left(A\vert B\right)_\rho \coloneqq H\left(AB\right)_\rho - H\left(B\right)_\rho$. The Umegaki relative entropy of $\rho\in\mathcal{S}\left(\mathcal{H}\right)$ with respect to a positive semidefinite operator $\sigma$ is
\begin{align}
    D\left(\rho \Vert \sigma\right)\coloneqq \begin{cases} \tr\left[\rho\left(\log \rho - \log\sigma\right)\right] & \text{if } \operatorname{supp}\left(\rho\right)\subseteq \operatorname{supp}\left(\sigma\right),\\ +\infty & \text{otherwise},\end{cases}
\end{align}
where $\operatorname{supp}$ denotes the support. The mutual information and the conditional mutual information are defined as
\begin{align}
    I\left(A:B\right)_\rho \coloneqq D\left(\rho_{AB}\Vert \rho_A\otimes \rho_B\right), \qquad I\left(A : B \vert C\right)_\rho \coloneqq H\left(A\vert C\right)_\rho + H\left(B \vert C\right)_\rho - H\left(AB\vert C\right)_\rho,
\end{align}
and satisfy $I\left(A:B\right)_\rho = H\left(A\right)_\rho + H\left(B\right)_\rho - H\left(AB\right)_\rho$ (see e.g.\ \cite{Wilde2016}). A subscript indicates the state in which an entropic quantity is evaluated and is omitted if it is clear from the context.

\paragraph{\textbf{Groups, algebras and symmetries.}}
We denote by $S_n$ the symmetric group on $\left[n\right]$, by $\operatorname{GL}\left(V\right)$ the general linear group of a vector space $V$, by $\mathcal{U}\left(\mathcal{H}\right)$ the group of unitary operators on $\mathcal{H}$, and abbreviate $\mathcal{U}\left(d\right)\coloneqq \mathcal{U}\left(\mathbb{C}^d\right)$. Subgroups $\mathcal{V}\subseteq \mathcal{U}\left(\mathcal{H}\right)$ are assumed to be closed and hence compact, and $\mu_{\mathcal{V}}$ denotes the unique normalized Haar measure on $\mathcal{V}$ \cite[Chap.~9]{Cohn2013}. The group algebra of a finite group $G$ is denoted by $\mathbb{C}\left[G\right]$. On $\mathcal{H}^{\otimes n}$ with $d = \dim \mathcal{H}$ we consider throughout the tensor (permutation) representation of $S_n$,
\begin{align}\label{eq:def_permutation_action_Sn_H_otimes_n}
     \psi_n^d\,:\,S_n \to \operatorname{GL}\left( \mathcal{H}^{\otimes n}\right), \quad \sigma \mapsto \left[v_1\otimes v_2\otimes \cdots \otimes v_n \mapsto v_{\sigma^{-1}\left(1\right)}\otimes v_{\sigma^{-1}\left(2\right)}\otimes \cdots \otimes v_{\sigma^{-1}\left(n\right)}\right],
\end{align}
whose image consists of unitary operators; we write $U_{B_1^n}\left(\sigma\right)\coloneqq \psi_n^{d_B}\left(\sigma\right)$ whenever the systems on which a permutation acts are to be emphasized. The permutation matrix algebra $\mathcal{A}_n^d \coloneqq \operatorname{span}\lbrace \psi_n^d\left(\sigma\right) \ \vert \ \sigma \in S_n\rbrace$ is the image of $\mathbb{C}\left[S_n\right]$ under the linear extension of $\psi_n^d$.\footnote{The extension has a non-trivial kernel for $d$ smaller than $n$.} A state $\rho_{AB_1^n}\in \mathcal{S}\left(\mathcal{H}_A \otimes \mathcal{H}_B^{\otimes n}\right)$ is called permutation invariant over the $B$-systems (with respect to $A$) if
\begin{align}
    \left(1_A \otimes U_{B_1^n}\left(\sigma\right)\right)\rho_{AB_1^n}\left(1_A \otimes U_{B_1^n}\left(\sigma\right)\right)^\dagger = \rho_{AB_1^n} \quad \text{for all} \ \sigma \in S_n,
\end{align}
and Bose-symmetric over the $B$-systems (with respect to $A$) if $\left(1_A\otimes U_{B_1^n}\left(\sigma\right)\right)\rho_{AB_1^n} = \rho_{AB_1^n}$ for all $\sigma\in S_n$, i.e.\ if $\rho_{AB_1^n}$ is supported on the symmetric subspace of $\mathcal{H}_B^{\otimes n}$ (see e.g.\ \cite{harrow2013churchsymmetricsubspace, zeiss2025approximatingfixedsizequantum}). A state $\rho_{AB}$ admits a (one-sided, permutation invariant) $n$-extension if there exists a state $\rho_{AB_1^n}$, permutation invariant over the $B$-systems with respect to $A$, such that $\rho_{AB_1} = \rho_{AB}$; it admits a double-sided $n$-extension if there exists a state $\rho_{A_1^nB_1^n}$, permutation invariant over the $A$-systems with respect to $B_1^n$ and over the $B$-systems with respect to $A_1^n$, such that $\rho_{A_1B_1}=\rho_{AB}$. For a subset $\mathcal{S}\subseteq \mathcal{B}\left(\mathcal{H}\right)$ the commutant is
\begin{align}
    \mathcal{S}^\prime \coloneqq \lbrace X \in \mathcal{B}\left(\mathcal{H}\right) \ \vert \ XS = SX \quad \text{for all} \ S \in \mathcal{S}\rbrace.
\end{align}
If an algebra $\mathcal{A}\subseteq \operatorname{End}\left(V\right)$ acts on $V$, its commutant (centralizer) is also denoted by $\operatorname{End}_{\mathcal{A}}\left(V\right)$; for a representation of a group $G$ on $V$ we write $\operatorname{End}_G\left(V\right)\equiv \operatorname{End}_{\mathbb{C}\left[G\right]}\left(V\right)$ for the algebra of $G$-equivariant maps and, analogously, $\operatorname{Hom}_G\left(V,W\right)$ for the space of intertwiners. For a compact subgroup $\mathcal{W}\subseteq \mathcal{U}\left(\mathcal{H}\right)$ we denote by $\mathcal{A}_{\mathcal{W}}\subseteq \mathcal{B}\left(\mathcal{H}\right)$ the $*$-algebra generated by $\mathcal{W}$, such that $\mathcal{A}_{\mathcal{W}}^\prime = \mathcal{W}^\prime$; if $\mathcal{W}$ is given as a product of groups with specified unitary actions on $\mathcal{H}$, e.g.\ $\mathcal{W} = \mathcal{V}_A\times \mathcal{V}_B\times S_n$, then $\mathcal{A}_{\mathcal{W}}$ denotes the $*$-algebra generated by the images of all factors.

\paragraph{\textbf{Partitions and weak compositions.}}
A partition of $n\in\mathbb{N}$ is a non-increasing sequence $\lambda = \left(\lambda_1,\lambda_2,\ldots\right)$ of non-negative integers with $\sum_{i}\lambda_i = n$, denoted $\lambda \vdash n$; its length $\ell\left(\lambda\right)$ is the number of non-zero parts, and we write $\lambda \vdash_d n$ if in addition $\ell\left(\lambda\right)\leq d$. A weak composition of $n$ into $r$ parts is a tuple $\mu = \left(\mu_1,\ldots,\mu_r\right)\in \mathbb{N}_0^r$ with $\sum_{i=1}^r \mu_i = n$ and no ordering imposed, denoted $\mu \vDash n$; in particular, type sets consist of weak compositions, cf.\ above. The irreducible representations of $S_n$ are labeled by partitions $\lambda \vdash n$: the corresponding Specht module is denoted by $V_\lambda$ with dimension $d_\lambda \coloneqq \dim V_\lambda$, whereas $U_\lambda$ denotes the irreducible polynomial representation (Weyl module) of $\mathcal{U}\left(d\right)$, equivalently of $\operatorname{GL}\left(d\right)$, of highest weight $\lambda \vdash_d n$, with dimension $m_\lambda \coloneqq \dim U_\lambda$; the latter equals the multiplicity of $V_\lambda$ in $\mathcal{H}^{\otimes n}$ under \autoref{eq:def_permutation_action_Sn_H_otimes_n} (see \autoref{sec:weyls_dimension_formula}). The combinatorial background on Young diagrams and tableaux is summarized in \autoref{sec:Young_diagrams_tableaux} (cf.\ also \cite{sagan2013symmetric,Fulton2004}).

\subsection{A de Finetti argument revisited}\label{subsec:deFinetti_revisited}

Our interest in the following proof technique for de Finetti theorems, as outlined in \autoref{sec:challenges_and_results}, is twofold: first, it is compatible with additional constraints on the involved states and hence applies to constrained separability problems; second, it gives rise to certifiably good rounding schemes, i.e.\ explicit separable candidate states with an a priori approximation guarantee (see \autoref{sec:efficient_inner_sequence}). In order to motivate the aim of this work we consolidate the following de Finetti argument for a separability proof \cite{Berta2021}. Assume that a state $\rho_{AB}$ admits, for every $n \in \mathbb{N}$, a permutation-invariant $n$-extension $\rho_{AB_1^n}$ over the $B$-systems with respect to $A$. We want to show that $\rho_{AB}$ is then well approximated by separable states on the $A:B$-cut, i.e., there exists a sequence of separable states $\sigma_{AB}^{\left(n\right)}\in \operatorname{SEP}\left(A:B\right)$ such that 
\begin{align}\label{eq:de_Finetti_argument_for_separability}
    \lVert \sigma_{AB}^{\left(n\right)} - \rho_{AB}\rVert_1 \to 0 \quad \text{for} \quad n\to \infty.
\end{align}
For this we fix an $n\in \mathbb{N}$ with given $\rho_{AB_1^n}$ and a measurement $\mathcal{M}_{B\to Z}$ as introduced in \autoref{sec:notation_preliminaries}, applied to each $B$-system, i.e.\ $\rho_{AZ_1^n} \coloneqq \left(\operatorname{id}_A \otimes \mathcal{M}_{B\to Z}^{\otimes n}\right)\left(\rho_{AB_1^n}\right)$. Then by \cite[Ex.\ 11.6.3]{Wilde2016} and the data-processing inequality for the channel $\mathcal{M}_{B\to Z}$ we have
\begin{align}\label{eq:mutual_information_bounds}
    I\left(A:Z_1^n\right)_{\rho_{AZ_1^n}} \leq I\left(A:B_1^n\right)_{\rho_{AB_1^n}} \leq 2\log d_A.
\end{align}
Since the $Z_1^n$-systems are classical, the left-hand side of \autoref{eq:mutual_information_bounds} even satisfies the sharper bound
\begin{align}\label{eq:cq_mutual_information_bound}
    I\left(A:Z_1^n\right)_{\rho_{AZ_1^n}} = H\left(A\right)_{\rho_{AZ_1^n}} - H\left(A\vert Z_1^n\right)_{\rho_{AZ_1^n}} \leq \log d_A,
\end{align}
because $H\left(A\vert Z_1^n\right)_{\rho_{AZ_1^n}} = \sum_{z_1^n} p\left(z_1^n\right) H\left(\rho_{A\vert z_1^n}\right)\geq 0$ for classical conditioning systems; this is the bound used in \cite[Lem.~2.1]{Berta2021}. The detour via $I\left(A:B_1^n\right)_{\rho_{AB_1^n}}$ in \autoref{eq:mutual_information_bounds} is nevertheless instructive: the fixed point techniques of \autoref{sec:techniques} bound precisely this fully quantum mutual information, for which the dimension bound $2 \log d_A$ is tight, and the factor of two reappears there as the square in \autoref{thm:upper_bound_entanglement_assisted_classical_capacity}. For the present sketch we continue with \autoref{eq:cq_mutual_information_bound}. Using furthermore the chain rule for the mutual information together with \autoref{lemma:conditioning_mutual_information} (which is adapted from \cite{Brandao_2017} and \cite{Berta2021}), we can decompose
\begin{align}\label{eq:self_decoupling_lemma_intro}
    I\left(A:Z_1^n\right)_{\rho_{AZ_1^n}} = \sum_{m=0}^{n-1}\sum_{z_1^m} p\left(z_1^m\right)  D \left(\rho_{AZ_{m+1\vert  z_1^m}}\Vert \rho_{A\vert  z_1^m} \otimes \rho_{Z_{m+1\vert  z_1^m}}\right).
\end{align}
In \autoref{eq:self_decoupling_lemma_intro} there exists a summand $0\leq m \leq n-1$ whose value is at most the average $\frac{\log d_A}{n}$ (by \autoref{eq:cq_mutual_information_bound}), such that we have with Pinsker's inequality and the convexity of $x\mapsto x^2$ and of $\lVert\cdot\rVert_1$
\begin{equation}
    \begin{aligned}
      \frac{\log d_A}{n} &\geq\sum_{z_1^m} p\left(z_1^m\right)
      D  \left(\rho_{AZ_{m+1\vert  z_1^m}}\Vert \rho_{A\vert  z_1^m}\otimes \rho_{Z_{m+1\vert  z_1^m}} \right) \\
      &\geq\frac{1}{2\ln 2} \lVert \rho_{AZ_{m+1}}-\sum_{z_1^m} p\left(z_1^m\right) \rho_{A\vert  z_1^m}\otimes \rho_{Z_{m+1\vert  z_1^m}} \rVert_1^2.
    \end{aligned}
\end{equation}
Assume now that the measurement $\mathcal{M}$ is \emph{informationally complete} with distortion bound $c\left(\mathcal{M}\right)$ (cf.\ \autoref{sec:notation_preliminaries}), i.e.\ such that the following inequality is true
\begin{align}\label{eq:informationally_complete_measurements}
     \lVert \rho_{AZ_{m+1}}-\sum_{z_1^m}p\left(z_1^m\right)\rho_{A\vert  z_1^m}\otimes \rho_{Z_{m+1\vert  z_1^m}} \rVert_1
    \geq\frac{1}{c\left(\mathcal{M}\right)}  \lVert \rho_{AB_{m+1}}-\sum_{z_1^m}p\left(z_1^m\right)\rho_{A\vert  z_1^m}\otimes \rho_{B_{m+1\vert  z_1^m}} \rVert_1.
\end{align}
We conclude to have
\begin{align}\label{eq:de_Finetti_without_symmetry}
    c\left(\mathcal{M}\right)   \sqrt{\frac{2 \ln 2 \, \log d_A}{n}} \geq  \lVert \rho_{AB_{m+1}}-\sum_{z_1^m}p\left(z_1^m\right)\rho_{A\vert  z_1^m}\otimes \rho_{B_{m+1\vert  z_1^m}} \rVert_1.
\end{align}
Now we apply the permutation invariance, which implies $\rho_{AB_{m+1}} = \rho_{AB_1}$ for the joint marginal, and conclude that \autoref{eq:de_Finetti_without_symmetry} can be transferred to
\begin{align}\label{eq:de_Finetti_with_symmetry}
    c\left(\mathcal{M}\right)   \sqrt{\frac{2 \ln 2 \, \log d_A}{n}} \geq  \lVert \rho_{AB_{1}}-\sum_{z_1^m}p\left(z_1^m\right)\rho_{A\vert  z_1^m}\otimes \rho_{B_{m+1\vert  z_1^m}} \rVert_1.
\end{align}
Note that the conditional states on the right-hand side of \autoref{eq:de_Finetti_with_symmetry} remain those of the $\left(m+1\right)$-st $B$-system --- conditioning $B_1$ on outcomes obtained by measuring $B_1,\ldots,B_m$ themselves would be ill-defined --- and only the joint marginal $\rho_{AB_{m+1}}$ is identified with $\rho_{AB_1}$, cf.\ \cite{Berta2021}. \autoref{eq:de_Finetti_with_symmetry} identifies a separable state $\sigma_{AB}^{\left(n\right)}$ at trace distance $O\left(1/\sqrt{n}\right)$ from $\rho_{AB}$, which is exactly a de Finetti argument as in \autoref{eq:de_Finetti_argument_for_separability} and yields convergence, as $n \to \infty$, to a separable state. 

Two further structural features of this proof route are worth recording. First, it is compatible with linear constraints in the sense of \autoref{eq:def_cSEP_set}: constraint maps acting on $A$ or on the retained system $B_{m+1}$ commute with the measurements performed on $B_1^m$, so every per-factor linear constraint satisfied by the extension $\rho_{AB_1^n}$ --- fixed point constraints as well as constraints with fixed marginals --- is inherited by the conditional states $\rho_{A\vert z_1^m}$ and $\rho_{B_{m+1}\vert z_1^m}$, and hence by the separable candidate in \autoref{eq:de_Finetti_with_symmetry}, cf.\ \cite{Berta2021} and \cite[Lem.~B.5]{zeiss2025approximatingfixedsizequantum}. Second, while precisely this compatibility makes the information-theoretic proof route the method of choice for constrained hierarchies, the constraints themselves have so far only ensured feasibility: prior to this work, they have not been used to improve the convergence rate of de Finetti theorems, with the exception of the special case of equality (decoupling) constraints in \cite[Lem.~6]{jee2020quasi}, which is of a different nature than the fixed point constraints studied here and which we recover and generalize in \autoref{prop:equality_constraint_upper_bound}. This paper investigates several steps in the argument above. We summarize a few key observations in the following box. 

\begin{observation}\label{observation}
Considering the de Finetti argument above once again, a crucial observation is that permutation invariance is used only in the very last step, from \autoref{eq:de_Finetti_without_symmetry} to \autoref{eq:de_Finetti_with_symmetry}. However, this property of the states $\rho_{AB_1^n}$ has not been used in \autoref{eq:mutual_information_bounds} and has also not been used in \autoref{eq:self_decoupling_lemma_intro}. In particular, although the self-decoupling lemma \cite[Lem.~2.1]{Berta2021} is \emph{stated} for classical-quantum states that are permutation invariant over the $Z_1^n$-systems, an inspection of its proof shows that only the classicality of the $Z_1^n$-systems is used --- through the bound \autoref{eq:cq_mutual_information_bound} and the conditioning identity of \autoref{lemma:conditioning_mutual_information} --- so that its conclusion holds for arbitrary classical-quantum states. Permutation invariance proper enters only when the lemma is applied, namely through the identification $\rho_{AB_{m+1}} = \rho_{AB_1}$ in the step from \autoref{eq:de_Finetti_without_symmetry} to \autoref{eq:de_Finetti_with_symmetry}. We thus ask whether we can substantially make use of
\begin{enumerate}
    \item the bound on $I\left(A : B_1^n\right)_{\rho_{AB_1^n}}$ if we assume more structure on $\rho_{AB_1^n}$, such as permutation invariance of the $B$-systems with respect to $A$.
    \item the chain rule in \autoref{eq:self_decoupling_lemma_intro} in order to get rid of the exponentially many (in $m$) measurement outcomes $z_1^m \in \mathcal{Z}^m$.
    \item the choice of the informationally complete measurement in \autoref{eq:informationally_complete_measurements} in order to reduce the distortion factor $c\left(\mathcal{M}\right)$, provided the post-measurement states inherit additional structure, again in the form of fixed point constraints.
\end{enumerate}
In particular, the improvements in~(1) and~(3) are obtained from constraints on the $A$- and $B$-systems, such as symmetries, which we treat as fixed points of general quantum channels.     
\end{observation}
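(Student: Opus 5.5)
The plan is to re-run the argument of \autoref{subsec:deFinetti_revisited} step by step for an \emph{arbitrary} state $\rho_{AB_1^n}$ and an arbitrary measurement $\mathcal{M}_{B\to Z}$. At each step I record which hypotheses are actually used. The claim of the observation is then that permutation invariance appears only in passing from \autoref{eq:de_Finetti_without_symmetry} to \autoref{eq:de_Finetti_with_symmetry}.

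\emph{Step 1 (mutual information bounds).} For \autoref{eq:mutual_information_bounds} I invoke only the data-processing inequality for the local channel $\operatorname{id}_A\otimes\mathcal{M}_{B\to Z}^{\otimes n}$, together with the general bound $I(A:B_1^n)\le 2\log d_A$, which holds for every bipartite state. For \autoref{eq:cq_mutual_information_bound} I write $\rho_{AZ_1^n}=\sum_{z_1^n}p(z_1^n)\rho_{A\vert z_1^n}\otimes\ketbra{z_1^n}{z_1^n}$, which is valid for any cq state. This gives $H(A\vert Z_1^n)=\sum p(z_1^n)H(\rho_{A\vert z_1^n})\ge 0$ and hence $I(A:Z_1^n)\le H(A)\le\log d_A$. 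Only classicality of $Z_1^n$ is used.

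\emph{Step 2 (self-decoupling decomposition).} I apply the chain rule $I(A:Z_1^n)=\sum_{m=0}^{n-1}I(A:Z_{m+1}\vert Z_1^m)$, which is an identity of entropies valid for any tripartite state. I then expand each conditional term over the classical register $Z_1^m$. Using the block-diagonal form in $z_1^m$ and additivity of entropy over orthogonal blocks (the content of \autoref{lemma:conditioning_mutual_information}), I obtain
\[
I(A:Z_{m+1}\vert Z_1^m)=\sum_{z_1^m}p(z_1^m)\,D\bigl(\rho_{AZ_{m+1}\vert z_1^m}\Vert\rho_{A\vert z_1^m}\otimes\rho_{Z_{m+1}\vert z_1^m}\bigr).
\]
This yields \autoref{eq:self_decoupling_lemma_intro} with no symmetry assumption. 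Since all summands are nonnegative, an averaging argument produces some $m$ whose summand is at most $\log d_A/n$. Pinsker's inequality, convexity of $x\mapsto x^2$, and the triangle inequality then give the trace-norm bound for the mixture $\sum_{z_1^m}p(z_1^m)\rho_{A\vert z_1^m}\otimes\rho_{Z_{m+1}\vert z_1^m}$. Again nothing beyond classicality is used.

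\emph{Step 3 (informational completeness).} Here I expect the main care to be needed. The conditional states $\rho_{B_{m+1}\vert z_1^m}$ must be defined by measuring only $B_1^m$ and leaving $B_{m+1}$ untouched. One then has to check that applying $\mathcal{M}$ to $B_{m+1}$ afterwards reproduces $\rho_{Z_{m+1}\vert z_1^m}$ and $\rho_{AZ_{m+1}}$. This follows because channels acting on distinct tensor factors commute. Hence
\[
X_{AB_{m+1}}\coloneqq\rho_{AB_{m+1}}-\sum_{z_1^m}p(z_1^m)\rho_{A\vert z_1^m}\otimes\rho_{B_{m+1}\vert z_1^m}
\]
satisfies $(\operatorname{id}_A\otimes\mathcal{M})(X)$ equal to the measured difference, by linearity. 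Moreover $X$ is Hermitian and traceless, being a difference of two states. The distortion bound therefore applies verbatim and gives \autoref{eq:de_Finetti_without_symmetry} for arbitrary $\rho_{AB_1^n}$.

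\emph{Step 4 (where symmetry enters).} Finally, I conjugate $\rho_{AB_1^n}$ with $1_A\otimes U_{B_1^n}(\sigma)$ for the transposition $\sigma=(1\;m{+}1)$ and take the partial trace. Permutation invariance then gives $\rho_{AB_{m+1}}=\rho_{AB_1}$, which is precisely the step to \autoref{eq:de_Finetti_with_symmetry}. The same bookkeeping shows that the self-decoupling lemma of \cite[Lem.~2.1]{Berta2021} holds for all cq states, which is the content of the observation.
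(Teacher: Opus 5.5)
Your proposal is correct and follows essentially the same route as the paper: it re-reads the argument of \autoref{subsec:deFinetti_revisited} and notes that the data-processing inequality, the classical-quantum bound \autoref{eq:cq_mutual_information_bound}, the chain rule with \autoref{lemma:conditioning_mutual_information}, Pinsker's inequality and the distortion bound use only the classicality of $Z_1^n$ and the tracelessness of the difference operator, while permutation invariance enters solely through $\rho_{AB_{m+1}}=\rho_{AB_1}$. Your extra care in Step~3 (conditioning only on the outcomes of $B_1^m$ and leaving $B_{m+1}$ unmeasured) matches the paper's remark after \autoref{eq:de_Finetti_with_symmetry} that the conditional states remain those of the $(m+1)$-st system.
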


\section{Bounds and techniques}\label{sec:techniques}
Symmetries such as the permutation symmetry used in the step from \autoref{eq:de_Finetti_without_symmetry} to \autoref{eq:de_Finetti_with_symmetry} can be seen as a fixed point constraint on the set of feasible states for the de Finetti argument. To be concrete consider a compact subgroup $\mathcal{V}$ of the unitary group $\mathcal{U}\left(\mathcal{H}\right)$ on a finite dimensional Hilbert space $\mathcal{H}$ and its unique normalized Haar measure \cite[Chap.~9]{Cohn2013} $\mu_\mathcal{V}$. Then the map 
\begin{align}
    \rho \mapsto \int_{\mathcal{V}} V\rho V^\dagger d\mu_{\mathcal{V}}\left(V\right), \quad \rho \in \mathcal{S}\left(\mathcal{H}\right)
\end{align}
is a completely positive and trace-preserving map. Moreover, it has the maximally mixed state as a full-rank fixed point. By the mean-ergodic theorem and its consequences (revisited in \autoref{thm:ergodic_projection_condexp}), fixed points induce an algebraic decomposition of the underlying Hilbert space and thus a structural decomposition of the states acting on it. In this section we provide several very general tools and techniques in order to handle such constraints in a de Finetti argument such as in \autoref{subsec:deFinetti_revisited} in a systematic way. 
\subsection{Definitions and notation}\label{subsec:definitions_notations}
In this work we consider finite-dimensional $C^*$-algebras $\mathcal{A}$, i.e., operator-norm-closed $*$-algebras, which can be identified with direct sums of full matrix algebras (see \cite[Thm.~11.9]{Takesaki2001OperatorAlgebrasI}). Concretely, a complex $*$-algebra is a complex algebra equipped with an involution $a \mapsto a^*$ for all $a \in \mathcal{A}$ such that the $*$-map is a conjugate-linear anti-automorphism satisfying $\left(a^*\right)^* = a$. The $*$-algebra $\mathcal{A}$ is a $C^*$-algebra if it is endowed with a norm $\lVert \cdot\rVert $ satisfying the $C^*$-identity
\begin{align}
    \lVert a^* a\rVert  = \lVert a\rVert ^2 \qquad \text{for all } a \in \mathcal{A}.
\end{align}
For our purposes it will be sufficient to work throughout with the full matrix algebra $\mathcal{A} \cong M_n\left(\mathbb{C}\right)$ of complex $n \times n$ matrices, equipped with the usual operator norm induced by the Euclidean norm on vectors. A unital $*$-subalgebra $\mathcal{B}$ is a subset $\mathcal{B} \subseteq \mathcal{A}$ containing the identity of $\mathcal{A}$ that is itself a $C^*$-algebra. By \cite[Thm.~11.9]{Takesaki2001OperatorAlgebrasI} we can identify $\mathcal{B}$ with
\begin{align}\label{eq:decomposition_finite_star_algebras}
    \mathcal{B} \cong  \bigoplus_{i=1}^k M_{d_i}\left(\mathbb{C}\right)
    \qquad \text{and} \qquad
    \mathcal{B}^\prime \cong  \bigoplus_{i=1}^k M_{m_i}\left(\mathbb{C}\right),
\end{align}
where $\mathcal{B}^\prime$ denotes the commutant of $\mathcal{B}$ in $\mathcal{A} \cong M_n\left(\mathbb{C}\right)$. The tuples $\left(d_1,\ldots,d_k\right)$ and $\left(m_1,\ldots,m_k\right)$ are unique up to permutation and quantify the dimension and the multiplicity, respectively, of the irreducible representations of $\mathcal{B}$. 

As we would like to investigate fixed points of channels later in this work, we introduce conditional expectations from $\mathcal{A} \cong M_n\left(\mathbb{C}\right)$ to a unital $*$-subalgebra $\mathcal B \subseteq \mathcal{A}$. A positive and unital map
\begin{align}
    E:\mathcal{A} \to \mathcal{B}
\end{align}
is a \emph{conditional expectation} if it satisfies $E\left(b\right) = b$ for all $b \in \mathcal{B}$ and the bimodule property
\begin{align}
    E\left(bac\right) = bE\left(a\right)c \quad \text{for all} \ b,c \in \mathcal{B},\ a \in \mathcal{A}.
\end{align}
By Tomiyama's theorem, the bimodule property is automatic: every norm-one projection of $\mathcal{A}$ onto $\mathcal{B}$ is already a conditional expectation (see e.g. \cite[Sec.~9.1]{Petz2008QITQS}). Conditional expectations are automatically completely positive \cite[Eq.~(9.6)]{Petz2008QITQS} and thus channels in the Heisenberg picture. As we aim to prove a bound on the entanglement assisted classical capacity \cite{Bennett1999} for conditional expectations as a first main goal, we will make use of the following structure result about conditional expectations. The result is from \cite[Prop.~1.5]{Wolf2012QuantumChannelsGuidedTour}. The decomposition in \autoref{eq:decomposition_finite_star_algebras} induces a decomposition of the underlying Hilbert space $\mathcal{H} \cong \mathbb{C}^n$ given by 
\begin{align}\label{eq:decomposition_level_Hilbert_spaces}
    \mathcal H \cong  \bigoplus_{i=1}^k \mathcal{H}_{i,1}\otimes \mathcal{H}_{i,2}.
\end{align}
\begin{proposition}[\cite{Wolf2012QuantumChannelsGuidedTour}]\label{prop:structure_prop_cond_expectation_mmwolf}
    Let $\mathcal{B} \subseteq \mathcal{A} \cong M_n\left(\mathbb{C}\right)$ be a unital $*$-subalgebra and $E:\mathcal{A}\to \mathcal{B}$ a conditional expectation. Then there exists a family of mutually orthogonal isometries $\{V_i:\mathcal{H}\to \mathcal{H}_{i,1}\otimes\mathcal{H}_{i,2}\}_{i=1}^k$ and density operators $\rho_i \in \mathcal{B}\left(\mathcal{H}_{i,2}\right)$\footnote{see \autoref{eq:decomposition_level_Hilbert_spaces}.} such that 
    \begin{align}
        E\left(a\right) = \sum_{i=1}^k V_i^\dagger  \left( \tr_{i,2}  \left[V_i aV_i^\dagger \ 1_{d_i}\otimes \rho_i \right]\otimes 1_{m_i} \right)V_i, \quad a \in \mathcal{A}.
    \end{align}
\end{proposition}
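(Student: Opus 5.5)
Since $\mathcal{B}$ is a finite-dimensional unital $*$-subalgebra, I would first fix its concrete normal form. By \autoref{eq:decomposition_finite_star_algebras} and the double commutant theorem there is a unitary
\[
U:\mathcal{H}\to\bigoplus_{i=1}^k \mathcal{H}_{i,1}\otimes\mathcal{H}_{i,2},
\qquad \dim\mathcal{H}_{i,1}=d_i,\ \dim\mathcal{H}_{i,2}=m_i,
\]
such that $U\mathcal{B}U^\dagger=\bigoplus_i M_{d_i}\otimes 1_{m_i}$ and $U\mathcal{B}'U^\dagger=\bigoplus_i 1_{d_i}\otimes M_{m_i}$. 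Let $P_i$ be the minimal central projections of $\mathcal{B}$. I set $V_i\coloneqq U_iP_i$, where $U_i$ is the $i$-th block component of $U$. These $V_i$ are partial isometries with $V_iV_i^\dagger=1_{i,1}\otimes 1_{i,2}$ and $V_i^\dagger V_i=P_i$. They are mutually orthogonal in the sense $V_i^\dagger V_j=0$ for $i\neq j$, and they satisfy $\sum_i V_i^\dagger V_i=1$. This is what ``isometries'' in the statement should mean.

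Next I would reduce to single blocks using the bimodule property. The $P_i$ lie in $\mathcal{B}$ and are central, so
\[
E(a)=\sum_{i,j}P_iE(P_iaP_j)P_j .
\]
For $i\neq j$, $E(P_iaP_j)=P_iE(a)P_j$, which lies in $P_i\mathcal{B}P_j$. This space is zero because the $P_i$ are central and orthogonal. Hence only the diagonal terms survive, and $E(a)=\sum_i E_i(P_iaP_i)$, where $E_i$ is the restriction of $E$ to $P_i\mathcal{A}P_i$. Conjugating by $V_i$ turns $E_i$ into a unital positive map
\[
\tilde E_i:\ M_{d_i}\otimes M_{m_i}\to M_{d_i}\otimes 1_{m_i},
\]
which fixes $M_{d_i}\otimes 1$ pointwise and is an $M_{d_i}\otimes 1$-bimodule map.

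The core step is to show that any such map has the form $\tilde E_i(x)=\tr_{2}\!\left[x\,(1_{d_i}\otimes\rho_i)\right]\otimes 1_{m_i}$. Write $x=\sum_{r,s} e_{rs}\otimes x_{rs}$ with matrix units $e_{rs}$ of $M_{d_i}$. The bimodule property gives
\[
\tilde E_i(e_{rs}\otimes y)=(e_{r1}\otimes 1)\,\tilde E_i(e_{11}\otimes y)\,(e_{1s}\otimes 1).
\]
Similarly $\tilde E_i(e_{11}\otimes y)=(e_{11}\otimes1)\,\tilde E_i(e_{11}\otimes y)\,(e_{11}\otimes 1)$, which forces $\tilde E_i(e_{11}\otimes y)=\varphi_i(y)\,e_{11}\otimes 1$ for a scalar $\varphi_i(y)$. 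The functional $\varphi_i$ is linear and positive (by positivity of $E$), and unital because $e_{11}\otimes 1\in\mathcal{B}$ is fixed. It is therefore a state on $M_{m_i}$, so $\varphi_i(y)=\tr[\rho_i y]$ for a density operator $\rho_i$. Reassembling the blocks, $\tilde E_i(x)=\sum_{r,s}\tr[\rho_i x_{rs}]\,e_{rs}\otimes 1$, which equals $\tr_{i,2}[x(1\otimes\rho_i)]\otimes 1_{m_i}$. Conjugating back with $V_i^\dagger(\cdot)V_i$ and summing over $i$ gives the claimed formula.

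I expect the main obstacle to be bookkeeping rather than a genuine difficulty. One point needs care: the cross terms $P_iE(a)P_j$ vanish because $P_i\mathcal{B}P_j=0$, which uses centrality of the $P_i$ and not merely that they lie in $\mathcal{B}$. The other point is identifying $V_i$ consistently with the Wedderburn decomposition, so that $\mathcal{B}$ sits as $M_{d_i}\otimes 1$ rather than $1\otimes M_{m_i}$. Positivity is used only to make $\varphi_i$ a state; complete positivity is not needed.
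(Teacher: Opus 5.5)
Your proof is correct. The paper gives no proof of this statement and imports it from \cite[Prop.~1.5]{Wolf2012QuantumChannelsGuidedTour}. Your matrix-unit argument is therefore a self-contained replacement for a citation, not a variant of an argument in the text.

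The route is: cut $E$ down along the minimal central projections $P_i$ using the bimodule property, transport each corner to a unital positive $M_{d_i}\otimes 1$-bimodule map on $M_{d_i}\otimes M_{m_i}$, and read off $\rho_i$ from the compression to $e_{11}\otimes M_{m_i}$. This buys two things the citation leaves implicit.

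\begin{itemize}
\item It uses only positivity, unitality and the bimodule property. The resulting formula is manifestly completely positive, so it recovers the automatic complete positivity that the paper attributes to Tomiyama's theorem.
\item It makes precise that the $V_i$ are co-isometries with $V_iV_i^\dagger=1$ and $V_i^\dagger V_i=P_i$, not isometries. This is exactly the convention the paper uses later: $V_iV_j^\dagger=\delta_{ij}1$ and $\sum_i V_i^\dagger V_i=1$ in \autoref{lem:calculation_adjoint} and in the proof of \autoref{thm:upper_bound_entanglement_assisted_classical_capacity}.
\end{itemize}

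One cosmetic point. Since $V_i$ and $V_j$ have different codomains, your relation $V_i^\dagger V_j=0$ only makes sense after embedding all blocks into $\bigoplus_j\mathcal{H}_{j,1}\otimes\mathcal{H}_{j,2}$. The intrinsic orthogonality statement is $V_jV_i^\dagger=0$ for $i\neq j$. It follows from $V_j=V_jP_j$, $V_i^\dagger=P_iV_i^\dagger$ and $P_jP_i=0$.

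The citation, in turn, buys brevity.
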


\subsection{A mutual information bound for conditional expectations}

Given the prerequisites from \autoref{subsec:definitions_notations}, we can investigate our first result. Let $\Phi:\mathcal{B}\left(\mathcal{H}_A\right)\to \mathcal{B}\left(\mathcal{H}_A\right)$ be completely positive and unital, and write $\Phi^*$ for its Schr\"odinger-picture dual (trace-preserving and completely positive). Fix some side information system $\mathcal{H}_B$. Motivated from the \autoref{observation}~(1) in \autoref{subsec:deFinetti_revisited}, we consider the following optimization problem
\begin{equation}\label{eq:mutual_information_upper_bound_equation}
    \begin{aligned}
        \sup \quad & I\left(A : B\right)_{\rho_{AB}} \\
        \operatorname{s.t.}\quad & \left(\Phi^*_{A\to A}\otimes \id_B\right)\left(\rho_{AB}\right) = \rho_{AB}, \\
        & \rho_{AB}\in \mathcal{S}\left(\mathcal{H}_A\otimes \mathcal{H}_B\right).
    \end{aligned}
\end{equation}
Using the fixed point constraint in \autoref{eq:mutual_information_upper_bound_equation}, we may equivalently write
\begin{equation}\label{eq:mutual_information_upper_bound_equation_fixed points}
    \begin{aligned}
        \sup \quad & I\left(A : B\right)_{\rho_{AB}} \\
        \operatorname{s.t.}\quad & \rho_{AB}\in \operatorname{Fix}\left(\Phi^*_{A\to A}\otimes \id_B\right), \\
        & \rho_{AB}\in \mathcal{S}\left(\mathcal{H}_A\otimes \mathcal{H}_B\right).
    \end{aligned}
\end{equation}

In order to handle the fixed point constraint in \autoref{eq:mutual_information_upper_bound_equation_fixed points} in a more sophisticated manner, we use the following lemma. The lemma is well known: that ergodic projections of channels admitting a full-rank invariant state are conditional expectations onto the fixed point algebra is classical \cite{Frigerio1978, Arias2002, BlumeKohout2010} (cf.\ also \cite{Wolf2012QuantumChannelsGuidedTour} and \autoref{thm:ergodic_projection_condexp}), and the extension to side information is routine; we include the short proof for completeness.

\begin{lemma}\label{lem:fixed points_equal_ergodic_projection}
    Let $\Phi\otimes\operatorname{id}_B:\mathcal{B}\left(\mathcal{H}_A\otimes\mathcal{H}_B\right)\to\mathcal{B}\left(\mathcal{H}_A\otimes\mathcal{H}_B\right)$ be completely positive and unital, and assume that $\Phi^*$ admits a full-rank fixed point. Let $E:\mathcal{B}\left(\mathcal{H}_A\right)\to\operatorname{Fix}\left(\Phi\right)$ denote the (Heisenberg-picture) ergodic projection from \autoref{thm:ergodic_projection_condexp},\footnote{The extension to side information is routine, since $\left(\Phi\otimes\operatorname{id}_B\right)^n = \Phi^n \otimes \operatorname{id}_B$ for all $n\in\mathbb{N}$.} i.e.
    \begin{align}\label{eq:cesaro_E_def}
        E   =   \lim_{N\to\infty}\frac{1}{N}\sum_{n=0}^{N-1}\Phi^n,
    \end{align}
    which is a conditional expectation onto the fixed point subalgebra $\operatorname{Fix}\left(\Phi\right)$ by \autoref{thm:ergodic_projection_condexp}. Then
    \begin{align}
        \operatorname{Fix}\left(\Phi^*\otimes\operatorname{id}_B\right)   =   \operatorname{ran}\left(E^*\otimes\operatorname{id}_B\right).
    \end{align}
\end{lemma}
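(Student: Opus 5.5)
The plan is to pass to Hilbert--Schmidt duals and show both inclusions via the Ces\`aro means of $\Phi^*\otimes\operatorname{id}_B$. Define $E_N \coloneqq \frac{1}{N}\sum_{n=0}^{N-1}\Phi^n$. By \autoref{thm:ergodic_projection_condexp}, $E_N\to E$ in the finite-dimensional space of linear maps on $\mathcal{B}\left(\mathcal{H}_A\right)$. Taking adjoints is a continuous linear operation there, so $E_N^* = \frac{1}{N}\sum_{n=0}^{N-1}\left(\Phi^*\right)^n \to E^*$. Tensoring with $\operatorname{id}_B$ is also continuous, and since $\left(\Phi^*\otimes \operatorname{id}_B\right)^n = \left(\Phi^*\right)^n\otimes\operatorname{id}_B$, this gives
\begin{align}
    E^*\otimes\operatorname{id}_B = \lim_{N\to\infty}\frac{1}{N}\sum_{n=0}^{N-1}\left(\Phi^*\otimes\operatorname{id}_B\right)^n .
\end{align}
This single identity is the key input; the rest is algebra.

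For the inclusion $\operatorname{Fix}\left(\Phi^*\otimes\operatorname{id}_B\right)\subseteq\operatorname{ran}\left(E^*\otimes\operatorname{id}_B\right)$, take $X$ with $\left(\Phi^*\otimes\operatorname{id}_B\right)(X)=X$. Every Ces\`aro mean then fixes $X$, and passing to the limit gives $\left(E^*\otimes\operatorname{id}_B\right)(X)=X$. Hence $X$ lies in the range.

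For the reverse inclusion, the key fact is $\Phi\circ E = E$. This follows from the Ces\`aro formula because
\begin{align}
    \Phi\circ E_N - E_N = \frac{1}{N}\left(\Phi^N - \operatorname{id}\right) \to 0,
\end{align}
where the convergence uses that $\Phi$ is unital completely positive, so that $\lVert\Phi^N\rVert_{\infty\to\infty}=1$ by Russo--Dye and these powers are uniformly bounded. Equivalently, the ran of $E$ lies in $\operatorname{Fix}\left(\Phi\right)$, as \autoref{thm:ergodic_projection_condexp} already asserts. Dualizing gives $E^*\circ\Phi^* = E^*$. Symmetrically, $E\circ\Phi=E$, and dualizing this gives $\Phi^*\circ E^* = E^*$. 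Tensoring with $\operatorname{id}_B$, we get $\left(\Phi^*\otimes\operatorname{id}_B\right)\circ\left(E^*\otimes\operatorname{id}_B\right) = E^*\otimes\operatorname{id}_B$, so every element $Y=\left(E^*\otimes\operatorname{id}_B\right)(X)$ of the range is a fixed point.

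The only delicate step is existence of the Ces\`aro limit and the identity $E\circ\Phi=E$. Both are supplied by \autoref{thm:ergodic_projection_condexp}, and I would simply cite it. Alternatively, $E\circ\Phi = E$ follows from the telescoping identity $E_N\circ\Phi - E_N = \frac{1}{N}\left(\Phi^N-\operatorname{id}\right)\to 0$, using the same boundedness of the powers $\Phi^N$. The full-rank fixed point of $\Phi^*$ is used only through that theorem, namely to identify $E$ as a conditional expectation; it is not needed for the set equality itself. I expect no real obstacle: the equality holds for any power-bounded map whose Ces\`aro means converge.
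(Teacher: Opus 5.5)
Your proposal is correct and follows essentially the same route as the paper: you identify $E^*\otimes\operatorname{id}_B$ as the Ces\`aro limit of $\left(\Phi^*\otimes\operatorname{id}_B\right)^n$ by taking adjoints and tensoring with $\operatorname{id}_B$, and then read off the set equality. Where the paper simply writes ``hence'' after that identification, you spell out both inclusions: fixed points are preserved by every Ces\`aro mean and hence by the limit, and $\Phi^*\circ E^* = E^*$ follows by dualizing $E\circ\Phi = E$. Your side remark that the full-rank assumption is not needed for the set equality itself is also correct.
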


\begin{proof}
    By the mean ergodic theorem \autoref{thm:ergodic_projection_condexp}~(1), we have that $E$ is a projection, $E$ satisfies $E\circ \Phi= \Phi \circ E = E$ and $\operatorname{ran}\left(E\right)=\operatorname{Fix}\left(\Phi\right)$. Under the assumption that $\Phi^*$ has a full-rank fixed point, \autoref{thm:ergodic_projection_condexp}~(2) moreover identifies the projection $E$ as a \emph{conditional expectation}
    onto the fixed point subalgebra $\operatorname{Fix}\left(\Phi\right)$. Taking Hilbert-Schmidt adjoints and using that adjoints commute with finite sums and norm-limits, we obtain 
    \begin{equation}\label{eq:cesaro_Estar_def}
    \begin{aligned}
        \left(\lim_{N\to\infty}\frac{1}{N}\sum_{n=0}^{N-1}\left(\Phi \otimes \operatorname{id}_B\right)^n\right)^* 
        &  =  \lim_{N\to\infty}\frac{1}{N}\sum_{n=0}^{N-1}\left(\left(\Phi\otimes \operatorname{id}_B\right)^*\right)^n \\
        &  =   \left( \lim_{N\to\infty}\frac{1}{N}\sum_{n=0}^{N-1}\left(\left(\Phi\right)^*\right)^{n} \right)\otimes \operatorname{id}_B \\
        &  =   E^* \otimes \id_B.
    \end{aligned}
    \end{equation}
    Hence, we conclude
    \begin{align}\label{eq:range_Estar_fix}
        \operatorname{ran}\left(E^* \otimes \operatorname{id}_B\right)   =   \operatorname{Fix}\left(\Phi^* \otimes \operatorname{id}_B\right).
    \end{align}
\end{proof}

By \autoref{lem:fixed points_equal_ergodic_projection}, the constraint $\rho_{AB}\in \operatorname{Fix}\left(\Phi^*_{A\to A}\otimes \id_B\right)$ in \autoref{eq:mutual_information_upper_bound_equation_fixed points} is equivalent to $\rho_{AB}\in \operatorname{ran}\left(E^*_{A\to A}\otimes \id_B\right)$, and since $E^*$ is idempotent, every feasible
$\rho_{AB}$ can be written as $\rho_{AB}=\left(E^*_{A\to A}\otimes \id_B\right)\left(\sigma_{AB}\right)$ for some state $\sigma_{AB}$.
Therefore \autoref{eq:mutual_information_upper_bound_equation_fixed points} is equivalent to the following optimization problem
\begin{equation}\label{eq:mutual_information_upper_bound_with_cond_exp}
    \begin{aligned}
        \sup \quad & I\left(A : B\right)_{\left(E^*_{A\to A}\otimes \id_B\right)\left(\rho_{AB}\right)} \\
       \operatorname{s.t.} \quad & \rho_{AB}\in \mathcal{S}\left(\mathcal{H}_A\otimes \mathcal{H}_B\right).
    \end{aligned}
\end{equation}

Interestingly, in case of arbitrarily large side-information, this is a well-known quantity in quantum information theory, called the \emph{entanglement assisted classical capacity} \cite{Bennett1999} (see also the Bennett–Shor–Smolin–Thapliyal Theorem \cite[Thm.\ 21.3.1]{Wilde2016}). Thus, in order to solve \autoref{eq:mutual_information_upper_bound_equation_fixed points}, we need to calculate the entanglement assisted classical capacity of the dual of a conditional expectation, which is a trace-preserving and completely positive map. 
\begin{lemma}[\cite{Wolf2012QuantumChannelsGuidedTour}]\label{lem:calculation_adjoint}
      Let $\mathcal{B} \subseteq \mathcal{A} \cong M_n\left(\mathbb{C}\right)$ be a unital $*$-subalgebra, $\{V_i:\mathcal{H}\to \mathcal{H}_{i,1}\otimes\mathcal{H}_{i,2}\}_{i=1}^k$ a family of mutually orthogonal isometries and $\rho_i \in \mathcal{B}\left(\mathcal{H}_{i,2}\right)$\footnote{see \autoref{eq:decomposition_level_Hilbert_spaces}.} density operators. Then the Hilbert-Schmidt dual of 
    \begin{align}
        E\left(a\right) = \sum_{i=1}^k V_i^\dagger  \left( \tr_{i,2}  \left[V_i aV_i^\dagger \ 1_{d_i}\otimes \rho_i \right]\otimes 1_{m_i} \right)V_i, \quad a \in \mathcal{A},
    \end{align}
    is given by
    \begin{align}\label{eq:lem_adjoints}
        E^*\left(\omega\right) = \sum_{i=1}^k V_i^\dagger \left( \tr_{i,2}\left[V_i \omega V_i^\dagger\right]\otimes \rho_i \right)V_i, \quad \omega \in \mathcal{S}\left(\mathcal{H}\right).
    \end{align}
\end{lemma}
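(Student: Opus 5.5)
The plan is a direct computation of the Hilbert--Schmidt adjoint, done one block at a time and then summed. Linearity of the dual operation gives $E^*=\sum_{i=1}^k E_i^*$, where $E_i(a)\coloneqq V_i^\dagger\left(\tr_{i,2}\left[V_i a V_i^\dagger\, (1_{d_i}\otimes\rho_i)\right]\otimes 1_{m_i}\right)V_i$. It therefore suffices to check, for every $i$, every $a\in\mathcal{A}$ and every $\omega\in\mathcal{B}(\mathcal{H})$ (by linearity, not only states), the identity
\[
\tr\left[E_i^*(\omega)\, a\right] = \tr\left[\omega\, E_i(a)\right],
\]
with $E_i^*$ given by the summand in \autoref{eq:lem_adjoints}. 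Throughout, the convention from \autoref{eq:decomposition_level_Hilbert_spaces} and \autoref{prop:structure_prop_cond_expectation_mmwolf} is used: in each block $\mathcal{H}_{i,1}$ carries the $M_{d_i}$ factor and $\mathcal{H}_{i,2}$ the multiplicity space. The expression $\tr_{i,2}[\cdot]\otimes 1_{m_i}$ (respectively $\otimes\rho_i$) is read as re-inserting an operator on the traced factor $\mathcal{H}_{i,2}$. I will fix this bookkeeping of factors once at the start, since the footnote dimensions $1_{d_i},1_{m_i}$ have to be matched consistently.

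The computation rests on two elementary adjoint identities. First, conjugation by an isometry is dual to the reverse conjugation: $\tr[\omega\, V_i^\dagger X V_i] = \tr[V_i\omega V_i^\dagger\, X]$, by cyclicity. Second, for a bipartite space $\mathcal{H}_{i,1}\otimes\mathcal{H}_{i,2}$ the map $Y\mapsto \tr_{i,2}[Y(1\otimes\rho_i)]\otimes 1$ is dual to $X\mapsto \tr_{i,2}[X]\otimes\rho_i$. To check the second identity, set $X=V_i\omega V_i^\dagger$ and $Y=V_iaV_i^\dagger$. One uses $\tr[(Z\otimes 1)\,X]=\tr[Z\,\tr_{i,2}X]$ and $\tr[(W\otimes\rho_i)\,Y]=\tr[W\,\tr_{i,2}[Y(1\otimes\rho_i)]]$; both follow from the defining property of the partial trace, and in the second one one moves $1\otimes\rho_i$ cyclically past the factor $W\otimes 1$. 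This gives
\[
\tr\left[X\left(\tr_{i,2}[Y(1\otimes\rho_i)]\otimes 1\right)\right] = \tr\left[\tr_{i,2}[X]\,\tr_{i,2}[Y(1\otimes\rho_i)]\right] = \tr\left[\left(\tr_{i,2}[X]\otimes\rho_i\right)Y\right].
\]

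Chaining the two identities yields $\tr[\omega E_i(a)] = \tr[V_i^\dagger(\tr_{i,2}[V_i\omega V_i^\dagger]\otimes\rho_i)V_i\, a]$, which is exactly the claimed formula, and summing over $i$ finishes the proof. No property of conditional expectations is needed beyond the given form, so the mutual orthogonality of the $V_i$ plays no role here. The only genuine obstacle is notational: getting the factor ordering of $1_{d_i}\otimes\rho_i$ and of the re-inserted identity right, so that the traced factor is the same one on both sides. I would settle this by writing out matrix units $|j\rangle\langle j'|\otimes|l\rangle\langle l'|$ once, as a sanity check.
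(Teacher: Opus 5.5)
Your proposal is correct and is essentially the paper's proof. Both move the conjugation by $V_i$ onto $\omega$ by cyclicity, pair $\tr_{i,2}[V_i\omega V_i^\dagger]$ with $\tr_{i,2}[V_i a V_i^\dagger(1_{d_i}\otimes\rho_i)]$, and use that $Z\mapsto\tr_{i,2}[Z(1_{d_i}\otimes\rho_i)]$ is dual to $T\mapsto T\otimes\rho_i$ before substituting back. Your remark about orthogonality is also accurate: the paper records the relations for the $V_i$ at the outset, but its computation never uses them.
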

\begin{proof}
    See \autoref{appendix:adjoints_conditional_expectations}.
\end{proof}

Given all the preliminary comments, we can now summarize the observed ingredients in order to prove the following proposition, which yields a tight upper bound on the entanglement assisted classical capacity of the Hilbert-Schmidt duals of conditional expectations. We emphasize that the capacity statement itself is essentially known: up to the blockwise re-preparation of the states $\rho_i$ from \autoref{lem:calculation_adjoint}, which leaves all capacities unchanged, the channel $E^*$ is a direct sum of partial traces $\bigoplus_{i} \operatorname{id}_{d_i}\otimes \tr_{m_i}$, for which exact, strongly additive single-letter capacity formulas follow from the direct sum calculus of \cite[Prop.\ 1]{Fukuda2007} and are stated explicitly in \cite[Prop.\ 5]{Gao2018} in the context of TRO (ternary ring of operators) channels. The contribution here lies in the combination with \autoref{lem:fixed points_equal_ergodic_projection}: for an \emph{arbitrary} unital completely positive map $\Phi$ admitting a full-rank invariant state, the fixed point constraint in \autoref{eq:mutual_information_upper_bound_equation_fixed points} is identified with the range of $E^*\otimes \id_B$, so that the known capacity formula becomes a mutual information bound for all constrained bipartite states; moreover, we give a self-contained and elementary proof adapted to this setting.

\begin{proposition}\label{thm:upper_bound_entanglement_assisted_classical_capacity}
    Let $\Phi:\mathcal{B}\left(\mathcal{H}_A\right) \to \mathcal{B}\left(\mathcal{H}_A\right)$ be a completely positive and unital channel such that $\Phi^*$ has a full-rank fixed point, and let $\left(d_1,\ldots,d_k\right)$ be the tuple, unique up to permutation, defined via $d_i \coloneqq \dim \mathcal{H}_{i,1}$ in \autoref{eq:decomposition_level_Hilbert_spaces}. Then the entanglement assisted classical capacity of $E^*$, the dual of the conditional expectation $E$ onto the fixed point algebra of $\Phi$ from \autoref{thm:ergodic_projection_condexp}, defined by
    \begin{align}\label{eq:def_entanglement_asssisted_classical_capacity}
        C_{\operatorname{EA}}\left(E^*\right) = \max_{\rho_{A} \in \mathcal{S}\left(\mathcal{H}_A\right)} I\left(R:A\right)_{\sigma_{RA}}, \quad \sigma_{RA} \coloneqq \left(\operatorname{id}_{R}\otimes E^*\right)\left(\psi_{RA}\right),
    \end{align}
    where $\psi_{RA}$ is any purification of $\rho_A\in \mathcal{S}\left(\mathcal{H}_A\right)$, is tightly bounded by
    \begin{align}
        C_{\operatorname{EA}}\left(E^*\right)\leq \log \left(\sum_{i=1}^k d_i^2\right).
    \end{align}
\end{proposition}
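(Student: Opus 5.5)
Using \autoref{lem:calculation_adjoint}, I write $E^*$ explicitly as $E^*(\omega)=\sum_i V_i^\dagger\bigl(\tr_{i,2}[V_i\omega V_i^\dagger]\otimes\rho_i\bigr)V_i$. The plan is to bound $I(R:A)_\sigma$ for an arbitrary pure input $\psi_{RA}$ by an entropy count on the output, block by block. First I factor $E^*$ as $E^* = \mathcal{P}\circ\mathcal{T}$. Here
\begin{align*}
\mathcal{T}(\omega)=\bigoplus_{i=1}^k \tr_{i,2}\left[V_i\omega V_i^\dagger\right]\otimes\ketbra{i}{i}
\end{align*}
maps $\mathcal{B}(\mathcal{H}_A)$ into the classical-quantum system $\bigoplus_i \mathcal{B}(\mathcal{H}_{i,1})$. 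The map $\mathcal{P}$ then re-prepares, i.e.\ it sends $X_i\otimes\ketbra{i}{i}\mapsto V_i^\dagger(X_i\otimes\rho_i)V_i$. Both maps are CPTP; trace preservation of $\mathcal{T}$ uses $\sum_i V_i^\dagger V_i=1$, which follows from the orthogonal decomposition \autoref{eq:decomposition_level_Hilbert_spaces}. By the DPI for $\operatorname{id}_R\otimes\mathcal{P}$ it then suffices to bound $I(R:K)$ for the state $(\operatorname{id}_R\otimes\mathcal{T})(\psi_{RA})$, where $K$ denotes the direct-sum output $\bigoplus_i \mathcal{H}_{i,1}$. In fact $\mathcal{P}$ is reversible on the range of $\mathcal{T}$ (compose with $\mathcal{T}$ again), so equality holds; this is what will give tightness.

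Next I compute $I(R:K)$ for the block-diagonal output $\sigma_{RK}=\sum_i p_i\,\sigma^{(i)}_{R H_{i,1}}\otimes\ketbra{i}{i}$. The block label $i$ is a classical register $J$ recoverable from $K$ alone, so I can write
\begin{align*}
I(R:K)=I(R:J)+I(R:K\vert J)\le H(J)+\sum_i p_i\, I(R:H_{i,1})_{\sigma^{(i)}}.
\end{align*}
For each block, $I(R:H_{i,1})\le 2\log d_i$, since the conditional entropy of any system is at least $-\log$ of the dimension of the other. Therefore
\begin{align*}
I(R:K)\le H(p)+\sum_i p_i\log d_i^2 = \sum_i p_i\log\frac{d_i^2}{p_i}\le \log\sum_i d_i^2,
\end{align*}
where the last step is Jensen's inequality (concavity of $\log$), or equivalently nonnegativity of the relative entropy between $p$ and $q_i=d_i^2/\sum_j d_j^2$. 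This step needs care: the decomposition of $I(R:K)$ uses $H(K)=H(J)+\sum_i p_iH(\sigma^{(i)}_{H_{i,1}})$ and the analogous identity for $RK$. Both hold because the states are block diagonal with orthogonal supports, so I will verify them directly rather than through abstract chain rules.

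For tightness, I choose an input that is maximally entangled inside each block and weighted by $p_i\propto d_i^2$. Concretely, $\ket{\psi}=\sum_i\sqrt{p_i}\,\ket{i}_{R'}\otimes V_i^\dagger(\ket{\Phi_{d_i}}\otimes\ket{\xi_i})$, with a reference register $R'$ labelling the blocks and $\ket{\xi_i}\in\mathcal{H}_{i,2}$ arbitrary. Each inequality above is then saturated, giving $\log\sum_i d_i^2$. I expect the main obstacle to be bookkeeping rather than substance: checking that $\mathcal{T}$ really outputs a block-diagonal (direct-sum) state, in particular that cross terms $V_i\omega V_j^\dagger$ with $i\ne j$ are discarded, and keeping $R$ of unbounded dimension throughout. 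The latter is harmless, since every bound only involves the output dimensions $d_i$.
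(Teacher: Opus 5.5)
Your proposal is correct and follows the paper's proof. Both arguments use the block decomposition via the $V_i$, a classical block register bounded by $H(p)$, the per-block bound $2\log d_i$, the optimization against $q_i = d_i^2/\sum_j d_j^2$, and the same saturating input. The only difference is cosmetic: you remove the re-prepared states $\rho_i$ by factoring $E^*=\mathcal{P}\circ\mathcal{T}$ and applying data processing (with equality by reversibility). The paper instead conjugates by the unitary $U=\bigoplus_i V_i$ and drops $\rho_i$ using additivity of relative entropy and $D(\rho_i\Vert\rho_i)=0$.
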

\begin{proof}
    We use \autoref{eq:def_entanglement_asssisted_classical_capacity} and define
    \begin{align}
        U:\mathcal{H}\to  \bigoplus_{i=1}^k  \left(\mathcal{H}_{i,1} \otimes \mathcal{H}_{i,2} \right), \quad U\phi =  \bigoplus_{i=1}^k V_i\phi.
    \end{align}
    Using the properties of the isometries $\{V_i:\mathcal{H}\to \mathcal{H}_{i,1}\otimes\mathcal{H}_{i,2}\}_{i=1}^k$ 
     \begin{align}
        V_i V_j^\dagger = \delta_{ij} 1_{\mathcal{H}_{i,1} \otimes \mathcal{H}_{i,2}}, \quad \sum_{i=1}^k V_i^\dagger V_i  = 1_{\mathcal{H}}. 
    \end{align}
    yields 
    \begin{align}
        UU^\dagger =  \bigoplus_{i=1}^k V_iV_i^\dagger = 1_{\oplus_{i} \left(\mathcal{H}_{i,1}\otimes\mathcal{H}_{i,2}\right)}, \quad U^\dagger U = \sum_{i=1}^k V_i^\dagger V_i = 1_{\mathcal{H}}.
    \end{align}
    Now we compute with the help of \autoref{lem:calculation_adjoint}
    \begin{equation}\label{eq:thm_entanglement_assited_1}
        \begin{aligned}
            UE^*\left(\omega\right) U^\dagger &=   \bigoplus_{i=1}^kV_i  \left(\sum_{j=1}^k V_j^\dagger \left( \tr_{j,2}\left[V_j \omega V_j^\dagger\right]\otimes \rho_j \right)V_j \right)V_i^\dagger \\
            &=  \bigoplus_{i=1}^k  \left( \tr_{i,2}\left[V_i \omega V_i^\dagger\right]\otimes \rho_i \right).
        \end{aligned}
    \end{equation}
    Now let us define for a purification $\psi_{RA}$ of $\rho_A$ the states 
    \begin{equation}
        \begin{aligned}
            \sigma_{RA} &\coloneqq \left(\operatorname{id}_R\otimes E^*\right)\left(\psi_{RA}\right) \\
            \tilde{\sigma}_{R\oplus_i \left(\mathcal{H}_{i,1}\otimes\mathcal{H}_{i,2}\right)} &\coloneqq \left(\operatorname{id}_R\otimes U\right)\sigma_{RA}\left(\operatorname{id}_R\otimes U\right)^\dagger.
        \end{aligned}
    \end{equation}
    Because local isometries do not change the mutual information\footnote{Recall that $I\left(A : B\right)_{\rho_{AB}} = D\left(\rho_{AB}\Vert \rho_A \otimes \rho_B\right)$ and the relative entropy is invariant under isometries (see e.g. \cite{Tomamichel2016book}).}, we have
    \begin{align}
        I\left(R:A\right)_{\sigma_{RA}} = I\left(R: \bigoplus_{i=1}^k \left(\mathcal{H}_{i,1}\otimes\mathcal{H}_{i,2}\right)\right)_{\tilde{\sigma}_{R:\oplus_i \left(\mathcal{H}_{i,1}\otimes\mathcal{H}_{i,2}\right)}}.
    \end{align}
    Using the concrete representation from \autoref{eq:lem_adjoints} in \autoref{eq:thm_entanglement_assited_1}, we can write  
    \begin{align}
        \tilde{\sigma}_{R\oplus_i \left(\mathcal{H}_{i,1}\otimes\mathcal{H}_{i,2}\right)} =  \bigoplus_{i=1}^k \left(X_{RA_{i,1}}^i \otimes \rho_i\right), \quad X_{RA_{i,1}}^i \coloneqq \tr_{\mathcal{H}_{i,2}}\left[\left(\operatorname{id}_R\otimes V_i\right) \psi_{RA} \left(\operatorname{id}_R\otimes V_i\right)^\dagger\right],
    \end{align}
    whereby $X_{RA_{i,1}}^i$ are subnormalized states. We normalize them with $p_i \coloneqq \tr\left[X_{RA_{i,1}}^i\right]$, i.e.\ we set $\hat{X}^i_{RA_1}\coloneqq X^i_{RA_{i,1}}/p_i$, and introduce a classical register $K$ for the block-diagonal term
    \begin{align}\label{eq:proof_entanglement_assisted_concrete_form_state}
    \tilde{\sigma}_{R K A_1A_2} = \sum_{i=1}^k p_i \ketbra{i}{i}_K \otimes \hat{X}_{RA_1}^i \otimes \rho^i_{A_2}.
    \end{align}
    As we are interested on the mutual information $I\left(R:KA_1A_2\right)_{\tilde{\sigma}_{R K A_1A_2}}$, we can apply the chain rule
    \begin{align}\label{eq:proof_entanglement_assisted_chain_rule}
        I\left(R:K A_1 A_2\right)_{\tilde{\sigma}_{R K A_1A_2}} = I\left(R:K\right)_{\tilde{\sigma}_{R K}} + I\left(R:A_1 A_2  \vert K\right)_{\tilde{\sigma}_{RKA_1A_2}}.
    \end{align}
    Now we apply the \autoref{lemma:conditioning_mutual_information} in a simple form in order to get
    \begin{align}\label{eq:proof_thm_entanglement_classical_rewriting}
    I\left(R:A_1 A_2  \vert K\right)_{\tilde{\sigma}_{RKA_1A_2}} = \sum_{i}p_iI\left(R:A_1 A_2\right)_{\tilde{\sigma}_{RA_1A_2\vert i}}.
    \end{align}
    Writing this mutual information in terms of relative entropy for each $1\leq i \leq k$ yields
    \begin{equation}
        \begin{aligned}
            I\left(R:A_1 A_2\right)_{\tilde{\sigma}_{RA_1A_2\vert i}} &= D\left(\tilde{\sigma}_{RA_1A_2\vert i} \Vert \tilde{\sigma}_{R\vert i} \otimes \tilde{\sigma}_{A_1A_2\vert i}\right) \\
            &=D\left(\hat{X}_{RA_1}^i \otimes \rho^i_{A_2} \Vert \hat{X}^i_{R}\otimes  \hat{X}_{A_1}^i \otimes \rho^i_{A_2}\right) \\
            &=D\left(\hat{X}_{RA_1}^i  \Vert \hat{X}^i_{R}\otimes  \hat{X}_{A_1}^i\right) + D\left(\rho^i_{A_2} \Vert \rho^i_{A_2}\right) \\
            &=D\left(\hat{X}_{RA_1}^i \Vert \hat{X}^i_{R}\otimes  \hat{X}_{A_1}^i\right),
        \end{aligned}
    \end{equation}
    whereby we have used \autoref{eq:proof_entanglement_assisted_concrete_form_state} and the facts that relative entropy is additive and satisfies $D\left(\rho^i_{A_2} \Vert \rho^i_{A_2}\right) =0$. Thus, concluding \autoref{eq:proof_thm_entanglement_classical_rewriting}, we have
    \begin{align}\label{eq:proof_thm_geting_rid_of_A2}
        I\left(R:A_1 A_2  \vert K\right)_{\tilde{\sigma}_{RKA_1A_2}} = \sum_{i}p_iI\left(R:A_1\right)_{\tilde{\sigma}_{RA_1\vert i}}.
    \end{align}
    Thus, we combine all the steps \autoref{eq:proof_entanglement_assisted_chain_rule}, \autoref{eq:proof_thm_entanglement_classical_rewriting} and \autoref{eq:proof_entanglement_assisted_concrete_form_state}  in order to get 
    \begin{align}
        I\left(R:K A_1 A_2\right)_{\tilde{\sigma}_{R K A_1A_2}} = I\left(R:K\right)_{\tilde{\sigma}_{R K}} + \sum_{i}p_iI\left(R:A_1\right)_{\tilde{\sigma}_{RA_1\vert i}}.
    \end{align}
    The first term has as an upper bound $H\left(p\right)$, because $I\left(R:K\right) = H\left(K\right) - H\left(K \vert R \right) \leq H\left(K\right) = H\left(p\right)$, because $K$ is classical and thus $H\left(K\vert R \right)\geq 0$, and the second term has the usual upper bound $\log d_i^2$ if $\dim \mathcal{H}_{i,1} = d_i$, such that we conclude 
    \begin{align}\label{eq:proof_entanglement_assisted_close_to_final}
        I\left(R:K A_1 A_2\right)_{\tilde{\sigma}_{R K A_1A_2}} \leq H\left(p\right) + \sum_{i=1}^k p_i \log d_i^2.
    \end{align}
    On the right hand side we now apply the standard optimization over $\left(p_i\right)$. For that purpose we define $W \coloneqq \sum_{i=1}^k d_i^2$ and define $q_i \coloneqq d_i^2/W$. Then we have
    \begin{equation}
        \begin{aligned}
            0\leq D\left(p\Vert q\right) &= \sum_{i=1}^k p_i \log \frac{p_i}{q_i} \\
            &= \sum_{i=1}^k  \left(p_i \log p_i - p_i\log q_i \right) \\
            &= -H\left(p\right)- \sum_{i=1}^k  \left(p_i \log d_i^2 \right) + \log W.
        \end{aligned}
    \end{equation}
    Rearranging leads to 
    \begin{align}
        H\left(p\right) + \sum_{i=1}^k  \left(p_i \log d_i^2 \right) \leq  \log W,
    \end{align}
    which, inserted into \autoref{eq:proof_entanglement_assisted_close_to_final}, leads to the desired result
    \begin{align}
        C_{\operatorname{EA}}\left(E^*\right) \leq H\left( p\right) + \sum_{i=1}^k p_i \log d_i^2 \leq \log\left(\sum_{i=1}^k d_i^2\right).
    \end{align}

    Given the proof technique it is easy to see that the state
    \begin{align}
        \ket{\psi_{RA}} = \sum_{i=1}^k  \sqrt{p_i} \ket{i}_{R_0}\otimes\ket{i}_{A_0} \otimes \ket{\Phi^i}_{R_{1,i}A_{1,i}} \otimes \ket{\varphi^i}_{R_{2,i}A_{2,i}}
    \end{align}
    with $p_i\coloneqq \frac{d_i^2}{W}$ actually achieves the bound if the $\Phi^i$ are maximally entangled and the $\varphi^i$ correspond to purifications of the $\rho_i$ from \autoref{lem:calculation_adjoint}, which do not contribute to the bound.   
\end{proof}

We conclude this section with some comments on the proof strategy of \autoref{thm:upper_bound_entanglement_assisted_classical_capacity}. The main ingredient in the proof is that conditional expectations yield a specific decomposition, given by a finite dimensional $C^*$-algebra. Moreover we note that we can improve the bound by a factor of $2$ in case that we consider block-wise measurements in \autoref{eq:decomposition_level_Hilbert_spaces}, because in \autoref{eq:proof_entanglement_assisted_close_to_final} the subsystems would be classical and we could use the stronger uniform classical bound on mutual information there. Without the assumption of classicality, this result can even be proven with the reduction criterion and a PPT (positive partial transpose) assumption on the underlying states \cite{Horodecki1999}.

The following proposition makes the role of classicality precise for the case relevant to the de Finetti arguments below, namely when the \emph{side information} system is classical, as is the case after the measurements in \autoref{eq:mutual_information_bounds}. It is the constrained analogue of the sharper classical-quantum bound \autoref{eq:cq_mutual_information_bound}, which it recovers for the trivial constraint $\Phi = \operatorname{id}_A$. Its formula mirrors the formula $\chi\left(\bigoplus_i \operatorname{id}_{n_i}\otimes \tr_{m_i}\right) = \log\left(\sum_i n_i\right)$ of \cite[Prop.~5]{Gao2018} (cf.\ also \cite[Prop.~1]{Fukuda2007}) for the one-shot expression $\chi\left(\mathcal{N}\right) = \max_{\rho_{XA}} I\left(X:B\right)$ of the classical capacity, i.e.\ the maximized Holevo information, in the same way as \autoref{thm:upper_bound_entanglement_assisted_classical_capacity} mirrors the entanglement-assisted formula $C_{\operatorname{EA}} = \log\left(\sum_i n_i^2\right)$. Indeed, since $E^*$ is idempotent, every feasible classical-quantum state can be written as $\rho_{AZ} = \sum_z p\left(z\right) E^*\left(\rho_{A\vert z}\right)\otimes \ketbra{z}{z}$, so the supremum of $I\left(A:Z\right)$ in \autoref{prop:classical_capacity_conditional_expectation} is precisely this one-shot capacity of $E^*$ in ensemble form.

\begin{proposition}[Classical side information]\label{prop:classical_capacity_conditional_expectation}
    Let $\Phi:\mathcal{B}\left(\mathcal{H}_A\right) \to \mathcal{B}\left(\mathcal{H}_A\right)$ be a completely positive and unital channel such that $\Phi^*$ has a full-rank fixed point, and let $\left(d_1,\ldots,d_k\right)$ be as in \autoref{thm:upper_bound_entanglement_assisted_classical_capacity}. Then every classical-quantum state $\rho_{AZ} \in \mathcal{S}\left(\mathcal{H}_A \otimes \mathcal{H}_Z\right)$ with classical $Z$-system satisfying $\left(\Phi^*_{A\to A}\otimes \id_Z\right)\left(\rho_{AZ}\right) = \rho_{AZ}$ obeys
    \begin{align}
        I\left(A:Z\right)_{\rho_{AZ}} \leq \log\left(\sum_{i=1}^k d_i\right),
    \end{align}
    and the bound is attained.
\end{proposition}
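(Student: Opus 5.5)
The plan is to reduce to the block structure exactly as in the proof of \autoref{thm:upper_bound_entanglement_assisted_classical_capacity}, but to use classicality of $Z$ to replace the quantum bound $\log d_i^2$ by $\log d_i$.

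\textbf{Reduction to the range of $E^*$.} By \autoref{lem:fixed points_equal_ergodic_projection}, applied with $B=Z$, the constraint gives $\rho_{AZ}=(E^*\otimes\id_Z)(\rho_{AZ})$. Writing $\rho_{AZ}=\sum_z p(z)\rho_{A\vert z}\otimes\ketbra{z}{z}$ and applying the map blockwise in $z$ shows $\rho_{A\vert z}=E^*(\rho_{A\vert z})$ for every $z$ with $p(z)>0$. Conjugating with the unitary $U=\bigoplus_i V_i$ and using \autoref{lem:calculation_adjoint}, each conditional state becomes
\begin{align}
U\rho_{A\vert z}U^\dagger=\bigoplus_{i=1}^k q_{i\vert z}\,\omega_{i\vert z}\otimes\rho_i,
\end{align}
where $q_{\cdot\vert z}$ is a probability vector and $\omega_{i\vert z}\in\mathcal{S}(\mathcal{H}_{i,1})$. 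The fixed states $\rho_i$ do not depend on $z$.

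\textbf{Chain rule over the block register.} Introduce the classical block register $K$ as in \autoref{eq:proof_entanglement_assisted_concrete_form_state}; this is legitimate because $U$ is a local isometry and the block label is a deterministic function of the block-diagonal support. Then
\begin{align}
I(A:Z)=I(KA_1A_2:Z)=I(K:Z)+I(A_1A_2:Z\vert K).
\end{align}
By \autoref{lemma:conditioning_mutual_information} and the product structure, the second term equals $\sum_i \bar q_i\, I(A_1A_2:Z)_{\vert i}=\sum_i\bar q_i\, I(A_1:Z)_{\vert i}$, with $\bar q_i=\sum_z p(z)q_{i\vert z}$. The $A_2$ factor $\rho_i$ drops out exactly as in the additivity step of the previous proof, since it is independent of $z$. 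Now the classicality of $Z$ is used: for a cq state, $I(A_1:Z)_{\vert i}=H(A_1)_{\vert i}-H(A_1\vert Z)_{\vert i}\le\log d_i$ by \autoref{eq:cq_mutual_information_bound}. Also $I(K:Z)\le H(K)=H(\bar q)$. Hence
\begin{align}
I(A:Z)\le H(\bar q)+\sum_i\bar q_i\log d_i\le\log\Bigl(\sum_i d_i\Bigr).
\end{align}
The last inequality comes from the same relative-entropy optimization as before, now with $q_i=d_i/\sum_j d_j$.

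\textbf{Attainability.} Take $Z$ uniform over pairs $(i,j)$ with $j\in[d_i]$, each pair having probability $1/\sum_l d_l$. Set $\rho_{A\vert (i,j)}=U^\dagger(\ketbra{j}{j}_{\mathcal{H}_{i,1}}\otimes\rho_i)U$ placed in block $i$. Each of these is a fixed point of $E^*$, by \autoref{lem:calculation_adjoint} together with idempotence, so the state is feasible. Its $A$-marginal is $\bigoplus_i (1_{d_i}/\sum_l d_l)\otimes\rho_i$, and the computation above holds with equality: $H(\bar q)$ is the entropy of the distribution $d_i/\sum_l d_l$, and each block contributes exactly $\log d_i$.

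\textbf{Main obstacle.} The delicate step is the conditioning on the register $K$. It must be justified that adjoining $K$ leaves $I(A:Z)$ unchanged and that the conditional states inherit the product form with the $z$-independent factors $\rho_i$. This product form is exactly where the fixed point constraint enters. Everything after that is a reprise of the proof of \autoref{thm:upper_bound_entanglement_assisted_classical_capacity}.
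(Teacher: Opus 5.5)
Your proposal is correct and follows essentially the same route as the paper. You reduce to the range of $E^*$, conjugate with $U$, adjoin the classical block register $K$, and split $I(KA_1A_2:Z)=I(K:Z)+\sum_i \bar q_i\, I(A_1:Z)_{\vert i}$ after dropping the $z$-independent factor $\rho_i$. You then bound each block term by $\log d_i$ using classicality of $Z$ and optimize with $q_i=d_i/\sum_j d_j$, and your attaining ensemble is the same classically maximally correlated state the paper uses. The only difference is cosmetic: you check the range condition separately for each conditional state $\rho_{A\vert z}$, whereas the paper writes $\rho_{AZ}=(E^*\otimes\operatorname{id}_Z)(\sigma_{AZ})$ globally, and the two are equivalent.
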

\begin{proof}
    By \autoref{lem:fixed points_equal_ergodic_projection} (with side information system $\mathcal{H}_Z$) we can write $\rho_{AZ} = \left(E^*_{A\to A}\otimes \id_Z\right)\left(\sigma_{AZ}\right)$ for some state $\sigma_{AZ}$. Repeating the steps of the proof of \autoref{thm:upper_bound_entanglement_assisted_classical_capacity} with the system $Z$ in place of the reference $R$, i.e.\ conjugating with the isometry $U$ and introducing the classical register $K$ for the block index, we obtain the analogue of \autoref{eq:proof_entanglement_assisted_concrete_form_state},
    \begin{align}
        \tilde{\sigma}_{K A_1 A_2 Z} = \sum_{i=1}^k p_i \ketbra{i}{i}_K \otimes \hat{X}^i_{A_1 Z} \otimes \rho^i_{A_2},
    \end{align}
    with normalized states $\hat{X}^i_{A_1Z}$ that are classical on $Z$, and $I\left(A:Z\right)_{\rho_{AZ}} = I\left(K A_1 A_2:Z\right)_{\tilde{\sigma}}$ by isometric invariance of the mutual information. The chain rule together with \autoref{lemma:conditioning_mutual_information}, applied to the classical register $K$, yields
    \begin{align}
        I\left(K A_1 A_2:Z\right)_{\tilde{\sigma}} = I\left(K:Z\right)_{\tilde{\sigma}} + \sum_{i=1}^k p_i \, I\left(A_1 A_2:Z\right)_{\hat{X}^i_{A_1Z}\otimes \rho^i_{A_2}} = I\left(K:Z\right)_{\tilde{\sigma}} + \sum_{i=1}^k p_i \, I\left(A_1:Z\right)_{\hat{X}^i_{A_1Z}},
    \end{align}
    where the second equality uses that $\rho^i_{A_2}$ is in a product with the remaining systems. Since the $Z$-system is classical, we have, exactly as in \autoref{eq:cq_mutual_information_bound}, $I\left(A_1:Z\right)_{\hat{X}^i_{A_1Z}} \leq H\left(A_1\right)_{\hat{X}^i_{A_1Z}} \leq \log d_i$, and moreover $I\left(K:Z\right)_{\tilde{\sigma}} \leq H\left(K\right)_{\tilde{\sigma}} = H\left(p\right)$ because $K$ is classical. Hence
    \begin{align}
        I\left(A:Z\right)_{\rho_{AZ}} \leq H\left(p\right) + \sum_{i=1}^k p_i \log d_i \leq \log\left(\sum_{i=1}^k d_i\right),
    \end{align}
    where the last step is the optimization from the proof of \autoref{thm:upper_bound_entanglement_assisted_classical_capacity} with $W \coloneqq \sum_{i=1}^k d_i$ and $q_i \coloneqq d_i/W$. The bound is attained by the classically maximally correlated state
    \begin{align}
        \tilde{\sigma}_{K A_1 A_2 Z} = \sum_{i=1}^k \frac{d_i}{W} \, \ketbra{i}{i}_K \otimes \left(\frac{1}{d_i}\sum_{x=1}^{d_i} \ketbra{x}{x}_{A_1}\otimes \ketbra{i,x}{i,x}_Z \right)\otimes \rho^i_{A_2},
    \end{align}
    which is of the form \autoref{eq:lem_adjoints} and hence feasible, and for which $I\left(A:Z\right) = H\left(p\right) + \sum_{i=1}^k p_i \log d_i = \log W$.
\end{proof}

\begin{remark}\label{rem:classical_side_information_deFinetti}
    In the de Finetti arguments of this work the mutual information is ultimately evaluated on post-measurement states with classical side systems, cf.\ \autoref{eq:mutual_information_bounds}. Whenever the fixed point constraint acts on the \emph{unmeasured} side --- so that it commutes with the measurement channel and is inherited by the post-measurement state --- \autoref{prop:classical_capacity_conditional_expectation} may therefore be used in place of \autoref{thm:upper_bound_entanglement_assisted_classical_capacity}, replacing $\log\left(\sum_i d_i^2\right)$ by $\log\left(\sum_i d_i\right)$. In the unconstrained case this is exactly the improvement from $2\log d_A$ to $\log d_A$ in \autoref{eq:cq_mutual_information_bound}, and in the Schur--Weyl settings below it roughly halves the polylogarithmic exponents; e.g., $\left(d_A^2-1\right)\log\left(n+d_A\right)$ improves to $\tfrac{1}{2}\left(d_A-1\right)\left(d_A+2\right)\log\left(n+d_A\right)$ in \autoref{thm:double_extended_de_Finetti}, $\log m_{\left(n\right)}^2$ improves to $\log m_{\left(n\right)}$ in \autoref{thm:bose_de_Finetti}, and $\log\left(\sum_i m_{i,A}^2\right)$ improves to $\log\left(\sum_i m_{i,A}\right)$ in \autoref{thm:interpolation_deFinetti}. For uniformity of exposition we nevertheless state the results below via the entanglement-assisted route.
\end{remark}

\subsection{\texorpdfstring{Applications of \autoref{thm:upper_bound_entanglement_assisted_classical_capacity}}{Applications of the capacity bound}}
Arguably one of the most interesting applications of \autoref{thm:upper_bound_entanglement_assisted_classical_capacity} is in case of a compact group action, where fixed points are identified with the commutant of the group. We state it as the following corollary.
\begin{corollary}\label{cor:entanglement_assisted_classcial_capacity_group_averages}
    Let $\mathcal{V}$ be a compact subgroup of the unitary group $\mathcal{U}\left(\mathcal{H}\right)$ on $\mathcal{H}$ and consider the channel 
    \begin{align}
        \Phi_{\mathcal{V}}^*:\mathcal{S}\left(\mathcal{H}\right) \to \mathcal{S}\left(\mathcal{H}\right), \quad \rho \mapsto \int_{\mathcal{V}} V\rho V^\dagger d\mu_{\mathcal{V}}\left(V\right), 
    \end{align}
    whereby $\mu_{\mathcal{V}}$ denotes the unique normalized Haar measure corresponding to $\mathcal{V}$ (see e.g. \cite[chap.~9]{Cohn2013}) and $\left(m_1,\ldots,m_k\right)$, $\left(d_1,\ldots,d_k\right)$ the multiplicities and dimensions of the irreducible representations in the decomposition of $\mathcal{H}$ under $\mathcal{V}$ (see e.g. \cite[Thm.~4.3]{Bump2013}). Then we have for the dual $E_{\mathcal{V}}^*$ of the conditional expectation $E_{\mathcal{V}}$ onto the commutant $\mathcal{V}^\prime$ 
    \begin{align}
        C_{\operatorname{EA}}\left(E_{\mathcal{V}}^*\right) \leq \log  \left(\sum_{i=1}^k m_i^2 \right).
    \end{align}
\end{corollary}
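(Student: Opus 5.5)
The plan is to reduce the corollary to \autoref{thm:upper_bound_entanglement_assisted_classical_capacity} applied with $\Phi \coloneqq \Phi_{\mathcal{V}}$, the Heisenberg-picture Haar twirl $a\mapsto \int_{\mathcal{V}} V^\dagger a V\, d\mu_{\mathcal{V}}(V)$. The work lies in identifying the block data $(d_i)$ of that proposition with the multiplicities of the group action. There are three steps.

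\emph{Step 1: the hypotheses hold.} The twirl $\Phi_{\mathcal{V}}^*$ is completely positive and trace preserving, being an average of unitary conjugations. Its dual is again the twirl (with $V\mapsto V^\dagger$, using inversion invariance of $\mu_{\mathcal{V}}$), so it is unital. The maximally mixed state $\tau_{\mathcal{H}}$ is a full-rank fixed point. The twirl is also idempotent by invariance of the Haar measure, so the Ces\`aro mean in \autoref{eq:cesaro_E_def} is just $\Phi_{\mathcal{V}}$ itself. Hence $E_{\mathcal{V}} = \Phi_{\mathcal{V}}$, with range $\operatorname{Fix}(\Phi_{\mathcal{V}})$.

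\emph{Step 2: identify the fixed point algebra.} I will show $\operatorname{Fix}(\Phi_{\mathcal{V}}) = \mathcal{V}^\prime$.
\begin{itemize}
\item If $X$ commutes with every $V\in\mathcal{V}$, then clearly $\Phi_{\mathcal{V}}(X)=X$.
\item Conversely, if $X=\Phi_{\mathcal{V}}(X)$, then left invariance of $\mu_{\mathcal{V}}$ gives $W^\dagger X W = X$ for all $W\in\mathcal{V}$, so $X\in\mathcal{V}^\prime$.
\end{itemize}
This also explains why $E_{\mathcal{V}}$ is the conditional expectation onto the commutant, as stated.

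\emph{Step 3: read off the block structure.} By complete reducibility of compact group representations (\cite[Thm.~4.3]{Bump2013}),
\begin{align}
\mathcal{H}\cong\bigoplus_{i=1}^k \mathbb{C}^{d_i}\otimes\mathbb{C}^{m_i},
\end{align}
where $\mathcal{V}$ acts as $\pi_i(V)\otimes 1_{m_i}$ with $\pi_i$ irreducible of dimension $d_i$. Then $\mathcal{A}_{\mathcal{V}}\cong\bigoplus_i M_{d_i}\otimes 1_{m_i}$ by Burnside's theorem and inequivalence of the $\pi_i$. Schur's lemma gives $\mathcal{V}^\prime=\mathcal{A}_{\mathcal{V}}^\prime\cong\bigoplus_i 1_{d_i}\otimes M_{m_i}$.

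In the notation of \autoref{eq:decomposition_finite_star_algebras} and \autoref{eq:decomposition_level_Hilbert_spaces}, applied to $\mathcal{B}=\mathcal{V}^\prime$, the roles of the two tensor factors are therefore swapped. The factor on which $\mathcal{B}$ acts irreducibly is $\mathcal{H}_{i,1}\cong\mathbb{C}^{m_i}$, and the multiplicity space is $\mathcal{H}_{i,2}\cong\mathbb{C}^{d_i}$. So the tuple entering \autoref{thm:upper_bound_entanglement_assisted_classical_capacity} is $(m_1,\ldots,m_k)$, and the proposition yields $C_{\operatorname{EA}}(E_{\mathcal{V}}^*)\le\log\sum_i m_i^2$.

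As a sanity check, $E_{\mathcal{V}}^*$ acts blockwise as $\tau_{d_i}\otimes\operatorname{tr}_{d_i}$ composed with $\operatorname{id}_{m_i}$, matching \autoref{lem:calculation_adjoint} with $\rho_i = 1_{d_i}/d_i$.

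The main obstacle is this bookkeeping of which factor is which: the representation dimensions versus the multiplicities, and the commutant versus the generated algebra. Getting it backwards would produce the wrong bound $\log\sum_i d_i^2$. The rest is routine, and I will state the orientation carefully via Schur's lemma.
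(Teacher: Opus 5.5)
Your proposal is correct and follows essentially the same route as the paper: check that the Haar twirl meets the hypotheses of \autoref{thm:upper_bound_entanglement_assisted_classical_capacity}, use Haar invariance to show $\operatorname{Fix}(\Phi_{\mathcal{V}})=\mathcal{V}^\prime$, and read off the block sizes $m_i$ of $\mathcal{V}^\prime\cong\bigoplus_i M_{m_i}(\mathbb{C})$. Your explicit Schur's-lemma bookkeeping, which places $\mathcal{H}_{i,1}\cong\mathbb{C}^{m_i}$ and $\rho_i=1_{d_i}/d_i$, and your observation that $E_{\mathcal{V}}=\Phi_{\mathcal{V}}$ by idempotence make precise what the paper does by a direct appeal to the decomposition in \autoref{eq:decomposition_finite_star_algebras}.
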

\begin{proof}
    As group averages with respect to a unitary representation share the property that they equal their dual up to isomorphism of spaces, i.e. $\Phi_{\mathcal{V}} = \Phi^*_{\mathcal{V}}$, and leave the maximally mixed state $\tau \in \mathcal{S}\left(\mathcal{H}\right)$ invariant, the prerequisites of \autoref{thm:upper_bound_entanglement_assisted_classical_capacity} are satisfied. Furthermore, by translation invariance of the Haar measure (e.g. \cite[sec.~9.2]{Cohn2013} by definition) we have for $X \in\operatorname{Fix}\left(\Phi_{\mathcal{V}}\right)$ and $W \in \mathcal{V}$ 
    \begin{equation}
        \begin{aligned}
            X &= \Phi_{\mathcal{V}}^*\left(X\right) \\
            &= \int VXV^\dagger d\mu_{\mathcal{V}}\left(V\right) \\
            &=\int \left(WV\right)X\left(WV\right)^\dagger d\mu_{\mathcal{V}}\left(V\right) \\
            &= W\int VXV^\dagger d\mu_{\mathcal{V}}\left(V\right)W^\dagger \\
            &= WXW^\dagger.
        \end{aligned}
    \end{equation}
    Conversely, every $X$ commuting with all $W \in \mathcal{V}$ is clearly invariant under the average. Thus we have
    \begin{align}
        \operatorname{Fix}\left(\Phi_{\mathcal{V}}^*\right) = \{X \in \mathcal{B}\left(\mathcal{H}\right) \ \vert \ WX = XW \quad \text{for all} \ W \in \mathcal{V}\} \eqqcolon \mathcal{V}^\prime.
    \end{align}
    It is easy to check that the commutant of any subset of $\mathcal{B}\left(\mathcal{H}\right)$ is an algebra; since $\mathcal{V}$ consists of unitaries and is closed under adjoints, $\mathcal{V}^\prime$ is moreover a $*$-algebra and hence a finite dimensional $C^*$-algebra (see \cite[chap.~3.1]{Stratila2019}). We conclude by \cite[Thm.~11.9]{Takesaki2001OperatorAlgebrasI}, as written in \autoref{eq:decomposition_finite_star_algebras}, that 
    \begin{align}
       \operatorname{Fix}\left(\Phi^*_{\mathcal{V}}\right) = \mathcal{V}^\prime \cong  \bigoplus_{i=1}^k M_{m_i}\left(\mathbb{C}\right). 
    \end{align}
    Comparing this with \autoref{thm:upper_bound_entanglement_assisted_classical_capacity} immediately yields the bound
    \begin{align}
        C_{\operatorname{EA}}\left(E_{\mathcal{V}}^*\right) \leq \log  \left(\sum_{i=1}^k m_i^2 \right).
    \end{align}
\end{proof}

For finite groups the quantum-capacity analogue appears in \cite[Sec.\ 4.1]{Gao2018}; the entanglement-assisted formula for conditional expectations onto general fixed point algebras --- of which the Haar twirl, being idempotent, is a special case --- is also explicit in the Markovian-noise setting \cite[Thm.\ 3.1]{SinghDatta2024}, with zero-error and unassisted counterparts in \cite{SinghRahamanDatta2025, FawziRahamanTaheri2024}.

\begin{remark}[Equality in the capacity bounds]\label{rem:equality_capacity_bounds}
\autoref{thm:upper_bound_entanglement_assisted_classical_capacity} and
\autoref{cor:entanglement_assisted_classcial_capacity_group_averages} are stated as upper
bounds, but both hold with equality. The state constructed at the end of the proof of
\autoref{thm:upper_bound_entanglement_assisted_classical_capacity} lies by construction in
the range of $\operatorname{id}_R\otimes E^*$ and is hence feasible by
\autoref{lem:fixed points_equal_ergodic_projection}; it attains the bound, as the block
distribution $p_i = d_i^2/\sum_{j=1}^k d_j^2$ contributes $I\left(R:K\right) = H\left(p\right)$, the maximally
entangled states $\Phi^i$ contribute $2\log d_i$ each, and
$H\left(p\right) + \sum_{i=1}^k p_i \log d_i^2 = \log\left(\sum_{i=1}^k d_i^2\right)$ for this choice of $p$.
Consequently
\begin{align}
    C_{\operatorname{EA}}\left(E^*\right)
    = \sup\left\{ I\left(A:B\right)_{\rho_{AB}} \,\middle\vert\, \left(\Phi^*_{A\to A}\otimes \operatorname{id}_B\right)\left(\rho_{AB}\right) = \rho_{AB}\right\}
    = \log\left(\sum_{i=1}^k d_i^2\right),
\end{align}
where the two inequalities follow from the feasibility of the states in
\autoref{eq:def_entanglement_asssisted_classical_capacity} and from purifying the $B$-system
of a feasible state together with data processing, respectively; this is in accordance with
the exact, strongly additive formulas for direct sums of partial traces
\cite[Prop.~1]{Fukuda2007}, \cite[Prop.~5]{Gao2018}. In
\autoref{cor:entanglement_assisted_classcial_capacity_group_averages} the optimizer commutes
with $1\otimes V$ for all $V \in \mathcal{V}$, so the multiplicity bound
$\log\left(\sum_{i=1}^k m_i^2\right)$ is exhausted already within the set of $\mathcal{V}$-invariant
states. We nevertheless phrase both statements as inequalities, since only this direction
enters the de Finetti arguments below, and since the value is not stable under further
restriction of the feasible set: for classical side information the attainable optimum drops
from $\log\left(\sum_i d_i^2\right)$ to $\log\left(\sum_i d_i\right)$, cf.\
\autoref{prop:classical_capacity_conditional_expectation}.
\end{remark}

\subsection{Distortion bounds given fixed points}
In a recent work \cite{kossmann2025_aqec}, more symmetries than just permutation symmetries in de Finetti hierarchies and thus separability problems have been introduced in order to study outer bounds in the finite block-length coding regime for the achievable regions in approximate quantum error correction (see also \cite{Tomamichel2016}). This motivates us to study fixed point constraints on one of the systems of the separable cut $A : B$. Particularly, we aim to approximate separable states $\sigma_{AB}$ such that there exists a decomposition in product states 
\begin{align}
 \sigma_{AB} = \sum_{i\in I} p\left(i\right) \sigma_{A\vert i}   \otimes   \sigma_{B\vert i}, 
\end{align}
with the property that each of the $\sigma_{B\vert i}$ satisfies the following fixed point constraint
\begin{align}
    \Phi_{B\to B}\left(\sigma_{B\vert i}\right) = \sigma_{B\vert i}, \quad i \in I.
\end{align}
Considering the proof-sketch in \autoref{subsec:deFinetti_revisited}, a hidden dimension dependence of the convergence behavior occurs in \autoref{eq:informationally_complete_measurements}, where the bound on the informationally complete measurement is introduced. However, it is known from \cite{Diaconis1977,Diaconis1980} that this bound vanishes in case of classical systems $A$ or $B$. The aim of this subsection is to investigate the new point of view of fixed points as a type of interpolation result. Using the formula for conditional expectations \autoref{prop:structure_prop_cond_expectation_mmwolf} and its dual \autoref{lem:calculation_adjoint}, we show that including fixed points into the distortion bounds $c\left(\mathcal{M}\right)$ indeed is possible and yields, given the near-optimal distortion bound from \cite[Lem.~8]{jee2020quasi} block-wise bounds. Next to our immediate applications for the de Finetti proof strategy from \autoref{subsec:deFinetti_revisited} we note, that distortion bounds can be seen as a general tool in quantum information theory (see e.g.\ \cite{Lami_2018,Brandao_2017}).

\begin{proposition}\label{prop:distortion_conditional_expectation}
Assume a channel $\Phi:\mathcal{B}\left(\mathcal{H}_B\right)\to\mathcal{B}\left(\mathcal{H}_B\right)$ in the Heisenberg picture and assume that its Schr\"odinger
dual $\Phi^{*}$ admits a full-rank fixed point.
Let $E:\mathcal{B}\left(\mathcal{H}_B\right)\to \operatorname{Fix}\left(\Phi\right)$ be the conditional expectation onto the fixed point algebra from \autoref{thm:ergodic_projection_condexp}, and let $E^{*}$ denote its dual.
Then there exists a measurement $\mathcal{M}_B$ such that, for every Hermitian operator
$X_{AB}\in\mathcal{B}\left(\mathcal{H}_A\otimes\mathcal{H}_B\right)$ satisfying
\begin{align}
\left(\operatorname{id}_A\otimes E^{*}\right)\left(X_{AB}\right)=X_{AB},
\end{align}
and which is \emph{block-wise traceless} with respect to the block projections $\{P_i\}_{i=1}^k$\footnote{Given the formula from \autoref{lem:calculation_adjoint}, we have $P_i\coloneqq V_i^\dagger V_i$, which are projections by construction.} of $\operatorname{Fix}\left(\Phi\right)$, i.e.
\begin{align}\label{eq:block-wise_traceless_assumption}
\tr  \left[\left(1_A\otimes P_i\right) X_{AB} \right]=0, \qquad 1\leq i \leq k,
\end{align}
the following distortion bound holds:
\begin{align}
\lVert X_{AB}\rVert_1 \leq 2  d_{\max}   \lVert \left(\operatorname{id}_A\otimes\mathcal{M}_B\right)\left(X_{AB}\right) \rVert_1,
\end{align}
with $d_{\max}\coloneqq \max_{1\leq i\leq k} d_i$, where $\left(d_1,\ldots,d_k\right)$ are the block dimensions of the conditional expectation from \autoref{thm:ergodic_projection_condexp}.
\end{proposition}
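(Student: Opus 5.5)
The plan is to reduce the problem block by block to the unconstrained distortion bound of \cite[Lem.~8]{jee2020quasi}. That bound provides, in every dimension $d$, an informationally complete measurement $\mathcal{N}_d$ with $\lVert Y_{AC}\rVert_1 \leq 2d\,\lVert (\operatorname{id}_A\otimes \mathcal{N}_d)(Y_{AC})\rVert_1$ for traceless Hermitian $Y_{AC}$ with $d_C = d$.

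First, I use \autoref{lem:calculation_adjoint} and the unitary $U$ from the proof of \autoref{thm:upper_bound_entanglement_assisted_classical_capacity}. Together with the fixed point condition $(\operatorname{id}_A\otimes E^*)(X_{AB}) = X_{AB}$, they give the normal form
\begin{align}
(1_A\otimes U) X_{AB} (1_A\otimes U)^\dagger = \bigoplus_{i=1}^k Y^i_{AB_{i,1}} \otimes \rho_i, \qquad Y^i_{AB_{i,1}} \coloneqq \tr_{i,2}\left[(1_A\otimes V_i) X_{AB} (1_A\otimes V_i)^\dagger\right].
\end{align}
Each $Y^i$ is Hermitian. Since $\tr[Y^i] = \tr[(1_A\otimes P_i)X_{AB}]$, the block-wise traceless assumption \autoref{eq:block-wise_traceless_assumption} says precisely that every $Y^i$ is traceless. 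Because $U$ is unitary and $\lVert Y\otimes\rho_i\rVert_1 = \lVert Y\rVert_1$ for the density operator $\rho_i$, it follows that $\lVert X_{AB}\rVert_1 = \sum_i \lVert Y^i\rVert_1$.

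Next, I define the measurement. Let $\mathcal{N}^{(i)}$ be the near-optimal IC measurement of \cite[Lem.~8]{jee2020quasi} on $\mathcal{H}_{i,1}$, with POVM elements $\{N^{(i)}_z\}_{z\in\mathcal{Z}_i}$. Set $\mathcal{Z} \coloneqq \bigsqcup_i \{i\}\times\mathcal{Z}_i$ and
\begin{align}
M_{B\vert (i,z)} \coloneqq V_i^\dagger \left(N^{(i)}_z \otimes 1_{m_i}\right) V_i .
\end{align}
These operators are positive, and they sum to $\sum_i V_i^\dagger V_i = 1_B$, so they form a POVM. This measurement first reads the block label and then measures the $\mathcal{H}_{i,1}$ factor, discarding $\mathcal{H}_{i,2}$. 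On operators in the normal form above, the outcome distribution splits as
\begin{align}
(\operatorname{id}_A\otimes\mathcal{M}_B)(X_{AB}) = \bigoplus_i (\operatorname{id}_A\otimes \mathcal{N}^{(i)})(Y^i)\otimes\ketbra{i}{i},
\end{align}
because $\tr[\rho_i]=1$. The trace norm of the classical-quantum output is therefore $\sum_i \lVert(\operatorname{id}_A\otimes\mathcal{N}^{(i)})(Y^i)\rVert_1$.

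Finally, I apply the unconstrained bound to each traceless $Y^i$:
\begin{align}
\lVert X_{AB}\rVert_1 = \sum_i \lVert Y^i\rVert_1 \leq \sum_i 2d_i \lVert (\operatorname{id}_A\otimes\mathcal{N}^{(i)})(Y^i)\rVert_1 \leq 2d_{\max}\lVert (\operatorname{id}_A\otimes\mathcal{M}_B)(X_{AB})\rVert_1 .
\end{align}

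The main obstacle is the bookkeeping in the first step. I need to verify that the fixed point condition really forces the tensor-product form with the fixed $\rho_i$ in each block and kills all off-diagonal blocks $V_i X V_j^\dagger$ for $i\neq j$. This follows from $\operatorname{ran}(\operatorname{id}_A\otimes E^*)$ together with \autoref{lem:calculation_adjoint}, which outputs only block-diagonal terms. I also need to check that the cited lemma is stated for traceless operators with arbitrary auxiliary $A$. The block-wise tracelessness hypothesis is exactly what makes that lemma applicable in each block. Without it, the block-label contribution would not be controlled, which explains why the statement assumes it.
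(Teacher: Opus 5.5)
Your proof is correct and follows essentially the same route as the paper's. The fixed-point condition puts $X_{AB}$ into the block form with components $Y^i\otimes\rho_i$, and block-wise tracelessness makes each $Y^i$ traceless. The block POVMs from \cite[Lem.~8]{jee2020quasi} are glued via $V_i^\dagger(\,\cdot\,\otimes 1_{i,2})V_i$, and the trace norms split blockwise, giving $\sum_i 2d_i\lVert(\operatorname{id}_A\otimes\mathcal{M}_{i,1})(Y^i)\rVert_1\leq 2d_{\max}\lVert(\operatorname{id}_A\otimes\mathcal{M}_B)(X_{AB})\rVert_1$. The only cosmetic difference is that you conjugate with the unitary $U$ from the capacity proof to get an explicit direct sum. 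The paper instead argues directly from the orthogonal supports of the summands $(1_A\otimes V_i)^\dagger(Y^{(i)}\otimes\rho_i)(1_A\otimes V_i)$.
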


\begin{proof}
By \autoref{thm:ergodic_projection_condexp} (2), the existence of a full-rank invariant state for $\Phi^{*}$ ensures that the ergodic projection yields a conditional expectation $E$ onto the fixed point algebra of $\Phi$.
By \autoref{lem:calculation_adjoint}, the dual $E^{*}$ has the form
\begin{align}\label{eq:def_dual_conditional_exp_in_distortion_proof}
E^*\left(\omega_B\right)
= \sum_{i=1}^k V_i^\dagger \left( \tr_{i,2}\left[V_i \omega_B V_i^\dagger\right]\otimes \rho_i \right)V_i,
\qquad
 \omega_B \in \mathcal{S}\left(\mathcal{H}_B\right),
\end{align}
with mutually orthogonal isometries
\begin{align}
V_i :\mathcal{H}_B \to \mathcal{H}_{i,1} \otimes \mathcal{H}_{i,2},
\qquad
P_i\coloneqq V_i^\dagger V_i,
\end{align}
and density operators $\rho_i\in\mathcal{S}\left(\mathcal{H}_{i,2}\right)$ in the decomposition \autoref{eq:decomposition_level_Hilbert_spaces} used in \autoref{lem:calculation_adjoint}.

Let $X_{AB}\in\mathcal{B}\left(\mathcal{H}_A\otimes\mathcal{H}_B\right)$ be Hermitian and define the block-compressions
\begin{equation}
\begin{aligned}
X_{A}^{\left(i\right)}
&\coloneqq \left(1_A \otimes V_i\right)  X_{AB}  \left(1_A \otimes V_i\right)^\dagger \in \mathcal{B}\left(\mathcal{H}_A \otimes \mathcal{H}_{i,1} \otimes \mathcal{H}_{i,2}\right),
\\
Y_A^{\left(i\right)}
&\coloneqq \tr_{i,2}\left[X_{A}^{\left(i\right)}\right]\in \mathcal{B}\left(\mathcal{H}_A \otimes \mathcal{H}_{i,1} \right).
\end{aligned}
\end{equation}
Then by \autoref{eq:def_dual_conditional_exp_in_distortion_proof}
\begin{align}
\left(\operatorname{id}_A\otimes E^*\right)\left(X_{AB}\right)
= \sum_{i=1}^k\left(1_A \otimes V_i\right)^\dagger \left( Y_A^{\left(i\right)}\otimes \rho_i \right)\left(1_A \otimes V_i\right).
\end{align}
Since the summands have orthogonal support, using invariance of the Schatten-$1$-norm under isometries and $\tr\left[\rho_i\right]=1$ gives
\begin{align}\label{eq:proof_proposition_distortion}
 \lVert \left(\operatorname{id}_A\otimes E^*\right)\left(X_{AB}\right) \rVert_1
= \sum_{i=1}^k \lVert   Y_A^{\left(i\right)}\otimes \rho_i \rVert_1
= \sum_{i=1}^k \lVert   Y_A^{\left(i\right)} \rVert_1 .
\end{align}
Moreover, the assumption \autoref{eq:block-wise_traceless_assumption} implies $\tr\left[Y_A^{\left(i\right)}\right]=0$ for all $i$, since
\begin{align}
\tr\left[Y_A^{\left(i\right)}\right]
= \tr  \left[X_A^{\left(i\right)} \right]
= \tr  \left[\left(1_A\otimes V_i^\dagger V_i\right) X_{AB} \right]
= \tr  \left[\left(1_A\otimes P_i\right) X_{AB} \right]
=0.
\end{align}

Now, for each block $i$ we use \cite[Lem.~8]{jee2020quasi} to obtain a measurement $\mathcal{M}_{i,1}$ on $\mathcal{H}_{i,1}$ (identified with a POVM $\{M_{i,1\vert r,s}\}_{r,s}$) such that for every Hermitian traceless operator $Z_A^{\left(i\right)}\in \mathcal{B}\left(\mathcal{H}_A \otimes \mathcal{H}_{i,1} \right)$,
\begin{align}\label{eq:proof_proposition_distortion2}
\Vert Z_A^{\left(i\right)} \rVert_1 \leq 2\, d_i \,  \lVert \left(\operatorname{id}_A \otimes \mathcal{M}_{i,1}\right)\left(Z_A^{\left(i\right)}\right) \rVert_1.
\end{align}

Define a single POVM on $\mathcal{H}_B$ by gluing the block POVMs:
\begin{align}\label{eq:proof_distortion_measurements}
M_{B\vert i,r,s} \coloneqq V_i^\dagger  \left(M_{i,1\vert r,s}\otimes 1_{i,2}\right)V_i.
\end{align}
Then $\sum_{r,s}M_{B\vert i,r,s}=P_i$ and $\sum_{i=1}^k P_i=1_B$, hence $\{M_{B\vert i,r,s}\}_{i,r,s}$ is a POVM. For the associated measurement map $\mathcal{M}_B$, the orthogonality of the blocks together with \autoref{eq:proof_distortion_measurements} yields
\begin{align}\label{eq:proof_proposition_distortion3}
\lVert \left(\operatorname{id}_A \otimes \mathcal{M}_B\right)\left(X_{AB}\right)\rVert_1
= \sum_{i=1}^k  \lVert \left(\operatorname{id}_A \otimes \mathcal{M}_{i,1}\right)\left(Y_A^{\left(i\right)}\right) \rVert_1.
\end{align}

Finally, using the invariance assumption $\left(\operatorname{id}_A\otimes E^{*}\right)\left(X_{AB}\right)=X_{AB}$, we have
\begin{align}
\lVert X_{AB}\rVert_1
&= \lVert \left(\operatorname{id}_A\otimes E^*\right)\left(X_{AB}\right) \rVert_1
= \sum_{i=1}^k\lVert  Y_A^{\left(i\right)}\rVert_1
\\ \label{eq:proof_distortion_final_calc_1}
&\leq \sum_{i=1}^k 2d_i   \lVert \left(\operatorname{id}_A \otimes \mathcal{M}_{i,1}\right)\left(Y_A^{\left(i\right)}\right) \rVert_1
\\ \label{eq:proof_distortion_final_calc_2}
&\leq 2 \max_{1\leq i \leq k} d_i \sum_{i=1}^k  \lVert \left(\operatorname{id}_A \otimes \mathcal{M}_{i,1}\right)\left(Y_A^{\left(i\right)}\right) \rVert_1
\\ \label{eq:proof_distortion_final_calc_3}
&= 2 d_{\max}  \lVert \left(\operatorname{id}_A \otimes \mathcal{M}_B\right)\left(X_{AB}\right)\rVert_1.
\end{align}
Here we used in \autoref{eq:proof_distortion_final_calc_1} the isometric invariance of the Schatten-1-norm, in \autoref{eq:proof_distortion_final_calc_2} the fact from \autoref{eq:proof_proposition_distortion2} and in \autoref{eq:proof_distortion_final_calc_3} the \autoref{eq:proof_proposition_distortion3}. 
\end{proof}

For later purposes we state a concrete corollary adapted to side-information and a concrete structural assumption on $X_{AB}$ in \autoref{prop:distortion_conditional_expectation}.

\begin{corollary}\label{cor:distortion_correlation_postmeasurement}
Under the assumptions of \autoref{prop:distortion_conditional_expectation}, let $\mathcal{M}_{B}$ denote the measurement from the proposition with classical outcome register $Z$.
Let $J$ be a finite classical register and consider an ensemble of states $\{\rho_{AB\vert j}\}_{j\in J}$, each of which is invariant under $\operatorname{id}_A\otimes E^{*}$, i.e.
\begin{align}
\left(\operatorname{id}_A\otimes E^{*}\right)\left(\rho_{AB\vert j}\right)=\rho_{AB\vert j},
\qquad j \in J.
\end{align}
Then
\begin{align}\label{eq:distortion_history_lowerbound}
\lVert \rho_{AZ\vert j}
-\rho_{A\vert j}\otimes \rho_{Z\vert j} \rVert_1
  \geq  
\frac{1}{2  d_{\max}} 
\lVert \rho_{AB \vert j}
- \rho_{A\vert j}\otimes \rho_{B\vert j}\rVert_1, \quad j \in J.
\end{align}
\end{corollary}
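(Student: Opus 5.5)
The plan is to apply \autoref{prop:distortion_conditional_expectation} separately to each $j\in J$, with the Hermitian operator
\begin{align}
    X_{AB}^{\left(j\right)} \coloneqq \rho_{AB\vert j} - \rho_{A\vert j}\otimes \rho_{B\vert j}.
\end{align}
This operator is Hermitian because it is a difference of states. Once the two hypotheses of the proposition are checked, the claim reduces to identifying the measured operator.

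\emph{Invariance.} The first term $\rho_{AB\vert j}$ is invariant under $\operatorname{id}_A\otimes E^*$ by assumption. For the product term, I will use that $E^*$ is trace preserving and acts only on $B$, so partial traces over $A$ commute with it:
\begin{align}
    E^*\left(\rho_{B\vert j}\right) = \tr_A\left[\left(\operatorname{id}_A\otimes E^*\right)\left(\rho_{AB\vert j}\right)\right] = \rho_{B\vert j}.
\end{align}
Hence $\left(\operatorname{id}_A\otimes E^*\right)\left(\rho_{A\vert j}\otimes\rho_{B\vert j}\right) = \rho_{A\vert j}\otimes \rho_{B\vert j}$, and by linearity $X^{\left(j\right)}_{AB}$ satisfies $\left(\operatorname{id}_A\otimes E^*\right)\left(X^{\left(j\right)}_{AB}\right) = X^{\left(j\right)}_{AB}$.

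\emph{Block-wise tracelessness.} This is the step I expect to be the genuine point, since ordinary tracelessness alone would not meet the hypothesis \autoref{eq:block-wise_traceless_assumption}. For each block projection $P_i$ I will compute both terms separately:
\begin{align}
    \tr\left[\left(1_A\otimes P_i\right)\rho_{AB\vert j}\right] &= \tr\left[P_i\,\rho_{B\vert j}\right], \\
    \tr\left[\left(1_A\otimes P_i\right)\left(\rho_{A\vert j}\otimes\rho_{B\vert j}\right)\right] &= \tr\left[\rho_{A\vert j}\right]\tr\left[P_i\,\rho_{B\vert j}\right] = \tr\left[P_i\,\rho_{B\vert j}\right].
\end{align}
The two expressions coincide, so their difference vanishes for every $1\leq i\leq k$. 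The underlying reason is that the product of marginals has the same $B$-marginal as the joint state.

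\emph{Conclusion.} The proposition then gives
\begin{align}
    \lVert X^{\left(j\right)}_{AB}\rVert_1 \leq 2d_{\max}\,\lVert \left(\operatorname{id}_A\otimes\mathcal{M}_B\right)\left(X^{\left(j\right)}_{AB}\right)\rVert_1.
\end{align}
By linearity of the measurement channel, and since $\mathcal{M}_B\left(\rho_{B\vert j}\right) = \rho_{Z\vert j}$ and $\left(\operatorname{id}_A\otimes \mathcal{M}_B\right)\left(\rho_{AB\vert j}\right) = \rho_{AZ\vert j}$ with $A$-marginal $\rho_{A\vert j}$, we have
\begin{align}
    \left(\operatorname{id}_A\otimes\mathcal{M}_B\right)\left(X^{\left(j\right)}_{AB}\right) = \rho_{AZ\vert j} - \rho_{A\vert j}\otimes \rho_{Z\vert j}.
\end{align}
Dividing by $2d_{\max}$ gives \autoref{eq:distortion_history_lowerbound}. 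Since $\mathcal{M}_B$ does not depend on $j$, the bound holds uniformly for all $j\in J$.
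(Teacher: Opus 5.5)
Your proposal is correct and follows the paper's proof essentially step for step. Like the paper, you apply \autoref{prop:distortion_conditional_expectation} to $\rho_{AB\vert j}-\rho_{A\vert j}\otimes\rho_{B\vert j}$, get invariance from $E^*(\rho_{B\vert j})=\rho_{B\vert j}$, and check block-wise tracelessness via $\tr[P_i\rho_{B\vert j}]-\tr[\rho_{A\vert j}]\tr[P_i\rho_{B\vert j}]=0$; you are only slightly more explicit than the paper in deriving $E^*(\rho_{B\vert j})=\rho_{B\vert j}$ through the partial trace and in identifying the measured operator as $\rho_{AZ\vert j}-\rho_{A\vert j}\otimes\rho_{Z\vert j}$.
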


\begin{proof}
The technical assumption of \autoref{prop:distortion_conditional_expectation} is the block-wise traceless assumption \autoref{eq:block-wise_traceless_assumption}. For this purpose we define
\begin{align}
    X_{AB\vert j}\coloneqq \rho_{AB\vert j}
-\rho_{A\vert j}\otimes \rho_{B\vert j},
\end{align}
which is Hermitian, traceless and, by assumption, satisfies $\left(\operatorname{id}_A\otimes E^{*}\right)\left(X_{AB\vert j}\right) = X_{AB\vert j}$, since $E^*$ is trace preserving and $E^*\left(\rho_{B\vert j}\right) = \rho_{B\vert j}$.
For each block projection $P_i$, $1\leq i \leq k$, from the conditional expectation onto $\operatorname{Fix}\left(\Phi\right)$ (cf. \autoref{thm:ergodic_projection_condexp}),
\begin{align}
\tr  \left[\left(1_{A}\otimes P_i\right)X_{AB\vert j} \right]
&=
\tr\left[P_i \rho_{B\vert j}\right]
-\tr\left[\rho_{A\vert j}\right] \tr\left[P_i \rho_{B\vert j}\right]
=0,
\end{align}
using $\tr\left[\rho_{A\vert j}\right] = 1$. Thus, we conclude from \autoref{prop:distortion_conditional_expectation} that 
\begin{align}
\lVert \rho_{AZ\vert j}
- \rho_{A\vert j}\otimes \rho_{Z\vert j} \rVert_1
  \geq  
\frac{1}{2  d_{\max}} 
\lVert \rho_{AB\vert j}
- \rho_{A\vert j}\otimes \rho_{B\vert j}\rVert_1, \quad j \in J,
\end{align}
which is \autoref{eq:distortion_history_lowerbound}.
\end{proof}
\subsection{Symmetries in post-measurement states coming from a permutation invariant state}\label{subsec:symmetries_post_measurement}
    Assume a state $\rho_{AB_1^n} \in \mathcal{S}\left(\mathcal{H}_A \otimes \mathcal{H}_B^{\otimes n}\right)$ that is permutation invariant over the $B$-systems with respect to $A$, i.e.\ invariant under the tensor representation \autoref{eq:def_permutation_action_Sn_H_otimes_n} of $S_n$ on the $B$-systems. Applying a measurement $\mathcal{M}_B:\mathcal{S}\left(\mathcal{H}_B\right) \to \mathcal{M}_1\left(\mathcal{Z}\right)$ to the subsystems $B_2,\ldots, B_{m+1}$ leaves us with a convex mixture over conditional states,
    \begin{align}
        \rho_{AB_1} = \sum_{z_2^{m+1}} p\left(z_2^{m+1}\right)\rho_{AB_1\vert z_2^{m+1}},
    \end{align}
    and, as in \autoref{subsec:deFinetti_revisited}, with the associated separable state 
    \begin{align}\label{eqn:separable_state_conditional_state}
        \tilde{\rho}_{AB_1} = \sum_{z_2^{m+1}} p\left(z_2^{m+1}\right)\rho_{A\vert z_2^{m+1}}\otimes \rho_{B_1\vert z_2^{m+1}} \in \operatorname{SEP}\left(A:B\right).
    \end{align}
    In \autoref{observation}~(2) it was already shown that the permutation invariance of $\rho_{AB_1^n}$ is neither used in any of the arguments nor really necessary until the very last step from \autoref{eq:de_Finetti_without_symmetry} to \autoref{eq:de_Finetti_with_symmetry}. We subsequently show that the $S_n$-invariance allows for a significantly more economical representation of the state $\tilde{\rho}_{AB_1}$ than the decomposition in \autoref{eqn:separable_state_conditional_state}: all conditional data depend on the outcome string $z_2^{m+1}$ only through its type $\tau_B\left(z_2^{m+1}\right) \in \mathcal{T}_{m,\lvert\mathcal{Z}\rvert}$, i.e.\ through its $S_m$-orbit, where the type sets, type maps, type classes and type registers are those introduced in \autoref{sec:notation_preliminaries}. Since $\lvert \mathcal{T}_{m,\lvert\mathcal{Z}\rvert}\rvert$ grows only polynomially in $m$ for a fixed alphabet, this \emph{classification of types} replaces the exponentially many summands in \autoref{eqn:separable_state_conditional_state} by polynomially many, which is also one of the combinatorial backbones of the efficiency results in \autoref{sec:efficient_inner_sequence}. We state the result directly in the double-sided setting.
\begin{proposition}[Classification of Types]\label{prop:classification_of_types}
Let $\rho_{A_1^n B_1^n}\in \mathcal{S}\left(\mathcal{H}_{A}^{\otimes n} \otimes \mathcal{H}_B^{\otimes n}\right)$ be permutation-invariant over the $A$-systems with respect to $B_1^n$ and over the $B$-systems with respect to $A_1^n$. Fix $m_A,m_B\leq n-1$ and consider the post-measurement state obtained by measuring
$\mathcal{M}_A:\mathcal{S}\left(\mathcal{H}_A\right)\to \mathcal{M}_1\left(\mathcal{Y}\right)$ on $A_2,\dots,A_{m_A+1}$ and
$\mathcal{M}_B:\mathcal{S}\left(\mathcal{H}_B\right)\to \mathcal{M}_1\left(\mathcal{Z}\right)$ on $B_2,\dots,B_{m_B+1}$ (i.e.\ the product measurement $\mathcal{M}_A^{\otimes m_A} \otimes \mathcal{M}_B^{\otimes m_B}$).
Writing $y_2^{m_A+1}\in\mathcal Y^{m_A}$ and $z_2^{m_B+1}\in\mathcal Z^{m_B}$ for the classical outcome strings, we obtain
\begin{align}
\rho_{A_1B_1} &= \sum_{y_2^{m_A+1},  z_2^{m_B+1}} p\left(y_2^{m_A+1}, z_2^{m_B+1}\right) \rho_{A_1B_1\vert y_2^{m_A+1}\ z_2^{m_B+1}},  \\
\tilde{\rho}_{A_1B_1} &= \sum_{y_2^{m_A+1},  z_2^{m_B+1}} p\left(y_2^{m_A+1}, z_2^{m_A+1}\right)  \rho_{A_1\vert y_2^{m_A+1}\ z_2^{m_B+1}} \otimes \rho_{B_1\vert y_2^{m_A+1}\ z_2^{m_B+1}} .
\end{align}
Let $\mathcal T_{m_A}\coloneqq \{s\in\mathbb N_0^{\mathcal Y}:\sum_{y\in\mathcal Y}s_y=m_A\}$ and $\mathcal T_{m_B} \coloneqq \{t\in\mathbb N_0^{\mathcal Z}:\sum_{z\in\mathcal Z}t_z=m_B\}$ be the type sets, and let
$\tau_A:\mathcal Y^{m_A}\to\mathcal T_{m_A}$ and $\tau_B:\mathcal Z^{m_B}\to\mathcal T_{m_B}$ be the corresponding type maps, as introduced in \autoref{sec:notation_preliminaries}.

Then:
\begin{enumerate}
\item There exists a probability distribution $p\in \mathcal{M}_1\left(\mathcal{T}_{m_A} \times \mathcal{T}_{m_B}\right)$ and states $\rho_{A_1\vert s,t}\in \mathcal{S}\left(\mathcal{H}_A\right)$ and $\rho_{B_1\vert s,t} \in \mathcal{S}\left(\mathcal{H}_B\right)$ such that
\begin{align}
\tilde{\rho}_{A_1B_1} = \sum_{s \in \mathcal{T}_{m_A}} \sum_{t\in \mathcal{T}_{m_B}} p\left(s,t\right)  \rho_{A_1\vert s,t}\otimes \rho_{B_1\vert s,t},
\end{align}
and if $\tau_A\left(y_2^{m_A+1}\right) = \tau_A\left({y^\prime}_2^{m_A+1}\right)$, then for all $z_2^{m_B+1}\in\mathcal Z^{m_B}$,
\begin{align}
p\left(y_2^{m_A+1}, z_2^{m_B+1}\right) &= p\left({y^\prime}_2^{m_A+1}, z_2^{m_B+1}\right), \\
\rho_{A_1\vert y_2^{m_A+1}\ z_2^{m_B+1}} &= \rho_{A_1\vert {y^\prime}_2^{m_A+1}\ z_2^{m_B+1}}, \\
\rho_{B_1\vert y_2^{m_A+1}\ z_2^{m_B+1}} &= \rho_{B_1\vert {y^\prime}_2^{m_A+1}\ z_2^{m_B+1}},
\end{align}
and similarly with the roles of $A$ and $B$ exchanged (i.e.\ for fixed $y_2^{m_A+1}$ and $\tau_B\left(z_2^{m_B+1}\right)=\tau_B\left({z^\prime}_2^{m_B+1}\right)$).

\item If we define the coarse-grained, symmetric POVM elements
\begin{equation}
\begin{aligned}
F_s^A &\coloneqq \sum_{y_2^{m_A+1}:\ \tau_A\left(y_2^{m_A+1}\right) = s}\  \bigotimes_{i=2}^{m_A+1}M^A_{y_i}, \quad s \in \mathcal{T}_{m_A}, \\
G_t^B &\coloneqq \sum_{z_2^{m_B+1}:\ \tau_B\left(z_2^{m_B+1}\right) = t}\  \bigotimes_{i=2}^{m_B+1}M^B_{z_i}, \quad t \in \mathcal{T}_{m_B},
\end{aligned}
\end{equation}
such that
\begin{align}
    \sum_{s\in \mathcal{T}_{m_A}} F_s^A=1_{A_2^{m_A+1}},\quad  \sum_{t\in \mathcal{T}_{m_B}} G_t^B=1_{B_2^{m_B+1}},
\end{align}
then
\begin{align}
I\left(A_1:B_1 \vert  Y_{2}^{m_A+1}Z_2^{m_B+1}\right)_{\rho_{A_1B_1Y_2^{m_A+1}Z_{2}^{m_B+1}}}
=
I\left(A_1:B_1\vert  T_{m_A} \ T_{m_B}\right)_{\rho_{A_1B_1T_{m_A} T_{m_B}}},
\end{align}
where $T_{m_A}$ and $T_{m_B}$ are the classical outcome registers of the coarse-grained POVMs $\{F_s^A\}_{s\in \mathcal{T}_{m_A}}$ and $\{G_t^B\}_{t\in \mathcal{T}_{m_B}}$, respectively, i.e.\ the type registers of \autoref{sec:notation_preliminaries}. Since the conditional states depend on the outcome strings only through their types by $\left(1\right)$, the same identity holds after any further local processing of $A_1$ and $B_1$; in particular, measuring $A_1$ and $B_1$ with $\mathcal{M}_A$ and $\mathcal{M}_B$ yields $I\left(Y_1:Z_1\vert Y_2^{m_A+1}Z_2^{m_B+1}\right)_{\omega} = I\left(Y_1:Z_1\vert T_{m_A}T_{m_B}\right)_{\omega}$ for the fully measured state $\omega$.

\item  If $|\mathcal{Y}|$ and $|\mathcal{Z}|$ are the numbers of outcomes, then
\begin{align}
|\mathcal{T}_{m_A}| = \binom{m_A+|\mathcal{Y}| -1}{|\mathcal{Y}| -1} \leq \left(m_A+1 \right)^{|\mathcal{Y}| -1},
\end{align}
and similarly for $|\mathcal T_{m_B}|$. Moreover, for each $\left(s,t\right)\in\mathcal T_{m_A}\times\mathcal T_{m_B}$,
\begin{align}
p\left(s,t\right)
=
\sum_{\substack{y_2^{m_A+1}:\ \tau_A\left(y_2^{m_A+1}\right)=s}}\ \sum_{\substack{z_2^{m_B+1}:\ \tau_B\left(z_2^{m_B+1}\right)=t}} p\left(y_2^{m_A+1},z_2^{m_B+1}\right)
=
\frac{m_A!}{\prod_{y\in\mathcal Y} s_y!}\cdot \frac{m_B!}{\prod_{z\in\mathcal Z} t_z!}\cdot p\left(\bar y,\bar z\right),
\end{align}
for any fixed representatives $\bar y\in\mathcal Y^{m_A}$, $\bar z\in\mathcal Z^{m_B}$ with $\tau_A\left(\bar y\right)=s$ and $\tau_B\left(\bar z\right)=t$.
\end{enumerate}
\end{proposition}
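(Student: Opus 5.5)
The plan is to reduce everything to one covariance identity for the unnormalized conditional operators. For outcome strings $y=y_2^{m_A+1}$ and $z=z_2^{m_B+1}$, set
\begin{align}
\Omega_{A_1B_1}\left(y,z\right)\coloneqq \tr_{A_2^{m_A+1}B_2^{m_B+1}}\left[\left(1_{A_1B_1}\otimes \textstyle\bigotimes_i M^A_{y_i}\otimes\bigotimes_j M^B_{z_j}\right)\rho_{A_1^{m_A+1}B_1^{m_B+1}}\right],
\end{align}
so that $p\left(y,z\right)=\tr\left[\Omega\left(y,z\right)\right]$ and $\rho_{A_1B_1\vert y,z}=\Omega\left(y,z\right)/p\left(y,z\right)$. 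Take $\pi\in S_{m_A}$ acting on $A_2,\dots,A_{m_A+1}$ and embed it into $S_n$ so that it fixes $A_1$ and the remaining $A$-systems. Invariance of $\rho_{A_1^nB_1^n}$ over the $A$-systems with respect to $B_1^n$ holds in particular under this embedded permutation. After tracing out the unmeasured systems, conjugation by $U\left(\pi\right)$ permutes the tensor factors of the POVM product, i.e.\ $U\left(\pi\right)^\dagger\left(\bigotimes_i M^A_{y_i}\right)U\left(\pi\right)=\bigotimes_i M^A_{y_{\pi\left(i\right)}}$ up to the convention for $\pi$ versus $\pi^{-1}$. Cyclicity of the partial trace over the permuted systems, with $A_1$ and $B_1$ untouched, then gives $\Omega\left(y\circ\pi,z\right)=\Omega\left(y,z\right)$, and the same argument applies on the $B$-side.

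Since equal types are exactly the $S_{m}$-orbits, this covariance yields every equality listed in (1). Taking traces gives the equality of $p$. Dividing by $p$ (when $p>0$; null strings are irrelevant) gives the equality of the joint conditional states, and taking partial traces gives the equality of the marginals. The representation of $\tilde\rho_{A_1B_1}$ then follows by grouping the sum by type. I would set $p\left(s,t\right)\coloneqq\sum_{\tau_A\left(y\right)=s,\,\tau_B\left(z\right)=t}p\left(y,z\right)$ and take $\rho_{A_1\vert s,t}$ and $\rho_{B_1\vert s,t}$ to be the common values on the class, which requires no mixing of products. Part (3) is the same grouping combined with the class sizes $m_A!/\prod_y s_y!$ and $m_B!/\prod_z t_z!$ from \autoref{sec:notation_preliminaries}, together with the stated type-counting bound.

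For (2), I note that the state obtained from the coarse-grained POVMs $F^A_s\otimes G^B_t$ has conditional operator $\sum_{y\in\mathcal{C}\left(s\right),z\in\mathcal{C}\left(t\right)}\Omega\left(y,z\right)=\lvert\mathcal{C}\left(s\right)\rvert\lvert\mathcal{C}\left(t\right)\rvert\,\Omega\left(\bar y,\bar z\right)$. Hence $\rho_{A_1B_1\vert s,t}=\rho_{A_1B_1\vert \bar y,\bar z}$, and $\sum_s F^A_s=1$ holds simply by summing the full POVM. Writing the conditional mutual information with classical conditioning as an average, $\sum_{y,z}p\left(y,z\right)I\left(A_1:B_1\right)_{\rho\vert y,z}$, via \autoref{lemma:conditioning_mutual_information}, and grouping by type yields exactly $\sum_{s,t}p\left(s,t\right)I\left(A_1:B_1\right)_{\rho\vert s,t}$. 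The claim after further local processing of $A_1$ and $B_1$ follows because local channels act on the conditional states and so preserve their constancy on type classes.

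The only point needing care is the covariance step. There I must make sure that the permutation is realized by a symmetry that the hypothesis actually provides: permutations among the $A$-systems alone, jointly with the trivial action on $B_1^n$. I must also check that it fixes $A_1$ and commutes with the partial trace over the unmeasured systems $A_{m_A+2}^n$ and $B_{m_B+2}^n$. Everything after that step is bookkeeping.
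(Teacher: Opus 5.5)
Your proposal is correct and follows essentially the same route as the paper's proof. Both arguments use the same steps:
\begin{itemize}
\item the subnormalized operators $\Omega(y,z)$;
\item invariance under permutations of $A_2,\dots,A_{m_A+1}$ (resp.\ $B_2,\dots,B_{m_B+1}$) that fix $A_1$ and $B_1$, combined with cyclicity of the partial trace over the permuted systems;
\item grouping by type classes, via \autoref{lemma:conditioning_mutual_information} for (2) and by counting class sizes for (3).
\end{itemize}
Your identity $\sum_{y\in\mathcal{C}(s),\,z\in\mathcal{C}(t)}\Omega(y,z)=\lvert\mathcal{C}(s)\rvert\lvert\mathcal{C}(t)\rvert\,\Omega(\bar y,\bar z)$ is a slightly more explicit form of the paper's remark that post-processing by the type map amounts to summing POVM elements over the $S_{m}$-orbits.
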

\begin{proof}
We start with the proof of $\left(1\right)$. Let the product measurement outcomes on $A_2,\dots,A_{m_A+1}$ and $B_2,\dots,B_{m_B+1}$ be
$y_2^{m_A+1}\in\mathcal Y^{m_A}$ and $z_2^{m_B+1}\in\mathcal Z^{m_B}$, and define the subnormalized post-measurement operator\footnote{The operators $\Omega^{y,z}_{A_1B_1}$ are the conditional states weighted by their outcome probabilities; the separate notation avoids dividing by possibly vanishing probabilities.} on
$A_1B_1$ by
\begin{align}\label{eq:Omega_yz_def}
\Omega^{y_2^{m_A+1},z_2^{m_B+1}}_{A_1B_1}
\coloneqq 
\tr_{A_2^{m_A+1}B_2^{m_B+1}}
 \left[
 \left(
 1_{A_1B_1}\otimes  \bigotimes_{i=2}^{m_A+1} M^A_{y_i}\otimes  \bigotimes_{i=2}^{m_B+1} M^B_{z_i}
 \right) 
\rho_{A_1^{m_A+1}B_1^{m_B+1}}
 \right],
\end{align}
so that 
\begin{align}
p\left(y_2^{m_A+1},z_2^{m_B+1}\right)=\tr \left[\Omega^{y_2^{m_A+1},z_2^{m_B+1}}_{A_1B_1} \right] \quad \text{and} \quad     p\left(y_2^{m_A+1},z_2^{m_B+1}\right)\cdot \rho_{A_1B_1\vert  y_2^{m_A+1}z_2^{m_B+1}}=\Omega^{y_2^{m_A+1},z_2^{m_B+1}}_{A_1B_1}.
\end{align}
Moreover,
\begin{align}
\rho_{A_1\vert y_2^{m_A+1}z_2^{m_B+1}}=\tr_{B_1} \left[\rho_{A_1B_1\vert y_2^{m_A+1}z_2^{m_B+1}} \right],
\qquad
\rho_{B_1\vert y_2^{m_A+1}z_2^{m_B+1}}=\tr_{A_1} \left[\rho_{A_1B_1\vert y_2^{m_A+1}z_2^{m_B+1}} \right].
\end{align}

Let $\sigma \in S_{m_A}$ act on the index set $\{2,\dots,m_A+1\}$ and extend it to $\widehat\sigma$ on $\{1,\dots,m_A+1\}$ by
$\widehat\sigma\left(1\right)=1$. Let $U^A_{\widehat\sigma}$ be the unitary permuting $A_1,\dots,A_{m_A+1}$ accordingly.
Define $\left(\sigma \cdot y\right)_i\coloneqq y_{\sigma^{-1}\left(i\right)}$ for $i=2,\dots,m_A+1$.
Permutation invariance of $\rho$ over the $A$-systems w.r.t.\ $B_1^{m_B+1}$ implies
\begin{align}
\left(U^A_{\widehat\sigma}\otimes  1_{B_1^{m_B+1}}\right) \rho_{A_1^{m_A+1}B_1^{m_B+1}} 
\left(U^A_{\widehat\sigma}\otimes  1_{B_1^{m_B+1}}\right)^\dagger
=
\rho_{A_1^{m_A+1}B_1^{m_B+1}}.
\end{align}
Additionally, since $\widehat\sigma$ fixes the index $1$,
\begin{align}
\left(U^A_{\widehat\sigma}\right)^\dagger \left(\bigotimes_{i=2}^{m_A+1} M^A_{\left(\sigma\cdot y\right)_i} \right)U^A_{\widehat\sigma}
=
\bigotimes_{i=2}^{m_A+1} M^A_{y_i}.
\end{align}
Inserting these identities into \autoref{eq:Omega_yz_def} and using the cyclicity of the partial trace with respect to operators supported on the traced-out systems yields
\begin{align}
\Omega^{\left(\sigma \cdot y\right)_2^{m_A+1},z_2^{m_B+1}}_{A_1B_1}
=
\Omega^{y_2^{m_A+1},z_2^{m_B+1}}_{A_1B_1}.
\end{align}
Taking traces and normalizing gives, for all $z_2^{m_B+1}$ and every $y_2^{m_A+1}$,
\begin{equation}
\begin{aligned}
p\left(\left(\sigma \cdot y\right)_2^{m_A+1},z_2^{m_B+1}\right)&=p\left(y_2^{m_A+1},z_2^{m_B+1}\right), \\
\rho_{A_1\vert \left(\sigma \cdot y\right)_2^{m_A+1}\ z_2^{m_B+1}}&=\rho_{A_1\vert y_2^{m_A+1}\ z_2^{m_B+1}},
\\
\rho_{B_1\vert \left(\sigma \cdot y\right)_2^{m_A+1}\ z_2^{m_B+1}}&=\rho_{B_1\vert y_2^{m_A+1}\ z_2^{m_B+1}}.
\end{aligned}
\end{equation}
Since two strings $y_2^{m_A+1},{y_2^\prime}^{m_A+1}\in\mathcal Y^{m_A}$ have the same type
$\tau_A\left(y_2^{m_A+1}\right)=\tau_A\left({y_2^\prime }^{m_A+1}\right)$ iff they are related by such a permutation,
this proves the claimed invariances in $y_2^{m_A+1}$ (for fixed $z_2^{m_B+1}$). The analogous argument using
permutation invariance over the $B$-systems w.r.t.\ $A_1^{m_A+1}$ gives the corresponding invariances in
$z_2^{m_B+1}$ (for fixed $y_2^{m_A+1}$).

Let $s\coloneqq \tau_A\left(y_2^{m_A+1}\right)\in\mathcal T_{m_A}$ and $t\coloneqq \tau_B\left(z_2^{m_B+1}\right)\in\mathcal T_{m_B}$.
From the two invariances above, the triple
\begin{align}
 \left(p\left(y_2^{m_A+1},z_2^{m_B+1}\right),
\rho_{A_1\vert y_2^{m_A+1}z_2^{m_B+1}},
\rho_{B_1\vert y_2^{m_A+1}z_2^{m_B+1}} \right)
\end{align}
depends only on $\left(s,t\right)$. Hence we may define $\rho_{A_1\vert s,t}$ and $\rho_{B_1\vert s,t}$ by choosing any
representatives $\left(\bar y,\bar z\right)$ with $\tau_A\left(\bar y\right)=s$ and $\tau_B\left(\bar z\right)=t$ and setting
\begin{align}
\rho_{A_1\vert s,t} \coloneqq \rho_{A_1\vert \bar y \bar z},
\qquad
\rho_{B_1\vert s,t} \coloneqq \rho_{B_1\vert \bar y \bar z}.
\end{align}
Define also the induced probability distribution on types by
\begin{align}
p\left(s,t\right)\coloneqq
\sum_{\substack{y_2^{m_A+1}:\ \tau_A\left(y_2^{m_A+1}\right)=s}}
\ \sum_{\substack{z_2^{m_B+1}:\ \tau_B\left(z_2^{m_B+1}\right)=t}}
p\left(y_2^{m_A+1},z_2^{m_B+1}\right).
\end{align}
Grouping the sum for $\tilde\rho_{A_1B_1}$ by type classes then yields
\begin{align}
\tilde\rho_{A_1B_1}
=
\sum_{s\in\mathcal T_{m_A}}\sum_{t\in\mathcal T_{m_B}}
p\left(s,t\right)\ \rho_{A_1\vert s,t}\otimes \rho_{B_1\vert s,t},
\end{align}
as claimed. 

The proof for $\left(2\right)$: Let $\rho_{A_1B_1Y_2^{m_A+1}Z_2^{m_B+1}}$ denote the post-measurement state, where $Y_2^{m_A+1},Z_2^{m_B+1}$ are classical, i.e.
\begin{align}
    \rho_{A_1B_1Y_2^{m_A+1}Z_{2}^{m_B+1}}= \sum_{y_2^{m_A+1},\, z_2^{m_B+1}} p\left(y_2^{m_A+1}, z_2^{m_B+1}\right)\, \rho_{A_1B_1\vert y_2^{m_A+1} z_2^{m_B+1}}\otimes \ketbra{y_2^{m_A+1}}{y_2^{m_A+1}}\otimes\ketbra{z_2^{m_B+1}}{z_2^{m_B+1}}.
\end{align}
Applying \autoref{lemma:conditioning_mutual_information} gives
\begin{align}
I\left(A_1:B_1\vert Y_2^{m_A+1}Z_2^{m_B+1}\right)_{\rho_{A_1B_1Y_2^{m_A+1}Z_{2}^{m_B+1}}}
=\sum_{y_2^{m_A+1}, z_2^{m_B+1}} p\left(y_2^{m_A+1},z_2^{m_B+1}\right)  I\left(A_1:B_1\right)_{\rho_{A_1 B_1\vert y_2^{m_A+1},z_2^{m_B+1}}}.
\end{align}
By $\left(1\right)$, the conditional state $\rho_{A_1 B_1\vert y_2^{m_A+1},\, z_2^{m_B+1}}$ depends only on the pair of types
$\left(s,t\right)=\left(\tau_A\left(y_2^{m_A+1}\right),\tau_B\left(z_2^{m_B+1}\right)\right)$.
Hence, there exist states $\rho_{A_1B_1 \vert s,t}$, defined via arbitrary orbit representatives as in the proof of $\left(1\right)$, such that
\begin{align}
\rho_{A_1 B_1\vert y_2^{m_A+1},z_2^{m_B+1}}=\rho_{A_1 B_1\vert \tau_A\left(y_2^{m_A+1}\right) \ \tau_B\left(z_2^{m_B+1}\right)}.
\end{align}
Grouping the above sum by type classes yields
\begin{align}
I\left(A_1:B_1\vert Y_2^{m_A+1}Z_2^{m_B+1}\right)
=\sum_{s,t} p\left(s,t\right)  I\left(A_1:B_1\right)_{\rho_{A_1 B_1 \vert s,t}}.
\end{align}
Now define the classical type registers $T_{m_A}\coloneqq \tau_A\left(Y_2^{m_A+1}\right)$ and $T_{m_B}\coloneqq \tau_B\left(Z_2^{m_B+1}\right)$ of \autoref{sec:notation_preliminaries}.
Since $\left(T_{m_A},T_{m_B}\right)$ is a function of $\left(Y_2^{m_A+1},Z_2^{m_B+1}\right)$, the right-hand side is exactly
the expansion of \autoref{lemma:conditioning_mutual_information}:
\begin{align}
\sum_{s,t} p\left(s,t\right)  I\left(A_1:B_1\right)_{\rho_{A_1B_1\vert s,t}}
=
I\left(A_1:B_1\vert T_{m_A} T_{m_B}\right)_{\rho_{A_1B_1T_{m_A}T_{m_B}}}.
\end{align}
Therefore,
\begin{align}
I\left(A_1:B_1\vert Y_2^{m_A+1}Z_2^{m_B+1}\right)
=
I\left(A_1:B_1\vert T_{m_A} T_{m_B}\right).
\end{align}
Finally, $T_{m_A}$ (resp.\ $T_{m_B}$) can equivalently be obtained by the single coarse-grained POVM
$\{F_s^A\}_s$ (resp.\ $\{G_t^B\}_t$) because classical postprocessing of $\mathcal M_A^{\otimes m_A}$
(resp. $\mathcal M_B^{\otimes m_B}$) by $\tau_A$ (resp.\ $\tau_B$) is equivalent to summing POVM
elements over the corresponding type classes, i.e.\ over the $S_{m_A}$- respectively $S_{m_B}$-orbits.
The claimed identity for the fully measured state follows in the same way, since by $\left(1\right)$ the conditional states $\rho_{A_1B_1\vert y_2^{m_A+1} z_2^{m_B+1}}$, and hence also their images under $\mathcal{M}_A\otimes\mathcal{M}_B$, depend only on the types.

For $\left(3\right)$:
A type $s\in\mathcal T_{m_A}$ is a vector $s=\left(s_y\right)_{y\in\mathcal Y}\in\mathbb N_0^{\mathcal Y}$ with $\sum_{y\in\mathcal Y}s_y=m_A$.
Hence $|\mathcal T_{m_A}|$ equals the number of weak compositions (cf.\ \autoref{sec:notation_preliminaries}) of $m_A$ into $\lvert\mathcal{Y}\rvert$ parts, i.e.\
$|\mathcal T_{m_A}|=\binom{m_A+\vert \mathcal{Y}\vert -1}{\vert \mathcal{Y} \vert-1}$. The polynomial bound is a direct consequence. The cardinality results are standard and can be deduced from e.g. \cite[Sec.~11.1.1]{Cover2006}.

Fix $s\in\mathcal T_{m_A}$ and define the type class
$\mathcal C_A\left(s\right)\coloneqq \{y_2^{m_A+1}\in\mathcal Y^{m_A}:\tau_A\left(y_2^{m_A+1}\right)=s\}$.
Then
\begin{align}
|\mathcal C_A\left(s\right)|=\frac{m_A!}{\prod_{y\in\mathcal Y} s_y!},
\end{align}
because $\mathcal C_A\left(s\right)$ consists of all length-$m_A$ strings with exactly $s_y$ occurrences of each symbol $y$,
counted by the multinomial coefficient. Similarly,
$|\mathcal C_B\left(t\right)|=\frac{m_B!}{\prod_{z\in\mathcal Z} t_z!}$ for $t\in\mathcal T_{m_B}$.

Now for $\left(s,t\right)\in\mathcal T_{m_A}\times\mathcal T_{m_B}$ we have by definition
\begin{align}
p\left(s,t\right)=\sum_{y\in\mathcal C_A\left(s\right)}\ \sum_{z\in\mathcal C_B\left(t\right)} p\left(y,z\right).
\end{align}
Choose representatives $\bar y\in\mathcal C_A\left(s\right)$ and $\bar z\in\mathcal C_B\left(t\right)$.
By $\left(1\right)$, $p\left(y,z\right)$ depends only on the pair of types $\left(\tau_A\left(y\right),\tau_B\left(z\right)\right)$, such that for all
$y\in\mathcal C_A\left(s\right)$ and $z\in\mathcal C_B\left(t\right)$,
$p\left(y,z\right)=p\left(\bar y,\bar z\right)$. Therefore
\begin{equation}
\begin{aligned}
p\left(s,t\right)&=|\mathcal C_A\left(s\right)| |\mathcal C_B\left(t\right)| p\left(\bar y,\bar z\right) \\
&=\frac{m_A!}{\prod_{y\in\mathcal Y} s_y!}\cdot \frac{m_B!}{\prod_{z\in\mathcal Z} t_z!}\cdot p\left(\bar y,\bar z\right),
\end{aligned}
\end{equation}
which is the desired result.
\end{proof}
An immediate consequence of \autoref{prop:classification_of_types} is the following corollary, where we just extend one side. 
\begin{corollary}[One-sided type compression]\label{cor:one_sided_types}
Let $\rho_{AB_1^n}\in\mathcal S\left(\mathcal H_A\otimes\mathcal H_B^{\otimes n}\right)$ be permutation-invariant over the $B$-systems (with $A$ fixed). Fix $m\leq n-1$ and measure $\mathcal M_B$ on $B_2,\dots,B_{m+1}$, producing an outcome string $Z_2^{m+1}\in\mathcal Z^m$ and the separable state
\begin{align}
\tilde\rho_{AB_1}
=\sum_{z_2^{m+1}} p\left(z_2^{m+1}\right) \rho_{A\vert z_2^{m+1}}\otimes \rho_{B_1\vert z_2^{m+1}}.
\end{align}
Let $\mathcal T_{m,\lvert\mathcal{Z}\rvert} = \{t\in\mathbb N_0^{\mathcal Z}:\sum_{z\in\mathcal Z}t_z=m\}$ and $\tau_B:\mathcal Z^m\to\mathcal T_{m,\lvert\mathcal{Z}\rvert}$ be the type set and the type map of \autoref{sec:notation_preliminaries}, and define the type-sorted (coarse-grained) POVM on $\mathcal{H}_B^{\otimes m}$ by
\begin{align}
G_t^B\coloneqq \sum_{z_2^{m+1}:\ \tau_B\left(z_2^{m+1}\right)=t}\ \bigotimes_{i=2}^{m+1} M^B_{z_i},
\qquad t\in\mathcal T_{m, \lvert\mathcal{Z}\rvert},
\end{align}
with outcome register $T_m\coloneqq \tau_B\left(Z_2^{m+1}\right)$. Then there exist probabilities $p\left(t\right)$ and states $\rho_{A\vert t},\rho_{B_1\vert t}$ such that
\begin{align}
\widetilde\rho_{AB_1}=\sum_{t\in\mathcal T_{m,\lvert\mathcal{Z}\rvert}} p\left(t\right) \rho_{A\vert t}\otimes \rho_{B_1\vert t},
\end{align}
and whenever $\tau_B\left(z\right)=\tau_B\left(z'\right)$ we have $p\left(z\right)=p\left(z'\right)$, $\rho_{A\vert z}=\rho_{A\vert z'}$, and $\rho_{B_1\vert z}=\rho_{B_1\vert z'}$.
Moreover, after also measuring $B_1$ with $\mathcal M_B$ (producing $Z_1$),
\begin{align}
I\left(A:Z_1\vert Z_2^{m+1}\right) = I\left(A:Z_1\vert T_m\right).
\end{align}
\end{corollary}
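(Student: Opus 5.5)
The plan is to obtain the corollary as the special case of \autoref{prop:classification_of_types} in which Alice's side is not extended. Concretely, I would set $\mathcal{H}_{A_1} \coloneqq \mathcal{H}_A$ and $m_A \coloneqq 0$. The empty type set $\mathcal{T}_{0}$ consists of the single zero vector, so the $A$-type register $T_{m_A}$ is trivial. The double-sided hypothesis needs care: \autoref{prop:classification_of_types} is stated for states on $\mathcal{H}_A^{\otimes n}\otimes\mathcal{H}_B^{\otimes n}$. Rather than constructing an artificial $A$-extension, I would check that its proof of (1)–(3) only invokes permutation invariance over the $B$-systems with respect to the unmeasured systems when $m_A = 0$. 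The $A$-side invariance is used only to permute the measured strings $y_2^{m_A+1}$, and this step is vacuous for $m_A=0$. Alternatively, to apply the proposition verbatim, one can embed via $\rho_{AB_1^n}\otimes \tau_A^{\otimes (n-1)}$ on $A_2,\ldots,A_n$. This state is permutation invariant over $B$ with respect to $A_1^n$, but not obviously over $A_1^n$ including $A_1$. For that reason I prefer the direct reading of the proof with $m_A=0$.

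The first claim is then a direct re-run of part (1). Define the subnormalized operators
\[
\Omega^{z_2^{m+1}}_{AB_1} \coloneqq \tr_{B_2^{m+1}}\Big[\Big(1_{AB_1}\otimes\bigotimes_{i=2}^{m+1}M^B_{z_i}\Big)\rho_{AB_1^{m+1}}\Big].
\]
For $\sigma\in S_m$ acting on $\{2,\ldots,m+1\}$, extended by $\widehat\sigma(1)=1$, invariance of $\rho$ under $1_A\otimes U_{B}(\widehat\sigma)$ together with cyclicity of the partial trace over $B_2^{m+1}$ gives $\Omega^{\sigma\cdot z}=\Omega^{z}$. Taking traces and marginals on $A$ and $B_1$ yields the equalities of $p$, $\rho_{A\vert z}$ and $\rho_{B_1\vert z}$ on type classes. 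Choosing orbit representatives and setting $p(t)=\lvert\mathcal{C}(t)\rvert\,p(\bar z)$, the grouping of $\tilde\rho_{AB_1}$ into type classes gives the displayed decomposition.

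For the mutual information identity, I would measure $B_1$ with $\mathcal{M}_B$ to obtain a cq-state on $AZ_1Z_2^{m+1}$. By \autoref{lemma:conditioning_mutual_information}, $I(A:Z_1\vert Z_2^{m+1})$ equals $\sum_{z}p(z)\,I(A:Z_1)_{\rho_{AZ_1\vert z}}$. Since $\rho_{AZ_1\vert z}=(\operatorname{id}_A\otimes\mathcal{M}_B)(\rho_{AB_1\vert z})$, and $\rho_{AB_1\vert z}$ depends only on $\tau_B(z)$ by the invariance of $\Omega$, the sum regroups to $\sum_t p(t)\,I(A:Z_1)_{\rho_{AZ_1\vert t}}$. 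Applying the same lemma to the classical register $T_m$, this equals $I(A:Z_1\vert T_m)$. Here I use that $T_m$ is the coarse-grained outcome of $\{G^B_t\}$, so that the conditional state given $t$ is the uniform average over the type class. That average coincides with the common value.

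The only point requiring attention is the subtlety just mentioned: the conditional state given $T_m=t$ must be shown to be $\sum_{z\in\mathcal{C}(t)}\Omega^z/p(t)$, which equals $\rho_{AB_1\vert\bar z}$ precisely because all summands agree. Everything else is bookkeeping inherited from \autoref{prop:classification_of_types}.
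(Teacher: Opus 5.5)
Your proposal is correct and follows the paper's route. The paper gives no separate proof and treats this corollary as the one-sided specialization of \autoref{prop:classification_of_types}, which is exactly your $m_A=0$ reduction with a trivial $A$-type register. Your remark that the proposition's proof uses $A$-side invariance only to permute measured $A$-strings (vacuous for $m_A=0$), so that $B$-invariance with respect to $A_1$ suffices, is the right way to bridge the double-sided hypothesis the paper glosses over. It is also preferable to padding with maximally mixed $A$-factors, which, as you note, breaks $A$-side permutation invariance.
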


From \autoref{prop:classification_of_types} and \autoref{cor:one_sided_types} we can now deduce a chain rule tailored for the case of a permutation invariant state. 
\begin{proposition}[Chain rule]\label{prop:chain_rule}
Let $\rho_{A_1^n B_1^n}\in \mathcal{S}\left(\mathcal{H}_{A}^{\otimes n} \otimes \mathcal{H}_B^{\otimes n}\right)$ be permutation-invariant over the $A$-systems with respect to $B_1^n$ and over the $B$-systems with respect to $A_1^n$. Fix $m_A,m_B\leq n-1$ and consider the post-measurement state obtained by iid-measurements
$\mathcal{M}_A:\mathcal{S}\left(\mathcal{H}_A\right)\to \mathcal{M}_1\left(\mathcal{Y}\right)$ and 
$\mathcal{M}_B:\mathcal{S}\left(\mathcal{H}_B\right)\to \mathcal{M}_1\left(\mathcal{Z}\right)$ 
\begin{align}
    \omega_{Y_1^n Z_1^n}
      \coloneqq  
     \left(\mathcal{M}_A^{\otimes n}\otimes \mathcal{M}_B^{\otimes n}\right)  \left(\rho_{A_1^nB_1^n}\right).
\end{align}
Then:
\begin{enumerate}
    \item[\textup{(1)}] (\textup{double extensions}) For every $m_A,m_B\in\{0,\dots,n-1\}$ let
    $T_{m_A}$ and $T_{m_B}$ denote the post-measurement type registers associated with
    $Y_2^{m_A+1}$ and $Z_2^{m_B+1}$ from \autoref{prop:classification_of_types}. Then
    \begin{align}
        I\left(Y_1^n:Z_1^n\right)_\omega
          =  
        \sum_{m_A=0}^{n-1} \sum_{m_B=0}^{n-1}
        I\left(Y_1:Z_1\vert T_{m_A}  T_{m_B}\right)_{\omega_{Y_1Z_1T_{m_A} T_{m_B} }}.
    \end{align}
    \item[\textup{(2)}] (\textup{one-sided extensions}) Let $\rho_{AB_1^n}\in\mathcal{S}\left(\mathcal{H}_A\otimes\mathcal{H}_B^{\otimes n}\right)$
    be invariant under permutations of the $B$-systems.
    Define $\omega_{AZ_1^n}\coloneqq\left(\mathrm{id}_A\otimes \mathcal{M}_B^{\otimes n}\right)\left(\rho_{AB_1^n}\right)$.
    For $m_B\in\{0,\dots,n-1\}$ let $T_{m_B} $ be the type register from \autoref{cor:one_sided_types}. Then
    \begin{align}
        I\left(A:Z_1^n\right)_\omega
          =  
        \sum_{m_B=0}^{n-1} I\left(A:Z_1\vert  T_{m_B} \right)_{\omega_{AZ_1T_{m_B} }}.
    \end{align}
\end{enumerate}
\end{proposition}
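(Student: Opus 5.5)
We prove both parts with the same three steps. First, apply the standard chain rule for mutual information to split $I\left(Y_1^n:Z_1^n\right)_\omega$ (resp.\ $I\left(A:Z_1^n\right)_\omega$) into single-system conditional terms. Second, use permutation invariance to move each term into a canonical position, with index $1$ as the retained system and indices $2,\ldots$ as the conditioning systems. Third, compress the conditioning strings to their types using \autoref{prop:classification_of_types}~(2) and \autoref{cor:one_sided_types}.

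\emph{Part (2).} The chain rule gives
\begin{align}
I\left(A:Z_1^n\right)_\omega = \sum_{m=0}^{n-1} I\left(A:Z_{m+1}\vert Z_1^{m}\right)_\omega .
\end{align}
Let $\pi\in S_n$ be the permutation sending $m+1\mapsto 1$ and $j\mapsto j+1$ for $1\leq j\leq m$, and fixing the rest. Since $\rho_{AB_1^n}$ is invariant under $1_A\otimes U_{B_1^n}\left(\pi\right)$ and the measurement $\mathcal{M}_B^{\otimes n}$ is i.i.d., the cq-state $\omega_{AZ_1^n}$ is invariant under the corresponding relabelling of the classical registers. Conditional mutual information is invariant under such a relabelling, so
\begin{align}
I\left(A:Z_{m+1}\vert Z_1^m\right)_\omega = I\left(A:Z_1\vert Z_2^{m+1}\right)_\omega .
\end{align}
Finally, \autoref{cor:one_sided_types} replaces $Z_2^{m+1}$ by $T_{m}$, which gives the claim with $m_B=m$.

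\emph{Part (1).} We chain twice. First over Alice's registers:
\begin{align}
I\left(Y_1^n:Z_1^n\right) = \sum_{i=1}^n I\left(Y_i:Z_1^n\vert Y_1^{i-1}\right).
\end{align}
Then, inside each term, we apply the chain rule in the second argument, still conditioned on $Y_1^{i-1}$:
\begin{align}
I\left(Y_i:Z_1^n\vert Y_1^{i-1}\right) = \sum_{j=1}^n I\left(Y_i:Z_j\vert Y_1^{i-1}Z_1^{j-1}\right).
\end{align}
Set $m_A=i-1$ and $m_B=j-1$. We then apply the permutation $\pi_A$ on the $A$-systems (sending $i\mapsto 1$ and $1,\ldots,i-1\mapsto 2,\ldots,i$) and, independently, $\pi_B$ on the $B$-systems. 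By hypothesis $\rho$ is invariant under $U^A\left(\pi_A\right)\otimes 1$ and under $1\otimes U^B\left(\pi_B\right)$, hence under their product. Because the measurement $\mathcal{M}_A^{\otimes n}\otimes\mathcal{M}_B^{\otimes n}$ commutes with these relabellings, each term becomes
\begin{align}
I\left(Y_1:Z_1\vert Y_2^{m_A+1}Z_2^{m_B+1}\right)_\omega .
\end{align}
The fully measured version in \autoref{prop:classification_of_types}~(2) then replaces the conditioning strings by $T_{m_A}T_{m_B}$. Summing over $0\leq m_A,m_B\leq n-1$ gives the claim.

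The step that needs the most care is the symmetry transfer. We must check that joint invariance of the quantum state under two separate permutations, one on the $A$-systems and one on the $B$-systems, passes to invariance of the classical joint distribution of $\left(Y_1^n,Z_1^n\right)$. This holds because $\tr\left[\left(\bigotimes_k M^A_{y_k}\otimes\bigotimes_l M^B_{z_l}\right)\rho\right]$ is unchanged when $\rho$ is conjugated by $U^A\left(\pi_A\right)\otimes U^B\left(\pi_B\right)$ and the outcome strings are permuted accordingly. We must also check that the chain-rule ordering matches the indexing convention of \autoref{prop:classification_of_types}: conditioning on systems $2,\ldots,m+1$ with system $1$ retained, which the chosen cyclic-type permutations ensure. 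The rest is bookkeeping.
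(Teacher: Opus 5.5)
Your proposal is correct and follows the paper's proof essentially step for step. Your iterated chain rule is exactly \autoref{lem:double_chain_rule_MI}. Your relabelling permutation is the cycle $(1\ 2\ \cdots\ m+1)$ that the paper uses, combined with invariance of the measured state under independent permutations of the $Y$- and $Z$-registers. The final type compression is done, as in the paper, via the fully measured form of \autoref{prop:classification_of_types}~(2) and via \autoref{cor:one_sided_types}.

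One step you handle more carefully than the paper: you verify explicitly that $\tr\left[\left(\bigotimes_k M^A_{y_k}\otimes\bigotimes_l M^B_{z_l}\right)\rho\right]$ is unchanged under jointly permuting the systems and the outcome strings, whereas the paper only asserts this invariance.
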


\begin{proof}
\textup{(1)} Apply \autoref{lem:double_chain_rule_MI} to $\omega_{Y_1^nZ_1^n}$ to obtain
\begin{align}\label{eq:double_chain_rule_start}
    I\left(Y_1^n:Z_1^n\right)_\omega
      =  
    \sum_{m_A=0}^{n-1}\sum_{m_B=0}^{n-1}
    I  \left(Y_{m_A+1}:Z_{m_B+1}  | Y_1^{m_A}Z_1^{m_B}\right)_\omega.
\end{align}
Since $\rho_{A_1^nB_1^n}$ is invariant under independent permutations and the measurement is i.i.d.,
the classical state $\omega_{Y_1^nZ_1^n}$ is invariant under independent permutations of the $Y$-registers and $Z$-registers, i.e.\ for all $\pi,\sigma\in S_n$,
\begin{align}\label{eq:omega_perm_invar}
    \left(P_\pi^Y\otimes P_\sigma^Z\right) \omega_{Y_1^nZ_1^n} \left(P_\pi^Y\otimes P_\sigma^Z\right)^\dagger
      =  \omega_{Y_1^nZ_1^n},
\end{align}
where $P_\pi^Y$ permutes $Y_1,\dots,Y_n$ and $P_\sigma^Z$ permutes $Z_1,\dots,Z_n$.

Fix $m_A,m_B\in\{0,\dots,n-1\}$ and let $\pi_{m_A+1}\in S_n$ be the cycle $\left(1\ 2\ \cdots\ m_A+1\right)$
(and likewise $\sigma_{m_B+1}$ on the $Z$-indices). Then
\begin{align}
    \pi_{m_A+1}\left(Y_{m_A+1}\right)=Y_1,\quad \pi_{m_A+1}\left(Y_1^{m_A}\right)=Y_2^{m_A+1},\qquad
    \sigma_{m_B+1}\left(Z_{m_B+1}\right)=Z_1,\quad \sigma_{m_B+1}\left(Z_1^{m_B}\right)=Z_2^{m_B+1}.
\end{align}
Because $\omega$ is classical, applying $\left(P_{\pi_{m_A+1}}^Y\otimes P_{\sigma_{m_B+1}}^Z\right)$ is just a relabeling of the random variables,
hence all (conditional) Shannon entropies, and thus conditional mutual informations, are invariant under this relabeling. Using \autoref{eq:omega_perm_invar} we therefore obtain
\begin{align}\label{eq:perm_step}
    I  \left(Y_{m_A+1}:Z_{m_B+1}  | Y_1^{m_A}Z_1^{m_B}\right)_\omega =  I  \left(Y_1:Z_1  | Y_2^{m_A+1}Z_2^{m_B+1}\right)_\omega.
\end{align}
Inserting \autoref{eq:perm_step} into \autoref{eq:double_chain_rule_start} yields
\begin{align}
    I\left(Y_1^n:Z_1^n\right)_\omega =  \sum_{m_A=0}^{n-1}\sum_{m_B=0}^{n-1}
    I  \left(Y_1:Z_1  | Y_2^{m_A+1}Z_2^{m_B+1}\right)_\omega.
\end{align}
Now apply \autoref{prop:classification_of_types}~(2), in the form for the fully measured state, to replace the conditioning registers
$Y_2^{m_A+1}$ and $Z_2^{m_B+1}$ by the corresponding type registers $T_{m_A} $ and $T_{m_B} $, giving
\begin{align}
    I  \left(Y_1:Z_1  | Y_2^{m_A+1}Z_2^{m_B+1}\right)_\omega =   I\left(Y_1:Z_1\vert T_{m_A}  T_{m_B} \right)_{\omega_{Y_1Z_1T_{m_A} T_{m_B} }},
\end{align}
and thus the claim in \textup{(1)}.

\textup{(2)} Apply the usual chain rule $I\left(A:Z_1^n\right)=\sum_{m_B=0}^{n-1} I\left(A:Z_{m_B+1}\vert Z_1^{m_B}\right)$ to $\omega_{AZ_1^n}$.
Using $B$-permutation-invariance of $\rho_{AB_1^n}$ and i.i.d.\ measurement, we have invariance of $\omega_{AZ_1^n}$
under permutations of the $Z$-registers, and with the same cycle argument as above obtain
\begin{align}
    I\left(A:Z_{m_B+1}\vert Z_1^{m_B}\right)_\omega
      =  
    I\left(A:Z_1\vert Z_2^{m_B+1}\right)_\omega.
\end{align}
Finally apply \autoref{cor:one_sided_types} to replace $Z_2^{m_B+1}$ by $T_{m_B} $, which yields \textup{(2)}.
\end{proof}

\subsection{Bounds on the mutual information given constraints}

In this section we investigate in which sense equality constraints on the set of feasible states restrict the bounds on the mutual information. To be concrete, we consider the value of the following optimization program
\begin{equation}\label{eq:program_equality_constraints}
    \begin{aligned}
        \sup \quad & I\left(A : B\right)_{\rho_{AB}} \\ 
        \operatorname{s.t.} \quad &\left(\Phi_{A\to A^\prime}\otimes \operatorname{id}_B\right)\left(\rho_{AB}\right) = \omega_{A^\prime}\otimes \rho_B \\
        &\rho_{AB} \in \mathcal{S}\left(\mathcal{H}_A \otimes \mathcal{H}_B\right).
    \end{aligned}
\end{equation}
Here $\Phi_{A\to A^\prime}$ is assumed to be a quantum channel in the Schr\"odinger picture and $\omega_{A^\prime} \in \mathcal{S}\left(\mathcal{H}_{A^\prime}\right)$ is a fixed state. The discussion is inspired by \cite[Lem.~6]{jee2020quasi} and generalizes it to arbitrary channels.
\begin{proposition}\label{prop:equality_constraint_upper_bound}
    The value of the program \autoref{eq:program_equality_constraints} is bounded by $2\log d_E$, where $d_E$ denotes the Choi rank of the channel $\Phi_{A\to A^\prime}$, i.e.\ the minimal environment dimension of a Stinespring dilation.
\end{proposition}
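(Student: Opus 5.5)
Take a Stinespring dilation $\Phi_{A\to A'}(X)=\tr_E[VXV^\dagger]$ with isometry $V:\mathcal{H}_A\to\mathcal{H}_{A'}\otimes\mathcal{H}_E$, where $d_E$ is the Choi rank. Since local isometries leave mutual information unchanged, for any feasible $\rho_{AB}$ we set $\sigma_{A'EB}\coloneqq (V\otimes 1_B)\rho_{AB}(V\otimes 1_B)^\dagger$ and obtain $I(A:B)_\rho = I(A'E:B)_\sigma$. The constraint in \autoref{eq:program_equality_constraints} says exactly that the marginal satisfies $\sigma_{A'B}=\omega_{A'}\otimes\rho_B$. In particular $I(A':B)_\sigma=0$.

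Next apply the chain rule:
\begin{align}
I(A'E:B)_\sigma = I(A':B)_\sigma + I(E:B\vert A')_\sigma = I(E:B\vert A')_\sigma .
\end{align}
It then suffices to bound $I(E:B\vert A')_\sigma\leq 2\log d_E$. This is the standard dimension bound for conditional mutual information:
\begin{align}
I(E:B\vert A') = H(E\vert A') - H(E\vert A'B) \leq \log d_E + \log d_E .
\end{align}
Here we use $H(E\vert A')\leq H(E)\leq\log d_E$ (subadditivity) and $H(E\vert A'B)\geq -\log d_E$. The latter follows by purifying $\sigma_{A'EB}$ with a reference $R$ and using duality, $H(E\vert A'B)=-H(E\vert R)\geq -\log d_E$. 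Taking the supremum over feasible states then gives the claim.

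The main point to check carefully is the middle step: that the constraint indeed forces $I(A':B)_\sigma=0$. This requires that $\rho_B$ on the right-hand side is the actual marginal of $\rho_{AB}$. It is, since tracing out $A'$ from the constraint gives $\rho_B$ on the left-hand side, because $\Phi$ is trace preserving. The fixed state $\omega_{A'}$ plays no further role.

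This recovers \cite[Lem.~6]{jee2020quasi} for the partial trace, whose Choi rank is the dimension of the traced-out system. Tightness is not claimed in the statement, so no optimizer needs to be constructed.
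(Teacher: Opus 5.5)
Your proof is correct and follows the paper's argument step for step. The common steps are the Stinespring dilation with isometric invariance, the chain rule $I(A'E:B) = I(A':B) + I(E:B\vert A')$, the vanishing of $I(A':B)$ by the constraint, and the dimension estimate $I(E:B\vert A')\leq 2\log d_E$. You additionally spell out two things the paper leaves implicit: the purification-duality justification of that estimate, and the check that the $\rho_B$ in the constraint is the true marginal.
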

\begin{proof}
    Let $V_{A\to A^\prime E}:\mathcal{H}_A\to \mathcal{H}_{A^\prime}\otimes \mathcal{H}_E$ be a minimal Stinespring dilation (see e.g.\ \cite{Kretschmann2008} for a rigorous definition of minimality). Due to the isometric invariance of the relative entropy we have
    \begin{equation}
    \begin{aligned}
        I\left(A : B\right)_{\rho_{AB}} &= D\left(\rho_{AB} \Vert \rho_A \otimes \rho_B\right) \\
        &= D\left(\left(V_{A\to A^\prime E}\otimes 1_B\right) \rho_{AB} \left(V_{A\to A^\prime E}\otimes 1_B\right)^\dagger \Vert V_{A\to A^\prime E}\, \rho_A V_{A\to A^\prime E}^\dagger\otimes \rho_B\right) \\
        &= I\left(A^\prime E:B\right)_{\sigma_{A^\prime E B}}
    \end{aligned}
    \end{equation}
    with $\sigma_{A^\prime E B} \coloneqq \left(V_{A\to A^\prime E}\otimes 1_B\right) \rho_{AB} \left(V_{A\to A^\prime E}\otimes 1_B\right)^\dagger$. Now we use the chain rule
    \begin{align}
        I\left(A^\prime E:B\right)_{\sigma_{A^\prime E B}} = I\left(A^\prime : B\right)_{\sigma_{A^\prime B}} + I\left(E:B\vert A^\prime\right)_{\sigma_{A^\prime E B}}.
    \end{align}
    By the constraint we have
    \begin{equation}
    \begin{aligned}
        \sigma_{A^\prime B} &= \tr_{E}\left[\left(V_{A\to A^\prime E}\otimes 1_B\right) \rho_{AB} \left(V_{A\to A^\prime E}\otimes 1_B\right)^\dagger \right] \\
        &= \left(\Phi_{A\to A^\prime}\otimes \operatorname{id}_B\right)\left(\rho_{AB}\right)\\
        &= \omega_{A^\prime} \otimes \rho_B,
    \end{aligned}
    \end{equation}
    which is product. Thus $I\left(A^\prime : B\right)_{\sigma_{A^\prime B}} = 0$, and the standard estimate $I\left(E:B\vert A^\prime\right)_{\sigma_{A^\prime E B}} \leq 2 \log d_E$ yields the desired result.
\end{proof}
Generally, \autoref{prop:equality_constraint_upper_bound} need not improve on the trivial upper bound $2 \log d_A$, as the Choi rank can be as large as $d_A d_{A^\prime}$ (see e.g.\ \cite[Cor.~1.63]{Quantum_Information_Processing}); already for the completely dephasing channel, where $A^\prime = A$ and $d_E = d_A$, the bound reduces to the trivial one.

\begin{example}[Partial trace constraints]\label{ex:partial_trace_constraint}
    Let $\mathcal{H}_A = \mathcal{H}_{A_1}\otimes \mathcal{H}_{A_2}$ and take $\Phi_{A\to A_1} = \tr_{A_2}$ to be the partial trace. A minimal Stinespring dilation is the identity embedding with environment $E = A_2$, i.e.\ $d_E = d_{A_2}$. \autoref{prop:equality_constraint_upper_bound} then states that every state with $\tr_{A_2}\left[\rho_{AB}\right] = \omega_{A_1}\otimes \rho_B$ satisfies $I\left(A:B\right)_{\rho_{AB}}\leq 2\log d_{A_2}$, which recovers the constrained decoupling bound of \cite[Lem.~6]{jee2020quasi} used in the hierarchies of \cite{Berta2021, kossmann2025_aqec}; note that the environment system makes a proof without a third auxiliary system possible. In this case the bound improves on the trivial one whenever $d_{A_2} < d_{A_1}d_{A_2} = d_A$, i.e.\ whenever the constraint decouples (correlation breaking) a proper subsystem.
\end{example}

\section{de Finetti theorems and constraints}\label{sec:deFinetti_theorems_constraints}

In this section we prove a new de Finetti theorem as well as improved versions of several known ones. Our starting point is \autoref{observation}: the tools of \autoref{sec:techniques} answer each of its three questions in the affirmative, and each answer strengthens the de Finetti argument of \autoref{subsec:deFinetti_revisited} at a different point.
\begin{enumerate}
    \item \emph{Symmetries in the mutual information bound (\autoref{observation}~(1)).} The states $\rho_{AB_1^n}\in \mathcal{S}\left(\mathcal{H}_A \otimes \mathcal{H}_B^{\otimes n}\right)$ under consideration carry symmetry --- at the very least permutation invariance over the $B$-systems --- which the dimension bound in \autoref{eq:mutual_information_bounds} does not exploit. \autoref{thm:upper_bound_entanglement_assisted_classical_capacity} incorporates general fixed point constraints, and hence in particular such symmetries, directly into the bound on the mutual information.
    \item \emph{Compression of the chain rule (\autoref{observation}~(2)).} For permutation invariant states the number of distinct summands in \autoref{eq:self_decoupling_lemma_intro} reduces from exponential to polynomial in the number of measured systems, cf.\ \autoref{prop:classification_of_types}~(3), which yields the refined chain rule of \autoref{prop:chain_rule}.
    \item \emph{Smaller distortion factor (\autoref{observation}~(3)).} Whenever the post-measurement states $\rho_{AB\vert z_1^m}$ in \autoref{eq:informationally_complete_measurements} inherit local fixed point constraints of the form in \autoref{prop:distortion_conditional_expectation}, the distortion factor $c\left(\mathcal{M}\right)$ can be reduced by means of \autoref{cor:distortion_correlation_postmeasurement}.
\end{enumerate}
Combining all three ingredients, we first prove a double-sided de Finetti theorem (\autoref{thm:double_extended_de_Finetti}) and an interpolation de Finetti theorem under fixed point constraints (\autoref{thm:interpolation_deFinetti}), in which the dimension dependence is carried entirely by the block data of the local fixed point algebras; \autoref{thm:proper_deFinetti} and \autoref{thm:bose_de_Finetti} then complement these with a $k$-copy version with side information and a Bose-symmetric double-sided version, respectively. All de Finetti theorems of this section are moreover compatible with additional constraints with fixed marginals in the sense of \autoref{eq:def_cSEP_set}: by the inheritance argument
recorded at the end of \autoref{subsec:deFinetti_revisited}, the measurement-based separable candidates satisfy every such constraint imposed on the extension, so that the statements below apply to $\operatorname{SEP}$ and $\operatorname{cSEP}$ problems alike.

\subsection{Double-sided extensions and interpolation de Finetti theorem}
    For de Finetti theorems derived from the self-decoupling lemma of \cite{Brandao_2017}, it has remained an open question whether double-sided extensions can be accommodated and how they affect the convergence behavior. The obstacle is the mutual information bound itself: for general states, \autoref{eq:mutual_information_bounds} only yields
    \begin{align}\label{eq:bound_mutual_information_no_further_information}
        I(A:B_1^n)_{\rho_{AB_1^n}} \leq 2\log \min \{d_A, d_B^n\}.
    \end{align}
    For a double-sided extension, i.e.\ states of the form $\rho_{A_1^n B_1^n}\in \mathcal{S}\left(\mathcal{H}_A^{\otimes n} \otimes \mathcal{H}_B^{\otimes n}\right)$, the bound in \autoref{eq:bound_mutual_information_no_further_information} deteriorates to
    \begin{align}
        I(A_1^n:B_1^n)_{\rho_{A_1^nB_1^n}} \leq 2\log \min \{d_A^n, d_B^n\} = 2\, n\, \log \min \{d_A, d_B\} \ \sim \ n,
    \end{align}
    growing linearly in the number $n$ of copies. Traced through the argument of \autoref{subsec:deFinetti_revisited}, this linear growth exactly cancels the $1/n$ gain from the chain rule: the overall bound in \autoref{eq:de_Finetti_with_symmetry} stagnates at a constant, precluding any convergence argument as $n\to \infty$ along these lines.

    The way out is the point made in \autoref{observation}: since the step from \autoref{eq:de_Finetti_without_symmetry} to \autoref{eq:de_Finetti_with_symmetry} requires permutation invariance anyway, the states under consideration may be assumed permutation invariant from the outset. Consequently, the relevant quantity is no longer the general bound in \autoref{eq:bound_mutual_information_no_further_information}, but its counterpart for permutation invariant states --- that is, for fixed points of the group average in \autoref{cor:entanglement_assisted_classcial_capacity_group_averages}, applied to the permutation action of the symmetric group from \autoref{eq:def_permutation_action_Sn_H_otimes_n}. Together with the well-known bounds of \autoref{lem:bounds_symmetric_group}, the corollary replaces the linear growth by a polylogarithmic one --- additional structure sharpens the bound precisely where the dimension argument fails --- and thereby restores convergence as $n\to \infty$. This is made precise in the following theorem.


\begin{theorem}\label{thm:double_extended_de_Finetti}
        Let $\rho_{AB} \in \mathcal{S}\left(\mathcal{H}_A\otimes \mathcal{H}_B\right)$ be a state such that there exists a permutation invariant double sided $n$-extension. Then there exists a separable state $\tilde{\rho}_{AB} \in \operatorname{SEP}\left(A:B\right)$ such that
        \begin{align}
            \lVert \rho_{AB} - \tilde{\rho}_{AB}\rVert_1 \leq 4 d_A d_B\frac{\sqrt{2 \ln 2 \left(d_A^2 -1\right)\,\log\left(n+d_A \right)}}{n}.
        \end{align}
    \end{theorem}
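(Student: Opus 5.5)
The plan is to run the argument of \autoref{subsec:deFinetti_revisited} with three changes. Measurements are applied on \emph{both} sides. The double chain rule of \autoref{prop:chain_rule}~(1) is used. The total mutual information is controlled by the symmetric-group capacity bound, not by the dimension bound that grows linearly in $n$.

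Let $\rho_{A_1^nB_1^n}$ be the double-sided extension. Fix MIC measurements $\mathcal{M}_A$ and $\mathcal{M}_B$ whose distortion constants are $2d_A$ and $2d_B$, as provided by \cite[Lem.~8]{jee2020quasi}. Set $\omega_{Y_1^nZ_1^n}\coloneqq(\mathcal{M}_A^{\otimes n}\otimes\mathcal{M}_B^{\otimes n})(\rho_{A_1^nB_1^n})$.

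\textbf{Step 1 (global MI bound).} By data processing, $I(Y_1^n:Z_1^n)_\omega\le I(A_1^n:B_1^n)_\rho$. The state $\rho$ is a fixed point of the $S_n$-twirl on $A_1^n$, with $B_1^n$ as side information. By \autoref{lem:fixed points_equal_ergodic_projection}, \autoref{cor:entanglement_assisted_classcial_capacity_group_averages} and \autoref{rem:equality_capacity_bounds}, this mutual information is at most $\log\sum_{\lambda\vdash_{d_A}n}m_\lambda^2$. Here the $m_\lambda$ are the multiplicities of the $S_n$-irreps in $\mathcal{H}_A^{\otimes n}$, i.e.\ the dimensions of the Weyl modules. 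I will then invoke \autoref{lem:bounds_symmetric_group} to obtain $\sum_\lambda m_\lambda^2\le(n+d_A)^{d_A^2-1}$. This works because the sum equals $\dim(\mathcal{A}_n^{d_A})^\prime=\dim\mathrm{Sym}^n(M_{d_A})=\binom{n+d_A^2-1}{d_A^2-1}$. So $I(Y_1^n:Z_1^n)_\omega\le(d_A^2-1)\log(n+d_A)$. The bound could be stated with $\min$ over the two sides, but the $A$-side suffices for the statement.

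\textbf{Step 2 (pigeonhole and Pinsker).} By \autoref{prop:chain_rule}~(1), $I(Y_1^n:Z_1^n)_\omega$ is a sum of $n^2$ nonnegative terms $I(Y_1:Z_1\vert T_{m_A}T_{m_B})_\omega$. Hence some pair $(m_A,m_B)$ has a term of at most $(d_A^2-1)\log(n+d_A)/n^2$. Expand this term via \autoref{lemma:conditioning_mutual_information} as $\sum_{s,t}p(s,t)\,D(\omega_{Y_1Z_1\vert s,t}\Vert\omega_{Y_1\vert s,t}\otimes\omega_{Z_1\vert s,t})$. Pinsker's inequality, convexity of $x\mapsto x^2$, and the triangle inequality then give
\begin{align}
\sum_{s,t}p(s,t)\,\lVert\omega_{Y_1Z_1\vert s,t}-\omega_{Y_1\vert s,t}\otimes\omega_{Z_1\vert s,t}\rVert_1\le\frac{\sqrt{2\ln2\,(d_A^2-1)\log(n+d_A)}}{n}.
\end{align}
The $1/n$ (rather than $1/\sqrt n$) comes precisely from the $n^2$ summands.

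\textbf{Step 3 (distortion and symmetrization).} For each $(s,t)$, the state $\omega_{Y_1Z_1\vert s,t}$ is $(\mathcal{M}_A\otimes\mathcal{M}_B)$ applied to the conditional state $\rho_{A_1B_1\vert s,t}$ from \autoref{prop:classification_of_types}. Put $X\coloneqq\rho_{A_1B_1\vert s,t}-\rho_{A_1\vert s,t}\otimes\rho_{B_1\vert s,t}$, which is traceless and Hermitian. Apply the distortion bound twice. First use $B_1$ as the auxiliary system for $\mathcal{M}_A$. Then apply $\mathcal{M}_B$ with the classical register $Y_1$ as auxiliary, noting that $(\mathcal{M}_A\otimes\operatorname{id})(X)$ is still traceless and Hermitian. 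This yields $\lVert X\rVert_1\le 4d_Ad_B\lVert(\mathcal{M}_A\otimes\mathcal{M}_B)(X)\rVert_1$. Define $\tilde\rho_{AB}\coloneqq\sum_{s,t}p(s,t)\rho_{A_1\vert s,t}\otimes\rho_{B_1\vert s,t}\in\operatorname{SEP}$. By convexity of the trace norm, $\lVert\rho_{A_1B_1}-\tilde\rho_{AB}\rVert_1$ is bounded by $4d_Ad_B$ times the left-hand side of the display in Step 2.

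No final permutation step is needed. The conditioning is already on systems $2,\dots,m+1$, because \autoref{prop:chain_rule} used the cyclic relabeling. Since $\rho_{A_1B_1}=\rho_{AB}$, this gives the claim.

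\textbf{Expected obstacle.} The main delicate point is Step 1. One must justify treating $B_1^n$ as arbitrary side information for the $A$-side twirl; the statement in \autoref{rem:equality_capacity_bounds} covers exactly this. One must also obtain the clean polynomial bound $(n+d_A)^{d_A^2-1}$ on $\sum_\lambda m_\lambda^2$. The second, milder check is that the distortion constant composes multiplicatively when one side has already been measured.
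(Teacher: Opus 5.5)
Your proposal is correct and follows the paper's proof step for step: the capacity bound for the $S_n$-twirl on the $A$-side, data processing under the IC measurements with distortion constants $2d_A$ and $2d_B$, the type-compressed double chain rule with $n^2$ summands followed by pigeonhole and Pinsker, and two applications of the distortion bound producing the factor $4d_Ad_B$. The deviations are cosmetic. You evaluate $\sum_\lambda m_\lambda^2=\binom{n+d_A^2-1}{d_A^2-1}$ exactly; this yields the claimed $(n+d_A)^{d_A^2-1}$ via $\binom{n+k}{k}=\prod_{j=1}^{k}(1+n/j)\le(n+1)^k$, whereas the paper bounds the sum by the number of partitions times $\max_\lambda m_\lambda^2$ from \autoref{lem:bounds_symmetric_group}, and that lemma applied literally to your binomial would only give base $n+d_A^2$. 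You also apply the distortion bound to each type-conditional difference before averaging, rather than to the averaged difference as the paper does.
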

    \begin{proof}
        We start by considering the permutation invariant extension $\rho_{A_1^n B_1^n} \in \mathcal{S}\left(\mathcal{H}_A^{\otimes n}\otimes \mathcal{H}_B^{\otimes n}\right)$ of $\rho_{AB}$ and $I\left(A_1^n:B_1^n\right)_{\rho_{A_1^nB_1^n}}$. By \autoref{cor:entanglement_assisted_classcial_capacity_group_averages} we find that for permutation invariant states
        \begin{align}
            I\left(A_1^n:B_1^n\right)_{\rho_{A_1^nB_1^n}} \leq \log  \left(\sum_{i=1}^k m_i^2 \right).
        \end{align}
        Here the coefficients $m_i$ are determined by the decomposition of $\mathcal{H}_A^{\otimes n}$ under the action of $S_n$ from \autoref{eq:def_permutation_action_Sn_H_otimes_n} into blocks as discussed in \autoref{sec:weyls_dimension_formula}. Particularly the blocks can be identified with all partitions $\lambda_1\geq \lambda_2 \geq \ldots \geq \lambda_{d_A}\geq \lambda_{d_A+1}\geq 0$ such that $\lambda_{d_A+1}= 0$ and $\sum_{i=1}^{d_A}\lambda_i =  n$. For this specific action the amount of blocks and the size of the blocks can then be estimated with Weyl's dimension formula with the bounds from \autoref{lem:bounds_symmetric_group}
        \begin{equation}
            \begin{aligned}
                \lvert \{\lambda \vdash n \ \vert \ \lambda_{d_A+1}= 0\}\rvert &\leq \left(n+d_A\right)^{d_A-1}\\
                m_\lambda &\leq \left(n+d_A\right)^{d_A\left(d_A-1\right)/2}\quad \text{for all} \quad \lambda \vdash n, \quad \lambda_{d_A+1}=0.
            \end{aligned}
        \end{equation}
        Thus, expressing the sum over blocks as a sum over the admissible partitions and bounding it by the number of partitions times the largest squared multiplicity, we obtain
        \begin{equation}
        \begin{aligned}
            I\left(A_1^n:B_1^n\right)_{\rho_{A_1^nB_1^n}} &\leq\log \left( \sum_{\substack{\lambda \vdash n \\ \lambda_{d_A+1}=0}} m_\lambda^2 \right) \leq \log \left( \lvert \{\lambda \vdash n \ \vert \ \lambda_{d_A+1}= 0\}\rvert \cdot \max_{\substack{\lambda \vdash n \\ \lambda_{d_A+1}=0}} m_\lambda^2\right) \\
            &\leq\log \left( \left(n+d_A\right)^{d_A-1} \left(\left(n+d_A\right)^{d_A\left(d_A-1\right)/2}\right)^2 \right) \\
            &= \left(d_A^2 -1\right)\,\log\left(n+d_A \right).
        \end{aligned}
        \end{equation}
        Now we apply informationally complete measurements $\mathcal{M}_A$ and $\mathcal{M}_B$ from \cite[Lem.~8]{jee2020quasi} such that we obtain by DPI
        \begin{align}\label{eq:proof_double_extended_deFinetti_bound_DPI}
            I\left(Y_1^n:Z_1^n\right)_{\rho_{Y_1^nZ_1^n}}\leq I\left(A_1^n:B_1^n\right)_{\rho_{A_1^nB_1^n}} \leq \left(d_A^2 -1\right)\,\log\left(n+d_A \right).
        \end{align}
        By \autoref{prop:chain_rule} we have 
        \begin{align}
        I\left(Y_1^n:Z_1^n\right)_\rho
          =  
        \sum_{m_A=0}^{n-1} \sum_{m_B=0}^{n-1}
        I\left(Y_1:Z_1\vert T_{m_A }  T_{m_B}\right)_{\rho_{Y_1Z_1T_{m_A }T_{m_B}}}.
    \end{align}
    Note that the double sum runs over the $n^2$ pairs $\left(m_A,m_B\right)$, i.e.\ over the numbers of already measured systems on the two sides; the type registers $T_{m_A}$ and $T_{m_B}$ enter each summand only as conditioning systems and are not summed over, so the number of summands is exactly $n^2$, irrespective of the cardinalities of the type sets $\mathcal{T}_{m_A,\lvert\mathcal{Y}\rvert}$ and $\mathcal{T}_{m_B,\lvert\mathcal{Z}\rvert}$. Since each summand is a conditional mutual information of a classical state and hence nonnegative, at least one of them is bounded by the average, i.e.\ there exists a pair $\left(m_A,m_B\right)$ such that
    \begin{align}\label{eq:proof_two_sided_deFinetti1}
        I\left(Y_1:Z_1\vert T_{m_A }  T_{m_B}\right)_{\rho_{Y_1Z_1T_{m_A }T_{m_B}}} \leq \frac{\left(d_A^2 -1\right)\,\log\left(n+d_A \right)}{n^2}.
    \end{align}
    By \autoref{lemma:conditioning_mutual_information} we have 
    \begin{align}
        I\left(Y_1:Z_1\vert T_{m_A }  T_{m_B} \right)_{\rho_{Y_1Z_1T_{m_A }T_{m_B} }} = \sum_{t_{m_A},t_{m_B}} p\left(t_{m_A},t_{m_B}\right) I\left(Y_1:Z_1\right)_{\rho_{Y_1Z_1\vert t_{m_A},t_{m_B}}}.
    \end{align}
    Thus, with the convexity of $x  \mapsto x^2$ and of the Schatten-$1$-norm we have
    \begin{equation}\label{eq:proof_two_sided_deFinetti2}
    \begin{aligned}
        \sum_{t_{m_A},t_{m_B}}&p\left(t_{m_A},t_{m_B}\right) I\left(Y_1:Z_1\right)_{\rho_{Y_1Z_1\vert t_{m_A},t_{m_B}}} = \sum_{t_{m_A},t_{m_B}} p\left(t_{m_A},t_{m_B}\right)D\left(\rho_{Y_1Z_1\vert t_{m_A},t_{m_B}}\Vert \rho_{Y_1\vert t_{m_A},t_{m_B}} \otimes \rho_{Z_1\vert t_{m_A},t_{m_B}}\right) \\
        &  \geq \frac{1}{2\ln 2}\sum_{t_{m_A},t_{m_B}} p\left(t_{m_A},t_{m_B}\right)\Vert\rho_{Y_1Z_1\vert t_{m_A},t_{m_B}}-\rho_{Y_1\vert t_{m_A},t_{m_B}} \otimes \rho_{Z_1\vert t_{m_A},t_{m_B}}\rVert^2_1 \\
        &\geq \frac{1}{2\ln 2}\ \lVert\sum_{t_{m_A},t_{m_B}}p\left(t_{m_A},t_{m_B}\right) \rho_{Y_1Z_1\vert t_{m_A},t_{m_B}}-\sum_{t_{m_A},t_{m_B}}p\left(t_{m_A},t_{m_B}\right) \rho_{Y_1\vert t_{m_A},t_{m_B}} \otimes \rho_{Z_1\vert t_{m_A},t_{m_B}}\rVert^2_1 \\
        &= \frac{1}{2\ln 2}\ \lVert\rho_{Y_1Z_1}-\sum_{t_{m_A},t_{m_B}}p\left(t_{m_A},t_{m_B}\right) \rho_{Y_1\vert t_{m_A},t_{m_B}} \otimes \rho_{Z_1\vert t_{m_A},t_{m_B}}\rVert^2_1. 
    \end{aligned}
    \end{equation}
    Combining \autoref{eq:proof_two_sided_deFinetti1} and \autoref{eq:proof_two_sided_deFinetti2} yields
    \begin{align}
        \lVert\rho_{Y_1Z_1}-\sum_{t_{m_A},t_{m_B}}p\left(t_{m_A},t_{m_B}\right) \rho_{Y_1\vert t_{m_A},t_{m_B}} \otimes \rho_{Z_1\vert t_{m_A},t_{m_B}}\rVert_1 \leq \frac{\sqrt{2 \ln 2 \left(d_A^2 -1\right)\,\log\left(n+d_A \right)}}{n}.
    \end{align}
    Applying now the linear distortion of informationally complete measurements from \cite[Lem.~8]{jee2020quasi}, once for each side with $c\left(\mathcal{M}_A\right)\leq 2d_A$ and $c\left(\mathcal{M}_B\right)\leq 2d_B$, yields
    \begin{align}\label{eq:final_result_double_extended_deFinetti}
        \lVert\rho_{AB}-\sum_{t_{m_A},t_{m_B}}p\left(t_{m_A},t_{m_B}\right) \rho_{A\vert t_{m_A},t_{m_B}} \otimes \rho_{B\vert t_{m_A},t_{m_B}}\rVert_1 \leq 4 d_A d_B \frac{\sqrt{2 \ln 2 \left(d_A^2 -1\right)\,\log\left(n+d_A \right)}}{n}.
    \end{align}
    We define $\tilde{\rho}_{AB} \coloneqq \sum_{t_{m_A},t_{m_B}}p\left(t_{m_A},t_{m_B}\right) \rho_{A\vert t_{m_A},t_{m_B}} \otimes \rho_{B\vert t_{m_A},t_{m_B}} \in \operatorname{SEP}\left(A:B\right)$ in order to conclude the desired assertion.
    \end{proof}
    
    We close this section by considering the following example.
    \begin{example}\label{example_double_extension}
Let $\rho_{A_1^{2n}} \in \mathcal{S}\left(\mathcal{H}_A^{\otimes 2n}\right)$ be permutation invariant with respect to the natural action of $S_{2n}$ from \autoref{eq:def_permutation_action_Sn_H_otimes_n}. Consider, moreover, the embeddings of two isomorphic copies of $S_n$, acting on the first and on the last $n$ tensor factors respectively, into $S_{2n}$:
\begin{align}
    S_{\{1,2,\ldots,n\}} \hookrightarrow S_{2n}
    \quad \text{and} \quad
    S_{\{n+1,n+2,\ldots,2n\}} \hookrightarrow S_{2n}.
\end{align}
Using these embeddings together with $S_{2n}$-permutation invariance, we can apply the construction of \autoref{thm:double_extended_de_Finetti} in the special case $\mathcal{H}_A=\mathcal{H}_B$. In this setting, the theorem effectively addresses a separability problem between subsystems of the $A$-chain.

Now consider, without loss of generality, the bipartition at the cut $A_1:A_2$. This yields a direct comparison between the bound from \cite[Thm.~II.7]{Christandl2007} and the one from \autoref{thm:double_extended_de_Finetti}. There exist separable states $\sigma_{A_1A_2},\tilde{\sigma}_{A_1A_2}\in \operatorname{SEP}(A_1:A_2)$ such that
\begin{equation}
\begin{aligned}
\lVert \rho_{A_1A_2} - \sigma_{A_1A_2}\rVert_1
&\le \frac{d_A^2}{n}
&& \text{\cite[Thm.~II.7]{Christandl2007}},\\
\lVert \rho_{A_1A_2} - \tilde{\sigma}_{A_1A_2}\rVert_1
&\le \frac{4d_A^3}{n}\,\sqrt{2\ln 2\,\log\left(n+d_A\right)}
&& \autoref{thm:double_extended_de_Finetti}.
\end{aligned}
\end{equation}
Hence, \autoref{thm:double_extended_de_Finetti} achieves a scaling that is close to the best known de Finetti bound for this type of separability task. At the same time, it offers an additional structural advantage: the approximating separable state $\tilde{\sigma}_{A_1A_2}$ is compatible with a rich set of constraints and can be obtained constructively. For $\sigma_{A_1A_2}$, such a constructive constrained characterization is not known (see also the discussion in \cite{Berta2021} and its application in \cite{kossmann2025_aqec,zeiss2025approximatingfixedsizequantum,zeiss2026finitefinetticonvexbodies}).
\end{example}

In the following, we aim to prove an interpolation result for de Finetti theorems, motivated by the following observation. The convex hull of product states on two quantum systems $\mathcal{H}_A$ and $\mathcal{H}_B$ is known to be the set of separable states $\operatorname{SEP}(A:B)$. Assume now that we consider the convex hull of states whereby all states on $\mathcal{H}_B$ satisfy a fixed point constraint. This can be written as follows
\begin{equation}
\begin{aligned}
   \Sigma(A:B) &\coloneqq \operatorname{conv}\{\rho_A \otimes \sigma_B \ \vert \ \rho_A \in \mathcal{S}\left(\mathcal{H}_A\right), \ \sigma_B \in \mathcal{S}\left(\mathcal{H}_B\right), \ \Phi_{A\to A}\left(\rho_A\right)= \rho_A, \ \Phi_{B\to B}\left(\sigma_B\right) =\sigma_B\} \\ &\subseteq \operatorname{SEP}\left(A:B\right).
\end{aligned}
\end{equation}
    The constraint in $\Sigma(A:B)$ can be understood as defining general interpolation sets between the classical case, i.e. invariance under the group average corresponding to the $d_A$-fold maximal torus $\mathbb{T}^{d_A}$, and similarly for $d_B$, and the quantum case, where $\Phi_{A\to A} = \operatorname{id}_A$ and similarly for $\Phi_{B\to B}$. In terms of the complexity of a de Finetti theorem, this should in particular lead to improvements in the dimension dependence of the bounds, which occurs in all finite quantum de Finetti theorems \cite{Knig2005,Christandl2007,Berta2021} and even beyond, in the general finite de Finetti theorem on convex bodies \cite{zeiss2026finitefinetticonvexbodies}, while in the classical setting it is well known that this dependence disappears \cite{Diaconis1977,Diaconis1980}. Combining all of our techniques from \autoref{sec:techniques}, we show the following result.

    \begin{theorem}[de Finetti theorem under fixed-point constraints]\label{thm:interpolation_deFinetti}
        Assume channels $\Phi_{A\to A}:\mathcal{S}\left(\mathcal{H}_A\right) \to \mathcal{S}\left(\mathcal{H}_A\right)$ and $\Phi_{B\to B}:\mathcal{S}\left(\mathcal{H}_B\right) \to \mathcal{S}\left(\mathcal{H}_B\right)$ admitting a full-rank fixed point. Then for every state $\rho_{AB}$ admitting a one-sided permutation-invariant $n$-extension $\rho_{AB_1^n}$ that satisfies $(\Phi_{A\to A}\otimes\, \operatorname{id}_{B_1^n})\left(\rho_{AB_1^n}\right) = \rho_{AB_1^n}$ and $(\operatorname{id}_{AB_1^{n-1}}\otimes \, \Phi_{B\to B})\left(\rho_{AB_1^n}\right) = \rho_{AB_1^n}$, there exists a state $\tilde{\rho}_{AB} \in \Sigma\left(A:B\right)$ satisfying  
        \begin{align}
            \lVert \rho_{AB} - \tilde{\rho}_{AB}\rVert_1 \leq 2 \, m_{B,\operatorname{max}} \sqrt{\frac{2 \ln 2 \, \log  \left(\sum_{i=1}^{k_A} m_{i,A}^2 \right)}{n}},
        \end{align}
        whereby $m_{i,A}$ are the dimensions of the fixed point algebra of $\Phi_{A\to A}$ and $m_{B,\operatorname{max}} \coloneqq \max_{1\leq i \leq k_B} m_{i,B}$ is the largest block-size of the fixed point algebra of $\Phi_{B\to B}$.
    \end{theorem}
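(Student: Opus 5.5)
The plan is to run the one-sided de Finetti argument of \autoref{subsec:deFinetti_revisited} and replace two of its ingredients by the fixed point tools of \autoref{sec:techniques}. First, I will bound the total mutual information. Since $(\Phi_{A\to A}\otimes\operatorname{id}_{B_1^n})(\rho_{AB_1^n})=\rho_{AB_1^n}$, \autoref{lem:fixed points_equal_ergodic_projection} (with side system $B_1^n$) places $\rho_{AB_1^n}$ in the range of $E_A^*\otimes\operatorname{id}_{B_1^n}$. The same argument as in \autoref{thm:upper_bound_entanglement_assisted_classical_capacity}, together with \autoref{rem:equality_capacity_bounds}, then gives
\begin{align}
I(A:B_1^n)_{\rho_{AB_1^n}}\leq \log\Big(\sum_{i=1}^{k_A} m_{i,A}^2\Big).
\end{align}
I will take the $m_{i,A}$ to be the block dimensions of $\operatorname{Fix}(\Phi_{A\to A})$, which is the labelling used in the statement. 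Importantly, this bound is independent of $n$.

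Next I will choose the measurement. For $\mathcal{M}_B$ I take the glued measurement from \autoref{prop:distortion_conditional_expectation} for $\Phi_{B\to B}$. With $Z_1^n$ the outcomes of $\mathcal{M}_B^{\otimes n}$, the data-processing inequality gives the same upper bound for $I(A:Z_1^n)$. Then \autoref{prop:chain_rule}~(2) writes this as $\sum_{m=0}^{n-1}I(A:Z_1\vert T_m)$. The proof of \autoref{prop:chain_rule} actually establishes the analogous identity with $Z_1$ left unmeasured before it is measured, so I will use it in that form. Selecting an $m$ whose summand is at most the average, then applying \autoref{lemma:conditioning_mutual_information}, Pinsker's inequality, and convexity exactly as in \autoref{eq:proof_two_sided_deFinetti2}, yields
\begin{align}
\Big\lVert \rho_{AZ_1}-\sum_t p(t)\,\rho_{A\vert t}\otimes\rho_{Z_1\vert t}\Big\rVert_1\leq\sqrt{\frac{2\ln2\,\log\big(\sum_i m_{i,A}^2\big)}{n}}.
\end{align}
Permutation invariance, via \autoref{cor:one_sided_types}, lets me work with $B_1$ and the types of $Z_2^{m+1}$.

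To pass from $Z_1$ back to $B_1$ I want \autoref{prop:distortion_conditional_expectation}, applied directly to the difference operator $X\coloneqq\rho_{AB_1}-\sum_t p(t)\rho_{A\vert t}\otimes\rho_{B_1\vert t}$. Two properties of $X$ have to be checked.
\begin{enumerate}
\item \emph{Invariance under $\operatorname{id}_A\otimes E_B^*$.} The constraint $\Phi_{B\to B}$ acts on one $B$-factor, and by permutation invariance it transfers to $B_1$. Because it commutes with the measurements on $B_2^{m+1}$, it is inherited by each conditional state $\rho_{AB_1\vert t}$ and hence by $\rho_{B_1\vert t}$. By \autoref{lem:fixed points_equal_ergodic_projection} this means invariance under $E_B^*$.
\item \emph{Block-wise tracelessness.} I need $\tr[(1\otimes P_i)X]=\tr[P_i\rho_{B_1}]-\sum_t p(t)\tr[P_i\rho_{B_1\vert t}]=0$. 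This holds because $\rho_{B_1}=\sum_t p(t)\rho_{B_1\vert t}$.
\end{enumerate}
I therefore expect no obstacle here. The distortion constant is $2m_{B,\max}$, and multiplying through gives the stated bound.

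Finally, I must place the candidate $\tilde\rho_{AB}=\sum_t p(t)\rho_{A\vert t}\otimes\rho_{B_1\vert t}$ in $\Sigma(A:B)$. The $B$-factors are fixed points by the inheritance just described. The $A$-factors are fixed points because $\Phi_{A\to A}$ commutes with the measurement on $B_2^{m+1}$, and partial trace preserves the fixed point property. I expect the main obstacle to be bookkeeping, namely making sure that the $A$-constraint and the $B$-constraint are both inherited at the level of the individual conditional states $\rho_{\cdot\vert t}$ and not only by their average. This is settled by conditioning on POVM outcomes of systems other than the constrained ones, which I will check with the subnormalized operators $\Omega^{t}$ as in the proof of \autoref{prop:classification_of_types}.
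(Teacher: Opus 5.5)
Your proposal is correct and follows the paper's proof: the capacity bound from \autoref{thm:upper_bound_entanglement_assisted_classical_capacity}, the glued block measurement, data processing, the type-compressed chain rule \autoref{prop:chain_rule}~(2), choosing a summand at most the average, Pinsker, the distortion constant $2m_{B,\max}$, and inheritance of both fixed-point constraints by the type-conditional states. The only difference is the order of two steps: you apply convexity first and then \autoref{prop:distortion_conditional_expectation} once to the averaged difference operator (which is indeed block-wise traceless and $E_B^*$-invariant, as you check), as in the proof of \autoref{thm:double_extended_de_Finetti}, whereas the paper applies \autoref{cor:distortion_correlation_postmeasurement} to each conditional state before using convexity; both orders give the same constant.
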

    \begin{proof}
        Given the full-rank fixed point of $\Phi_{A\to A}$, we can apply \autoref{thm:upper_bound_entanglement_assisted_classical_capacity} in order to get
        \begin{align}
            I\left(A:B_1^n\right)_{\rho_{AB_1^n}} \leq \log  \left(\sum_{i=1}^{k_A} \left(m_{i,A}\right)^2 \right).
        \end{align}
By \autoref{thm:ergodic_projection_condexp}~(2) the existence of a full-rank fixed point of $\Phi_{B\to B}$ yields the existence of a conditional expectation $E_B$ and with \autoref{lem:calculation_adjoint}, the dual $E_B^{*}$ has the form
\begin{align}
E^*\left(\omega_B\right)
= \sum_{i=1}^{k_B} V_i^\dagger \left( \tr_{i,2}\left[V_i \omega_B V_i^\dagger\right]\otimes \rho_i \right)V_i,
\qquad
 \omega_B \in \mathcal{S}\left(\mathcal{H}_B\right),
\end{align}
with mutually orthogonal isometries
\begin{align}
V_i :\mathcal{H}_B \to \mathcal{H}_{i,1} \otimes \mathcal{H}_{i,2},
\qquad
P_i\coloneqq V_i^\dagger V_i,
\end{align}
and density operators $\rho_i\in\mathcal{S}\left(\mathcal{H}_{i,2}\right)$ in the decomposition \autoref{eq:decomposition_level_Hilbert_spaces} used in \autoref{lem:calculation_adjoint}. Now, for each block $i$ we use \cite[Lem.~8]{jee2020quasi} to obtain a measurement $\mathcal{M}_{i,1}$ on $\mathcal{H}_{i,1}$ (identified with a POVM $\{M_{i,1\vert r,s}\}_{r,s}$). Define a single POVM $\mathcal{M}_B$ on $\mathcal{H}_B$ by gluing the block POVMs 
\begin{align}\label{eq:construction_measurement_interpolation_deFinetti}
    M_{B\vert i,r,s} \coloneqq V_i^\dagger  \left(M_{i,1\vert r,s}\otimes 1_{i,2}\right)V_i.
\end{align}
Then $\sum_{r,s}M_{B\vert i,r,s}=P_i$ and $\sum_{i=1}^{k_B} P_i=1_B$, hence $\{M_{B\vert i,r,s}\}_{i,r,s}$ is a POVM. We apply the $n$-fold product measurement $\mathcal{M}_B^{\otimes n}$ of this measurement channel and apply DPI for the mutual information with this informationally complete measurement on $B$, such that we have
        \begin{align}
            I\left(A:Z_1^n\right)_{\rho_{AZ_1^n}} \leq I\left(A:B_1^n\right)_{\rho_{AB_1^n}} \leq \log  \left(\sum_{i=1}^{k_A} m_{i,A}^2 \right).
        \end{align}
       Applying now  \autoref{prop:chain_rule}~(2) yields
        \begin{align}\label{eq:interpolation_deFinetti_bound}
        I\left(A:Z_1^n\right)_\rho
          =  
        \sum_{m_B=0}^{n-1} I\left(A:Z_1\vert  T_{m_B}\right)_{\rho_{AZ_1T_{m_B}}}.
    \end{align}
    Thus, there exists a $0\leq m_B \leq n-1$ such that
    \begin{align}
        I\left(A:Z_1\vert  T_{m_B}\right)_{\rho_{AZ_1T_{m_B}}} \leq \frac{\log  \left(\sum_{i=1}^{k_A} m_{i,A}^2 \right)}{n}.
    \end{align}
    In the following estimate we apply 
    \begin{enumerate}
        \item \autoref{lemma:conditioning_mutual_information} from \autoref{eq:calculation_1} to \autoref{eq:calculation_2}
        \item Pinsker inequality from \autoref{eq:calculation_3} to \autoref{eq:calculation_4}
        \item \autoref{cor:distortion_correlation_postmeasurement} from \autoref{eq:calculation_4} to \autoref{eq:calculation_5}
        \item the convexity of $x\mapsto x^2$ and of the norm $\lVert \cdot \rVert_1$ in the remaining steps.
    \end{enumerate}
    \begin{align}\label{eq:calculation_1}
        \frac{\log  \left(\sum_{i=1}^{k_A} m_{i,A}^2 \right)}{n}&\geq I\left(A:Z_1\vert  T_{m_B}\right)_{\rho_{AZ_1T_{m_B}}}   \\ \label{eq:calculation_2}
        &=\sum_{t_{m_B}}p\left(t_{m_B}\right) I\left(A:Z_1\right)_{\rho_{AZ_1\vert t_{m_B}}}\\ \label{eq:calculation_3}
        &= \sum_{t_{m_B}} p\left(t_{m_B}\right)D\left(\rho_{AZ_1\vert t_{m_B}}\Vert \rho_{A\vert t_{m_B}} \otimes \rho_{Z_1\vert t_{m_B}}\right) \\ \label{eq:calculation_4}
        &  \geq \frac{1}{2 \, \ln 2}\sum_{t_{m_B}} p\left(t_{m_B}\right)\lVert\rho_{AZ_1\vert t_{m_B}}-\rho_{A\vert t_{m_B}} \otimes \rho_{Z_1\vert t_{m_B}}\rVert^2_1 \\ \label{eq:calculation_5}
        &  \geq \frac{1}{2 \, \ln 2\,  c(\mathcal{M})^2} \sum_{t_{m_B}} p\left(t_{m_B}\right) \,\lVert\rho_{AB_1\vert t_{m_B}}-\rho_{A\vert t_{m_B}} \otimes \rho_{B_1\vert t_{m_B}}\rVert^2_1 \\ \label{eq:calculation_6}
        &\geq \frac{1}{2 \ln 2 \, c(\mathcal{M})^2} \, \lVert\sum_{t_{m_B}}p\left(t_{m_B}\right) \rho_{AB_1\vert t_{m_B}}-\sum_{t_{m_B}}p\left(t_{m_B}\right) \rho_{A\vert t_{m_B}} \otimes \rho_{B_1\vert t_{m_B}}\rVert^2_1 \\ \label{eq:calculation_7}
        &= \frac{1}{2 \ln 2 \, c(\mathcal{M})^2} \, \lVert\rho_{AB_1}-\sum_{t_{m_B}}p\left(t_{m_B}\right) \rho_{A\vert t_{m_B}} \otimes \rho_{B_1\vert t_{m_B}}\rVert^2_1. 
    \end{align}
    As the construction of $\mathcal{M}_B$ on $Z_1$ in \autoref{eq:construction_measurement_interpolation_deFinetti} is identical to the construction in \autoref{prop:distortion_conditional_expectation}, we can verbatim apply the bound on the distortion $c(\mathcal{M}_B) \leq 2\, m_{B,\operatorname{max}}$. Note that this step is admissible: the conditional states inherit the fixed-point constraint, $\left(\operatorname{id}_A\otimes E_B^*\right)\left(\rho_{AB_1\vert t_{m_B}}\right) = \rho_{AB_1\vert t_{m_B}}$, since the measurement acts on $B_2^{m_B+1}$ only and commutes with the channel on $B_1$ (cf.\ the argument in \autoref{cor:constrained_efficient_rounding}), and the differences $\rho_{AB_1\vert t_{m_B}} - \rho_{A\vert t_{m_B}}\otimes\rho_{B_1\vert t_{m_B}}$ are blockwise traceless as verified in \autoref{cor:distortion_correlation_postmeasurement}. We conclude
    \begin{align}
        \lVert\rho_{AB}-\sum_{t_{m_B}}p\left(t_{m_B}\right) \rho_{A\vert t_{m_B}} \otimes \rho_{B\vert t_{m_B}}\rVert_1 \leq 2\,  m_{B,\operatorname{max}} \sqrt{\frac{2 \, \ln 2 \, \log  \left(\sum_{i=1}^{k_A} m_{i,A}^2 \right)}{n}},
    \end{align}
    whereby $m_{B,\operatorname{max}}$ is defined as in \autoref{prop:distortion_conditional_expectation} as the maximal block-dimension of the fixed point algebra of $\Phi_{B\to B}$, and $\tilde{\rho}_{AB}\coloneqq \sum_{t_{m_B}}p\left(t_{m_B}\right)\rho_{A\vert t_{m_B}}\otimes\rho_{B\vert t_{m_B}} \in \Sigma\left(A:B\right)$ by the inherited fixed-point constraints.
    \end{proof}
\subsection{Further de Finetti theorems}
We adapt this theorem to a proper de Finetti argument, i.e.\ to an approximation of $\rho_{AB_1^k}$ by a mixture of $k$-fold product states with side information (cf.\ \cite[Thm.~2.4]{Berta2021}).
\begin{theorem}\label{thm:proper_deFinetti}
Assume a channel $\Phi_{B\to B}:\mathcal{S}\left(\mathcal{H}_B\right) \to \mathcal{S}\left(\mathcal{H}_B\right)$ admitting a full-rank fixed point. Furthermore fix $0<k\leq n-1$ and set $N\coloneqq n-k$. Then for every state $\rho_{AB_1^k}$ admitting a one-sided permutation-invariant $n$-extension $\rho_{AB_1^n}$ that satisfies $(\operatorname{id}_{AB_1^{n-1}}\otimes \, \Phi_{B\to B})\left(\rho_{AB_1^n}\right) = \rho_{AB_1^n}$, there exist $m \in \lbrace k,\ldots,n-1\rbrace$ and a measurement $\mathcal{M}_B$ on the systems $B_{k+1},\ldots,B_{m}$ with outcome strings $z_{k+1}^m$ such that
\begin{align}
\lVert
\rho_{AB_1^k}-
\sum_{z_{k+1}^{m}}p(z_{k+1}^{m})\rho_{A\vert z_{k+1}^{m}}\otimes \left(\rho_{B_1\vert z_{k+1}^{m}}\right)^{\otimes k}
\rVert_1
\leq
2\, k \, m_{B,\operatorname{max}}\sqrt{\frac{2\ln2\,\log  \left( d_A^2 \left(k-1+d_B\right)^{d_B^2-1}\right)}{N}},
\end{align}
where $m_{B,\operatorname{max}} \coloneqq \max_{1\leq i \leq k_B} m_{i,B}$ is the largest block-size of the fixed point algebra of $\Phi_{B\to B}$.
\end{theorem}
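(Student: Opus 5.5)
The plan is to run the one-sided argument from the proof of \autoref{thm:interpolation_deFinetti} with the composite system $AB_1^k$ in the role of Alice's system, and then to peel off the $k$ retained $B$-systems one at a time, in the spirit of \cite[Thm.~2.4]{Berta2021}.

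\textbf{Mutual information bound.} Consider $\rho_{AB_1^kB_{k+1}^n}$. It is permutation invariant over $B_1^k$ with respect to the rest, so by \autoref{cor:entanglement_assisted_classcial_capacity_group_averages} (Haar twirl over $S_k$ acting on $B_1^k$, trivially on $A$) together with the multiplicity bounds of \autoref{lem:bounds_symmetric_group} we get
\begin{align}
I\left(AB_1^k:B_{k+1}^n\right)\leq \log\Bigl(d_A^2\sum_{\lambda\vdash_{d_B}k} m_\lambda^2\Bigr)\leq \log\left(d_A^2\left(k-1+d_B\right)^{d_B^2-1}\right).
\end{align}
The factor $d_A^2$ comes from the trivial action on $A$, which multiplies every multiplicity by $d_A$. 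For the last step I would re-examine the counting in \autoref{lem:bounds_symmetric_group} with $k$ in place of $n$; the constant $k-1+d_B$ instead of $k+d_B$ is presumably a sharper form of the same bound.

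\textbf{Chain rule and distortion.} Measure $B_{k+1},\dots,B_n$ with the glued measurement $\mathcal{M}_B$ of \autoref{prop:distortion_conditional_expectation}. By DPI this gives a classical $Z_{k+1}^n$, and the state remains permutation invariant over these $N$ registers. Applying the chain rule over the $N$ registers (\autoref{prop:chain_rule}~(2) relabelled, or its uncompressed version \autoref{eq:self_decoupling_lemma_intro}), some $m\in\{k,\dots,n-1\}$ satisfies
\begin{align}
\sum_{z_{k+1}^m}p\left(z_{k+1}^m\right)I\left(AB_1^k:Z_{m+1}\right)_{\vert z_{k+1}^m}\leq \frac{\log\left(d_A^2(k-1+d_B)^{d_B^2-1}\right)}{N}.
\end{align}
By monotonicity of $I$ under partial trace, the same bound holds with $AB_1^k$ replaced by $AB_1^{j-1}$ for each $j\le k$. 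Permutation symmetry between $B_j$ and $B_{m+1}$, which is preserved when conditioning on $z_{k+1}^m$, lets us replace $Z_{m+1}$ by the measured $B_j$. Then Pinsker's inequality, convexity, and \autoref{cor:distortion_correlation_postmeasurement} give
\begin{align}
\sum_z p(z)\left\lVert\rho_{AB_1^j\vert z}-\rho_{AB_1^{j-1}\vert z}\otimes\rho_{B_j\vert z}\right\rVert_1\leq 2m_{B,\max}\sqrt{\frac{2\ln 2\,\log(\cdots)}{N}}.
\end{align}
Here the auxiliary system is $AB_1^{j-1}$. The fixed point constraint on $B_j$ is inherited because $\Phi_B$ is applied on the last factor and permutation invariance transports it to every $B$-factor, and the measurement on other factors commutes with it.

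\textbf{Telescoping.} Write $\rho_{AB_1^k\vert z}-\rho_{A\vert z}\otimes\rho_{B_1\vert z}^{\otimes k}$ as a telescoping sum over $j=1,\dots,k$ of terms
\begin{align}
\left(\rho_{AB_1^j\vert z}-\rho_{AB_1^{j-1}\vert z}\otimes\rho_{B_j\vert z}\right)\otimes\rho_{B_1\vert z}^{\otimes(k-j)}.
\end{align}
Each term has trace norm equal to the $j$-th decoupling error, since tensoring with a state preserves the trace norm. Symmetry gives $\rho_{B_j\vert z}=\rho_{B_1\vert z}$. Averaging over $z$ with the triangle inequality yields the factor $k$ and the claimed bound.

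\textbf{Main obstacle.} The delicate step is making one $m$ work simultaneously for all $j\le k$. I would handle this by selecting $m$ using the largest system $AB_1^k$, as above, and then using monotonicity to cover the smaller $AB_1^{j-1}$. A second point to check carefully is that the blockwise traceless hypothesis of \autoref{cor:distortion_correlation_postmeasurement} holds with a quantum auxiliary system $AB_1^{j-1}$.
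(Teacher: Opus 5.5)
Your argument is correct in substance and follows the paper's architecture. The shared steps are: the twirl bound with multiplicities $d_A m_\lambda$, the glued block measurement on $B_{k+1}^n$, the chain rule over $Z_{k+1}^n$, the relabeling $B_{m+1}\leftrightarrow B_j$ inside the conditional states, Pinsker plus the blockwise distortion bound with auxiliary system $AB_1^{j-1}$, and the telescoping triangle inequality.

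The one real difference is how you get a single $m$ for all $j$. The paper bounds $I(AB_1^{i-1}:B_{k+1}^n)$ separately for each $i$ using the $S_{i-1}$-symmetry. For each fixed $i$ it averages the trace-norm errors over $m$, pulling the square root out by convexity, then sums over $i$, and only then picks $m$. You instead fix $m$ once from the largest auxiliary system and push the bound down by monotonicity of the conditional mutual information. This is valid, since for each classical $z_{k+1}^m$ it is just data processing under $\tr_{B_j^k}$. It gives a pointwise-in-$m$ bound for every $j$ rather than an averaged one, and ends with the same prefactor $2k\,m_{B,\max}$.

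The step you left open does need fixing. Applying the paper's counting lemma with $k$ in place of $n$ gives $\sum_{\lambda\vdash_{d_B}k}m_\lambda^2\le(k+d_B)^{d_B^2-1}$. That is weaker than the claimed $(k-1+d_B)^{d_B^2-1}$, so as written you do not reach the stated bound. There are two easy repairs.
\begin{enumerate}
\item The telescoping only ever uses the auxiliary systems $AB_1^{j-1}$ with $j\le k$. So select $m$ from $I(AB_1^{k-1}:Z_{m+1}\vert Z_{k+1}^m)$ instead of $I(AB_1^{k}:Z_{m+1}\vert Z_{k+1}^m)$. The $S_{k-1}$-twirl on $B_1^{k-1}$ then gives exactly $\log\bigl(d_A^2(k-1+d_B)^{d_B^2-1}\bigr)$ from the lemma; this is the paper's bound for $i=k$.
\item Keep $AB_1^k$, but use the exact count $\sum_{\lambda\vdash_{d_B}k}m_\lambda^2=\dim\operatorname{End}_{S_k}(\mathcal{H}_B^{\otimes k})=\binom{k+d_B^2-1}{d_B^2-1}$. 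This is at most $(k+1)^{d_B^2-1}$, which is at most $(k-1+d_B)^{d_B^2-1}$ for $d_B\ge 2$; the case $d_B=1$ is trivial.
\end{enumerate}

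Your second worry is a non-issue. The block-wise tracelessness check in the distortion corollary uses only $\tr[\rho_{X\vert z}]=1$ for the auxiliary marginal, so it holds verbatim for $X=AB_1^{j-1}$. The distortion proposition is also stated for an arbitrary auxiliary system.
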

\begin{proof}
The state $\rho_{AB_1^n}$ is invariant under permutations of the systems $B_1,\ldots,B_{i-1}$ for every $i \leq n$. Applying \autoref{cor:entanglement_assisted_classcial_capacity_group_averages} to this $S_{i-1}$-action, whose commutant on $\mathcal{H}_A\otimes\mathcal{H}_B^{\otimes\left(i-1\right)}$ has blocks with multiplicities $d_A m_\lambda$, $\lambda \vdash_{d_B}\left(i-1\right)$ (cf.\ \autoref{sec:weyls_dimension_formula}), together with \autoref{lem:bounds_symmetric_group} gives, for every $1\leq i\leq k$,
        \begin{align}\label{eq:proof_proper_deFinetti_mutual_info_bound}
        \begin{split}
            I\left(AB_1^{i-1}:B_{k+1}^n\right)_{\rho_{AB_1^n}} &\leq \log\Big(\sum_{\lambda\vdash_{d_B}\left(i-1\right)}\left(d_A m_\lambda\right)^2\Big)\leq \log  \left( d_A^2 \left(i-1+d_B\right)^{d_B^2-1}\right)\\
            &\leq \log\left(d_A^2\left(k-1+d_B\right)^{d_B^2-1}\right).
        \end{split}
        \end{align}
Moreover, by \autoref{thm:ergodic_projection_condexp}~(2) the existence of a full-rank fixed point of $\Phi_{B\to B}$ yields the existence of a conditional expectation $E_B$ and with \autoref{lem:calculation_adjoint}, the dual $E_B^{*}$ has the form
\begin{align}
E^*\left(\omega_B\right)
= \sum_{i=1}^{k_B} V_i^\dagger \left( \tr_{i,2}\left[V_i \omega_B V_i^\dagger\right]\otimes \rho_i \right)V_i,
\qquad
 \omega_B \in \mathcal{S}\left(\mathcal{H}_B\right),
\end{align}
with mutually orthogonal isometries
\begin{align}
V_i :\mathcal{H}_B \to \mathcal{H}_{i,1} \otimes \mathcal{H}_{i,2},
\qquad
P_i\coloneqq V_i^\dagger V_i,
\end{align}
and density operators $\rho_i\in\mathcal{S}\left(\mathcal{H}_{i,2}\right)$ in the decomposition \autoref{eq:decomposition_level_Hilbert_spaces} used in \autoref{lem:calculation_adjoint}. Now, for each block $i$ we use \cite[Lem.~8]{jee2020quasi} to obtain a measurement $\mathcal{M}_{i,1}$ on $\mathcal{H}_{i,1}$ (identified with a POVM $\{M_{i,1\vert r,s}\}_{r,s}$). Define a single POVM $\mathcal{M}_B$ on $\mathcal{H}_B$ by gluing the block POVMs 
\begin{align}
    M_{B\vert i,r,s} \coloneqq V_i^\dagger  \left(M_{i,1\vert r,s}\otimes 1_{i,2}\right)V_i.
\end{align}
Then $\sum_{r,s}M_{B\vert i,r,s}=P_i$ and $\sum_{i=1}^{k_B} P_i=1_B$, hence $\{M_{B\vert i,r,s}\}_{i,r,s}$ is a POVM. We apply the product measurement $\mathcal{M}_B^{\otimes \left(n-k\right)}$ of this measurement channel to the systems $B_{k+1},\ldots,B_n$ and apply DPI for the mutual information with this informationally complete measurement, such that, for every $1\leq i \leq k$, we have
        \begin{align}\label{eq:proof_deFinetti_under_fixed point1}
    I\left(AB_{1}^{i-1}:Z_{k+1}^n\right)_{\rho_{AB_1^kZ_{k+1}^n}} \leq I\left(AB_1^{i-1}:B_{k+1}^n\right)_{\rho_{AB_1^n}} \leq \log  \left( d_A^2 \left(k-1+d_B\right)^{d_B^2-1}\right).
\end{align}
        Now we apply the chain rule over the measured block, with the convention that $Z_{k+1}^{k}$ is the trivial register,
        \begin{align}
    I\left(AB_1^{i-1}:Z_{k+1}^n\right) = \sum_{m=k}^{n-1}I\left(AB_1^{i-1}:Z_{m+1}\vert Z_{k+1}^m\right).
\end{align}
        From \autoref{eq:proof_deFinetti_under_fixed point1} and $N=n-k $ we get
\begin{align}
\frac{1}{N}\sum_{m=k}^{n-1}
I\left(AB_1^{i-1}:Z_{m+1}\vert Z_{k+1}^m\right)
=
\frac{I(AB_1^{i-1}:Z_{k+1}^n)}{N}
\leq \frac{\log  \left( d_A^2 \left(k-1+d_B\right)^{d_B^2-1}\right)}{N}.
\end{align}
For each $m\in\left\{k ,\dots,n-1\right\}$, write, using \autoref{lemma:conditioning_mutual_information}, Pinsker's inequality and the distortion bound of $\mathcal{M}_B$ with auxiliary system $AB_1^{i-1}$,
\begin{align}
I\left(AB_1^{i-1}:Z_{m+1}\vert Z_{k+1}^m\right)
&=
\sum_{z_{k+1}^m} p(z_{k+1}^m)\,
D\left(
\rho_{AB_1^{i-1}Z_{m+1}\vert z_{k+1}^m}
\Vert
\rho_{AB_1^{i-1}\vert z_{k+1}^m}\otimes \rho_{Z_{m+1}\vert z_{k+1}^m}
\right)\\
&\ge \frac{1}{2\ln2}\sum_{z_{k+1}^m}p(z_{k+1}^m)
\lVert
\rho_{AB_1^{i-1}Z_{m+1}\vert z_{k+1}^m}
-\rho_{AB_1^{i-1}\vert z_{k+1}^m}\otimes \rho_{Z_{m+1}\vert z_{k+1}^m}
\rVert_1^2\\
&\geq \frac{1}{2\ln2\,c\left(\mathcal{M}_B\right)^2}\sum_{z_{k+1}^m}p(z_{k+1}^m)
\lVert
\rho_{AB_1^{i-1}B_{m+1}\vert z_{k+1}^m}
-\rho_{AB_1^{i-1}\vert z_{k+1}^m}\otimes \rho_{B_{m+1}\vert z_{k+1}^m}
\rVert_1^2.
\end{align}
Conditioned on the outcomes $z_{k+1}^m$, the state is still invariant under permutations of the unmeasured systems $B_1,\ldots,B_k, B_{m+1},\ldots,B_n$, such that $\rho_{AB_1^{i-1}B_{m+1}\vert z_{k+1}^m} = \rho_{AB_1^{i}\vert z_{k+1}^m}$ after relabeling $B_{m+1}\leftrightarrow B_i$. Therefore, we have, averaging over $m$ and using the convexity of $x\mapsto x^2$,
\begin{align}\label{eq:general_i_fixedpoint}
\frac{1}{N}\sum_{m=k}^{n-1}
\sum_{z_{k+1}^m}p(z_{k+1}^m)
\lVert
\rho_{AB_1^{i}\vert z_{k+1}^m}
-\rho_{AB_1^{i-1}\vert z_{k+1}^m}\otimes \rho_{B_i\vert z_{k+1}^m}
\rVert_1
\leq
c\left(\mathcal{M}_B\right)\sqrt{\frac{2\ln2\,\log  \left( d_A^2 \left(k-1+d_B\right)^{d_B^2-1}\right)}{N}}.
\end{align}
Now fix $m\in\left\{k,\dots,n-1\right\}$ and $z_{k+1}^m$. Using triangle inequality $k-1$ times yields then
\begin{equation}
\begin{aligned}
&\lVert
\rho_{AB_1^k\vert z_{k+1}^m}
-\rho_{A\vert z_{k+1}^m}\otimes \rho_{B_1\vert z_{k+1}^m}\otimes\cdots\otimes\rho_{B_k\vert z_{k+1}^m}
\rVert_1\\
&\leq
\sum_{i=1}^k
\lVert
\rho_{AB_1^i\vert z_{k+1}^m}\otimes \rho_{B_{i+1}\vert z_{k+1}^m}\otimes\cdots\otimes\rho_{B_k\vert z_{k+1}^m}\\
&\hspace{4cm}-
\rho_{AB_1^{i-1}\vert z_{k+1}^m}\otimes \rho_{B_i\vert z_{k+1}^m}\otimes \rho_{B_{i+1}\vert z_{k+1}^m}\otimes\cdots\otimes\rho_{B_k\vert z_{k+1}^m}
\rVert_1\\
&=
\sum_{i=1}^k
\lVert
\rho_{AB_1^i\vert z_{k+1}^m}
-\rho_{AB_1^{i-1}\vert z_{k+1}^m}\otimes \rho_{B_i\vert z_{k+1}^m}
\rVert_1,
\end{aligned}
\end{equation}
Averaging now over $m$ and $z_{k+1}^m$ and applying \autoref{eq:general_i_fixedpoint} for each $i$ yields
\begin{align}
\frac{1}{N}\sum_{m=k}^{n-1}\sum_{z_{k+1}^m}p(z_{k+1}^m)
\lVert
\rho_{AB_1^k\vert z_{k+1}^m}
-\rho_{A\vert z_{k+1}^m}\otimes& \rho_{B_1\vert z_{k+1}^m}\otimes\cdots\otimes\rho_{B_k\vert z_{k+1}^m}
\rVert_1\\
&\leq 
\sum_{i=1}^k c(\mathcal{M}_B)\sqrt{\frac{2\ln2\, \log  \left( d_A^2 \left(k-1+d_B\right)^{d_B^2-1}\right)}{N}}\\
&=
k\, c(\mathcal{M}_B)\sqrt{\frac{2\ln2\,\log  \left( d_A^2 \left(k-1+d_B\right)^{d_B^2-1}\right)}{N}}.
\end{align}
Hence there exists $m\in\{k,\dots,n-1\}$ such that
\begin{equation}
\begin{aligned}
\sum_{z_{k+1}^{m}}p(z_{k+1}^{m})
\lVert
\rho_{AB_1^k\vert z_{k+1}^{m}}
-\rho_{A\vert z_{k+1}^{m}}\otimes &\rho_{B_1\vert z_{k+1}^{m}}\otimes\cdots\otimes\rho_{B_k\vert z_{k+1}^{m}}
\rVert_1
\leq \\
&k\,c(\mathcal{M}_B)\sqrt{\frac{2\ln2\,\log  \left( d_A^2 \left(k-1+d_B\right)^{d_B^2-1}\right)}{N}}.
\end{aligned}
\end{equation}
By convexity of the Schatten-$1$-norm we conclude
\begin{align}
\lVert
\rho_{AB_1^k}-
\sum_{z_{k+1}^{m}}p(z_{k+1}^{m})\rho_{A\vert z_{k+1}^{m}}\otimes \rho_{B_1\vert z_{k+1}^{m}}\otimes\cdots\otimes\rho_{B_k\vert z_{k+1}^{m}}
\rVert_1
\leq
k\,c(\mathcal{M}_B)\sqrt{\frac{2\ln2\,\log  \left( d_A^2 \left(k-1+d_B\right)^{d_B^2-1}\right)}{N}}.
\end{align}
By symmetry, $\rho_{B_i\vert z_{k+1}^{m}}=\rho_{B_1\vert z_{k+1}^{m}}$ for all $i=1,\dots,k$, such that we conclude
\begin{align}
\lVert
\rho_{AB_1^k}-
\sum_{z_{k+1}^{m}}p(z_{k+1}^{m})\rho_{A\vert z_{k+1}^{m}}\otimes \left(\rho_{B_1\vert z_{k+1}^{m}}\right)^{\otimes k}
\rVert_1
\leq
k\,c(\mathcal{M}_B)\sqrt{\frac{2\ln2\,\log  \left( d_A^2 \left(k-1+d_B\right)^{d_B^2-1}\right)}{N}}.
\end{align}
Now we introduce the bound $c\left(\mathcal{M}_B\right)\leq 2\,m_{B,\operatorname{max}}$ from \autoref{prop:distortion_conditional_expectation} (cf.\ also \autoref{cor:distortion_correlation_postmeasurement}) in order to get
\begin{align}
\lVert
\rho_{AB_1^k}-
\sum_{z_{k+1}^{m}}p(z_{k+1}^{m})\rho_{A\vert z_{k+1}^{m}}\otimes \left(\rho_{B_1\vert z_{k+1}^{m}}\right)^{\otimes k}
\rVert_1
\leq
2\, k \, m_{B,\operatorname{max}}\sqrt{\frac{2\ln2\,\log  \left( d_A^2 \left(k-1+d_B\right)^{d_B^2-1}\right)}{N}},
\end{align}
where $m_{B,\operatorname{max}} \coloneqq \max_{1\leq i \leq k_B} m_{i,B}$ is the largest block-size of the fixed point algebra of $\Phi_{B\to B}$, as in \autoref{prop:distortion_conditional_expectation}; the required blockwise tracelessness and invariance of the conditional states hold as in the proof of \autoref{thm:interpolation_deFinetti}.
\end{proof}
We remark that \autoref{thm:proper_deFinetti} is in some sense not optimal as it uses just the $S_{k-1}$-symmetry in the mutual information bound established in \autoref{thm:upper_bound_entanglement_assisted_classical_capacity} and not the additional decomposition imposed by the fixed point condition of $\Phi_{B\to B}$. However, if we are just interested in a scaling estimate, it is worth emphasizing that, given a local symmetry $\mathcal{V}$ as in \autoref{cor:entanglement_assisted_classcial_capacity_group_averages}, the overall decomposition of $\mathcal{H}_B^{\otimes (k-1)}$ can be written as 
\begin{align}
    \mathcal{H}^{\otimes (k-1)}_B \cong \bigoplus_{\lambda \vdash_d \ k-1} \bigoplus_{i=1}^l \mathbb{C}^{m_{\lambda,i}} \otimes \mathbb{C}^{d_\lambda} \otimes \mathbb{C}^{e_i},
\end{align}
whereby $d_\lambda$ corresponds to the dimension of the corresponding Specht module and $e_i$ to the dimension of the irreducibles of $\mathcal{V}$. Without further concrete structure, this scales at least as $\left(k-1 \right)^{d_B}$, as this is  roughly the number of summands in $\bigoplus_{\lambda \vdash_d \ k-1}$ by \autoref{sec:weyls_dimension_formula}. Thus, we will not get a substantial scaling improvement in comparison to \autoref{eq:proof_proper_deFinetti_mutual_info_bound}. 

\begin{example}[Classical subsystems]
    We consider the example of $\mathcal{H}_A$ to be the trivial system and a maximal torus, which is by definition a maximal compact, connected and abelian subgroup of $\mathcal{U}\left(\mathcal{H}_B\right)$ \cite[Chap.~15]{Bump2013}. For the case of a unitary group $\mathcal{U}\left(\mathcal{H}_B\right)$ on a $d_B$-dimensional Hilbert space, it is given by $\mathbb{T}^{d_B} \subseteq \mathcal{U}\left(\mathcal{H}_B\right)$, whereby $\mathbb{T}\cong \mathcal{U}\left(1\right)$, the unit circle with Lie-group structure. It is readily seen that its commutant is isomorphic to $\mathbb{C}^{d_B}$ with point-wise multiplication as multiplication, such that the blocks are all one-dimensional and $m_{B,\operatorname{max}} = 1$. Applying \autoref{thm:proper_deFinetti} with this constraint on the $B$-systems and $d_A = 1$ leads to a de Finetti bound  given by
    \begin{align}
            \lVert \rho_{B_1^k} - \tilde{\rho}_{B_1^k}\rVert_1 \leq 2\, k \sqrt{\frac{2\ln2\,\log  \left(  \left(k-1+d_B\right)^{d_B^2-1}\right)}{N}},
    \end{align}
    which just has the dimension dependence in $\log\left( \left(k-1+d_B\right)^{d_B^2-1}\right)$ in comparison to \cite{Diaconis1977,Diaconis1980}, which lead to dimension-free bounds. It is worth mentioning that we could also estimate the bound with $2\left(k-1\right)\log d_B$ if we do not apply \autoref{sec:weyls_dimension_formula}. It should be also noted that for polytopes the de Finetti hierarchy stops after finitely many steps \cite{Aubrun2024}. As in \autoref{example_double_extension}, it is worth noting that the candidate in \autoref{thm:interpolation_deFinetti} can be accessed and is compatible with certain constraints. 
\end{example}

The commutant of the $*$-algebra generated by the permutation representation \autoref{eq:def_permutation_action_Sn_H_otimes_n} decomposes into one matrix block for every partition $\lambda \vdash_{d_A} n$, as discussed in detail in \autoref{sec:weyls_dimension_formula} and \autoref{sec:subgroup_adapted_basis}. Working with all of these blocks simultaneously comes at a technical price: the standard bases of the corresponding Weyl modules, such as the semistandard basis, are not orthonormal, so the associated changes of basis fail to be isometries and suitable Gramian matrices have to be determined (cf., e.g., \cite[Sec.\ 2.10]{sagan2013symmetric} and \autoref{rem:young_symmetrizers}). The block of the partition $\left(n\right)$ --- the symmetric subspace --- is the exception: it admits an explicit orthogonal basis indexed by weak compositions, arising as a special case of the general construction in \cite{zeiss2025approximatingfixedsizequantum}. Concretely, for a Hilbert space $\mathcal{H}_A$ and a weak composition $\mu \coloneqq (\mu_1,\ldots,\mu_{d_A})$ of $n$, define the unnormalized vectors
    \begin{align}
        v_{\mu} \coloneqq \sum_{x\in [d_A]^n, \ \vert \{t \ \vert \ x_t = i\}\vert = \mu_i} e_{x_1}\otimes e_{x_2}\otimes \cdots \otimes e_{x_n}.
    \end{align}
    Distinct weak compositions label disjoint sets of strings, so the vectors $\{v_\mu\}_{\mu \vDash n}$ are pairwise orthogonal, and normalizing them yields an orthonormal basis of the symmetric subspace; in particular, their number recovers $\dim \operatorname{Sym}^n\left(\mathcal{H}_A\right) = \binom{n+d_A-1}{d_A-1} = m_{\left(n\right)}$, the multiplicity entering the bound of \autoref{thm:bose_de_Finetti} below. Restricting a de Finetti argument to the symmetric subspace therefore avoids the basis issues above entirely. Operators supported on the symmetric subspace are called \emph{Bose-symmetric} and are fully characterized by the multiplicative invariance (see e.g. \cite{harrow2013churchsymmetricsubspace, zeiss2025approximatingfixedsizequantum})
    \begin{align}
        \left(1_A \otimes U_\sigma\right) \rho_{AB_1^n} = \rho_{AB_1^n} \quad \text{for all} \quad \sigma \in S_n,
    \end{align}
    which is strictly stronger than invariance under conjugation. While \cite{Christandl2007} proved a Bose-symmetric de Finetti theorem whose convergence rate improves on the merely permutation invariant case ($2kd/n$ vs.\ $2kd^2/n$), the underlying proof method is neither compatible with additional constraints nor constructive. This gap was closed in \cite{zeiss2025approximatingfixedsizequantum} with a constraint-compatible Bose-symmetric de Finetti theorem; however, as the construction there proceeds via a purification step, the resulting convergence rate does not improve on the permutation invariant counterpart. The following theorem extends the Bose-symmetric de Finetti theorem to double-sided extensions, allowing for applications to cSEP and for constructive, measurement-based rounding procedures (see \autoref{sec:efficient_inner_sequence}).

    \begin{theorem}[Bose-symmetric de Finetti]\label{thm:bose_de_Finetti}
        Let $\rho_{AB} \in \mathcal{S}\left(\mathcal{H}_A\otimes \mathcal{H}_B\right)$ be a state such that there exists a Bose-symmetric double sided $n$-extension, i.e. there exists a state $\rho_{A_1^n B_1^n}$ with $\rho_{A_1B_1} = \rho_{AB}$ such that $\rho_{A_1^n B_1^n}$ is Bose-symmetric over the $A$-systems with respect to $B_1^n$ and similarly over the $B$-systems with respect to $A_1^n$. Then there exists a separable state $\tilde{\rho}_{AB} \in \operatorname{SEP}\left(A:B\right)$ such that
        \begin{align}
            \lVert \rho_{AB} - \tilde{\rho}_{AB}\rVert_1 \leq 4 d_A d_B \frac{\sqrt{4 \ln 2 \left(d_A -1\right)\,\log\left(n+d_A \right)}}{n}.
        \end{align}
    \end{theorem}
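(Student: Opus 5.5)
The proof follows that of \autoref{thm:double_extended_de_Finetti} almost verbatim. The only change is the mutual information bound, which now exploits that the extension is supported on the symmetric subspace of the $A$-systems.

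\textbf{Step 1: the mutual information bound.} Bose symmetry over $A_1^n$ means $\rho_{A_1^nB_1^n} = (P_{\operatorname{Sym}}\otimes 1_{B_1^n})\rho_{A_1^nB_1^n}(P_{\operatorname{Sym}}\otimes 1_{B_1^n})$, where $P_{\operatorname{Sym}}$ is the projection onto $\operatorname{Sym}^n(\mathcal{H}_A)$. Hence the $A$-marginal lives on a space of dimension $m_{(n)}=\binom{n+d_A-1}{d_A-1}$. There are two equivalent ways to conclude:
\begin{enumerate}
\item Directly: $I(A_1^n:B_1^n)\leq 2\log \dim\operatorname{supp}(\rho_{A_1^n})\leq 2\log m_{(n)}$.
\item Via \autoref{thm:upper_bound_entanglement_assisted_classical_capacity}: take the channel $\omega\mapsto P_{\operatorname{Sym}}\omega P_{\operatorname{Sym}}+\tr[(1-P_{\operatorname{Sym}})\omega]\,\tau_{\perp}$, whose fixed point algebra contains a block $M_{m_{(n)}}$. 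The feasible states lie in that block, so the bound $\log m_{(n)}^2$ applies.
\end{enumerate}
Using $m_{(n)}\leq (n+d_A)^{d_A-1}$ (from \autoref{lem:bounds_symmetric_group}, or directly from the binomial), we get
\begin{align}
I(A_1^n:B_1^n)_\rho \leq 2(d_A-1)\log(n+d_A).
\end{align}
This replaces the exponent $d_A^2-1$ of \autoref{thm:double_extended_de_Finetti} and produces the factor $4\ln 2\,(d_A-1)$ under the square root.

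\textbf{Step 2: measurement and chain rule.} Apply the IC measurements $\mathcal{M}_A,\mathcal{M}_B$ of \cite[Lem.~8]{jee2020quasi} to every subsystem; by DPI the bound of Step 1 transfers to $I(Y_1^n:Z_1^n)$. Bose symmetry implies conjugation invariance under permutations, since $U_\sigma\rho U_\sigma^\dagger=\rho U_\sigma^\dagger=\rho$. It holds separately on both sides, so \autoref{prop:chain_rule}~(1) applies and yields $n^2$ nonnegative summands $I(Y_1:Z_1\vert T_{m_A}T_{m_B})$. Pick a pair $(m_A,m_B)$ whose summand is at most $2(d_A-1)\log(n+d_A)/n^2$.

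\textbf{Step 3: Pinsker and distortion.} Expand the chosen summand over the type classes via \autoref{lemma:conditioning_mutual_information}. Then apply Pinsker's inequality and convexity exactly as in \autoref{eq:proof_two_sided_deFinetti2}, giving
\begin{align}
\bigl\lVert\rho_{Y_1Z_1}-\textstyle\sum p(s,t)\,\rho_{Y_1\vert s,t}\otimes\rho_{Z_1\vert s,t}\bigr\rVert_1\leq \sqrt{4\ln 2\,(d_A-1)\log(n+d_A)}/n.
\end{align}
Finally, invoke the distortion bounds $c(\mathcal{M}_A)\leq 2d_A$ and $c(\mathcal{M}_B)\leq 2d_B$ once for each side, just as in the proof of \autoref{thm:double_extended_de_Finetti}, to lift this to $\rho_{AB}$ with prefactor $4d_Ad_B$. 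The candidate $\tilde\rho_{AB}=\sum_{s,t}p(s,t)\rho_{A\vert s,t}\otimes\rho_{B\vert s,t}$ is separable.

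\textbf{Main obstacle.} The delicate point is the two-sided distortion step. This step is inherited from \autoref{thm:double_extended_de_Finetti}, and the plan reuses it as stated there rather than re-deriving it. Compared to the earlier proof, the only genuinely new check is the symmetric-subspace dimension count and the verification that Bose symmetry implies the conjugation invariance used in the type compression, both of which are immediate.
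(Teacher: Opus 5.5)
Your proposal is correct and follows the paper's proof essentially exactly. The paper bounds $I(A_1^n:B_1^n)\le \log m_{(n)}^2$ in one of two ways: via the $S_n$-twirl capacity bound with the block distribution concentrated on $\lambda=(n)$ (so $H(p)=0$), or directly from $\dim\operatorname{Sym}^n(\mathcal{H}_A)$. It then uses $m_{(n)}\le (n+d_A)^{d_A-1}$ and reruns the double-sided argument of \autoref{thm:double_extended_de_Finetti} verbatim, giving the same constant $4\ln 2\,(d_A-1)$. Your explicit check that Bose symmetry implies the conjugation invariance needed for \autoref{prop:chain_rule} (using hermiticity of $\rho$) is left implicit in the paper, and your projection-and-replace channel is an equivalent substitute for the paper's twirl-based capacity argument.
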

    \begin{proof}
        Bose symmetry over the $A$-systems implies in particular that $\rho_{A_1^nB_1^n}$ is invariant under the $A$-side twirl, so \autoref{cor:entanglement_assisted_classcial_capacity_group_averages}, applied exactly as in the proof of \autoref{thm:double_extended_de_Finetti}, yields
        \begin{align}\label{eq:bound_entanglement_assisted_capacity_bose_symmetric}
            I\left(A_1^n:B_1^n\right)_{\rho_{A_1^nB_1^n}} \leq \log  \left(\sum_{\lambda \vdash n} m_{\lambda}^2 \right).
        \end{align}
        The support condition improves this further: in the intermediate bound \autoref{eq:proof_entanglement_assisted_close_to_final} from the proof of \autoref{thm:upper_bound_entanglement_assisted_classical_capacity}, the block distribution $\left(p_\lambda\right)$ of $\rho_{A_1^nB_1^n}$ is concentrated on the block $\lambda = \left(n\right)$ carrying the symmetric subspace, so that $H\left(p\right) = 0$ and only the summand $\log m_{\left(n\right)}^2$ survives. Note that the full-rank assumption of \autoref{thm:upper_bound_entanglement_assisted_classical_capacity} refers to the invariant state of the twirl on the ambient space $\mathcal{H}_A^{\otimes n}$ --- the maximally mixed state, which is invariant although not Bose-symmetric --- and imposes no rank condition on the feasible states themselves. Hence \autoref{eq:bound_entanglement_assisted_capacity_bose_symmetric} improves to
        \begin{align}\label{eq:bose_de_Finetti_scaling}
             I\left(A_1^n:B_1^n\right)_{\rho_{A_1^nB_1^n}} \leq \log m_{(n)}^2.
        \end{align}
        Alternatively, since $\rho_{A_1^nB_1^n}$ is, up to a local isometry on the $A$-systems, a state on $\operatorname{Sym}^n\left(\mathcal{H}_A\right)\otimes \mathcal{H}_B^{\otimes n}$, the dimension bound underlying \autoref{eq:bound_mutual_information_no_further_information} directly gives $I\left(A_1^n:B_1^n\right) \leq 2 \log \dim \operatorname{Sym}^n\left(\mathcal{H}_A\right) = \log m_{\left(n\right)}^2$.
        By \autoref{lem:bounds_symmetric_group}~(3), we have $m_{(n)} \leq \left(n+d_A\right)^{d_A-1}$. The rest of the proof follows verbatim the proof of \autoref{thm:double_extended_de_Finetti} from \autoref{eq:proof_double_extended_deFinetti_bound_DPI} on. Given the improved scaling from \autoref{eq:bose_de_Finetti_scaling}, we end up as in \autoref{eq:final_result_double_extended_deFinetti} with
        \begin{align}
        \lVert\rho_{AB}-\sum_{t_{m_A},t_{m_B}}p\left(t_{m_A},t_{m_B}\right) \rho_{A\vert t_{m_A},t_{m_B}} \otimes \rho_{B\vert t_{m_A},t_{m_B}}\rVert_1 \leq 4 d_A d_B \frac{\sqrt{4 \ln 2 \left(d_A -1\right)\,\log\left(n+d_A \right)}}{n}.
    \end{align}
    Here $\tilde{\rho}_{AB}\coloneqq \sum_{t_{m_A},t_{m_B}}p\left(t_{m_A},t_{m_B}\right) \rho_{A\vert t_{m_A},t_{m_B}} \otimes \rho_{B\vert t_{m_A},t_{m_B}}$.
    \end{proof}
    The purification scheme of \cite{zeiss2025approximatingfixedsizequantum} extends to our double-sided variant. We elaborate the different applications of the de Finetti theorems and the fixed point constraints of this section in \autoref{sec:applications} in more detail.


\section{Computationally efficient inner sequence for SEP and cSEP}\label{sec:efficient_inner_sequence}

In \cite{kossmann2025_aqec, zeiss2025approximatingfixedsizequantum, zeiss2026finitefinetticonvexbodies}, the authors constructed a rounding scheme yielding certifiably good inner approximations to separability problems from SDP outer approximations formulated in terms of symmetric extensions, and their measurement-based methods extend to the constrained setting. The convergence guarantees result from an information-theoretic argument in relation to the self-decoupling lemma \cite{Brando2016, Brandao_2017} central to de Finetti representation theorems studied in, e.g., \cite{Berta2021, jee2020quasi, kossmann2025_aqec, zeiss2025approximatingfixedsizequantum, zeiss2026finitefinetticonvexbodies} and revisited in \autoref{subsec:deFinetti_revisited}. While \cite{kossmann2025_aqec, zeiss2025approximatingfixedsizequantum, Chee2025} provide a variety of symmetry-reduction techniques computing additive $\epsilon$-error \emph{outer} approximations to cSEP problems in $\operatorname{poly}\left(1/\epsilon\right)$-time whenever the local dimensions are fixed, the proposed convergent \emph{inner} sequence has so far lacked a comparably efficient representation: its implementation in an exponentially large computational basis comes at prohibitive computational cost. Building on the techniques developed in \autoref{sec:techniques}, we close this gap in this section. Exploiting the symmetries underlying the de Finetti-based rounding scheme, we show that certifiably good inner points can be obtained up to an additive $\epsilon$-error in $\operatorname{poly}\left(1/\epsilon\right)$-time whenever the local dimensions are fixed. The representation-theoretic background is collected in \autoref{sec:Young_diagrams_tableaux}, \autoref{sec:weyls_dimension_formula} and \autoref{sec:subgroup_adapted_basis}, whose notation and results we use freely.

Throughout this section, the local dimensions $d_A = \dim \mathcal{H}_A$ and $d_B = \dim \mathcal{H}_B$ are fixed, and all complexity statements are understood as functions of the extension parameter $n$ in the standard arithmetic model\footnote{That is, we count exact arithmetic operations $\left(+,\, -,\, \times,\, \div,\, \sqrt{\,\cdot\,}\,\right)$ at unit cost, cf.\ \cite{Burgisser1997}. Together with the exact, polynomially sized representations provided by \autoref{prop:cascade_data_poly}, all complexity statements of this section therefore also hold in the bit model up to polynomial overhead.}; the numbers involved are rational up to explicitly computable square roots (cf.\ \autoref{prop:cascade_data_poly}). We consider permutation-invariant states, i.e.\ one-sided $n$-extensions $\rho_{AB_1^n}\in \operatorname{End}_{\mathbb{C}\left[S_n\right]}\left(\mathcal{H}_A\otimes \mathcal{H}_B^{\otimes n}\right)$ as introduced in \autoref{sec:notation_preliminaries}, and freely identify the registers $B_2^{k+1}$, respectively $B_1^{k}$, with $\mathcal{H}_B^{\otimes k}$ when applying the permutation action $\psi_{k}^{d_B}$ from \autoref{eq:def_permutation_action_Sn_H_otimes_n}. For any $1\leq k \leq n$, Schur-Weyl duality (\autoref{thm:schur_weyl_duality}) provides the level-$k$ decomposition
\begin{align}\label{eqn:Schur_Weyl_k}
    \mathcal{H}_B^{\otimes k} \cong \bigoplus_{\lambda \vdash_{d_B} k} U_\lambda^{\left(k\right)}\otimes V_\lambda^{\left(k\right)},
\end{align}
where $V_\lambda$ is a Specht module and $U_\lambda$ is a Weyl module and the superscript records the level, and correspondingly, since $S_k$ acts trivially on $\mathcal{H}_A$,\footnote{Cf.\ \cite[Lem.\ 2.5]{Chee2025} and \cite{polak2020new} for an alternative argument in terms of the trivial group action.}
\begin{align}\label{eqn:commutant_with_auxiliary_system}
    \operatorname{End}_{\mathbb{C}\left[S_k\right]}\left(\mathcal{H}_A\otimes \mathcal{H}_B^{\otimes k}\right)\cong \bigoplus_{\lambda\vdash_{d_B} k} \operatorname{End}\left(\mathcal{H}_A\otimes U^{\left(k\right)}_\lambda\right)\otimes 1_{V^{\left(k\right)}_\lambda},
\end{align}
extending \autoref{eq:sw_commutant_blocks}. All Weyl-module blocks appearing below are expressed with respect to the Gelfand-Tsetlin bases of \autoref{subsubsec:rep_theory_Ud}, i.e.\ in the Schur basis \autoref{eq:def_schur_basis}; this convention fixes every matrix representation in this section.

\begin{definition}[Compressed Schur form]\label{def:compressed_schur_form}
    Let $1\leq k\leq n$. A permutation-invariant state $\rho_{AB_1^k}\in \operatorname{End}_{\mathbb{C}\left[S_k\right]}\left(\mathcal{H}_A\otimes \mathcal{H}_B^{\otimes k}\right)$ is said to be given in compressed Schur form if one is provided, for every $\lambda \vdash_{d_B} k$, with a weight $p^{\left(k\right)}_\lambda \geq 0$ and a state $\rho_{AU^{\left(k\right)}_\lambda}\in \mathcal{S}\left(\mathcal{H}_A\otimes U^{\left(k\right)}_\lambda\right)$, stored as a $\left(d_A m_\lambda\right)\times \left(d_A m_\lambda\right)$ matrix in the Schur basis, such that
    \begin{align}\label{eqn:schur_weyl_state_normalised}
        \rho_{AB_1^k}\cong \bigoplus_{\lambda \vdash_{d_B}k} p^{\left(k\right)}_\lambda\, \rho_{AU^{\left(k\right)}_\lambda}\otimes \tau_{V^{\left(k\right)}_\lambda},
    \end{align}
    where $m_{\lambda}=\dim_{\mathbb{C}}\lrbracket{U_{\lambda}^{(k)}}$ and $\tau_{V} = 1_{V}/\dim\left(V\right)$ denotes the maximally mixed state.
\end{definition}

\begin{remark}\label{rem:compressed_schur_form}
    Every permutation-invariant state admits a unique compressed Schur form. Indeed, \autoref{eqn:commutant_with_auxiliary_system} yields a decomposition $\bigoplus_\lambda X_{AU^{\left(k\right)}_\lambda}\otimes 1_{V_\lambda^{\left(k\right)}}$ with positive blocks $X_{AU^{\left(k\right)}_\lambda}$, and setting $p^{\left(k\right)}_\lambda \coloneqq \tr\left[X_{AU^{\left(k\right)}_\lambda}\right]\dim\left(V^{\left(k\right)}_\lambda\right)$ and $\rho_{AU^{\left(k\right)}_\lambda}\coloneqq X_{AU^{\left(k\right)}_\lambda}/\tr\left[X_{AU^{\left(k\right)}_\lambda}\right]$ (dropping blocks with vanishing trace) yields \autoref{eqn:schur_weyl_state_normalised} with $\sum_\lambda p^{\left(k\right)}_\lambda = 1$. By \autoref{lem:bounds_symmetric_group}, the number of blocks and each block dimension $d_A m_\lambda$ are polynomial in $k$ for fixed $d_A, d_B$, such that the description size is $\operatorname{poly}\left(n\right)$. The assumption of access in compressed Schur form is well motivated: the outer approximation schemes of \cite{Doherty_2004, kossmann2025_aqec, zeiss2025approximatingfixedsizequantum} output permutation-invariant optimizers in exactly this representation.
\end{remark}

\begin{remark}[The Specht factor never enters]\label{rem:specht_factor_inert}
    The compressed Schur form stores no data on the Specht modules $V^{\left(k\right)}_\lambda$, and this is the structural reason for its polynomial size. In the dimension count $d_B^{\,k} = \sum_{\lambda\vdash_{d_B}k} m_\lambda\, d_\lambda$ of the ambient tensor space, the polynomially bounded factors $m_\lambda$ are carried by the Weyl modules, whereas the in
    general exponential factors $d_\lambda = \dim V^{\left(k\right)}_\lambda$ are carried by the Specht modules (cf.\ the discussion following \autoref{thm:schur_weyl_duality}); permutation invariance, however, forces the state to be maximally mixed on precisely these registers, so they hold no information about $\rho_{AB_1^k}$. Accordingly, $d_\lambda$
    enters the algorithms below only as a number of polynomially many bits, supplied by the hook length formula \autoref{eq:hook_length_formula}: as the normalization in the weights $p^{\left(k\right)}_\lambda$ of \autoref{rem:compressed_schur_form} and as the branching
    weights $w_{\mu\vert\lambda}$ of \autoref{lem:reblocking}. No operator on a Specht module is ever represented: the measurements of \autoref{prop:coarse_grained_blocks} act on the Weyl
    side only, and the re-blocking of \autoref{lem:reblocking} merely regroups the invisible path registers, converting maximally mixed factors into branching weights.
\end{remark}

\begin{remark}[Quantum and classical Schur transforms]\label{rem:quantum_schur_context}
    The algorithms of this section are entirely classical; they require only the combinatorial and Clebsch-Gordan data of \autoref{lem:schur_transform_efficiency} and \autoref{prop:cascade_data_poly} (cf.\ \cite{Litjens2016, polak2020new} for alternative block-diagonalization algorithms based on non-orthogonal representative sets). For context we note that the Schur transform can also be implemented as a quantum circuit of size polynomial in $n$, $d_B$ and $\log\left(1/\epsilon\right)$ \cite{Bacon2006, bacon2005quantumschurtransformi, harrow2005applicationscoherentclassicalcommunication}, with subsequent improvements and extensions in \cite{Kirby_2018, Krovi2019, grinko2023gelfandtsetlinbasispartiallytransposed, Nguyen_2024, wills2024generalisedcouplingelementaryalgorithm, grinko2025mixed}; the restriction to finite gate sets only realizes a unitary $U_\epsilon$ with $\lVert U_\epsilon - U_{\operatorname{Sch}}\rVert\leq \epsilon$, which is immaterial at the inverse-polynomial accuracies relevant here.\footnote{Weak Schur sampling, which only measures the label $\lambda$, should not be confused with this gate-set approximation, cf.\ \cite{cervero2023weakschursamplinglogarithmic, cerveromartin2024memorygateefficientalgorithm}.} For the complementary regime $n < d_B$ we refer to \cite{Krovi2019, burchardt2025highdimensionalquantumschurtransforms}; since our hierarchies converge in the limit $n\to \infty$ at fixed $d_B$, the regime $n \geq d_B$ is the relevant one, and the truncated Young lattice handles both regimes uniformly (cf.\ \autoref{rem:branching_truncation}).
\end{remark}


The rounding procedure requires two efficient subroutines: the Schur-Weyl block form of the type-coarse-grained measurements of \autoref{cor:one_sided_types} (cf.\ \autoref{observation} (2)), and the block-wise computation of partial traces. Both rest on two elementary facts about the Schur basis \autoref{eq:def_schur_basis}. First, the blocks of \autoref{eq:sw_commutant_blocks} are compressions along path isometries: for $\lambda \vdash_{d_B} j$ and $T \in \operatorname{Paths}_{j}\left(\lambda\right)$ let $W_{\lambda, T}: U^{\left(j\right)}_\lambda \to \mathcal{H}_B^{\otimes j}$, $\ket{N}\mapsto \ket{\lambda, N, T}$, denote the isometric embedding of the Weyl module along the path $T$; for any $X \in \operatorname{End}_{\mathbb{C}\left[S_j\right]}\left(\mathcal{H}_B^{\otimes j}\right)$, invariance forces the Schur-basis matrix elements to be diagonal in the path label and independent of the path, i.e.
\begin{align}\label{eqn:blocks_as_compressions}
    X^{\left(\lambda\right)} = W_{\lambda,T}^{\dagger}\, X\, W_{\lambda,T} \qquad \text{for every} \ T \in \operatorname{Paths}_{j}\left(\lambda\right),
\end{align}
and conversely $X \cong \bigoplus_{\lambda} X^{\left(\lambda\right)}\otimes 1_{V^{\left(j\right)}_\lambda}$ is recovered as $X = \sum_{\lambda\vdash_{d_B}j}\sum_{T \in \operatorname{Paths}_j\left(\lambda\right)} W_{\lambda,T}\, X^{\left(\lambda\right)} W_{\lambda,T}^{\dagger}$; in particular, $X \mapsto X^{\left(\lambda\right)}$ is a unital $*$-homomorphism on the commutant, preserving adjoints, positivity and the identity. Second, since the Schur basis is constructed recursively --- the cascade \autoref{eq:schur_transform_cascade} embeds every Gelfand-Tsetlin vector into $\mathcal{H}_B^{\otimes j}$ by iterating the single-box Clebsch-Gordan step --- the path isometries factor through the Pieri isometries of \autoref{eq:pieri_single_box}: for every $\mu \in \lambda^-$ and every $S \in \operatorname{Paths}_{j-1}\left(\mu\right)$,
\begin{align}\label{eqn:path_isometry_factorization}
    W_{\lambda,\, S\rightarrow\lambda} = \left(W_{\mu, S}\otimes 1_{\mathcal{H}_B}\right) C_{\mu\leftarrow\lambda},
\end{align}
since $U^{\left(j\right)\dagger}_{\operatorname{Sch}}\left(\ket{N}\otimes\ket{S\rightarrow\lambda}\right) = \left(U^{\left(j-1\right)\dagger}_{\operatorname{Sch}}\otimes 1_{\mathcal{H}_B}\right)\left(\left(C_{\mu\leftarrow\lambda}\ket{N}\right)\otimes \ket{S}\right)$ after regrouping the tensor factors, by \autoref{eq:CG_transform_action} read backwards as in \autoref{eq:def_C_isometry_matrix_elements}. Both facts hold verbatim in the presence of an auxiliary register on which the symmetric group acts trivially, with $W_{\lambda,T}$ replaced by $1\otimes W_{\lambda,T}$ (cf.\ \autoref{eqn:commutant_with_auxiliary_system}). Measurement operators are assembled upward along the truncated Young lattice by \autoref{eqn:path_isometry_factorization}, one tensor factor at a time, whereas states are re-blocked downward, one split-off factor at a time. We treat them in turn.

\begin{proposition}[Coarse-grained measurements in block form]\label{prop:coarse_grained_blocks}
    Let $\mathcal{M}_B: \mathcal{S}\left(\mathcal{H}_B\right)\to \mathcal{M}_1\left(\ZZ\right)$ be a MIC measurement with POVM $\lbrace M_{B\vert z}\rbrace_{z \in \ZZ}$, $\lvert \ZZ\rvert = d_B^2$, and let $2\leq m_B \leq n-1$. For every type $t \in \mathcal{T}_{m_B, \lvert \ZZ\rvert}$ consider the type-coarse-grained POVM element of \autoref{cor:one_sided_types},
    \begin{align}\label{eqn:def_coarse_grained_measurement}
        G_{B_2^{m_B+1}\vert\, t}\coloneqq \sum_{z_2^{m_B+1}:\ \tau_B\left(z_2^{m_B+1}\right)=t}\ \bigotimes_{i=2}^{m_B+1} M_{B\vert z_i} \ \in \ \operatorname{End}_{\mathbb{C}\left[S_{m_B}\right]}\left(\mathcal{H}_B^{\otimes m_B}\right),
    \end{align}
    with Schur-Weyl block decomposition $G_{B_2^{m_B+1}\vert\, t}\cong \bigoplus_{\lambda \vdash_{d_B}m_B} G^{\left(\lambda\right)}_t \otimes 1_{V^{\left(m_B\right)}_\lambda}$ according to \autoref{eq:sw_commutant_blocks}. Then all blocks $G^{\left(\lambda\right)}_t \in \operatorname{End}\left(U^{\left(m_B\right)}_\lambda\right)$, expressed in the Schur basis, can be computed in time polynomial in $n$. Moreover, for every $\lambda \vdash_{d_B} m_B$ the family $\lbrace G^{\left(\lambda\right)}_t\rbrace_{t \in \mathcal{T}_{m_B,\lvert\ZZ\rvert}}$ is a POVM on $U^{\left(m_B\right)}_\lambda$.
\end{proposition}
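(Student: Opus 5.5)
The plan is to compute all blocks at once by a dynamic programming recursion over the level $j = 1,\ldots,m_B$. It climbs the truncated Young lattice using the path factorization \autoref{eqn:path_isometry_factorization}, and it never forms an operator on $\mathcal{H}_B^{\otimes j}$.

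\textbf{Step 1: operator recursion.} Splitting the strings of type $t$ according to their last letter gives, for every level $j\geq 1$ and every $t\in\mathcal{T}_{j,\lvert\ZZ\rvert}$,
\begin{align}
    G^{[j]}_{t} = \sum_{z\in\ZZ:\ t_z\geq 1} G^{[j-1]}_{t-e_z}\otimes M_{B\vert z}.
\end{align}
Here $G^{[j]}_t$ denotes the level-$j$ coarse-grained element and $e_z$ the unit vector in $\mathbb{N}_0^{\ZZ}$. The initialization is $G^{[0]}_0 = 1$. Each $G^{[j]}_t$ is a sum over an $S_j$-orbit of product operators, so it lies in $\operatorname{End}_{\mathbb{C}[S_j]}(\mathcal{H}_B^{\otimes j})$, and its blocks $G^{[j],(\lambda)}_t$ are well defined by \autoref{eq:sw_commutant_blocks}.

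\textbf{Step 2: block recursion.} Fix $\lambda\vdash_{d_B} j$, pick any $\mu\in\lambda^-$ and any path $S\in\operatorname{Paths}_{j-1}(\mu)$, and set $T=S\to\lambda$. By \autoref{eqn:blocks_as_compressions} and \autoref{eqn:path_isometry_factorization},
\begin{align}
    G^{[j],(\lambda)}_t = C_{\mu\leftarrow\lambda}^\dagger\Big(\sum_{z} \big(W_{\mu,S}^\dagger G^{[j-1]}_{t-e_z} W_{\mu,S}\big)\otimes M_{B\vert z}\Big) C_{\mu\leftarrow\lambda} = C_{\mu\leftarrow\lambda}^\dagger\Big(\sum_{z} G^{[j-1],(\mu)}_{t-e_z}\otimes M_{B\vert z}\Big) C_{\mu\leftarrow\lambda}.
\end{align}
The second equality applies \autoref{eqn:blocks_as_compressions} once more, now at level $j-1$. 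Hence every level-$j$ block is obtained from a single level-$(j-1)$ block of a neighbouring partition, one Pieri isometry and the $d_B\times d_B$ POVM elements. This is exactly the "assemble upward" principle stated before the proposition.

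\textbf{Step 3: complexity.} At level $j$ there are at most $(j+1)^{d_B^2-1}$ types and at most $(j+d_B)^{d_B-1}$ partitions. Every Weyl module has dimension $m_\lambda\leq (j+d_B)^{d_B(d_B-1)/2}$ by \autoref{lem:bounds_symmetric_group}. Each update multiplies matrices of size $m_\mu d_B\times m_\lambda$, and the entries of $C_{\mu\leftarrow\lambda}$ in the Schur basis are available in polynomial time by \autoref{prop:cascade_data_poly} (cf.\ \autoref{lem:schur_transform_efficiency}). Summing over $j\leq m_B\leq n-1$ therefore gives time $\operatorname{poly}(n)$ for fixed $d_B$.

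\textbf{Step 4: POVM property.} Summing the level-$m_B$ elements over all types gives $\sum_t G_t = \big(\sum_z M_{B\vert z}\big)^{\otimes m_B} = 1$. The map $X\mapsto X^{(\lambda)}$ is a unital positive $*$-homomorphism on the commutant, so the blocks satisfy $G^{(\lambda)}_t\succeq 0$ and $\sum_t G^{(\lambda)}_t = 1_{U_\lambda}$.

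The main obstacle is Step 2: one has to verify carefully that the compression of a level-$(j-1)$ invariant operator along $W_{\mu,S}\otimes 1$ lands exactly in the block $\mu$. This needs $W_{\mu,S}$ to be an isometry into the right path sector, together with the fact that the result does not depend on the choice of $(\mu,S)$. The latter follows from the path independence asserted in \autoref{eqn:blocks_as_compressions}, so any convenient choice, e.g.\ the first admissible $\mu\in\lambda^-$, can be fixed for the algorithm.
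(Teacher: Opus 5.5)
Your proposal is correct and follows the paper's proof essentially step by step. Both use the same last-outcome recursion for the partial-type operators, the same compression along $W_{\lambda,S\rightarrow\lambda} = \left(W_{\mu,S}\otimes 1\right)C_{\mu\leftarrow\lambda}$ to obtain the block recursion, the same count of types, shapes and matrix sizes, and the same unital $*$-homomorphism argument for the POVM property. The only difference is cosmetic: you start the recursion at level $0$ with $G^{[0]}_0 = 1$, whereas the paper starts at level $1$ with $G_{B_2\vert e_z} = M_{B\vert z}$. These coincide because the Gelfand--Tsetlin basis of $U_{\left(1\right)}\cong\mathbb{C}^{d_B}$ is the computational basis.
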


\begin{proof}
    Since type classes are invariant under permutations of the positions, conjugation by $\psi_{m_B}^{d_B}\left(\sigma\right)$ merely permutes the summands in \autoref{eqn:def_coarse_grained_measurement}, such that $G_{B_2^{m_B+1}\vert t}$ indeed lies in the commutant and admits the asserted block decomposition; the number of types is $\lvert \mathcal{T}_{m_B,\lvert\ZZ\rvert}\rvert = \binom{m_B + d_B^2 -1}{d_B^2-1} = \operatorname{poly}\left(m_B\right)$ by \autoref{prop:classification_of_types}.

    To compute the blocks we assemble the coarse-grained operators one tensor factor at a time along the Clebsch-Gordan cascade of \autoref{subsubsec:CG_series_transform}; this guarantees that all blocks are expressed in the same Gelfand-Tsetlin bases, including phase conventions, in which \autoref{prop:cascade_data_poly} supplies the isometries $C_{\mu\leftarrow\lambda}$ (for the orbit-basis alternative in the spirit of the algebra-level block-diagonalizations of \cite{Litjens2016, polak2020new}, cf.\ \autoref{rem:orbit_closed_form} below). For every $1\leq j\leq m_B$ and every partial type $s \in \mathcal{T}_{j,\lvert\ZZ\rvert}$, the defining formula \autoref{eqn:def_coarse_grained_measurement} with $m_B$ replaced by $j$ yields the level-$j$ coarse-grained operators
    \begin{align}\label{eqn:def_partial_coarse_grained}
        G_{B_2^{j+1}\vert\, s} = \sum_{z_2^{j+1}:\ \tau_B\left(z_2^{j+1}\right)=s}\ \bigotimes_{i=2}^{j+1} M_{B\vert z_i} \ \in \ \operatorname{End}_{\mathbb{C}\left[S_{j}\right]}\left(\mathcal{H}_B^{\otimes j}\right),
    \end{align}
    which lie in the respective commutants by the argument above, with blocks $G^{\left(\lambda\right)}_s \in \operatorname{End}\left(U^{\left(j\right)}_\lambda\right)$, $\lambda \vdash_{d_B} j$, expressed in the Schur basis; the case $j = m_B$ recovers the objects of the statement. Splitting off the last outcome decomposes each type class according to the outcome occupying the final position: for $2\leq j\leq m_B$,
    \begin{align}\label{eqn:partial_type_recursion}
        G_{B_2^{j+1}\vert\, s} = \sum_{\substack{z\in\ZZ\\ s_z\geq 1}} G_{B_2^{j}\vert\, s-e_z}\otimes M_{B\vert z},
    \end{align}
    where $e_z \in \mathbb{N}_0^{\ZZ}$ denotes the standard basis vector; the base case is $G_{B_2\vert\, e_z} = M_{B\vert z}$, whose single block at $\lambda = \left(1\right)$ is $M_{B\vert z}$ itself, the Gelfand-Tsetlin basis of $U^{\left(1\right)}_{\left(1\right)}\cong \mathcal{H}_B$ being the computational basis (cf.\ \autoref{subsubsec:rep_theory_Ud}). For the blocks, fix $\lambda\vdash_{d_B} j$, an arbitrary $\mu \in \lambda^-$ and a path $S \in \operatorname{Paths}_{j-1}\left(\mu\right)$. Compressing \autoref{eqn:partial_type_recursion} along the extended path $S\rightarrow\lambda$ via \autoref{eqn:blocks_as_compressions} and inserting the factorization \autoref{eqn:path_isometry_factorization} yields
    \begin{align}\label{eqn:blocks_coarse_grained}
    \begin{split}
        G^{\left(\lambda\right)}_s = W_{\lambda,\, S\rightarrow\lambda}^{\dagger}\, G_{B_2^{j+1}\vert\, s}\, W_{\lambda,\, S\rightarrow\lambda} &= \sum_{\substack{z\in\ZZ\\ s_z\geq 1}} C_{\mu\leftarrow\lambda}^{\dagger}\left(\left(W_{\mu,S}^{\dagger}\, G_{B_2^{j}\vert\, s-e_z}\, W_{\mu,S}\right)\otimes M_{B\vert z}\right)C_{\mu\leftarrow\lambda}\\
        &= \sum_{\substack{z\in\ZZ\\ s_z\geq 1}} C_{\mu\leftarrow\lambda}^{\dagger}\left(G^{\left(\mu\right)}_{s-e_z}\otimes M_{B\vert z}\right)C_{\mu\leftarrow\lambda},
    \end{split}
    \end{align}
    using \autoref{eqn:blocks_as_compressions} again at level $j-1$ in the second step; in particular, the result is independent of the choices of $\mu$ and $S$. At level $j$ there are at most $\binom{j+d_B^2-1}{d_B^2-1}$ partial types (\autoref{prop:classification_of_types}) and at most $\left(j+d_B\right)^{d_B-1}$ shapes with at most $d_B$ removable cells each (\autoref{lem:bounds_symmetric_group} and \autoref{prop:young_branching_rule}), and each application of \autoref{eqn:blocks_coarse_grained} multiplies matrices of dimensions at most $m_\mu d_B \times m_\lambda$ supplied by \autoref{prop:cascade_data_poly}, such that a single upward sweep computes all blocks $G^{\left(\lambda\right)}_t$, $\lambda \vdash_{d_B} m_B$, simultaneously for all $t \in \mathcal{T}_{m_B,\lvert\ZZ\rvert}$, in time polynomial in $n$.

    Finally, since the types partition $\ZZ^{m_B}$, we have $\sum_{t}G_{B_2^{m_B+1}\vert t} = 1_{\mathcal{H}_B^{\otimes m_B}}$ and hence $\sum_t G^{\left(\lambda\right)}_t = 1_{U^{\left(m_B\right)}_\lambda}$ blockwise; positivity of each $G^{\left(\lambda\right)}_t$ follows from \autoref{eqn:blocks_as_compressions}, the block map being a unital $*$-homomorphism on the commutant. This establishes the POVM property and completes the proof.
\end{proof}

\begin{remark}[Orbit-basis closed form]\label{rem:orbit_closed_form}
    Alternatively, the blocks admit an explicit closed form. Identify the computational basis of $\mathcal{H}_B^{\otimes m_B}$ with strings $x \in \left[d_B\right]^{m_B}$, assign to any pair of strings the joint type $\kappa\left(x,y\right)\in\mathbb{N}_0^{d_B\times d_B}$ with $\kappa_{p,q}\left(x,y\right)\coloneqq \big\lvert\lbrace i \in \left[m_B\right] \ \vert \ \left(x_i,y_i\right) = \left(p,q\right)\rbrace\big\rvert$, and let $E_\kappa \coloneqq \sum_{\kappa\left(x,y\right)=\kappa}\ketbra{x}{y}$ denote the associated orbit matrices, which form a basis of $\operatorname{End}_{\mathbb{C}\left[S_{m_B}\right]}\left(\mathcal{H}_B^{\otimes m_B}\right)$ indexed by the $\binom{m_B+d_B^2-1}{d_B^2-1}$ weak compositions of $m_B$ into $d_B^2$ parts. Since the matrix entries of $G_{B_2^{m_B+1}\vert t}$ depend only on the joint type, one has $G_{B_2^{m_B+1}\vert\, t} = \sum_\kappa g_{t,\kappa}\, E_\kappa$ with $g_{t,\kappa} = \bra{x_\kappa}G_{B_2^{m_B+1}\vert t}\ket{y_\kappa}$ for an arbitrary representative pair with $\kappa\left(x_\kappa,y_\kappa\right)=\kappa$, and grouping the positions by basis pairs and the outcome strings by their induced contingency tables yields
    \begin{align}\label{eqn:coefficients_coarse_grained}
        g_{t,\kappa} = \sum_{C \ \text{feasible}} \left(\frac{\prod_{p,q\in\left[d_B\right]}\kappa_{p,q}!}{\prod_{z,p,q}c_{z,p,q}!}\right) \prod_{z,p,q}\left(\left(M_{B\vert z}\right)_{p,q}\right)^{c_{z,p,q}},
    \end{align}
    where a table $C = \left(c_{z,p,q}\right)\in \mathbb{N}_0^{\ZZ\times d_B\times d_B}$ is feasible if $\sum_{p,q}c_{z,p,q} = t_z$ for all $z \in \ZZ$ and $\sum_{z}c_{z,p,q} = \kappa_{p,q}$ for all $p,q\in\left[d_B\right]$. The feasible tables are the lattice points of a transportation polytope of fixed dimension $\lvert\ZZ\rvert d_B^2 = d_B^4$ with marginals bounded by $m_B$; they can be enumerated in time $\operatorname{poly}\left(m_B\right)$ by fixed-dimension lattice-point methods (equivalently, via Ehrhart theory) \cite{ehrhart1962polyedres, beck2007computing, deloera2013combinatoricsgeometrytransportationpolytopes}, cf.\ also \cite{zeiss2025approximatingfixedsizequantum}. The blocks $E^{\left(\lambda\right)}_\kappa$ are independent of the measurement: since $E_\kappa = \sum_{p,q:\ \kappa_{p,q}\geq 1} E_{\kappa - e_{p,q}}\otimes \ketbra{p}{q}$ with the elementary matrices $e_{p,q}\in\mathbb{N}_0^{d_B\times d_B}$ and base case $E_{e_{p,q}} = \ketbra{p}{q}$, they are computed by the recursion \autoref{eqn:blocks_coarse_grained} with the matrix units $\lbrace \ketbra{p}{q}\rbrace_{p,q}$ in place of the POVM elements and joint types in place of partial types --- neither self-adjointness nor positivity enters the derivation. They may therefore be precomputed once and reused across different measurements, each new measurement costing only the evaluation of the coefficients $g_{t,\kappa}$, such that $G^{\left(\lambda\right)}_t = \sum_\kappa g_{t,\kappa}\, E^{\left(\lambda\right)}_\kappa$ by linearity. This orbit-basis route is the algebra-level block-diagonalization in the spirit of \cite{Litjens2016, polak2020new} (cf.\ also \cite{Chee2025, zeiss2025approximatingfixedsizequantum, kossmann2025_aqec}); carried out in the semistandard bases it determines the blocks only up to the Gramian corrections of \autoref{rem:GT_vs_semistandard}, which the Gelfand-Tsetlin recursion of \autoref{prop:coarse_grained_blocks} avoids.
\end{remark}

The second subroutine reduces the extension level. Its core is the following identity, which expresses a state in level-$k$ block form at level $k-1$ with one $B$-system split off explicitly; it combines the Pieri isometries $C_{\mu\leftarrow\lambda}: U^{\left(k\right)}_\lambda \hookrightarrow U^{\left(k-1\right)}_\mu \otimes \mathcal{H}_B$ of \autoref{eq:pieri_single_box} with the Young branching weights
\begin{align}\label{eqn:def_branching_weights}
    w_{\mu\vert\lambda}\coloneqq \frac{\dim V^{\left(k-1\right)}_\mu}{\dim V^{\left(k\right)}_\lambda}, \qquad \mu \in \lambda^-, \qquad \text{with} \quad \sum_{\mu\in\lambda^-}w_{\mu\vert\lambda} = 1
\end{align}
by the dimension recursion in \autoref{prop:young_branching_rule}.

\begin{lemma}[Re-blocking along Young branching]\label{lem:reblocking}
    Let $2\leq k \leq n$ and let $\rho_{AB_1^k}$ be given in compressed Schur form \autoref{eqn:schur_weyl_state_normalised}. Then, splitting off one $B$-factor, which by permutation invariance may be taken to be $B_1$,
    \begin{align}\label{eqn:schur_weyl_block_separating_B}
        \rho_{AB_1^{k}} \cong \bigoplus_{\mu \vdash_{d_B}\left(k-1\right)} Y_{AB_1U^{\left(k-1\right)}_\mu}\otimes \tau_{V^{\left(k-1\right)}_\mu},
    \end{align}
    where, up to a reordering of the tensor factors $U^{\left(k-1\right)}_\mu\otimes \mathcal{H}_B \cong \mathcal{H}_B \otimes U^{\left(k-1\right)}_\mu$,
    \begin{align}\label{eqn:reblocking_identity}
        Y_{AB_1U^{\left(k-1\right)}_\mu} = \sum_{\substack{\lambda \in \mu^+\\ \ell\left(\lambda\right)\leq d_B}} p^{\left(k\right)}_\lambda\, w_{\mu\vert\lambda}\, \left(1_{\mathcal{H}_A}\otimes C_{\mu\leftarrow\lambda}\right) \rho_{AU^{\left(k\right)}_\lambda} \left(1_{\mathcal{H}_A}\otimes C_{\mu\leftarrow\lambda}\right)^{\dagger}.
    \end{align}
\end{lemma}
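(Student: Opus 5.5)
The plan is to start from the compressed Schur form \autoref{eqn:schur_weyl_state_normalised} and rewrite it in the path-isometry language introduced before \autoref{prop:coarse_grained_blocks}. Tensoring with $1_{\mathcal{H}_A}$ throughout, I would write
\begin{align}
    \rho_{AB_1^k} = \sum_{\lambda\vdash_{d_B}k}\frac{p^{\left(k\right)}_\lambda}{\dim V^{\left(k\right)}_\lambda}\sum_{T\in\operatorname{Paths}_k\left(\lambda\right)}\left(1\otimes W_{\lambda,T}\right)\rho_{AU^{\left(k\right)}_\lambda}\left(1\otimes W_{\lambda,T}\right)^\dagger .
\end{align}
This uses the reconstruction formula following \autoref{eqn:blocks_as_compressions}, together with the fact that $\operatorname{Paths}_k\left(\lambda\right)$ indexes an orthonormal basis of $V^{\left(k\right)}_\lambda$, so that $1_{V_\lambda}=\dim V_\lambda\,\tau_{V_\lambda}$.

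Next I would split each path according to its penultimate shape. Every $T\in\operatorname{Paths}_k\left(\lambda\right)$ is uniquely of the form $S\rightarrow\lambda$ with $\mu\in\lambda^-$ and $S\in\operatorname{Paths}_{k-1}\left(\mu\right)$. Inserting the factorization \autoref{eqn:path_isometry_factorization}, $W_{\lambda,S\rightarrow\lambda}=\left(W_{\mu,S}\otimes 1_{\mathcal{H}_B}\right)C_{\mu\leftarrow\lambda}$, and exchanging the sums over $\lambda$ and $\mu$, I expect to obtain
\begin{align}
    \rho_{AB_1^k}=\sum_{\mu\vdash_{d_B}(k-1)}\sum_{S\in\operatorname{Paths}_{k-1}(\mu)}\left(1\otimes W_{\mu,S}\otimes 1_{\mathcal{H}_B}\right)\Big[\sum_{\lambda\in\mu^+,\ \ell(\lambda)\le d_B}\frac{p^{(k)}_\lambda}{\dim V^{(k)}_\lambda}\left(1\otimes C_{\mu\leftarrow\lambda}\right)\rho_{AU^{(k)}_\lambda}\left(1\otimes C_{\mu\leftarrow\lambda}\right)^\dagger\Big]\left(\cdots\right)^\dagger .
\end{align}
Here $\lambda\in\mu^+$ with $\ell(\lambda)\le d_B$ is exactly the dual index set to $\mu\in\lambda^-$ with $\lambda\vdash_{d_B}k$. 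Writing the bracket as $Z_\mu$, the inner operator is independent of $S$. The family $\{W_{\mu,S}\otimes 1\}_{\mu,S}$ is an orthonormal decomposition of $\mathcal{H}_B^{\otimes(k-1)}\otimes\mathcal{H}_B$ along the Schur basis at level $k-1$, applied to the first $k-1$ factors. The expression therefore equals $\bigoplus_\mu Z_\mu\otimes 1_{V^{(k-1)}_\mu}=\bigoplus_\mu \dim V^{(k-1)}_\mu\,Z_\mu\otimes\tau_{V^{(k-1)}_\mu}$. Multiplying $\dim V^{(k-1)}_\mu$ into $Z_\mu$ produces precisely $p^{(k)}_\lambda w_{\mu\vert\lambda}$ with $w_{\mu\vert\lambda}$ from \autoref{eqn:def_branching_weights}, which gives \autoref{eqn:reblocking_identity}.

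The last point concerns which $B$-factor is split off. The cascade \autoref{eq:schur_transform_cascade} appends the new factor at the end, so the natural output of the argument singles out $B_k$. Permutation invariance of $\rho_{AB_1^k}$ lets me conjugate by the transposition $(1\,k)$, which relabels this factor as $B_1$; this is the reordering $U^{(k-1)}_\mu\otimes\mathcal{H}_B\cong\mathcal{H}_B\otimes U^{(k-1)}_\mu$ permitted in the statement.

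I expect the main obstacle to be bookkeeping rather than analysis. The step that needs care is justifying that \autoref{eqn:path_isometry_factorization} holds with consistent phase conventions for every pair $(\mu,S)$ simultaneously, so that summing over $S$ reconstructs a genuine $1_{V_\mu^{(k-1)}}$ rather than a twisted operator. I would handle this by appealing directly to the recursive definition of the Schur basis, in which $\ket{\lambda,N,S\rightarrow\lambda}$ is defined via $C_{\mu\leftarrow\lambda}$ and $\ket{\mu,\cdot,S}$. As a final check, the trace identity $\sum_\mu\dim V_\mu^{(k-1)}\tr Z_\mu=\sum_\lambda p_\lambda\sum_{\mu}w_{\mu\vert\lambda}=1$ follows from the branching recursion and confirms the normalization.
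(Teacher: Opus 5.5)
Your proposal is correct and follows essentially the same route as the paper: you resolve the compressed Schur form via the path isometries $W_{\lambda,T}$, split each path at its penultimate vertex using the Pieri factorization $W_{\lambda,S\rightarrow\lambda}=(W_{\mu,S}\otimes 1_{\mathcal{H}_B})C_{\mu\leftarrow\lambda}$, exchange the $\lambda$- and $\mu$-sums, and read the level-$(k-1)$ compression identity in reverse, so that the ratio $\dim V^{(k-1)}_\mu/\dim V^{(k)}_\lambda$ becomes $w_{\mu\vert\lambda}$. The only cosmetic difference is how the split-off factor is moved to $B_1$: the paper uses a cyclic permutation, which keeps the remaining factors in their natural order, whereas your transposition $(1\,k)$ reorders them; this is harmless because each $\mu$-block has the form $Y_\mu\otimes 1_{V^{(k-1)}_\mu}$ and is therefore invariant under permutations of the remaining $k-1$ factors.
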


\begin{proof}
    By \autoref{eqn:blocks_as_compressions}, applied with the auxiliary register $\mathcal{H}_A$, the compressed Schur form \autoref{eqn:schur_weyl_state_normalised} is the resolution
    \begin{align}\label{eqn:compressed_schur_resolution}
        \rho_{AB_1^{k}} = \sum_{\lambda\vdash_{d_B}k} \frac{p^{\left(k\right)}_\lambda}{\dim V^{\left(k\right)}_\lambda}\ \sum_{T \in \operatorname{Paths}_{k}\left(\lambda\right)} \left(1_{\mathcal{H}_A}\otimes W_{\lambda, T}\right) \rho_{AU^{\left(k\right)}_\lambda} \left(1_{\mathcal{H}_A}\otimes W_{\lambda, T}\right)^{\dagger}.
    \end{align}
    Every path decomposes uniquely as $T = S\rightarrow\lambda$ with $\mu \in \lambda^-$ its level-$\left(k-1\right)$ vertex and $S \in \operatorname{Paths}_{k-1}\left(\mu\right)$, where every $\mu \in \lambda^-$ automatically satisfies $\ell\left(\mu\right)\leq d_B$ (cf.\ \autoref{rem:branching_truncation}). Inserting the factorization \autoref{eqn:path_isometry_factorization}, exchanging the order of summation via $\sum_{\lambda}\sum_{\mu\in\lambda^-} = \sum_{\mu}\sum_{\lambda\in\mu^+,\,\ell\left(\lambda\right)\leq d_B}$, and using $w_{\mu\vert\lambda} = \dim V^{\left(k-1\right)}_\mu/\dim V^{\left(k\right)}_\lambda$ from \autoref{eqn:def_branching_weights} turns \autoref{eqn:compressed_schur_resolution} into
    \begin{align}
        \rho_{AB_1^{k}} = \sum_{\mu\vdash_{d_B}\left(k-1\right)} \frac{1}{\dim V^{\left(k-1\right)}_\mu}\ \sum_{S\in\operatorname{Paths}_{k-1}\left(\mu\right)} \left(1_{\mathcal{H}_A}\otimes W_{\mu,S}\otimes 1_{\mathcal{H}_B}\right) Y_{AB_1U^{\left(k-1\right)}_\mu} \left(1_{\mathcal{H}_A}\otimes W_{\mu,S}\otimes 1_{\mathcal{H}_B}\right)^{\dagger}
    \end{align}
    with $Y_{AB_1U^{\left(k-1\right)}_\mu}$ as in \autoref{eqn:reblocking_identity}, up to the reordering in the statement. By \autoref{eqn:blocks_as_compressions}, read in reverse at level $k-1$ with the auxiliary register $\mathcal{H}_A\otimes\mathcal{H}_B$, this is precisely the block decomposition \autoref{eqn:schur_weyl_block_separating_B}. Finally, the split-off factor is the last register $B_{k+1}$ in the labeling $B_2^{k+1}\cong \mathcal{H}_B^{\otimes k}$; conjugating with the cyclic permutation exchanging it with $B_1$ leaves $\rho_{AB_1^k}$ invariant, which justifies the relabeling in the statement.
\end{proof}

\begin{proposition}[Efficient blockwise partial trace]\label{prop:efficient_partial_trace}
    Let $\rho_{AB_1^n}$ be given in compressed Schur form. Then, for every $1\leq m_B\leq n-1$, the level-$m_B$ data
    \begin{align}\label{eqn:level_mB_block_form}
        \rho_{AB_1^{m_B+1}} \cong \bigoplus_{\lambda\vdash_{d_B} m_B} p^{\left(m_B\right)}_\lambda\, \rho_{AB_1U^{\left(m_B\right)}_\lambda}\otimes \tau_{V^{\left(m_B\right)}_\lambda}, \qquad \rho_{AB_1U^{\left(m_B\right)}_\lambda}\in \mathcal{S}\left(\mathcal{H}_A\otimes\mathcal{H}_B\otimes U^{\left(m_B\right)}_\lambda\right),
    \end{align}
    carrying the retained system $B_1$ explicitly, can be computed in time polynomial in $n$. In particular, all reduced states $\rho_{AB_1^k}$, $1\leq k \leq n$, are accessible in compressed Schur form in time polynomial in $n$.
\end{proposition}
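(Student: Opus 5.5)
The plan is to iterate \autoref{lem:reblocking} downward, one split-off factor at a time, starting from the compressed Schur form of $\rho_{AB_1^n}$. There are two tasks. First, produce the compressed Schur forms of all reduced states $\rho_{AB_1^k}$ for $k=n,n-1,\ldots,1$. Second, at the relevant level, keep one $B$-factor explicit instead of tracing it out.

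For the first task, fix $k$ with $2\leq k\leq n$ and suppose the data $\lbrace p^{\left(k\right)}_\lambda,\rho_{AU^{\left(k\right)}_\lambda}\rbrace_{\lambda\vdash_{d_B}k}$ are known. \autoref{lem:reblocking} gives the blocks $Y_{AB_1U^{\left(k-1\right)}_\mu}$ via \autoref{eqn:reblocking_identity}. Each such block is a sum of at most $d_B$ terms of the form $p^{\left(k\right)}_\lambda w_{\mu\vert\lambda}(1\otimes C_{\mu\leftarrow\lambda})\rho_{AU_\lambda}(1\otimes C_{\mu\leftarrow\lambda})^\dagger$. The weights $w_{\mu\vert\lambda}$ are ratios of hook-length dimensions, and the Pieri isometries $C_{\mu\leftarrow\lambda}$ are supplied in polynomial size by \autoref{prop:cascade_data_poly}. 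Tracing out the explicit factor gives
\[
\rho_{AB_1^{k}}\big\vert_{\text{trace }B}\cong\bigoplus_\mu \tr_{B}\left[Y_{AB_1U_\mu}\right]\otimes\tau_{V_\mu}.
\]
By permutation invariance, this is the compressed Schur form of $\rho_{AB_1^{k-1}}$ after normalizing as in \autoref{rem:compressed_schur_form}: set $p^{\left(k-1\right)}_\mu=\tr[Y_\mu]$ and $\rho_{AU_\mu}=\tr_B[Y_\mu]/p^{\left(k-1\right)}_\mu$. The equality $p^{\left(k-1\right)}_\mu=\tr[Y_\mu]$ holds because $\tau_{V_\mu}$ is normalized. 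One step touches polynomially many pairs $(\mu,\lambda)$ and multiplies matrices of size at most $d_Ad_Bm_\mu\times d_Am_\lambda$. By \autoref{lem:bounds_symmetric_group} both quantities are $\operatorname{poly}(n)$, so the $n$ steps together cost $\operatorname{poly}(n)$.

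For the second task, the target \autoref{eqn:level_mB_block_form} concerns $\rho_{AB_1^{m_B+1}}$ with $m_B$ permuted factors and one retained factor. I would first descend, as above, to the compressed form at level $k=m_B+1$. I would then apply \autoref{lem:reblocking} once more without tracing. The result is $\rho_{AB_1^{m_B+1}}\cong\bigoplus_{\mu\vdash_{d_B}m_B}Y_{AB_1U^{\left(m_B\right)}_\mu}\otimes\tau_{V_\mu}$, and I would set $p^{\left(m_B\right)}_\mu\coloneqq\tr[Y_\mu]$ and $\rho_{AB_1U_\mu}\coloneqq Y_\mu/\tr[Y_\mu]$. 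This is exactly the claimed level-$m_B$ data, with the retained system being the split-off factor. Relabeling that factor as $B_1$ and the remaining ones as $B_2^{m_B+1}$ is justified by the same invariance argument at the end of the proof of \autoref{lem:reblocking}. Positivity of each $Y_\mu$ follows from the conjugation form, and the weights sum to one because $\sum_{\mu}\tr Y_\mu=\sum_\lambda p_\lambda\sum_{\mu\in\lambda^-}w_{\mu\vert\lambda}=1$.

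The main obstacle is the bookkeeping needed to make partial trace and re-blocking commute correctly. Two things must be checked. First, $\tr_B$ of the explicit factor really corresponds to tracing a physical $B$-system; this uses \autoref{eqn:path_isometry_factorization}. Second, the ordering convention $U_\mu\otimes\mathcal{H}_B$ must be consistent with the Gelfand--Tsetlin phases across levels, so that the next step's isometries $C$ act on the right basis. I would handle both by working entirely at the level of the path-resolution \autoref{eqn:compressed_schur_resolution}. In that form, tracing out the last tensor factor visibly leaves the level-$(k-1)$ resolution, because $\sum_S W_{\mu,S}(\cdot)W_{\mu,S}^\dagger$ is untouched. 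The complexity count then only needs the polynomial bounds already collected.
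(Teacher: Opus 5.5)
Your proposal is correct and follows the paper's proof essentially step for step: you alternate the separation step of \autoref{lem:reblocking} (with $p^{(k-1)}_\mu=\tr[Y_\mu]$, the pushforward of the level-$k$ weights along Young branching) with tracing out the explicit factor for $k=n,\ldots,m_B+2$, finish with a single untraced separation at $k=m_B+1$, and bound the per-level cost by the polynomial counts of vertices, edges and block dimensions together with \autoref{prop:cascade_data_poly}. Your observation that the partial trace passes through the path resolution, because it only acts on the factor $1_{\mathcal{H}_B}$ in $W_{\mu,S}\otimes 1_{\mathcal{H}_B}$, is exactly the justification for the paper's trace recursion via the maps $\Gamma_{\mu\leftarrow\lambda}$.
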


\begin{proof}
    We iterate two steps, starting from the level-$n$ data. \emph{Separation:} given the level-$k$ data $\lbrace p^{\left(k\right)}_\lambda, \rho_{AU^{\left(k\right)}_\lambda}\rbrace$, \autoref{lem:reblocking} produces the level-$\left(k-1\right)$ blocks $Y_{AB_1U^{\left(k-1\right)}_\mu}$ of \autoref{eqn:reblocking_identity}. Since the $C_{\mu\leftarrow\lambda}$ are isometries and the $\rho_{AU^{\left(k\right)}_\lambda}$ are normalized,
    \begin{align}\label{eqn:branching_pushforward}
        p^{\left(k-1\right)}_\mu \coloneqq \tr\left[Y_{AB_1U^{\left(k-1\right)}_\mu}\right] = \sum_{\substack{\lambda\in\mu^+\\ \ell\left(\lambda\right)\leq d_B}} p^{\left(k\right)}_\lambda\, w_{\mu\vert\lambda},
    \end{align}
    i.e.\ the level-$\left(k-1\right)$ weights are the pushforward of the level-$k$ weights along Young branching; in particular $p^{\left(k-1\right)}_\mu\geq 0$ and, by \autoref{eqn:def_branching_weights},
    \begin{align}
        \sum_{\mu\vdash_{d_B}\left(k-1\right)} p^{\left(k-1\right)}_\mu = \sum_{\lambda\vdash_{d_B}k} p^{\left(k\right)}_\lambda \sum_{\mu\in\lambda^-}w_{\mu\vert\lambda} = \sum_{\lambda\vdash_{d_B}k} p^{\left(k\right)}_\lambda = 1.
    \end{align}
    Setting $\rho_{AB_1U^{\left(k-1\right)}_\mu}\coloneqq Y_{AB_1U^{\left(k-1\right)}_\mu}/p^{\left(k-1\right)}_\mu$ for $p^{\left(k-1\right)}_\mu>0$ (and dropping vanishing blocks) yields the state $\rho_{AB_1^{k}}$ in the form \autoref{eqn:level_mB_block_form} at level $k-1$. \emph{Trace:} discarding the explicit factor gives $\rho_{AB_1^{k-1}}$ in compressed Schur form with blocks
    \begin{align}\label{eqn:partial_trace_recursion}
        \rho_{AU^{\left(k-1\right)}_\mu} = \frac{1}{p^{\left(k-1\right)}_\mu}\sum_{\substack{\lambda\in\mu^+\\ \ell\left(\lambda\right)\leq d_B}} p^{\left(k\right)}_\lambda\, w_{\mu\vert\lambda}\, \Gamma_{\mu\leftarrow\lambda}\left(\rho_{AU^{\left(k\right)}_\lambda}\right),
    \end{align}
    where the completely positive and trace-preserving maps
    \begin{align}\label{eqn:def_Gamma_maps}
    \begin{split}
        \Gamma_{\mu\leftarrow\lambda}&: \operatorname{End}\left(\mathcal{H}_A\otimes U^{\left(k\right)}_\lambda\right)\to \operatorname{End}\left(\mathcal{H}_A\otimes U^{\left(k-1\right)}_\mu\right),\\
        \Gamma_{\mu\leftarrow\lambda}\left(X\right)&\coloneqq \tr_{\mathcal{H}_B}\left[\left(1_{\mathcal{H}_A}\otimes C_{\mu\leftarrow\lambda}\right)X\left(1_{\mathcal{H}_A}\otimes C_{\mu\leftarrow\lambda}\right)^\dagger\right]
    \end{split}
    \end{align}
    are trace preserving because the $C_{\mu\leftarrow\lambda}$ are isometries. To obtain the level-$m_B$ data \autoref{eqn:level_mB_block_form}, perform separation and trace for $k = n, n-1,\ldots, m_B+2$ and a final separation at $k = m_B+1$, retaining the explicit factor.

    For the complexity, note that at level $k$ there are at most $\left(k+d_B\right)^{d_B-1}$ partitions with at most $d_B$ incoming edges each (\autoref{lem:bounds_symmetric_group} and \autoref{prop:young_branching_rule}), the matrices involved have dimensions bounded by $d_A d_B m_\lambda$ with $m_\lambda \leq \left(n+d_B\right)^{d_B\left(d_B-1\right)/2}$, and the isometries $C_{\mu\leftarrow\lambda}$ in the Schur basis are supplied in time polynomial in $n$ by \autoref{prop:cascade_data_poly}. Each of the at most $n$ iterations therefore costs $\operatorname{poly}\left(n\right)$ arithmetic operations, which proves the claim.
\end{proof}

We are now in the position to state and prove the main theorem of this section. Recall from \autoref{sec:notation_preliminaries} the notion of a distortion bound $c\left(\mathcal{M}_B\right)$ for an informationally complete measurement.

\begin{theorem}[Efficient rounding]\label{thm:efficient_rounding}
    Let $d_A, d_B$ be fixed, let $\mathcal{M}_B$ be a MIC measurement on $\mathcal{H}_B$ with outcome set $\ZZ$ and distortion bound $c\left(\mathcal{M}_B\right)$, and let $\rho_{AB_1^n}$ be a permutation-invariant state given in compressed Schur form. Then one can compute, in time polynomial in $n$, separable states
    \begin{align}\label{eqn:def_rounding_candidates}
        \sigma^{\left(m\right)}_{AB}\coloneqq \sum_{t\in\mathcal{T}_{m,\lvert\ZZ\rvert}} p\left(t\right)\, \rho_{A\vert t}\otimes \rho_{B_1\vert t}\ \in\ \operatorname{SEP}\left(A:B\right), \qquad m = 0,1,\ldots,n-1,
    \end{align}
    each given explicitly by its polynomially many factors, such that
    \begin{align}\label{eqn:rounding_guarantee}
        \min_{0\leq m\leq n-1}\ \big\lVert \rho_{AB_1}-\sigma^{\left(m\right)}_{AB}\big\rVert_1 \ \leq \ c\left(\mathcal{M}_B\right)\sqrt{\frac{2\ln 2 \cdot \log d_A}{n}}\ \eqqcolon\ \epsilon\left(d_A, n\right).
    \end{align}
    Consequently, for any observable $H$ on $\mathcal{H}_A\otimes\mathcal{H}_B$,
    \begin{align}\label{eqn:rounding_value_guarantee}
        \max_{0\leq m \leq n-1} \tr\left[H\sigma^{\left(m\right)}_{AB}\right] \ \geq \ \tr\left[H \rho_{AB_1}\right] - \lVert H\rVert_\infty\, \epsilon\left(d_A,n\right).
    \end{align}
    In particular, if $\rho_{AB_1^n}$ is an optimizer of a level-$n$ outer relaxation of a $\operatorname{SEP}$ problem instance with value $p^\star$, such that $\tr\left[H\rho_{AB_1}\right]\geq p^\star$, then $p^\star - \lVert H\rVert_\infty \epsilon\left(d_A,n\right) \leq \max_m \tr\left[H\sigma^{\left(m\right)}_{AB}\right]\leq p^\star$, and choosing $n = \big\lceil 2\ln 2\, \log\left(d_A\right) c\left(\mathcal{M}_B\right)^2 \lVert H\rVert_\infty^2/\epsilon^2\big\rceil$ yields an additive $\epsilon$-error lower bound on $p^\star$ in time $\operatorname{poly}\left(1/\epsilon\right)$.
\end{theorem}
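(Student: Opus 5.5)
The proof splits into an analytic part, which establishes the guarantee \autoref{eqn:rounding_guarantee} for the candidates \autoref{eqn:def_rounding_candidates}, and an algorithmic part, which shows that these candidates can be computed in polynomial time. For the analytic part I would follow the argument of \autoref{subsec:deFinetti_revisited} almost verbatim, but in the type-compressed form. First, I apply $\mathcal{M}_B^{\otimes n}$ to $\rho_{AB_1^n}$ and use the classical-quantum bound \autoref{eq:cq_mutual_information_bound}, which gives $I(A:Z_1^n)\leq\log d_A$. Next, \autoref{prop:chain_rule}~(2) decomposes this quantity into the $n$ nonnegative summands $I(A:Z_1\vert T_m)$, for $m=0,\dots,n-1$, so at least one summand is at most $\log d_A/n$. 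For that index $m$, \autoref{lemma:conditioning_mutual_information} together with Pinsker's inequality and convexity gives
$$\bigl\lVert\rho_{AZ_1}-\sum_t p(t)\rho_{A\vert t}\otimes\rho_{Z_1\vert t}\bigr\rVert_1\leq\sqrt{2\ln2\log d_A/n}.$$
The distortion bound $c(\mathcal{M}_B)$, applied to the traceless Hermitian difference of the pre-measurement operators (which is mapped onto the post-measurement difference by linearity of $\operatorname{id}_A\otimes\mathcal{M}_B$), then lifts this estimate to $\rho_{AB_1}$. Here the retained register is $B_1$ and the conditioning registers are $B_2^{m+1}$, exactly as in \autoref{cor:one_sided_types}, so no relabeling step is needed. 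This proves \autoref{eqn:rounding_guarantee} for the minimizing $m$.

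The value guarantee \autoref{eqn:rounding_value_guarantee} then follows from Hölder's inequality, $\lvert\tr[H(\rho-\sigma)]\rvert\leq\lVert H\rVert_\infty\lVert\rho-\sigma\rVert_1$. The sandwich $p^\star-\lVert H\rVert_\infty\epsilon\leq\max_m\tr[H\sigma^{(m)}]\leq p^\star$ holds because each $\sigma^{(m)}$ lies in $\operatorname{SEP}$ and the outer relaxation upper bounds the SEP value. The choice of $n$ comes from solving $\lVert H\rVert_\infty\epsilon(d_A,n)\leq\epsilon$ for $n$, which is $\operatorname{poly}(1/\epsilon)$ at fixed dimensions.

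For the algorithmic part, fix $m$. \autoref{prop:efficient_partial_trace} computes, in polynomial time, the level-$m$ data $\{p^{(m)}_\lambda,\rho_{AB_1U^{(m)}_\lambda}\}$, and \autoref{prop:coarse_grained_blocks} computes all blocks $G^{(\lambda)}_t$. Since $G_t\otimes 1_{AB_1}$ lies in the commutant, the unnormalized conditional operators are obtained blockwise:
$$\Omega^t_{AB_1}=\sum_\lambda p^{(m)}_\lambda\tr_{U_\lambda}\bigl[(1_{AB_1}\otimes G^{(\lambda)}_t)\rho_{AB_1U^{(m)}_\lambda}\bigr].$$
The Specht factor contributes only $\tr[1_{V_\lambda}/\dim V_\lambda]=1$, in line with \autoref{rem:specht_factor_inert}. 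From $\Omega^t_{AB_1}$ one reads off $p(t)=\tr\Omega^t$, and the marginals $\rho_{A\vert t}$ and $\rho_{B_1\vert t}$ follow by partial traces on matrices of size at most $d_Ad_B$. The cases $m=0$ and $m=1$ are handled directly, since for them the ambient dimensions are constant. The total cost is polynomial: at most $n$ values of $m$, polynomially many types $t$ and partitions $\lambda$, and polynomially sized matrices.

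The main obstacle is the blockwise identity for $\Omega^t$. One must check carefully that the Schur-basis conventions used in \autoref{prop:efficient_partial_trace}, where $B_1$ is split off and $B_2^{m+1}$ is carried on $U^{(m)}_\lambda$, coincide with those under which the $G^{(\lambda)}_t$ were assembled in \autoref{prop:coarse_grained_blocks}. Concretely, this means verifying that both use the same Gelfand--Tsetlin cascade, so that the trace of a product of block-diagonal operators decomposes as a sum of blockwise traces weighted by $\dim V_\lambda$, and that this weight cancels against the normalization in the compressed Schur form. I would verify this using the resolution \autoref{eqn:compressed_schur_resolution} together with \autoref{eqn:blocks_as_compressions} at level $m$, with the auxiliary register $\mathcal{H}_A\otimes\mathcal{H}_B$.
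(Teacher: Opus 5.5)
Your proposal is correct and follows essentially the same route as the paper: a pigeonhole argument on the type-compressed chain rule of \autoref{prop:chain_rule}~(2) with the classical-quantum bound $\log d_A$, then Pinsker and the distortion bound, Hölder for the value guarantee, and the blockwise formulas for $p(t)$ and $\rho_{AB_1\vert t}$ assembled from \autoref{prop:efficient_partial_trace} and \autoref{prop:coarse_grained_blocks}. The paper applies the distortion bound per type and then uses convexity, whereas you apply it once to the averaged traceless difference; this is equally valid, since $c(\mathcal{M}_B)$ holds for arbitrary traceless Hermitian operators. The consistency check you flag is glossed over in the paper as ``using the block forms and $\tr_{V_\lambda}[\tau_{V_\lambda}]=1$'', and your verification via \autoref{eqn:compressed_schur_resolution} and \autoref{eqn:blocks_as_compressions}, together with orthogonality of the path isometries, settles it.
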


\begin{proof}
    Fix $m \in \lbrace 0,\ldots,n-1\rbrace$. For $m = 0$ set $\sigma^{\left(0\right)}_{AB}\coloneqq \rho_A\otimes \rho_{B_1}$, computable from \autoref{prop:efficient_partial_trace}; for $m \geq 1$ compute the level-$m$ data \autoref{eqn:level_mB_block_form} via \autoref{prop:efficient_partial_trace} and the measurement blocks $G^{\left(\lambda\right)}_t$ via \autoref{prop:coarse_grained_blocks} (for $m=1$ the types are the single outcomes, $\mathcal{T}_{1,\lvert\ZZ\rvert}\cong\ZZ$, and $G_t = M_{B\vert z}$). Measuring the systems $B_2^{m+1}$ with the POVM $\lbrace G_{B_2^{m+1}\vert t}\rbrace_t$ and discarding them produces, using the block forms and $\tr_{V_\lambda}\left[\tau_{V_\lambda}\right]=1$,
    \begin{align}\label{eqn:def_pt}
        p\left(t\right) &= \tr\left[\left(1_{AB_1}\otimes G_{B_2^{m+1}\vert t}\right)\rho_{AB_1^{m+1}}\right] = \sum_{\lambda\vdash_{d_B}m} p^{\left(m\right)}_\lambda \tr\left[\left(1_{AB_1}\otimes G^{\left(\lambda\right)}_t\right)\rho_{AB_1U^{\left(m\right)}_\lambda}\right],\\
        \label{eqn:def_conditional_states}
        \rho_{AB_1\vert t} &= \frac{1}{p\left(t\right)}\sum_{\lambda\vdash_{d_B}m} p^{\left(m\right)}_\lambda\, \tr_{U^{\left(m\right)}_\lambda}\left[\left(1_{AB_1}\otimes G^{\left(\lambda\right)}_t\right)\rho_{AB_1U^{\left(m\right)}_\lambda}\right]
    \end{align}
    for $p\left(t\right)>0$, with marginals $\rho_{A\vert t}$ and $\rho_{B_1\vert t}$ (outcomes with $p\left(t\right)=0$ are dropped). All quantities are positive semidefinite by the POVM property in \autoref{prop:coarse_grained_blocks}, we have $\sum_t p\left(t\right) = 1$, and they coincide with the type-conditional states of \autoref{cor:one_sided_types}. Since $\lvert \mathcal{T}_{m,\lvert\ZZ\rvert}\rvert = \operatorname{poly}\left(n\right)$ and each factor is a $d_A\times d_A$, respectively $d_B\times d_B$, matrix, every $\sigma^{\left(m\right)}_{AB}$ in \autoref{eqn:def_rounding_candidates} is an explicitly represented separable state, computable in time $\operatorname{poly}\left(n\right)$, and $\sum_t p\left(t\right)\rho_{AB_1\vert t} = \rho_{AB_1}$ by the resolution $\sum_t G_{B_2^{m+1}\vert t} = 1$.

    For the approximation guarantee, let $\omega_{AZ_1^n}\coloneqq \left(\operatorname{id}_A\otimes \mathcal{M}_B^{\otimes n}\right)\left(\rho_{AB_1^n}\right)$ and let $T_m$ denote the type registers of \autoref{cor:one_sided_types}. By \autoref{prop:chain_rule}~(2) and the classicality of the $Z_1^n$-systems as in \autoref{eq:cq_mutual_information_bound},
    \begin{align}\label{eqn:pigeonhole_CMI}
        \sum_{m=0}^{n-1} I\left(A : Z_1 \vert T_m\right)_\omega = I\left(A:Z_1^n\right)_\omega \leq\log d_A,
    \end{align}
    such that there exists an $m^\star \in \lbrace 0,\ldots,n-1\rbrace$ with $I\left(A:Z_1\vert T_{m^\star}\right)_\omega \leq \log\left(d_A\right)/n$. Repeating the estimate \autoref{eq:calculation_2}--\autoref{eq:calculation_7} verbatim for this $m^\star$ --- i.e.\ Pinsker's inequality applied to the type-conditional states, the distortion bound $c\left(\mathcal{M}_B\right)$ lifting the classical closeness on $Z_1$ back to $B_1$, and convexity of the trace norm together with $\sum_t p\left(t\right)\rho_{AB_1\vert t} = \rho_{AB_1}$ --- yields
    \begin{align}
        \big\lVert \rho_{AB_1} - \sigma^{\left(m^\star\right)}_{AB}\big\rVert_1 \leq c\left(\mathcal{M}_B\right)\sqrt{2\ln 2\cdot I\left(A:Z_1\vert T_{m^\star}\right)_\omega} \leq c\left(\mathcal{M}_B\right)\sqrt{\frac{2\ln2 \cdot \log d_A}{n}}.
    \end{align}
    Since $m^\star$ is not identified by the data, the algorithm computes all $n$ candidates and the minimum in \autoref{eqn:rounding_guarantee} is achieved by comparison against $\rho_{AB_1}$, which is available in explicit form by \autoref{prop:efficient_partial_trace}; alternatively, for the optimization statement no identification is needed. Indeed, \autoref{eqn:rounding_value_guarantee} follows from \autoref{eqn:rounding_guarantee} by H\"older's inequality, $\lvert \tr\left[H\left(\rho_{AB_1}-\sigma^{\left(m\right)}_{AB}\right)\right]\rvert \leq \lVert H \rVert_\infty \lVert \rho_{AB_1}-\sigma^{\left(m\right)}_{AB}\rVert_1$, evaluated at any $m$ attaining the minimum. The final claim follows since each $\sigma^{\left(m\right)}_{AB}$ is separable and hence feasible, so that $\max_m \tr\left[H\sigma^{\left(m\right)}_{AB}\right]\leq p^\star$, while the stated choice of $n$ makes $\lVert H\rVert_\infty\epsilon\left(d_A,n\right)\leq \epsilon$; the overall runtime is polynomial in $n = \operatorname{poly}\left(1/\epsilon\right)$.
\end{proof}

As a direct consequence we obtain the following corollaries. The first concerns the Bose-symmetric setting, which arises naturally in the outer hierarchies: by \cite[Lem.~5.7]{zeiss2025approximatingfixedsizequantum}, Bose symmetry can always be arranged by purifying each $B$-system to $B\bar{B}$, at the price of the enlarged local dimension $d_{B\bar B}\coloneqq d_B d_{\bar B}$ and hence a MIC measurement with $d_{B\bar B}^2$ outcomes; in exchange, the algebraic structure collapses to a single block, and in particular the positivity- and algebra-preserving compressions of \autoref{prop:coarse_grained_blocks} act on one Weyl module only.

\begin{corollary}[Bose-symmetric efficient rounding]\label{cor:bose_efficient_rounding}
    In the setting of \autoref{thm:efficient_rounding}, assume in addition that $\rho_{A\left(B\bar B\right)_1^n}$ is Bose-symmetric over the $\left(B\bar B\right)$-systems with respect to $A$, i.e.\ supported on $\mathcal{H}_A\otimes \operatorname{Sym}^n\left(\mathcal{H}_B\otimes\mathcal{H}_{\bar B}\right)$. Then the compressed Schur form consists of the single block $\lambda = \left(n\right)$, of size $d_A\, m_{\left(n\right)}$ with $m_{\left(n\right)}\leq \left(n + d_{B\bar B}\right)^{d_{B\bar B}-1}$, all objects in \autoref{prop:coarse_grained_blocks}, \autoref{lem:reblocking} and \autoref{prop:efficient_partial_trace} restrict to the one-row partitions $\left(k\right)$, $k\leq n$, and the conclusions of \autoref{thm:efficient_rounding} hold verbatim with local dimension $d_{B\bar B}$.
\end{corollary}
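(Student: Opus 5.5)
The plan is to reduce the Bose-symmetric case to \autoref{thm:efficient_rounding} applied with the single local system $B\bar B$ of dimension $d_{B\bar B}$. First I will show that the compressed Schur form has only the block $\lambda=(n)$. By Schur--Weyl duality (\autoref{eqn:Schur_Weyl_k}), the symmetric subspace $\operatorname{Sym}^n(\mathcal{H}_B\otimes\mathcal{H}_{\bar B})$ is exactly the isotypic component $U^{(n)}_{(n)}\otimes V^{(n)}_{(n)}$, and the trivial Specht module $V_{(n)}$ is one-dimensional. A state supported on $\mathcal{H}_A\otimes\operatorname{Sym}^n$ therefore has $p^{(n)}_\lambda=0$ for every $\lambda\neq(n)$. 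Its only nontrivial block is $\rho_{AU_{(n)}}$ of size $d_A m_{(n)}$. The bound $m_{(n)}\le (n+d_{B\bar B})^{d_{B\bar B}-1}$ is \autoref{lem:bounds_symmetric_group}~(3); alternatively it follows from counting weak compositions via the orthogonal basis $\{v_\mu\}$.

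Next I will check that one-row partitions propagate through the subroutines. Bose symmetry of $\rho_{A(B\bar B)_1^n}$ is inherited by every marginal $\rho_{A(B\bar B)_1^k}$, since $\operatorname{Sym}^n\subseteq\operatorname{Sym}^k\otimes\mathcal{H}^{\otimes(n-k)}$. Hence every level-$k$ compressed Schur form produced by \autoref{prop:efficient_partial_trace} is again concentrated on $(k)$. Combinatorially this is visible in \autoref{lem:reblocking}: the only $\mu$ with $(k)\in\mu^+$ is $\mu=(k-1)$, and the branching weight $w_{(k-1)\vert(k)}=1$. So \autoref{eqn:reblocking_identity} reduces to a single Pieri isometry $C_{(k-1)\leftarrow(k)}$, and the recursion \autoref{eqn:partial_trace_recursion} has a single term.

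For the measurements I will use \autoref{prop:coarse_grained_blocks} as it stands. In \autoref{eqn:def_pt}--\autoref{eqn:def_conditional_states} only $\lambda=(m)$ contributes, because $p^{(m)}_\lambda=0$ otherwise. It therefore suffices to compute the blocks $G^{(\lambda)}_t$ for $\lambda=(m)$. The recursion \autoref{eqn:blocks_coarse_grained} for these blocks is closed on one-row shapes, since the predecessor of $(j)$ is $(j-1)$. This gives the claimed restriction to one Weyl module per level.

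Finally, the approximation guarantee and the runtime are exactly those of \autoref{thm:efficient_rounding} with $d_B$ replaced by $d_{B\bar B}$. Bose symmetry implies permutation invariance, so the chain-rule and pigeonhole argument applies unchanged. The MIC measurement now has $d_{B\bar B}^2$ outcomes, which only changes polynomial degrees.

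The one point needing care is consistency of conventions: the block $U_{(n)}$ must be expressed in the same Gelfand--Tsetlin basis that the outer relaxation uses. I expect to settle this by noting that the GT basis of $U_{(k)}$ is the normalized basis $\{v_\mu\}$ up to phases fixed by the cascade of \autoref{prop:cascade_data_poly}.
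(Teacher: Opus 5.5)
Your proposal is correct and follows essentially the same route as the paper: the Schur--Weyl identification $\operatorname{Sym}^n\left(\mathcal{H}_B\otimes\mathcal{H}_{\bar B}\right)\cong U^{\left(n\right)}_{\left(n\right)}\otimes V^{\left(n\right)}_{\left(n\right)}$ with one-dimensional $V_{\left(n\right)}$ forces $p^{\left(n\right)}_\lambda=0$ for all $\lambda\neq\left(n\right)$. Then the single removable cell $\left(k\right)^-=\lbrace\left(k-1\right)\rbrace$ with $w_{\left(k-1\right)\vert\left(k\right)}=1$ collapses \autoref{lem:reblocking} and \autoref{prop:efficient_partial_trace} to one isometry $C_{\left(k-1\right)\leftarrow\left(k\right)}$ per level, only $\lambda=\left(m\right)$ enters \autoref{eqn:def_pt}--\autoref{eqn:def_conditional_states}, and the guarantee is that of \autoref{thm:efficient_rounding} with $d_{B\bar B}$ in place of $d_B$. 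Your closing worry about basis conventions is already settled by hypothesis, since the compressed Schur form is by definition given in the Gelfand--Tsetlin basis; your observation that this basis of $U_{\left(k\right)}$ is the normalized family $\lbrace v_\mu\rbrace$ up to phases (the weight spaces being one-dimensional) is nonetheless correct.
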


\begin{proof}
    Bose symmetry means $p^{\left(n\right)}_\lambda = 0$ for all $\lambda \neq \left(n\right)$, since $\operatorname{Sym}^n\left(\mathcal{H}_{B\bar B}\right)\cong U^{\left(n\right)}_{\left(n\right)}\otimes V^{\left(n\right)}_{\left(n\right)}$ with the trivial one-dimensional Specht module $V_{\left(n\right)}$. A one-row partition has a single removable cell, $\left(k\right)^- = \lbrace\left(k-1\right)\rbrace$, so the pushforward \autoref{eqn:branching_pushforward} gives $p^{\left(k\right)}_{\left(k\right)} = 1$ and $w_{\left(k-1\right)\vert\left(k\right)} = 1$ at every level; consequently every step of \autoref{prop:efficient_partial_trace} involves a single block and a single isometry $C_{\left(k-1\right)\leftarrow\left(k\right)}$, and in \autoref{prop:coarse_grained_blocks} only the block $\lambda = \left(m\right)$ contributes to \autoref{eqn:def_pt} and \autoref{eqn:def_conditional_states}. The runtime and the guarantee \autoref{eqn:rounding_guarantee} are those of \autoref{thm:efficient_rounding} with $d_B$ replaced by $d_{B\bar B}$; note that $\epsilon\left(d_A,n\right)$ depends on the local dimension of $B$ only through the distortion bound of the chosen MIC measurement.
\end{proof}

Moreover, the rounding scheme is compatible with the full constraint class of the cSEP framework \autoref{eq:def_cSEP_set}: per-factor linear constraints --- fixed point as well as with fixed marginals --- imposed on the retained systems are inherited by the candidates, as in \cite{kossmann2025_aqec, zeiss2025approximatingfixedsizequantum, zeiss2026finitefinetticonvexbodies}.

\begin{corollary}[Efficient rounding under cSEP constraints]\label{cor:constrained_efficient_rounding}
    In the setting of \autoref{thm:efficient_rounding}, let $\Psi:\mathcal{S}\left(\mathcal{H}_A\right)\to \mathcal{S}\left(\mathcal{H}_A\right)$ and $\Phi:\mathcal{S}\left(\mathcal{H}_B\right)\to \mathcal{S}\left(\mathcal{H}_B\right)$ be channels, and let, with respect to bipartitions $\mathcal{H}_A = \mathcal{H}_{A_L}\otimes\mathcal{H}_{A_R}$ and $\mathcal{H}_B = \mathcal{H}_{B_L}\otimes\mathcal{H}_{B_R}$, the linear maps $\Theta_{A_L\to C_{A_L}}$ and $\mathrm{Y}_{B_L\to C_{B_L}}$ together with fixed operators $W_{C_{A_L}}$ and $K_{C_{B_L}}$ specify constraints with fixed marginals as in \autoref{eq:def_cSEP_set}. Assume that $\rho_{AB_1^n}$ satisfies the corresponding extension-level constraints: it is invariant under $\Psi$ applied to $A$ and under $\Phi$ applied to any single $B$-factor, and it fulfills
    \begin{align}\label{eqn:extension_level_cSEP_constraints}
        \Theta\left(\rho_{AB_1^n}\right) = W_{C_{A_L}}\otimes \rho_{A_RB_1^n}, \qquad
        \mathrm{Y}\left(\rho_{AB_1^n}\right) = K_{C_{B_L}}\otimes \rho_{AB_1^{n-1}\left(B_R\right)_n},
    \end{align}
    where $\mathrm{Y}$ acts on the $n$-th $B$-factor; by permutation invariance the same then holds for every $B$-factor. Then every conditional state inherits all four constraints, i.e.\ for all $t$,
    \begin{align}\label{eqn:inherited_cSEP_constraints}
        \Psi\left(\rho_{A\vert t}\right) = \rho_{A\vert t}, \qquad
        \Phi\left(\rho_{B_1\vert t}\right) = \rho_{B_1\vert t}, \qquad
        \Theta\left(\rho_{A\vert t}\right) = W_{C_{A_L}}\otimes \rho_{A_R\vert t}, \qquad
        \mathrm{Y}\left(\rho_{B_1\vert t}\right) = K_{C_{B_L}}\otimes \rho_{\left(B_R\right)_1\vert t},
    \end{align}
    such that each candidate $\sigma^{\left(m\right)}_{AB}$ of \autoref{eqn:def_rounding_candidates} lies in the set $\operatorname{cSEP}_{\mathcal{C}_A,\mathcal{C}_B}\left(A:B\right)$ of \autoref{eq:def_cSEP_set} cut out by these constraints, and the guarantees \autoref{eqn:rounding_guarantee} and \autoref{eqn:rounding_value_guarantee} hold for the corresponding $\operatorname{cSEP}$ problem instance.
\end{corollary}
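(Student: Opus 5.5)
The plan is to prove the inheritance statement \autoref{eqn:inherited_cSEP_constraints} at the level of the unnormalized type-conditional operators. The guarantees then follow verbatim from \autoref{thm:efficient_rounding}, since the candidates and estimates there do not change. For fixed $m$ and type $t$, write $\Omega^{t}_{AB_1} \coloneqq \tr_{B_2^{m+1}}\left[\left(1_{AB_1}\otimes G_{B_2^{m+1}\vert t}\right)\rho_{AB_1^{m+1}}\right] = p\left(t\right)\rho_{AB_1\vert t}$. Here $\rho_{AB_1^{m+1}}$ is the marginal of $\rho_{AB_1^n}$ obtained by tracing out $B_{m+2}^n$. The whole argument rests on one observation: a linear map $\Lambda$ acting only on $A$, or only on $B_1$, commutes with the partial trace over other systems and with multiplication by $1\otimes G$ on disjoint systems. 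Hence
\begin{align}
\left(\Lambda\otimes\operatorname{id}\right)\left(\Omega^t_{AB_1}\right) = \tr_{B_2^{m+1}}\left[\left(1\otimes G_{B_2^{m+1}\vert t}\right)\left(\Lambda\otimes\operatorname{id}\right)\left(\rho_{AB_1^{m+1}}\right)\right].
\end{align}

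For the fixed-point constraints, I first apply $\Psi$ to the $A$-system. Invariance of $\rho_{AB_1^n}$ is inherited by the marginal $\rho_{AB_1^{m+1}}$, so the identity above gives $\left(\Psi\otimes\operatorname{id}\right)(\Omega^t)=\Omega^t$. Tracing out $B_1$ then yields $\Psi(\rho_{A\vert t})=\rho_{A\vert t}$. For $\Phi$ on $B_1$, the hypothesis states invariance under $\Phi$ on any single $B$-factor. I use it for the factor $B_1$ directly, or transfer it from $B_n$ by permutation invariance, since conjugating with a transposition carries $\Phi$ on $B_n$ to $\Phi$ on $B_1$. The same computation then gives $\Phi(\rho_{B_1\vert t})=\rho_{B_1\vert t}$.

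For the constraints with fixed marginals, I apply $\Theta$ on $A_L$ and obtain $\Theta(\Omega^t_{AB_1}) = W_{C_{A_L}}\otimes \tr_{B_2^{m+1}}[(1\otimes G_t)\rho_{A_RB_1^{m+1}}] = W_{C_{A_L}}\otimes \Omega^t_{A_RB_1}$. Tracing out $B_1$ and dividing by $p(t)$ gives $\Theta(\rho_{A\vert t}) = W\otimes\rho_{A_R\vert t}$. The case of $\mathrm{Y}$ is analogous once it is placed on $B_1$. I relabel the constraint from $B_n$ to $B_1$ using the transposition $(1\,n)$, which leaves $\rho_{AB_1^n}$ invariant and maps $\rho_{AB_1^{n-1}(B_R)_n}$ to the corresponding state with $(B_R)_1$. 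After taking marginals, this yields $\mathrm{Y}(\Omega^t_{B_1}) = K\otimes \Omega^t_{(B_R)_1}$.

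The only delicate point is this relabeling: one has to check that the transposition, acting on the marginal over $B_1^{m+1}$ together with the untouched $B_{m+2}^n$, is compatible with measuring only $B_2^{m+1}$. This holds because the transposition $(1\,n)$ swaps $B_1$ with $B_n$, which lies outside $B_2^{m+1}$ for $m\le n-1$. I then trace out $B_{m+2}^n$. Finally, outcomes with $p(t)=0$ are dropped, so each product factor $\rho_{A\vert t}\otimes\rho_{B_1\vert t}$ lies in $\mathcal{C}_A\times\mathcal{C}_B$, and therefore $\sigma^{(m)}_{AB}\in\operatorname{cSEP}$. Because feasibility is now established within the cSEP set, the Hölder argument of \autoref{thm:efficient_rounding} carries over unchanged to the cSEP optimum.
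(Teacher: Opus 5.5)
Your proposal is correct and follows essentially the same route as the paper. You work with the unnormalized type-conditional operators, use that maps acting on $A$ or on the retained factor $B_1$ commute with the partial trace and with the measurement on $B_2^{m+1}$, and move the $\mathrm{Y}$-constraint to $B_1$ by permutation invariance. One small correction: the ``delicate point'' you flag is not actually needed, and your justification for it fails for $m=n-1$ (then $B_n\in B_2^{m+1}$); this is harmless because the transposition is applied to the extension-level identity for the full state $\rho_{AB_1^n}$ before any marginalization or measurement.
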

\begin{proof}
    By permutation invariance it suffices to consider maps acting on the retained factor $B_1$, respectively on $A$. Any such map commutes with the measurement channel acting on $B_2^{m+1}$, hence, using the invariance of $\rho_{AB_1^{m+1}}$ under $\Phi_{B_1}$,
    \begin{align}
        \left(\operatorname{id}_A\otimes \Phi_{B_1}\right)\left(\tr_{B_2^{m+1}}\left[\left(1_{AB_1}\otimes G_{B_2^{m+1}\vert t}\right)\rho_{AB_1^{m+1}}\right]\right) = \tr_{B_2^{m+1}}\left[\left(1_{AB_1}\otimes G_{B_2^{m+1}\vert t}\right)\rho_{AB_1^{m+1}}\right],
    \end{align}
    so that $\rho_{AB_1\vert t}$, and thus its marginals, inherit the fixed point constraints; the same argument applies to $\Psi$ on $A$. For the constraints with fixed marginals the identical computation applies with $\Phi_{B_1}$ replaced by $\mathrm{Y}_{\left(B_L\right)_1\to C_{B_L}}$ and the invariance replaced by the extension-level constraint \autoref{eqn:extension_level_cSEP_constraints}, which, after normalization by $p\left(t\right)$, produces the right-hand sides in \autoref{eqn:inherited_cSEP_constraints}; likewise for $\Theta$ on $A$, cf.\ \cite[Eqs.~(B.8)--(B.11)]{zeiss2025approximatingfixedsizequantum}. Feasibility of $\sigma^{\left(m\right)}_{AB}$ for $\operatorname{cSEP}_{\mathcal{C}_A,\mathcal{C}_B}\left(A:B\right)$ follows since it is a convex combination of the constrained product states $\rho_{A\vert t}\otimes \rho_{B_1\vert t}$, and the error guarantees are those of \autoref{thm:efficient_rounding}. We emphasize the scope of the hypothesis: constraint maps coupling several $B$-factors, or acting non-trivially only on the measured factors, are not covered by this argument; for the compatibility of such constraints with symmetry reduction we refer to the discussion in \cite[Sec.~7]{kossmann2025_aqec} and \cite{zeiss2025approximatingfixedsizequantum, zeiss2026finitefinetticonvexbodies}.
\end{proof}

In summary, for a given cSEP problem instance of fixed local dimension, the outer approximation schemes providing $\operatorname{poly}\left(1/\epsilon\right)$-time additive $\epsilon$-error outer approximations \cite{kossmann2025_aqec, zeiss2025approximatingfixedsizequantum, zeiss2026finitefinetticonvexbodies}, together with the inner approximations derived here, constitute an efficient algorithmic procedure determining the optimal value of such problems up to an additive $\epsilon$-interval. Moreover, explicit separable candidates certifying additive $\epsilon$-error lower bounds on any such cSEP problem can be computed in $\operatorname{poly}\left(1/\epsilon\right)$-time.

\section{Applications}\label{sec:applications}

In \cite{kossmann2025_aqec} the framework of combined symmetry reduction for semidefinite programming (SDP) has been introduced, which turned out to be a powerful tool for solving de Finetti hierarchies with symmetries as SDP constraints. To be concrete, the underlying idea can be explained as follows. Let $\mathcal{U}$ and $\mathcal{V}$ be subgroups of the unitary group $\mathcal{U}\left(\mathcal{H}\right)$ and consider an SDP 
\begin{equation}\label{eq:symmtry_theorem_problem}
        \begin{aligned}
            \sup \ &\tr\left[H \rho\right] \\
            \operatorname{s.t.} \  &\tr\left[A_j \rho \right] \leq b_j,  \quad  1\leq j \leq m \\
            &\rho \geq 0 \\
            &UHU^\dagger = H  \quad \text{and} \quad U A_j U^\dagger = A_j \quad \text{for all} \  U \in \mathcal{U}, \ 1\leq j \leq m \\
    &V\rho V^\dagger = \rho \quad \text{for all} \ V \in \mathcal{V}.
        \end{aligned}
    \end{equation}
Can we benefit from both symmetry groups $\mathcal{U}$ and $\mathcal{V}$ for the purpose of symmetry reduction of SDPs? The answer to this is involved as \cite[Thm.~7.6]{kossmann2025_aqec} gives a concrete case where combined symmetry reduction is not possible in the first place. In order to state clearly what it means to have a joint symmetry of the SDP \autoref{eq:symmtry_theorem_problem}, a corresponding well-defined notion has been established as follows. Concretely, the SDP \autoref{eq:symmtry_theorem_problem} exhibits a \emph{joint symmetry} if the set 
        \begin{align}\label{eq:strong_joint_symmetry_def}
            \mathcal{W} \coloneqq \left\{UV \ \vert \ U \in \mathcal{U}, \ V \in \mathcal{V}\right\}
        \end{align}
        is a group. Particularly, it is shown that $\mathcal{W}$ is immediately a compact subgroup of $\mathcal{U}\left(\mathcal{H}\right)$ \cite[Lem.~7.4]{kossmann2025_aqec}. In this case \cite[Prop.~7.5~(d)]{kossmann2025_aqec} shows that  the SDP
\begin{equation}\label{eq:symmetry_reduced_programm_prop}
    \begin{aligned}
        \sup \ &\operatorname{tr}\left[H \rho\right] \\
        \text{s.t.} \quad &\operatorname{tr}\left[A_j \rho \right] \leq b_j, \quad 1 \leq j \leq m, \\
        &\rho \geq 0, \\
        &\rho \in \mathcal{A}_{\mathcal{W}}^\prime
    \end{aligned}
\end{equation}
admits the same optimal value as the SDP in \autoref{eq:symmtry_theorem_problem}. In the particular case of a de Finetti hierarchy, $\mathcal{V} \cong S_n$ acts via the tensor representation \autoref{eq:def_permutation_action_Sn_H_otimes_n}. In this section we aim to underpin the practical point of view of combined symmetry reduction with theoretical results on improved convergence behavior of de Finetti hierarchies under constraints. 

\subsection{A bilinear optimization problem with symmetries and linear constraints}
Specifically the constraint situation of \autoref{thm:interpolation_deFinetti} falls in the category of joint symmetry reduction. Let us define the notion of covariant maps as follows. A linear map $ \Psi: \mathcal{B}\left(\mathcal{H}\right) \to \mathcal{B}\left(\mathcal{H}\right) $ is said to be \emph{covariant} with respect to a subgroup $ \mathcal{V} $ of the unitary group $ \mathcal{U}\left(\mathcal{H}\right) $ if 
    \begin{align}
    \Psi\left(V \cdot V^\dagger\right) = V \Psi\left(\cdot\right) V^\dagger, \quad \text{for all} \ V \in \mathcal{V}. 
    \end{align}
For the purpose of application we consider the following bilinear optimization program. Let $H_{AB} \in \mathcal{B}\left(\mathcal{H}_A \otimes \mathcal{H}_B\right)$ be an operator and $\mathcal{V}_i \subseteq \mathcal{U}\left(\mathcal{H}_i\right)$ for $i=A,B$ closed subgroups and consider
\begin{equation}\label{eq:bilinear_optimization_under_symmetries}
    \begin{aligned}
        c_{\operatorname{opt}}\coloneqq \sup \ &  \tr\left[H_{AB} \,  \rho_A \otimes \rho_B\right] \\
        \operatorname{s.t.} \ & \Psi_{A\to A}\left(\rho_A\right) = X_A \\
        & \Psi_{B\to B}\left(\rho_B\right) = X_B \\
        & \Psi_{A\to A} \ \text{is covariant w.r.t.} \ \mathcal{V}_A \\
        &\Psi_{B\to B} \ \text{is covariant w.r.t.} \ \mathcal{V}_B \\
        &\left(V_A\otimes V_B\right)\, H_{AB}\, \left(V_A\otimes V_B\right)^\dagger = H_{AB}, \quad V_A \in \mathcal{V}_A,\ V_B \in \mathcal{V}_B \\
        &\rho_A\in \mathcal{S}\left(\mathcal{H}_A\right), \quad \rho_B\in \mathcal{S}\left(\mathcal{H}_B\right).
    \end{aligned}
\end{equation}
\begin{proposition}[Convergence under combined symmetry reduction]\label{prop:bilinear_hierarchy}
    Consider \autoref{eq:bilinear_optimization_under_symmetries}. Then the outer hierarchy given by 
    \begin{equation}\label{eq:bilinear_optimization_under_symmetries_hierarchy}
    \begin{aligned}
        c^{\left(n\right)}\coloneqq \sup \ &  \tr\left[H_{AB} \,  \rho_{AB_1^n}\right] \\
        \operatorname{s.t.} \ & (\Psi_{A\to A}\otimes \id_{B_1^n})\left(\rho_{AB_1^n}\right) = X_A \otimes \rho_{B_1^n} \\
        & (\id_{AB_1^{n-1}}\otimes \Psi_{B\to B})\left(\rho_{AB_1^n}\right) = \rho_{AB_1^{n-1}}\otimes  X_B \\
        & \rho_{AB_1^n} \in \mathcal{A}_{\mathcal{V}_A\times \mathcal{V}_B \times S_n}^\prime \subseteq \mathcal{S}\left(\mathcal{H}_A \otimes \mathcal{H}_B^{\otimes n}\right)
    \end{aligned}
\end{equation}
satisfies the following bound
\begin{align}
    c^{\left(n\right)}-c_{\operatorname{opt}}\leq \lVert H_{AB}\rVert_\infty \, 2 \, m_{B,\operatorname{max}} \sqrt{\frac{2 \ln 2 \, \log  \left(\sum_{i=1}^{k_A} m_{i,A}^2 \right)}{n}},
\end{align}
 whereby $m_{i,A}$ are the dimensions of the fixed point algebra of $\mathcal{V}_{A}$ and $m_{B,\operatorname{max}} \coloneqq \max_{1\leq i \leq k_B} m_{i,B}$ is the largest block-size of the fixed point algebra of $\mathcal{V}_{B}$.
\end{proposition}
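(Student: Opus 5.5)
We plan to reduce the statement to \autoref{thm:interpolation_deFinetti}, applied with the Haar twirls $\Phi_{\mathcal{V}_A}$ and $\Phi_{\mathcal{V}_B}$ as the fixed point channels, and then convert the trace-distance bound into a bound on the optimal value via H\"older's inequality.

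\emph{Feasibility of the outer hierarchy.} Any optimal pair $\left(\rho_A,\rho_B\right)$ of \autoref{eq:bilinear_optimization_under_symmetries} can be twirled. By covariance of $\Psi_A$ and invariance of $H_{AB}$, the twirled pair is still feasible and has the same value, provided $X_A$, $X_B$ are $\mathcal{V}_A$-, respectively $\mathcal{V}_B$-invariant. This holds after averaging, since covariance forces $\Psi_A\left(V\rho_AV^\dagger\right)=VX_AV^\dagger$, and we will assume it or arrange it as part of the setup. Then $\rho_A\otimes\rho_B^{\otimes n}$ is feasible for \autoref{eq:bilinear_optimization_under_symmetries_hierarchy}, so $c^{(n)}\geq c_{\operatorname{opt}}$. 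More importantly, the reduction shows that $c_{\operatorname{opt}}$ equals the supremum of $\tr\left[H_{AB}\sigma\right]$ over the set $\Sigma\left(A:B\right)$ with twirl constraints together with the marginal constraints, because the objective is linear on convex mixtures.

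\emph{Rounding.} Take an optimizer $\rho_{AB_1^n}$ of the hierarchy. Membership in $\mathcal{A}^\prime_{\mathcal{V}_A\times\mathcal{V}_B\times S_n}$ gives permutation invariance over the $B$-systems. It also gives invariance under $\Phi_{\mathcal{V}_A}\otimes\operatorname{id}$ and under $\operatorname{id}\otimes\Phi_{\mathcal{V}_B}$ on the last factor, since $\mathcal{V}_B$ acts as $V_B^{\otimes n}$ and hence each factor is invariant after averaging. This step needs care: the commutant of the diagonal action $V_B^{\otimes n}$ does not immediately give single-factor invariance. If the group in $\mathcal{A}_{\mathcal{V}_A\times\mathcal{V}_B\times S_n}$ acts independently on each factor, this is immediate. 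Otherwise we would impose the factor-wise twirl as the constraint, as in \autoref{thm:interpolation_deFinetti}, and read the hierarchy that way.

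The Haar twirls have the maximally mixed state as a full-rank fixed point. Their fixed point algebras are the commutants, whose block data are given by the multiplicities $m_{i,A}$ and $m_{i,B}$ by \autoref{cor:entanglement_assisted_classcial_capacity_group_averages}. \autoref{thm:interpolation_deFinetti} therefore yields $\tilde\rho_{AB}=\sum_t p(t)\rho_{A\vert t}\otimes\rho_{B_1\vert t}\in\Sigma\left(A:B\right)$ with trace distance to $\rho_{AB_1}$ at most the stated expression.

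\emph{Feasibility of the candidate.} The marginal constraints are inherited by $\rho_{A\vert t}$ and $\rho_{B_1\vert t}$ through the argument of \autoref{cor:constrained_efficient_rounding}: the constraint maps act on $A$ or on the retained $B_1$ and commute with the measurement on $B_2^{m+1}$. Hence $\Psi_A\left(\rho_{A\vert t}\right)=X_A$ and $\Psi_B\left(\rho_{B_1\vert t}\right)=X_B$. Each product term is thus feasible for \autoref{eq:bilinear_optimization_under_symmetries}, which gives $\tr\left[H\tilde\rho_{AB}\right]\leq c_{\operatorname{opt}}$.

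\emph{Conclusion.} Combining the two parts,
\[
c^{(n)}-c_{\operatorname{opt}}\leq\tr\left[H\left(\rho_{AB_1}-\tilde\rho_{AB}\right)\right]\leq\lVert H\rVert_\infty\lVert\rho_{AB_1}-\tilde\rho_{AB}\rVert_1 .
\]
If the supremum is not attained, we apply the same argument to near-optimizers.

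We expect the main obstacle to be matching the symmetry constraint in the hierarchy to the single-factor fixed point hypothesis of \autoref{thm:interpolation_deFinetti}. Specifically, we must verify that the conditional states after measurement inherit the $\mathcal{V}_B$-twirl invariance on $B_1$, which the blockwise distortion bound of \autoref{cor:distortion_correlation_postmeasurement} requires.
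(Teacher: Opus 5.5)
Your route is the paper's: apply \autoref{thm:interpolation_deFinetti} with the Haar twirls $\Phi_{\mathcal{V}_A},\Phi_{\mathcal{V}_B}$ as fixed point channels, check that the rounded candidate inherits the marginal constraints, and finish with H\"older's inequality. The inheritance step is left implicit in the paper; you make it explicit via the argument of \autoref{cor:constrained_efficient_rounding}. Your first paragraph, establishing $c^{(n)}\geq c_{\operatorname{opt}}$, is not needed for the stated one-sided bound.

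The step you flag, however, is left open. It is also exactly where the paper's proof is too quick: the paper simply asserts that every state in $\mathcal{A}^\prime_{\mathcal{V}_A\times\mathcal{V}_B\times S_n}$ satisfies $(\operatorname{id}_{AB_1^{n-1}}\otimes\Phi_{\mathcal{V}_B})(\rho_{AB_1^n})=\rho_{AB_1^n}$. For the diagonal action $V_B^{\otimes n}$, which is the action that commutes with $S_n$, this is false. Take trivial $A$ and $\mathcal{V}_B=\mathcal{U}(d_B)$: the normalized projector onto the symmetric subspace of $B_1B_2$ is $V\otimes V$-invariant, but it is not invariant under the twirl on $B_2$ alone. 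So your opening claim fails. Your fallbacks (assuming a factor-wise action, or re-reading the hierarchy) would prove a different statement.

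The missing idea is that factor-wise twirling costs nothing. Given a feasible $\rho_{AB_1^n}$, set $\rho^\prime\coloneqq(\operatorname{id}_A\otimes\Phi_{\mathcal{V}_B}^{\otimes n})(\rho_{AB_1^n})$. Then:
\begin{enumerate}
\item $H_{AB}$ is invariant under $1_A\otimes V_B$ (take $V_A=1$) and the twirl is self-dual, so $\tr[H_{AB}\rho^\prime_{AB_1}]=\tr[H_{AB}\rho_{AB_1}]$.
\item $\rho^\prime$ remains $S_n$- and $\mathcal{V}_A$-invariant, and the $A$-constraint is unaffected.
\item Covariance gives $\Psi_B\circ\Phi_{\mathcal{V}_B}=\Phi_{\mathcal{V}_B}\circ\Psi_B$, hence $(\operatorname{id}_{AB_1^{n-1}}\otimes\Psi_B)(\rho^\prime)=\rho^\prime_{AB_1^{n-1}}\otimes\Phi_{\mathcal{V}_B}(X_B)$.
\item $\Phi_{\mathcal{V}_B}(X_B)=X_B$ is not an extra assumption. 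Conjugating the $B$-constraint of the feasible $\rho_{AB_1^n}$ by $1_A\otimes V_B^{\otimes n}$ and using covariance yields $\rho_{AB_1^{n-1}}\otimes X_B=\rho_{AB_1^{n-1}}\otimes V_BX_BV_B^\dagger$. The same argument gives the invariance of $X_A$, which you had only assumed.
\end{enumerate}
Hence $\rho^\prime$ is feasible, has the same value, and satisfies all hypotheses of \autoref{thm:interpolation_deFinetti}, including the single-factor fixed point property on $B_1$ that the blockwise distortion bound needs. Running your rounding and H\"older argument on $\rho^\prime$ instead of $\rho_{AB_1^n}$ proves the proposition as stated.
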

\begin{proof}
    As the objective function in \autoref{eq:bilinear_optimization_under_symmetries} is linear, we can consider the set of feasible points
    \begin{align}
        \mathcal{F} \coloneqq \operatorname{conv}\left \{ \rho_A\otimes \rho_B \ \vert \ \Psi_{A\to A}\left(\rho_A\right) = X_A,\ \Psi_{B\to B}\left(\rho_B\right) = X_B \right\}.
    \end{align}
    Applying the outer hierarchy from \cite{Berta2021} on \autoref{eq:bilinear_optimization_under_symmetries} yields the following optimization program
    \begin{equation}
    \begin{aligned}
        c^{\left(n\right)}\coloneqq \sup \ &  \tr\left[H_{AB} \,  \rho_{AB_1^n}\right] \\
        \operatorname{s.t.} \ & (\Psi_{A\to A}\otimes \id_{B_1^n})\left(\rho_{AB_1^n}\right) = X_A \otimes \rho_{B_1^n} \\
        & (\id_{AB_1^{n-1}}\otimes \Psi_{B\to B})\left(\rho_{AB_1^n}\right) = \rho_{AB_1^{n-1}}\otimes  X_B \\
        & \rho_{AB_1^n} \in \mathcal{A}_{ S_n}^\prime \subseteq \mathcal{S}\left(\mathcal{H}_A \otimes \mathcal{H}_B^{\otimes n}\right).
    \end{aligned}
\end{equation}
    As $H_{AB}$ is invariant under the action of $\mathcal{V}_A$ and $\mathcal{V}_B$ and the constraints are covariant with respect to $\mathcal{V}_A$ and $\mathcal
    {V}_B$, the objective value does not change and the constraints are satisfied. Moreover, as the $S_n$-symmetry and the $\mathcal{V}_A \times \mathcal{V}_B$ symmetry commute, \cite[Prop.~7.5~(d)]{kossmann2025_aqec} implies that we can combine the symmetries and find a solution in $\mathcal{A}^\prime_{\mathcal{V}_A \times \mathcal{V}_B \times S_n}$. Moreover, given $\rho_{AB_1^n} \in \mathcal{A}^\prime_{\mathcal{V}_A \times \mathcal{V}_B \times S_n}$, we can apply \autoref{thm:interpolation_deFinetti} with the choices $\Phi_{A\to A} \coloneqq \int_{\mathcal{V}_A} V \, \cdot \, V^\dagger \, d\mu_{\mathcal{V}_A}\left(V\right)$ and $\Phi_{B\to B} \coloneqq \int_{\mathcal{V}_B} V \, \cdot \, V^\dagger \, d\mu_{\mathcal{V}_B}\left(V\right)$, i.e.\ the Haar twirls over $\mathcal{V}_A$ and $\mathcal{V}_B$, which admit the maximally mixed state as a full-rank fixed point. By \autoref{cor:entanglement_assisted_classcial_capacity_group_averages} these twirls are duals of conditional expectations onto the commutants $\mathcal{A}^\prime_{\mathcal{V}_A}$ and $\mathcal{A}^\prime_{\mathcal{V}_B}$, and any state in $\mathcal{A}^\prime_{\mathcal{V}_A \times \mathcal{V}_B \times S_n}$ satisfies the corresponding fixed point constraints together with permutation invariance over the $B$-systems. Applying Hölder's inequality finally yields the desired convergence speed. 
\end{proof}

\subsection{Approximate quantum error correction}
We consider the task of approximate quantum error correction (AQEC) with encoding a code space $\mathcal{H}_L$ into a Hilbert space $\mathcal{H}_P$, which can be considered as the task of, for a given noisy channel $\mathcal{N}$, finding optimal encoder-decoder pairs $\left(\mathcal{E},\mathcal{D}\right)$, such that
\begin{align}
    \mathcal{H}_L \stackrel{\mathcal{E}}{\longrightarrow} \mathcal{H}_P \stackrel{\mathcal{N}}{\longrightarrow} \mathcal{H}_P \stackrel{\mathcal{D}}{\longrightarrow} \mathcal{H}_L,
\end{align}
yields $\mathcal{D}\circ \mathcal{N} \circ \mathcal{E} \approx \operatorname{id}_L$. A well-established quantity in order to quantify the approximation is given by the \emph{channel fidelity}, which is defined as
\begin{align}\label{eq:def_channel_fidelity}
    F\left(\mathcal{N},d_L\right)\coloneqq \sup_{\left(\mathcal{E},\mathcal{D}\right)} \ \mathcal{F}\left(\Phi_{LL}, \mathcal{D}\circ \mathcal{N} \circ \mathcal{E} \left( \Phi_{LL}\right)\right).
\end{align}
For the channel fidelity a converging hierarchy has been established in \cite{Berta2021}. A general question since then has been whether the $O\left(1/n\right)$-convergence behavior is structurally included in the approximate quantum error correction problem (AQEC). In this section we aim to prove a practically relevant theorem regarding AQEC, solving two main issues in prior solutions:
\begin{enumerate}
    \item one-sided extensions such as shown in \cite{Berta2021} and \cite{kossmann2025_aqec} have a $O\left(1/\sqrt{n}\right)$-convergence behavior, which is strictly suboptimal from the perspective of de Finetti theorems (cf.~\cite{Christandl2007} with a $O\left(1/n\right)$-behavior). However, in \autoref{thm:double_extended_de_Finetti} and \autoref{thm:bose_de_Finetti} we show that in principle a $O\left(1/n\right)$ convergence behavior is possible. It would remain to show how to deal with the linear constraints. 
    \item in \cite{kossmann2025_aqec} a general framework for combined symmetry reduction has been established, as recalled in \autoref{eq:symmtry_theorem_problem}. For the error correction problem a natural ordering of subsystems is by physical and logical code spaces, denoted by $\mathcal{H}_P$ and $\mathcal{H}_L$. The general subsystem structure is then for one-sided extension given by
    \begin{equation}\label{eq:subsystem_structure}
\begin{aligned}
\begin{matrix}
P & \bar{P}^{\left(1\right)} & \bar{P}^{\left(2\right)} & \ldots & \bar{P}^{\left(n\right)}\\
L & \bar{L}^{\left(1\right)} & \bar{L}^{\left(2\right)} & \ldots & \bar{L}^{\left(n\right)}.
\end{matrix}
\end{aligned}
\end{equation}
    Solving the decomposition into irreducibles can be done separately for $\mathcal{H}_{P\bar{P}_1^n}$ and $\mathcal{H}_{L\bar{L}_1^n}$. Particularly on $\mathcal{H}_{L\bar{L}_1^n}$ a diagonal full unitary group $\mathcal{U}\left(\mathcal{H}_L\right)$ acts and on $\mathcal{H}_{P\bar{P}_1^n}$ some diagonal action of a closed subgroup $\mathcal{V}\subseteq \mathcal{U}\left(\mathcal{H}_P\right)$. The space $\mathcal{H}_{P\bar{P}_1^n}$ decomposes under $\mathcal{V}\times S_n$ into irreducible representations and similarly $\mathcal{H}_{L\bar{L}_1^n}$ into irreducibles from $\mathcal{U}\left(\mathcal{H}_L\right)\times S_n$. However, connecting them yields couplings between the $S_n$-representations
    \begin{equation}\label{eq:S_n_coupling}
    \begin{aligned}
        \mathcal{H}_{L\bar{L}_1^n} \otimes \mathcal{H}_{P\bar{P}_1^n} &\cong_{\mathcal{U}\left(\mathcal{H}_L\right) \times \mathcal{V}\times S_n} \left(\bigoplus_{i=1}^{k_\mathcal{V}} \bigoplus_{\lambda \vdash n} \mathbb{C}^{m_{i,\lambda}} \otimes \mathbb{C}^{d_i} \otimes \mathbb{C}^{d_\lambda} \right) \otimes \left(\bigoplus_{j=1}^{k_{\mathcal{U}\left(\mathcal{H}_L\right)}} \bigoplus_{\mu \vdash n} \mathbb{C}^{m_{j,\mu}} \otimes \mathbb{C}^{e_j} \otimes \mathbb{C}^{d_\mu} \right) \\
        &= \bigoplus_{i=1}^{k_\mathcal{V}} \bigoplus_{\lambda \vdash n} \bigoplus_{j=1}^{k_{\mathcal{U}\left(\mathcal{H}_L\right)}} \bigoplus_{\mu \vdash n} \mathbb{C}^{m_{i,\lambda}} \otimes \mathbb{C}^{m_{j,\mu}} \otimes \mathbb{C}^{e_j} \otimes \mathbb{C}^{d_i} \otimes \underbrace{\mathbb{C}^{d_\lambda} \otimes \mathbb{C}^{d_\mu}}_{\cong \bigoplus_{\nu \vdash n}\mathbb{C}^{g_{\lambda\mu\nu}} \otimes \mathbb{C}^{d_\nu}}. 
    \end{aligned}
    \end{equation}
    Here $g_{\lambda\mu\nu}$ denote the Kronecker coefficients of $S_n$, i.e.\ the multiplicities in the decomposition of the tensor product $V_\lambda\otimes V_\mu$ of Specht modules into irreducibles. In \cite[Sec.~5.4]{kossmann2025_aqec} it is reported that the coupling between the irreducibles $\mu,\lambda\vdash n$ from $S_n$ is a core issue in an implementation. Here, we show by employing \autoref{thm:bose_de_Finetti} that a similar Bose-symmetric hierarchy can be used where the calculation in \autoref{eq:S_n_coupling} can be drastically simplified as we are just considering $\lambda = \mu = (n)$, which is (trivially) known to be multiplicity free (for general Kronecker coefficients cf.~\cite{Bessenrodt2017}).
\end{enumerate}

In the following we give an example of an AQEC problem, solving both challenges raised in $\left(1\right)$ and $\left(2\right)$. Let $\mathcal{N}_{P\to P}$ be a depolarizing channel. We aim to combine the techniques from \cite{kossmann2025_aqec} with \autoref{thm:bose_de_Finetti}, building on \cite[Lem.~5.7]{zeiss2025approximatingfixedsizequantum}. For this purpose we purify the systems in \autoref{eq:subsystem_structure} and double-extend them as follows
\begin{equation}\label{eq:subsystem_structure_purified}
\begin{aligned}
\begin{matrix}
P^{\left(n\right)} & \ldots &P^{\left(2\right)}& P^{\left(1\right)} & \bar{P}^{\left(1\right)} & \bar{P}^{\left(2\right)} & \ldots & \bar{P}^{\left(n\right)}\\
P_{\operatorname{copy}}^{\left(n\right)} & \ldots &P_{\operatorname{copy}}^{\left(2\right)}& P_{\operatorname{copy}}^{\left(1\right)} & \bar{P}_{\operatorname{copy}}^{\left(1\right)} & \bar{P}_{\operatorname{copy}}^{\left(2\right)} & \ldots & \bar{P}_{\operatorname{copy}}^{\left(n\right)}\\
L^{\left(n\right)} & \ldots &L^{\left(2\right)}& L^{\left(1\right)} & \bar{L}^{\left(1\right)} & \bar{L}^{\left(2\right)} & \ldots & \bar{L}^{\left(n\right)}\\
L^{\left(n\right)}_{\operatorname{copy}} & \ldots &L^{\left(2\right)}_{\operatorname{copy}} & L^{\left(1\right)}_{\operatorname{copy}} & \bar{L}^{\left(1\right)}_{\operatorname{copy}} & \bar{L}^{\left(2\right)}_{\operatorname{copy}} & \ldots & \bar{L}^{\left(n\right)}_{\operatorname{copy}},
\end{matrix}
\end{aligned}
\end{equation}
whereby $\bar{P}^{\left(i\right)} \cong P^{\left(i\right)}$, for all $1\leq i \leq n$ and similarly for the $L$-systems. The $n$-level relaxation on these subsystems is then denoted by
\begin{equation}\label{eq:enlarge_symmetry_sdp}
        \begin{aligned}
            \operatorname{SDP}_n^\star = \ &\tr\left[C_{PP} \otimes \Phi_{L\bar{L}}\rho_{P\bar{P}L\bar{L}}\right]&  \\
            \operatorname{s.t.} \ &\tr\left[\rho_{\left(PP_{\operatorname{copy}}\right)^{\left(1\ldots n\right)}\left(\bar{P}\bar{P}_{\operatorname{copy}}\right)^{\left(1\ldots n\right)}\left(LL_{\operatorname{copy}}\right)^{\left(1\ldots n\right)}\left(\bar{L}\bar{L}_{\operatorname{copy}}\right)^{\left(1\ldots n\right)}}\right] = 1&\\
            &\tr_{L\left(P_{\operatorname{copy}}L_{\operatorname{copy}}\right)^{\left(n\right)}}\left[\rho_{\left(PP_{\operatorname{copy}}\right)^{\left(1\ldots n\right)}\left(\bar{P}\bar{P}_{\operatorname{copy}}\right)^{\left(1\ldots n\right)}\left(LL_{\operatorname{copy}}\right)^{\left(1\ldots n\right)}\left(\bar{L}\bar{L}_{\operatorname{copy}}\right)^{\left(1\ldots n\right)}}\right] =& \\
            &\hspace{2cm}\frac{1_{P^{\left(n\right)}}}{d_{P^{\left(n\right)}}} \otimes \rho_{\left(PP_{\operatorname{copy}}\right)^{\left(1\ldots n-1\right)}\left(\bar{P}\bar{P}_{\operatorname{copy}}\right)^{\left(1\ldots n\right)}\left(LL_{\operatorname{copy}}\right)^{\left(1\ldots n-1\right)}\left(\bar{L}\bar{L}_{\operatorname{copy}}\right)^{\left(1\ldots n\right)}}&\\
            &\tr_{\bar{P}^{\left(n\right)}\left(\bar{P}_{\operatorname{copy}}\bar{L}_{\operatorname{copy}}\right)^{\left(n\right)}}\left[\rho_{\left(PP_{\operatorname{copy}}\right)^{\left(1\ldots n\right)}\left(\bar{P}\bar{P}_{\operatorname{copy}}\right)^{\left(1\ldots n\right)}\left(LL_{\operatorname{copy}}\right)^{\left(1\ldots n\right)}\left(\bar{L}\bar{L}_{\operatorname{copy}}\right)^{\left(1\ldots n\right)}}\right] =& \\
            &\hspace{2cm}\rho_{\left(PP_{\operatorname{copy}}\right)^{\left(1\ldots n\right)}\left(\bar{P}\bar{P}_{\operatorname{copy}}\right)^{\left(1\ldots n-1\right)}\left(LL_{\operatorname{copy}}\right)^{\left(1\ldots n\right)}\left(\bar{L}\bar{L}_{\operatorname{copy}}\right)^{\left(1\ldots n-1\right)}}\otimes \frac{1_{\bar{L}^{\left(n\right)}}}{d_{\bar{L}^{\left(n\right)}}}& \\
            &\text{Bose-symmetric of} \ \left(PP_{\operatorname{copy}}LL_{\operatorname{copy}}\right)^{\left(1\ldots n\right)} \ \text{w.r.t.} \ \left(\bar{P}\bar{P}_{\operatorname{copy}}\bar{L}\bar{L}_{\operatorname{copy}}\right)^{\left(1\ldots n\right)}& \\
            &\text{Bose-symmetric of} \ \left(\bar{P}\bar{P}_{\operatorname{copy}}\bar{L}\bar{L}_{\operatorname{copy}}\right)^{\left(1\ldots n\right)} \ \text{w.r.t.} \ \left(PP_{\operatorname{copy}}LL_{\operatorname{copy}}\right)^{\left(1\ldots n\right)} & \\
            &\rho_{\left(PP_{\operatorname{copy}}\right)^{\left(1\ldots n\right)}\left(\bar{P}\bar{P}_{\operatorname{copy}}\right)^{\left(1\ldots n\right)}\left(LL_{\operatorname{copy}}\right)^{\left(1\ldots n\right)}\left(\bar{L}\bar{L}_{\operatorname{copy}}\right)^{\left(1\ldots n\right)}} \geq 0& 
        \end{aligned}
\end{equation}
Given the local action of the unitary group $\mathcal{U}\left(\mathcal{H}_P\right)$ on $\mathcal{H}_P$, we consider the action 
\begin{align}
    U \mapsto U\otimes \bar{U} \quad \text{on} \ \mathcal{H}_P \otimes \mathcal{H}_{P_{\operatorname{copy}}}.
\end{align}
Similarly we define its action on $\mathcal{H}_{\bar{P}}\otimes \mathcal{H}_{\bar{P}_{\operatorname{copy}}}$ and the action of $\mathcal{U}\left(\mathcal{H}_L\right)$ on $\mathcal{H}_L \otimes \mathcal{H}_{L_{\operatorname{copy}}}$ and correspondingly on $\mathcal{H}_{\bar{L}} \otimes \mathcal{H}_{\bar{L}_{\operatorname{copy}}}$. Given this action, we can consider the diagonal action on the systems 
\begin{align}
    \left(PP_{\operatorname{copy}}\right)^{\left(1\ldots n\right)}\left(\bar{P}\bar{P}_{\operatorname{copy}}\right)^{\left(1\ldots n\right)} \quad \text{and}\quad \left(LL_{\operatorname{copy}}\right)^{\left(1\ldots n\right)}\left(\bar{L}\bar{L}_{\operatorname{copy}}\right)^{\left(1\ldots n\right)}.
\end{align}
With this we are able to state the main theorem of this section. 
\begin{theorem}[Bose-symmetric hierarchy for AQEC]\label{thm:bose_hierarchy_aqec}
    Given a depolarizing channel $\mathcal{N}_{P\to P}$, the channel fidelity can be approximated with the semidefinite program \autoref{eq:enlarge_symmetry_sdp}, where the state can be chosen out of $\mathcal{A}^\prime_{S_{n_A}\times S_{n_B} \times \mathcal{U}\left(\mathcal{H}_L\right)\times\mathcal{U}\left(\mathcal{H}_P\right)}$ and we can restrict to the irreducibles $(n_A)\vdash n_A$ and $(n_B)\vdash n_B$, i.e.\ the symmetric subspaces\footnote{We write $n_A= n_B=n$ in order to distinguish the left and right copies of $S_n$.}. Furthermore we have the following estimate 
    \begin{align}
        \operatorname{SDP}_n^\star - F(\mathcal{N}_{P\to P},d_L) \leq 4 d_L^4 d_P^4 \frac{\sqrt{4 \ln 2 \left(d_L^2d_P^2 -1\right)\,\log\left(n+d_L^2 d_P^2 \right)}}{n}.
    \end{align}
\end{theorem}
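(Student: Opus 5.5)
The plan has three parts: show that \autoref{eq:enlarge_symmetry_sdp} is a valid outer relaxation of $F(\mathcal{N}_{P\to P},d_L)$, justify the symmetry reduction, and then run the double-sided Bose-symmetric de Finetti argument of \autoref{thm:bose_de_Finetti} on the purified, doubled systems, checking that the linear constraints pass to the separable candidate.

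\textbf{Relaxation.} We use the standard bilinear reformulation of the channel fidelity from \cite{Berta2021}. In Choi form the fidelity is $d_L^{-2}$-normalized $\tr[(C_{PP}\otimes\Phi_{L\bar L})(J^{\mathcal{E}}\otimes J^{\mathcal{D}})]$. The encoder Choi state lives on $PL$ with marginal $\tau_L$, and the decoder Choi state lives on $\bar P\bar L$ with marginal $\tau_{\bar P}$. Hence $F$ is a $\operatorname{cSEP}$ problem with fixed-marginal constraints on both sides of the cut $PL:\bar P\bar L$.

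Given an optimal product pair, we purify each factor as in \cite[Lem.~5.7]{zeiss2025approximatingfixedsizequantum}, giving pure states on $PP_{\operatorname{copy}}LL_{\operatorname{copy}}$ and on $\bar P\bar P_{\operatorname{copy}}\bar L\bar L_{\operatorname{copy}}$. Their tensor powers $\psi^{\otimes n}\otimes\phi^{\otimes n}$ are Bose-symmetric on both sides and satisfy the two partial-trace constraints. They are therefore feasible for \autoref{eq:enlarge_symmetry_sdp} with the same objective, so $\operatorname{SDP}_n^\star\geq F$.

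\textbf{Symmetry reduction.} The depolarizing channel is covariant under $U\mapsto U\otimes\bar U$, and $\Phi_{L\bar L}$ is invariant under $U\otimes\bar U$. The marginal constraints are covariant under the diagonal actions, which commute with both copies of $S_n$. The set $\mathcal{W}$ in \autoref{eq:strong_joint_symmetry_def} is therefore a group, and \cite[Prop.~7.5~(d)]{kossmann2025_aqec} lets us twirl an optimizer into $\mathcal{A}'_{S_{n_A}\times S_{n_B}\times\mathcal{U}(\mathcal{H}_L)\times\mathcal{U}(\mathcal{H}_P)}$ without changing the value.

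The Bose-symmetry constraints say the support lies in $\operatorname{Sym}^{n}$ on both sides. Twirling preserves this because the unitaries act diagonally and commute with permutations. So only the blocks $\lambda=\mu=(n)$ survive, and the Kronecker coupling in \autoref{eq:S_n_coupling} collapses to the trivial $g_{(n)(n)(n)}=1$.

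\textbf{Convergence.} We take an optimizer $\rho$ and apply \autoref{thm:bose_de_Finetti} with $A\equiv PP_{\operatorname{copy}}LL_{\operatorname{copy}}$ and $B\equiv\bar P\bar P_{\operatorname{copy}}\bar L\bar L_{\operatorname{copy}}$, so that $d_A=d_B=d_L^2d_P^2$. This produces the candidate $\tilde\rho=\sum_{s,t}p(s,t)\rho_{A|s,t}\otimes\rho_{B|s,t}$ with
\[
\lVert\rho_{AB}-\tilde\rho_{AB}\rVert_1\leq 4d_L^4d_P^4\sqrt{4\ln2(d_L^2d_P^2-1)\log(n+d_L^2d_P^2)}/n.
\]

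Next we trace out the copy systems; by data processing the error does not increase. We then check that the partial-trace constraints are inherited by the conditional states, via the commuting-with-measurement argument of \autoref{cor:constrained_efficient_rounding}.

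This inheritance is the step I expect to be the main obstacle. In the double-sided setting the measurements act on $A_2^{m_A+1}$ and $B_2^{m_B+1}$, while each constraint traces out only the $n$-th factor of one side. One must therefore first use permutation invariance to move that constraint onto the retained factor $A_1$ (respectively $B_1$). The constraint on side $A$ has to hold conditionally on outcomes from both sides. This works because the identity $\tr_L\rho=\tau_P\otimes\rho_{\mathrm{rest}}$ on the whole register ensures that conditioning on measurements of other systems keeps the product form.

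Once inheritance is established, each $\rho_{PL|s,t}$ is a valid encoder Choi state and each $\rho_{\bar P\bar L|s,t}$ a valid decoder Choi state. The candidate's objective is therefore a convex combination of achievable fidelities and is at most $F$. Finally, Hölder's inequality with $\lVert C_{PP}\otimes\Phi_{L\bar L}\rVert_\infty\leq1$ (after normalization) transfers the trace-norm bound to the objective and gives the claimed estimate.
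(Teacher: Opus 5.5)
Your overall route is the paper's. You show feasibility of $\psi^{\otimes n}\otimes\phi^{\otimes n}$ for locally purified Choi states, perform joint symmetry reduction by diagonal unitary twirls commuting with both copies of $S_n$, collapse the Bose-symmetric blocks to $\lambda=\mu=(n)$, and apply \autoref{thm:bose_de_Finetti} with $d_A=d_B=d_L^2d_P^2$. Your explicit constraint inheritance is correct: you move each constraint to the retained factor by permutation invariance and then condition on outcomes from both sides. It fills in what the paper only invokes globally. One minor fix concerns the roles of the blocks. \autoref{eq:enlarge_symmetry_sdp} fixes the $P$-marginal of the unbarred block and the $\bar L$-marginal of the barred block. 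So $PL$ must carry the \emph{decoder} Choi state ($\mathcal D:P\to L$) and $\bar P\bar L$ the \emph{encoder} Choi state ($\mathcal E:\bar L\to\bar P$). With your assignment, $\psi^{\otimes n}$ violates the first constraint unless $\mathcal E(\tau_L)=\tau_P$. Your inheritance paragraph already uses the correct form $\tr_L\rho=\tau_P\otimes\rho_{\mathrm{rest}}$.

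The genuine gap is the final H\"older step with $\lVert C_{P\bar P}\otimes\Phi_{L\bar L}\rVert_\infty\le 1$ ``after normalization''. The normalization is not at your disposal. For $\operatorname{SDP}_n^\star-F$ to be meaningful, the objective must return the entanglement fidelity on product Choi states. Composing the Choi representations gives $F=d_P^2\tr[(\tau^{\mathcal N}_{P\bar P}\otimes\Phi_{L\bar L})(\sigma_{PL}\otimes\rho_{\bar P\bar L})]$ for normalized Choi states $\sigma_{PL}$ and $\rho_{\bar P\bar L}$, where $\tau^{\mathcal N}=(\operatorname{id}\otimes\mathcal N)(\Phi_{P\bar P})$ (the depolarizing channel is self-adjoint). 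Thus $C_{P\bar P}=d_P^2\tau^{\mathcal N}$. For $\mathcal N(X)=(1-p)X+p\tr[X]\,1_P/d_P$ with $0\le p\le 1$ this gives $\lVert C_{P\bar P}\otimes\Phi_{L\bar L}\rVert_\infty=(1-p)d_P^2+p$.

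The factor also cannot be removed by exploiting the marginal constraints. Take $p=0$ and the operator $\Delta=(\Phi_{P\bar P}-\tau_P\otimes\tau_{\bar P})\otimes(\Phi_{L\bar L}-\tau_L\otimes\tau_{\bar L})$. It satisfies $\tr_L\Delta=0=\tr_{\bar P}\Delta$, so $\tau_P\otimes\tau_L\otimes\tau_{\bar P}\otimes\tau_{\bar L}\pm\epsilon\Delta$ are, for small $\epsilon>0$, two states obeying both constraints. Yet $\tr[(C_{P\bar P}\otimes\Phi_{L\bar L})\Delta]=\tfrac{d_P^2}{4}\lVert\Delta\rVert_1$. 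Hence the trace-norm closeness from \autoref{thm:bose_de_Finetti} only yields the stated estimate times a factor of order $(1-p)d_P^2+p$. Reaching the constant as stated would need an additional argument. The paper's one-line proof (``apply \autoref{thm:bose_de_Finetti}'') passes over this point as well, so the discrepancy lies in the constant of the statement rather than in your choice of route.
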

\begin{proof}
    That any feasible point for the channel fidelity optimization problem \cite[Eq.\ (62)]{kossmann2025_aqec} has a double-sided $n$-extension such as in \autoref{eq:enlarge_symmetry_sdp} follows immediately from \cite[Lem.~5.7]{zeiss2025approximatingfixedsizequantum} by local purification. We introduce the diagonal action of the unitary group $\mathcal{U}\left(\mathcal{H}_P\right)$ on \autoref{eq:subsystem_structure_purified}
    \begin{align}\label{eq:diagonal_action_twirl_aqec}
         \int_{\mathcal{U}\left(\mathcal{H}_P\right)} d\mu_{\mathcal{U}\left(\mathcal{H}_P\right)}(U) \left(U \otimes \bar{U}\right)^{\otimes 2n} \, \cdot \, \left(\left(U \otimes \bar{U}\right)^{\otimes 2n}\right)^\dagger
    \end{align}
    and similarly for $\mathcal{U}\left(\mathcal{H}_L\right)$. It is easy to verify that this action commutes with the two symmetric groups as it is a diagonal action and furthermore the maximally entangled state is an isotropic state, thus invariant under the action yielding that the proof in \cite[Lem.~5.7]{zeiss2025approximatingfixedsizequantum} is compatible with the joint symmetry reduction framework \autoref{eq:strong_joint_symmetry_def} and \cite[Prop.~7.5.~(d)]{kossmann2025_aqec} such that we conclude to find a solution in $\mathcal{A}^\prime_{S_{n_A}\times S_{n_B} \times \mathcal{U}\left(\mathcal{H}_L\right)\times\mathcal{U}\left(\mathcal{H}_P\right)}$. By assumption the optimization problem \autoref{eq:enlarge_symmetry_sdp} is Bose-symmetric. For the bound we apply \autoref{thm:bose_de_Finetti} with the local dimensions $d_A = d_B = d_L^2 \, d_P^2$ of the purified blocks $\left(PP_{\operatorname{copy}}LL_{\operatorname{copy}}\right)^{\left(i\right)}$ and $\left(\bar{P}\bar{P}_{\operatorname{copy}}\bar{L}\bar{L}_{\operatorname{copy}}\right)^{\left(i\right)}$.
\end{proof}

\section{Conclusion and outlook}\label{sec:conclusion_outlook}

In this work we have developed a structural theory of de Finetti hierarchies under fixed point constraints. The guiding perspective is that the symmetries entering such hierarchies --- permutation invariance itself, invariance under local actions of compact groups, and, most generally, invariance under duals of conditional expectations --- all manifest as fixed point algebras of quantum channels, and that the Wedderburn block structure of these algebras is precisely the data governing the information-theoretic quantities on which de Finetti-type arguments rely. On the technical level, the entanglement-assisted capacity bound of \autoref{thm:upper_bound_entanglement_assisted_classical_capacity}, its specialization to group averages in \autoref{cor:entanglement_assisted_classcial_capacity_group_averages}, the blockwise distortion estimates for adapted informationally complete measurements in \autoref{prop:distortion_conditional_expectation} and \autoref{cor:distortion_correlation_postmeasurement}, and the exact type-based chain rule of \autoref{prop:chain_rule} combine into a modular toolbox. From it we derived the double-sided extension theorem \autoref{thm:double_extended_de_Finetti} with $O\left(\sqrt{\log n}/n\right)$ convergence, the interpolation theorem \autoref{thm:interpolation_deFinetti}, whose dimension dependence is governed solely by the block data of the involved fixed point algebras, the proper de Finetti theorem \autoref{thm:proper_deFinetti} for $k$ remaining systems, and the Bose-symmetric variant \autoref{thm:bose_de_Finetti}. Complementing these approximation statements, \autoref{sec:efficient_inner_sequence} established that the rounding scheme behind the inner sequence can be implemented in time polynomial in the hierarchy level for fixed local dimensions, and \autoref{sec:applications} translated the abstract results into convergence guarantees for symmetry-reduced bilinear optimization and into a Bose-symmetric hierarchy for approximate quantum error correction that avoids the Kronecker-coefficient couplings of previous approaches.

Several directions remain open. First, the efficiency results of \autoref{sec:efficient_inner_sequence} exploit Schur-Weyl duality for the one-sided action of the symmetric group; extending the algorithmic toolbox to mixed Schur-Weyl duality \cite{grinko2023gelfandtsetlinbasispartiallytransposed, grinko2025mixed} should render the double-sided hierarchies of \autoref{thm:double_extended_de_Finetti} computationally accessible in the same spirit. Second, in many applications the relevant operators have small rank $r \ll d$ compared to the local dimension, and we expect that the memory- and gate-efficient techniques of \cite{cerveromartin2024memorygateefficientalgorithm} can be combined with our rounding scheme to obtain further efficiency improvements. Third, while \autoref{thm:double_extended_de_Finetti} shows that the optimal $O\left(1/n\right)$-type behavior, up to logarithmic factors, survives double-sided extensions, it remains open whether the $O\left(1/\sqrt{N}\right)$ rate of \autoref{thm:proper_deFinetti} and the dependence of our bounds on the number of unmeasured systems are optimal; lower bounds for de Finetti theorems under fixed point constraints are, to the best of our knowledge, largely unexplored. Finally, it would be interesting to carry the fixed point perspective beyond quantum state spaces, e.g.\ to de Finetti theorems for general probabilistic theories and abstract cones \cite{zeiss2026finitefinetticonvexbodies, Plvala2025, Aubrun2024}, where duals of conditional expectations have to be replaced by suitable positive projections.

\section*{Acknowledgements}

JZ thanks Tobias Rippchen, Steven Kim and Nikolaos Louloudis for numerous discussions. JZ thanks Dmitry Grinko for discussions on \autoref{sec:efficient_inner_sequence}. GK and JZ acknowledge support from the Excellence Cluster - Matter and Light for Quantum Computing (ML4Q-2) and by the European Research Council (ERC Grant Agreement No. 948139). The authors acknowledge the use of \cite{claude}.

\bibliography{references_new}
\bibliographystyle{ultimate}
\appendix

\section{Recap on fixed points of quantum channels}

We recall a central theorem on fixed points for quantum channels, which can be found e.g.\ in \cite{Hayden_2004,Petz2008QITQS} or the recent work \cite{Carlen2025}. 
\begin{proposition}\label{thm:ergodic_projection_condexp}
    Given a finite dimensional Hilbert space $\mathcal{H}$ and a unital and  $2$-positive map 
    \begin{align}
        \Phi:\mathcal{B}\left(\mathcal{H}\right) \to \mathcal{B}\left(\mathcal{H}\right), 
    \end{align}
    let $\Phi^*$ denote the Hilbert-Schmidt dual of $\Phi$ (i.e.\ $\tr\left[\Phi^*\left(\rho\right)a\right]=\tr\left[\rho \Phi\left(a\right)\right]$).
    Then we have
    \begin{enumerate}
        \item There exists a state $\hat{\rho}\in\mathcal{S}\left(\mathcal{H}\right)$ such that $\Phi^*\left(\hat{\rho}\right)=\hat{\rho}$.
        For any such choice of $\hat{\rho}$, the ergodic means
        \begin{align}
            E_N \coloneqq \frac{1}{N} \sum_{t=0}^{N-1} \Phi^{\circ t}
        \end{align}
        converge, as $N\to\infty$, in operator norm on $\mathcal{H}_{\hat{\rho}}$ to the \emph{orthogonal projection}
        (with respect to $\langle \cdot,\cdot\rangle_{\hat{\rho}}$) onto the subspace of fixed points
        $\operatorname{Fix}\left(\Phi\right)\coloneqq \ker\left(\Phi-\operatorname{id}\right)$.

        \item If $\Phi^*$ has a full-rank fixed point $\hat{\rho}>0$, then $\operatorname{Fix}\left(\Phi\right)$ is a unital $*$-subalgebra of
        $\mathcal{B}\left(\mathcal{H}\right)$ and the limit map
        \begin{align}
             E:\mathcal{B}\left(\mathcal{H}\right) \to \operatorname{Fix}\left(\Phi\right)
        \end{align}
        is a conditional expectation.
    \end{enumerate}
\end{proposition}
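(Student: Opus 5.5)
The statement is the appendix recap (\autoref{thm:ergodic_projection_condexp}), and I would prove its two parts in order, working throughout in finite dimensions.

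\emph{Part (1).} For the existence of $\hat\rho$ I would use Brouwer (or Markov--Kakutani). The map $\Phi^*$ is trace preserving and positive, so it maps the compact convex set $\mathcal{S}(\mathcal{H})$ into itself. Alternatively, any cluster point of the Cesàro averages $\frac1N\sum_t (\Phi^*)^{t}(\tau)$ is a fixed state.

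Given such $\hat\rho$, I would equip $\mathcal{B}(\mathcal{H})$ with the semi-inner product $\langle a,b\rangle_{\hat\rho}=\tr[\hat\rho\, a^\dagger b]$. Kadison--Schwarz, which needs 2-positivity and unitality, gives $\Phi(a)^\dagger\Phi(a)\le\Phi(a^\dagger a)$. Combining this with $\Phi^*(\hat\rho)=\hat\rho$ yields $\|\Phi(a)\|_{\hat\rho}\le\|a\|_{\hat\rho}$, so $\Phi$ is a contraction on the (quotient) Hilbert space $\mathcal{H}_{\hat\rho}$. The von Neumann mean ergodic theorem then applies: for a contraction $T$ one has $\operatorname{Fix}(T)=\operatorname{Fix}(T^\sharp)$, and $\mathcal{H}_{\hat\rho}=\operatorname{Fix}(T)\oplus\overline{\operatorname{ran}(1-T)}$. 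The Cesàro means tend to $0$ on $\operatorname{ran}(1-T)$ by telescoping and are the identity on $\operatorname{Fix}(T)$. Since the space is finite dimensional, the range is closed and strong convergence is norm convergence, giving the orthogonal projection.

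\emph{Part (2).} Now take $\hat\rho>0$, so $\langle\cdot,\cdot\rangle_{\hat\rho}$ is a genuine inner product and $E$ is a unital idempotent onto $\operatorname{Fix}(\Phi)$. It is also positive, being a limit of positive maps, and it is 2-positive.

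\textbf{Step 1: $\operatorname{Fix}(\Phi)$ is $*$-closed.} This holds because $\Phi$ is Hermiticity preserving.

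\textbf{Step 2: $\operatorname{Fix}(\Phi)$ is closed under multiplication.} This is the main obstacle, and I would handle it with the multiplicative domain argument. For $a\in\operatorname{Fix}(\Phi)$ the operator $X=\Phi(a^\dagger a)-a^\dagger a$ is positive by Kadison--Schwarz. Also $\tr[\hat\rho X]=\tr[\Phi^*(\hat\rho)a^\dagger a]-\tr[\hat\rho a^\dagger a]=0$. Faithfulness of $\hat\rho$ then forces $X=0$, so $a$ lies in the multiplicative domain of $\Phi$. Choi's theorem then gives $\Phi(ab)=\Phi(a)\Phi(b)=ab$ for every $b$ in the fixed point set, so $\operatorname{Fix}(\Phi)$ is a unital $*$-subalgebra.

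\textbf{Step 3: $E$ is a conditional expectation.} By Tomiyama it suffices to show $\|E\|=1$. A positive unital map has norm $\|E(1)\|=1$ by Russo--Dye, and $E$ is a projection onto the subalgebra. Alternatively, I would verify the bimodule property directly, as follows. First, $E$ is 2-positive and unital with $E^*(\hat\rho)=\hat\rho$, so the same multiplicative domain argument applies to $E$ and shows that $\operatorname{Fix}(\Phi)=\operatorname{ran}E$ lies in its multiplicative domain. Choi's theorem then gives $E(bac)=E(b)E(a)E(c)=bE(a)c$ for $b,c$ in the fixed point algebra.
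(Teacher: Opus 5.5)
Your proof is correct. Part~(2), Steps~1--2, is the paper's multiplicative-domain argument. The paper checks both $\Phi(a^*a)=a^*a$ and $\Phi(aa^*)=aa^*$, whereas you state only the first. Applying it to $a^\dagger\in\operatorname{Fix}(\Phi)$ from Step~1 gives the second, and the one-sided Choi identity $\Phi(xa)=\Phi(x)a$ already suffices for closure under products.

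The genuine differences are in the convergence argument and in the bimodule step.

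\textbf{Convergence in (1).} The paper avoids adjoints. It uses $\Phi\circ E_N-E_N=\frac1N(\Phi^{\circ N}-\operatorname{id})\to 0$ and norm-compactness of $\overline{\operatorname{co}}\{\Phi^{\circ t}\}$ to extract a limit point $F$. It then shows $\operatorname{ran}F=\operatorname{Fix}(\Phi)$, $F^2=F$ and $\lVert F\rVert\le 1$, so $F$ is the orthogonal projection. Convergence of the whole sequence then follows because every subsequential limit is this same projection, a step the paper leaves implicit. Your von Neumann splitting $\mathcal{H}_{\hat\rho}=\operatorname{Fix}(T)\oplus\overline{\operatorname{ran}(1-T)}$, via $\operatorname{Fix}(T)=\operatorname{Fix}(T^\sharp)$ for contractions, gives convergence of the full sequence directly. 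It even gives an explicit $O(1/N)$ rate, since $1-T$ is invertible on the closed subspace $\operatorname{ran}(1-T)$ in finite dimensions.

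\textbf{Bimodule property in (2).} The paper does not use Tomiyama here. It iterates $\Phi^{\circ t}(bx)=b\,\Phi^{\circ t}(x)$, averages to get $E_N(bx)=bE_N(x)$, and passes to the limit; Tomiyama enters only afterwards, for complete positivity. Your main route is shorter but leans on deeper theorems: Russo--Dye gives $\lVert E\rVert=\lVert E(1)\rVert=1$, and Tomiyama then applies to a norm-one projection onto the $C^*$-subalgebra $\operatorname{Fix}(\Phi)$. Your alternative via Choi's theorem applied to $E$ is as elementary as the paper's, and simpler than you present it. Once $\operatorname{Fix}(\Phi)$ is known to be an algebra, $E(b^\dagger b)=b^\dagger b=E(b)^\dagger E(b)$ holds trivially for $b\in\operatorname{ran}E$. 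So no faithfulness argument for $E$, and no invariance $E^*(\hat\rho)=\hat\rho$, is needed.
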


\begin{proof}
Given the $2$-positive and unital map $\Phi$, we find its unique dual map
\begin{align}
    \tr\left[\Phi^*\left(\rho\right)a\right] = \tr\left[\rho \Phi\left(a\right)\right], \quad a \in \mathcal{B}\left(\mathcal{H}\right), \quad \rho \in \mathcal{B}\left(\mathcal{H}\right).
\end{align}
It is well-known that $\Phi^*$ is $2$-positive again and trace-preserving. By finite dimensionality this implies that
\begin{align}
    \mathcal{S}\left(\mathcal{H}\right) \coloneqq \{\rho \in \mathcal{B}\left(\mathcal{H}\right) \ \vert \ \tr\left[\rho\right] = 1, \ \rho\geq 0\}
\end{align}
is convex and compact and $\Phi^*\left(\mathcal{S}\left(\mathcal{H}\right)\right)\subseteq \mathcal{S}\left(\mathcal{H}\right)$. By Schauder's fixed point theorem \cite[Thm.~5.2.5]{Drbek2013} there exists a state $\hat{\rho} \in \mathcal{S}\left(\mathcal{H}\right)$ such that $\Phi^*\left(\hat{\rho}\right) = \hat{\rho}$. We consider the GNS representation with respect to $\hat{\rho} \in \mathcal{S}\left(\mathcal{H}\right)$, defined as
\begin{align}
    \langle a, b\rangle_{\hat{\rho}} \coloneqq \tr\left[\hat{\rho}  a^* b\right], \quad a,b\in \mathcal{B}\left(\mathcal{H}\right).
\end{align}
If $\hat{\rho} \in \mathcal{S}\left(\mathcal{H}\right)$ is not faithful we consider the quotient
\begin{align}
    \mathcal{H}_{\hat{\rho}} \coloneqq \mathcal{B}\left(\mathcal{H}\right)/N_{\hat{\rho}},
\end{align}
where $N_{\hat{\rho}} \coloneqq \{a \in \mathcal{B}\left(\mathcal{H}\right) \ \vert \ \lVert a\rVert_{\hat{\rho}} = 0\}$ and $\lVert a\rVert_{\hat{\rho}}^2 \coloneqq \tr\left[\hat{\rho}  a^* a\right]$.

In $\mathcal{H}_{\hat{\rho}}$ we have $\lVert \Phi\rVert_{\mathcal{H}_{\hat{\rho}}\to \mathcal{H}_{\hat{\rho}}} \leq 1$, because we apply the Schwarz inequality \cite[Thm.\ 5.3]{Wolf2012QuantumChannelsGuidedTour}
\begin{equation}
\begin{aligned}
    \lVert \Phi\left(a\right)\rVert^2_{\hat{\rho}} &= \tr\left[\hat{\rho}  \Phi\left(a\right)^* \Phi\left(a\right)\right] \\
    &\leq \tr\left[\hat{\rho}  \Phi\left(a^* a\right)\right] \\
    &= \tr\left[\Phi^*\left(\hat{\rho}\right)  a^* a\right]\\
    &= \tr\left[\hat{\rho}  a^* a\right] = \lVert a\rVert_{\hat{\rho}}^2.
\end{aligned}
\end{equation}
Thus $\Phi$ is a contraction on the Hilbert space $\mathcal{H}_{\hat{\rho}}$ and $\lVert \Phi^{\circ t} \rVert_{\mathcal{H}_{\hat{\rho}}\to \mathcal{H}_{\hat{\rho}}} \leq 1$. Now we consider the family
\begin{align}
    \mathcal{U} \coloneqq \{\Phi^{\circ t} \ \vert \ t \in \mathbb{N}_0\} \subseteq \mathcal{B}\left(\mathcal{H}_{\hat{\rho}}\right).
\end{align}
Then we have
\begin{enumerate}
    \item $\lVert U \rVert \leq 1$ for all $U \in \mathcal{U}$,
    \item $U_1 U_2 \in \mathcal{U}$ for all $U_1,U_2 \in \mathcal{U}$.
\end{enumerate}
Let $E$ be the orthogonal projection on the subspace of $\mathcal{H}_{\hat{\rho}}$ defined as
\begin{align}
    E\mathcal{H}_{\hat{\rho}} = \{\phi \in \mathcal{H}_{\hat{\rho}} \ \vert \ U\phi = \phi \quad \text{for all} \ U \in \mathcal{U}\}.
\end{align}
As $\Phi \in \mathcal{U}$, the fixed space above coincides with the $\Phi$-fixed space, i.e.,
\begin{align}
    E\mathcal{H}_{\hat{\rho}} = \ker\left(\Phi-\operatorname{id}\right) = \operatorname{Fix}\left(\Phi\right).
\end{align}

As in particular
\begin{align}
    E_N \coloneqq \frac{1}{N} \sum_{t=0}^{N-1} \Phi^{\circ t} \in \operatorname{co}\left(\mathcal{U}\right)
    \quad \text{and} \quad
    \lVert E_N\rVert \leq 1,
\end{align}
we can compute
\begin{align}\label{eq:fixed point_eq_proof_appendix}
    \Phi \circ E_N - E_N
    = \frac{1}{N}\sum_{t=0}^{N-1} \left(\Phi^{\circ \left(t+1\right)}-\Phi^{\circ t} \right)
    = \frac{1}{N}\left(\Phi^{\circ N}-\operatorname{id}\right).
\end{align}
Thus,
\begin{align}
    \lVert \Phi \circ E_N - E_N\rVert \leq \frac{1}{N}\lVert\Phi^{\circ N}-\operatorname{id}\rVert \to 0
    \quad \text{as} \quad N\to \infty.
\end{align}

Since $\mathcal{H}_{\hat{\rho}}$ is finite-dimensional, the unit ball in $\mathcal{B}\left(\mathcal{H}_{\hat{\rho}}\right)$ is compact in operator norm. Moreover, $\operatorname{co}\left(\mathcal{U}\right)$ is bounded (as $\lVert U\rVert \leq 1$ for all $U\in\mathcal{U}$), hence its norm-closure $\overline{\operatorname{co}\left(\mathcal{U}\right)}$ is a closed subset of a compact set and therefore compact. Consequently, there exists a subsequence $\left(E_{N_k}\right)$ converging in operator norm to some
\begin{align}
    F \in \overline{\operatorname{co}\left(\mathcal{U}\right)}.
\end{align}
By \autoref{eq:fixed point_eq_proof_appendix} and continuity of composition in operator norm, $F$ satisfies
\begin{align}
    \Phi \circ F = F.
\end{align}
Given this we have
\begin{enumerate}
    \item $\Phi\left(Fx\right) = Fx$, thus $\operatorname{ran}\left(F\right) \subseteq \operatorname{Fix}\left(\Phi\right)$.
    \item If $y \in \operatorname{Fix}\left(\Phi\right)$, then $E_Ny = y$ for all $N \in \mathbb{N}$, hence $F y = y$ by $E_{N_k}\to F$, so $F\vert_{\operatorname{Fix}\left(\Phi\right)} = \operatorname{id}$, implying $\operatorname{Fix}\left(\Phi\right)\subseteq \operatorname{ran}\left(F\right)$.
\end{enumerate}
Together we have $\operatorname{Fix}\left(\Phi\right) = \operatorname{ran}\left(F\right)$. Moreover, since $\lVert E_N\rVert \leq 1$ for all $N$, we obtain $\lVert F\rVert \leq 1$. In particular, from $F|_{\operatorname{Fix}\left(\Phi\right)}=\operatorname{id}$ it follows that $F^2=F$, so $F$ is a projection onto $\operatorname{Fix}\left(\Phi\right)$. By the standard lemma that any projection $P$ on a Hilbert space with $\lVert P\rVert \leq 1$ is the orthogonal projection onto its range, we conclude that $F$ is the orthogonal projection onto $\operatorname{Fix}\left(\Phi\right)$, hence $F=E$. This proves the first statement.

For the second statement assume that $\Phi^*$ has a \emph{full-rank} fixed point $\hat{\rho}>0$.
Then $N_{\hat{\rho}}=\{0\}$ and we can identify $\mathcal{H}_{\hat{\rho}}=\mathcal{B}\left(\mathcal{H}\right)$ as a Hilbert space
with inner product $\langle a,b\rangle_{\hat{\rho}}=\tr\left[\hat{\rho} a^* b\right]$.

By positivity of $\Phi$ we have $\Phi\left(a\right)^*=\Phi\left(a^*\right)$ for all $a\in\mathcal{B}\left(\mathcal{H}\right)$.
Thus, $\operatorname{Fix}\left(\Phi\right)$ is $*$-closed and contains the identity, because $\Phi$ is unital.

Let $a\in\operatorname{Fix}\left(\Phi\right)$. By the Schwarz inequality for unital $2$-positive maps,
\begin{align}
    \Phi\left(a\right)^* \Phi\left(a\right) \leq \Phi\left(a^* a\right).
\end{align}
Since $\Phi\left(a\right)=a$, we obtain $a^* a \leq \Phi\left(a^* a\right)$, hence $X\coloneqq \Phi\left(a^* a\right)-a^* a\geq0$.
Applying $\Phi^*\left(\hat{\rho}\right)=\hat{\rho}$ yields
\begin{align}
    \tr\left[\hat{\rho} X\right]
    = \tr\left[\hat{\rho} \Phi\left(a^* a\right)\right]-\tr\left[\hat{\rho} a^* a\right]
    = \tr\left[\Phi^*\left(\hat{\rho}\right) a^* a\right]-\tr\left[\hat{\rho} a^* a\right]
    = 0.
\end{align}
As $\hat{\rho}>0$ is full rank, the functional $X\mapsto \tr\left[\hat{\rho}X\right]$ is faithful on positive operators, so
$X\geq0$ and $\tr\left[\hat{\rho}X\right]=0$ implies $X=0$. Therefore,
\begin{align}\label{eq:md_eq1}
    \Phi\left(a^* a\right)=a^* a.
\end{align}
Applying the same argument to $a^*\in\operatorname{Fix}\left(\Phi\right)$ gives
\begin{align}\label{eq:md_eq2}
    \Phi\left(a a^*\right)=a a^*.
\end{align}
Equations \autoref{eq:md_eq1} -- \autoref{eq:md_eq2} mean that $a$ lies in the multiplicative domain of $\Phi$ (see e.g.\
\cite[Lem.~9.1]{Petz2008QITQS}), hence for all $x\in\mathcal{B}\left(\mathcal{H}\right)$,
\begin{align}\label{eq:md_property}
    \Phi\left(ax\right)=\Phi\left(a\right)\Phi\left(x\right)=a \Phi\left(x\right),
    \qquad
    \Phi\left(xa\right)=\Phi\left(x\right)\Phi\left(a\right)=\Phi\left(x\right) a.
\end{align}
In particular, if $a,b\in\operatorname{Fix}\left(\Phi\right)$ then $\Phi\left(ab\right)=a\Phi\left(b\right)=ab$, so $ab\in\operatorname{Fix}\left(\Phi\right)$.
Thus $\operatorname{Fix}\left(\Phi\right)$ is closed under multiplication, contains $1$, and is $*$-closed, i.e.\ it is a unital
$*$-subalgebra of $\mathcal{B}\left(\mathcal{H}\right)$.

Furthermore, $E$ is unital and positive as a norm-limit of the unital and positive maps $E_N$. 
Moreover, $E|_{\operatorname{Fix}\left(\Phi\right)}=\operatorname{id}$ and $\operatorname{ran}\left(E\right)=\operatorname{Fix}\left(\Phi\right)$ by the first part.

Let $b\in\operatorname{Fix}\left(\Phi\right)$. From \autoref{eq:md_property} we obtain by iteration that for all $t\in\mathbb{N}_0$ and all $x$,
\begin{align}
    \Phi^{\circ t}\left(bx\right)=b \Phi^{\circ t}\left(x\right),
    \qquad
    \Phi^{\circ t}\left(xb\right)=\Phi^{\circ t}\left(x\right) b.
\end{align}
Averaging these identities gives for all $N$,
\begin{align}
    E_N\left(bx\right)=b E_N\left(x\right),
    \qquad
    E_N\left(xb\right)=E_N\left(x\right) b.
\end{align}
Passing to the limit $N\to\infty$ yields
\begin{align}
    E\left(bx\right)=b E\left(x\right),
    \qquad
    E\left(xb\right)=E\left(x\right) b.
\end{align}
Therefore, $E$ satisfies the bimodule property $E\left(b x c\right) = b\, E\left(x\right)\, c$ for all $b,c\in\operatorname{Fix}\left(\Phi\right)$ and all $x\in\mathcal{B}\left(\mathcal{H}\right)$, i.e.\ $E$ is a conditional expectation onto the unital $*$-subalgebra $\operatorname{Fix}\left(\Phi\right)$. By Tomiyama's theorem, $E$ is then already completely positive (cf. e.g. \cite[Sec.~9.1]{Petz2008QITQS}).
\end{proof}

\begin{lemma}\label{lemma:conditioning_mutual_information}
Let $\rho_{AZ_1^n} \in \mathcal{S}\left(\mathcal{H}_A \otimes \mathcal{H}_Z^{\otimes n}\right)$ be a classical-quantum state with $Z_1^n$-systems classical. Then we have
\begin{align}\label{eq:cor_condtiional_mutual_information} 
    I\left(A:Z_{m+1}\vert Z_1^m\right)_{\rho_{AZ_1^n}} = \sum_{z_1^m} p\left(z_1^m\right) I\left(A:Z_{m+1}\right)_{\rho_{AZ_{m+1}\vert z_1^m}}.
\end{align}
\end{lemma}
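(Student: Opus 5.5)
The plan is to expand both sides via the definition of conditional mutual information,
\begin{align}
I\left(A:Z_{m+1}\vert Z_1^m\right) = H\left(A\vert Z_1^m\right) + H\left(Z_{m+1}\vert Z_1^m\right) - H\left(AZ_{m+1}\vert Z_1^m\right),
\end{align}
and then to reduce each conditional entropy with classical conditioning to an average of unconditional entropies. The first step is to write the relevant marginal in block-diagonal form. Tracing out $Z_{m+2}^n$ gives
\begin{align}
\rho_{AZ_1^{m+1}} = \sum_{z_1^m} p\left(z_1^m\right) \rho_{AZ_{m+1}\vert z_1^m}\otimes \ketbra{z_1^m}{z_1^m}.
\end{align}
Here $\rho_{AZ_{m+1}\vert z_1^m}$ is itself classical on $Z_{m+1}$, and the terms with $p\left(z_1^m\right)=0$ are dropped.

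The key identity is that, for any system $X$ among $A$, $Z_{m+1}$ and $AZ_{m+1}$,
\begin{align}
H\left(X Z_1^m\right) = H\left(p\right) + \sum_{z_1^m} p\left(z_1^m\right) H\left(\rho_{X\vert z_1^m}\right).
\end{align}
This holds because the state is an orthogonal direct sum of blocks $p\left(z_1^m\right)\rho_{X\vert z_1^m}$. Its spectrum is therefore the union of the rescaled spectra, and $-\sum \lambda \log\lambda$ splits into the Shannon term and the weighted block entropies. Taking $X$ trivial gives $H\left(Z_1^m\right)=H\left(p\right)$. Subtracting, we obtain $H\left(X\vert Z_1^m\right) = \sum_{z_1^m} p\left(z_1^m\right) H\left(X\right)_{\rho_{\cdot\vert z_1^m}}$.

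Inserting this for $X=A$, $X=Z_{m+1}$ and $X=AZ_{m+1}$ and using linearity of the average, the right-hand side becomes $\sum_{z_1^m} p\left(z_1^m\right)\left[H\left(A\right)+H\left(Z_{m+1}\right)-H\left(AZ_{m+1}\right)\right]_{\rho_{AZ_{m+1}\vert z_1^m}}$. This equals $\sum_{z_1^m} p\left(z_1^m\right) I\left(A:Z_{m+1}\right)_{\rho_{AZ_{m+1}\vert z_1^m}}$ by the identity $I = H+H-H$ recalled in \autoref{sec:notation_preliminaries}, which proves \autoref{eq:cor_condtiional_mutual_information}.

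There is no genuine obstacle; the main care point is the entropy decomposition of block-diagonal states. This step must correctly handle zero-probability outcomes and the marginalization, namely that the conditional marginals $\rho_{A\vert z_1^m}$ and $\rho_{Z_{m+1}\vert z_1^m}$ are partial traces of $\rho_{AZ_{m+1}\vert z_1^m}$. Everything else is bookkeeping.
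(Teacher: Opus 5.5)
Your proposal is correct and follows essentially the same route as the paper: expand the conditional mutual information into three conditional entropies, reduce each one to a $p\left(z_1^m\right)$-weighted average via the block-diagonal structure of the classical-quantum state, and recombine with $I = H + H - H$. The only difference is that you justify the block-diagonal entropy decomposition directly through the spectrum of the orthogonal direct sum, whereas the paper simply cites it from \cite{Wilde2016}.
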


\begin{proof}
Since the $Z$-systems are classical, we can write $\rho_{AZ_1^{m+1}} = \sum_{z_1^{m+1}} p\left(z_1^{m+1}\right) \rho_{A\vert z_1^{m+1}} \otimes \vert z_1^{m+1}\rangle\langle z_1^{m+1}\vert$. For such classical-quantum states, conditioning on the classical register $Z_1^m$ averages block-diagonally, i.e.\ for every choice of subsystems $S \in \lbrace A,\, Z_{m+1},\, AZ_{m+1}\rbrace$ we have
\begin{align}
    H\left(S\vert Z_1^m\right)_{\rho} = \sum_{z_1^m} p\left(z_1^m\right) H\left(S\right)_{\rho_{AZ_{m+1}\vert z_1^m}},
\end{align}
cf.\ e.g.\ \cite{Wilde2016}. Substituting this identity term by term into
\begin{align}
    I\left(A:Z_{m+1}\vert Z_1^m\right)_{\rho} = H\left(A\vert Z_1^m\right)_{\rho} + H\left(Z_{m+1}\vert Z_1^m\right)_{\rho} - H\left(AZ_{m+1}\vert Z_1^m\right)_{\rho}
\end{align}
yields
\begin{align}
\begin{split}
     I\left(A:Z_{m+1}\vert Z_1^m\right)_{\rho} &= \sum_{z_1^m} p\left(z_1^m\right) \left[H\left(A\right) + H\left(Z_{m+1}\right) - H\left(AZ_{m+1}\right)\right]_{\rho_{AZ_{m+1}\vert z_1^m}} \\
     &= \sum_{z_1^m} p\left(z_1^m\right) I\left(A:Z_{m+1}\right)_{\rho_{AZ_{m+1}\vert z_1^m}}.
\end{split}
\end{align}
\end{proof}

\subsection{Calculation of the adjoint of conditional expectations}\label{appendix:adjoints_conditional_expectations}

\begin{lemma}[\cite{Wolf2012QuantumChannelsGuidedTour}]\label{lem:calculation_adjoint_appendix}
      Let $\mathcal{B} \subseteq \mathcal{A} \cong M_n\left(\mathbb{C}\right)$ be a unital $*$-subalgebra, $\{V_i:\mathcal{H}\to \mathcal{H}_{i,1}\otimes\mathcal{H}_{i,2}\}_{i=1}^k$ a family of mutually orthogonal isometries and $\rho_i \in \mathcal{B}\left(\mathcal{H}_{i,2}\right)$ density operators. Then the Hilbert-Schmidt dual of 
    \begin{align}
        E\left(a\right) = \sum_{i=1}^k V_i^\dagger  \left( \tr_{i,2}  \left[V_i aV_i^\dagger \ 1_{d_i}\otimes \rho_i \right]\otimes 1_{m_i} \right)V_i, \quad a \in \mathcal{A},
    \end{align}
    is given by
    \begin{align}
        E^*\left(\omega\right) = \sum_{i=1}^k V_i^\dagger \left( \tr_{i,2}\left[V_i \omega V_i^\dagger\right]\otimes \rho_i \right)V_i, \quad \omega \in \mathcal{S}\left(\mathcal{H}\right).
    \end{align}
\end{lemma}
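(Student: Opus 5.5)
The statement to prove is \autoref{lem:calculation_adjoint_appendix}, the computation of the Hilbert--Schmidt dual of $E$. The plan is a direct verification of the defining identity $\tr\left[E^*\left(\omega\right)a\right] = \tr\left[\omega E\left(a\right)\right]$ for all $\omega, a \in \mathcal{A}$, with the candidate $E^*$ from \autoref{eq:lem_adjoints}. The HS dual is unique, so it suffices to establish this identity for arbitrary $\omega \in \mathcal{B}\left(\mathcal{H}\right)$; the restriction to states in the statement is then harmless. By linearity in the sum over blocks, I will fix a single index $i$ and compare the $i$-th summands on both sides.

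For the right-hand side I will use cyclicity of the full trace to move $V_i^\dagger$ and $V_i$ onto $\omega$. This gives
\[
\tr\left[\omega V_i^\dagger\left(\tr_{i,2}\left[V_i a V_i^\dagger\left(1_{d_i}\otimes\rho_i\right)\right]\otimes 1_{m_i}\right)V_i\right] = \tr\left[\left(V_i\omega V_i^\dagger\right)\left(Y_i\otimes 1_{m_i}\right)\right],
\]
where $Y_i \coloneqq \tr_{i,2}\left[V_i a V_i^\dagger\left(1_{d_i}\otimes\rho_i\right)\right]$ acts on $\mathcal{H}_{i,1}$. I will then use the partial-trace rule $\tr\left[X\left(Y\otimes 1\right)\right] = \tr\left[\tr_{i,2}\left[X\right]Y\right]$ to rewrite this as $\tr\left[\Omega_i Y_i\right]$ with $\Omega_i \coloneqq \tr_{i,2}\left[V_i\omega V_i^\dagger\right]$.

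Applying the same rule in reverse, I will get $\tr\left[\Omega_i Y_i\right] = \tr\left[\left(\Omega_i\otimes 1\right)V_i a V_i^\dagger\left(1\otimes\rho_i\right)\right]$. The two factors $\Omega_i\otimes 1$ and $1\otimes\rho_i$ commute, so this equals $\tr\left[\left(\Omega_i\otimes\rho_i\right)V_i a V_i^\dagger\right]$. A final cyclic step yields $\tr\left[V_i^\dagger\left(\Omega_i\otimes\rho_i\right)V_i\, a\right]$, which is exactly the $i$-th summand of $\tr\left[E^*\left(\omega\right)a\right]$. Summing over $i$ completes the argument.

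The main point requiring care is bookkeeping rather than a genuine obstacle. The multiplicity factor $1_{m_i}$ in $E$ must be matched with the tensor slot $\mathcal{H}_{i,2}$ that is traced out, i.e.\ one has to read the decomposition in \autoref{prop:structure_prop_cond_expectation_mmwolf} consistently, with $\dim\mathcal{H}_{i,2} = m_i$. I will also verify that cyclicity across the non-square isometries $V_i$ is legitimate, which it is since traces of products $XY$ and $YX$ agree for rectangular factors. Orthogonality of the $V_i$ is not even needed for the identity itself.
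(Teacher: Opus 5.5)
Your proposal is correct and follows the paper's proof: the paper also verifies $\tr[\omega E(a)] = \tr[E^*(\omega)a]$ summand by summand using cyclicity, and it packages your two partial-trace steps into the single adjoint identity $\tr\left[T\,\tr_{i,2}\left[Z\left(1_{d_i}\otimes\rho_i\right)\right]\right] = \tr\left[\left(T\otimes\rho_i\right)Z\right]$. One small correction: going from $\tr[(\Omega_i\otimes 1)V_iaV_i^\dagger(1\otimes\rho_i)]$ to $\tr[(\Omega_i\otimes\rho_i)V_iaV_i^\dagger]$ requires cyclicity to bring $1\otimes\rho_i$ next to $\Omega_i\otimes 1$, since the two factors are not adjacent, so commutation alone does not suffice.
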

\begin{proof}
    We use that $\{V_i:\mathcal{H}\to \mathcal{H}_{i,1}\otimes\mathcal{H}_{i,2}\}_{i=1}^k$ are mutually orthogonal isometries, i.e. 
    \begin{align}
        V_i V_j^\dagger = \delta_{ij} 1_{\mathcal{H}_{i,1} \otimes \mathcal{H}_{i,2}}, \quad \sum_{i=1}^k V_i^\dagger V_i  = 1_{\mathcal{H}}. 
    \end{align}
    Fix $\omega \in \mathcal{S}\left(\mathcal{H}\right)$ and $a \in \mathcal{B}\left(\mathcal{H}\right)$. Then we have
    \begin{align}
        X_i \coloneqq V_i \omega V_i^\dagger \in \mathcal{S}\left(\mathcal{H}_{i,1} \otimes \mathcal{H}_{i,2}\right), \quad Y_i \coloneqq V_i a V_i^\dagger \in \mathcal{B}\left(\mathcal{H}_{i,1} \otimes \mathcal{H}_{i,2}\right).  
    \end{align}
    Then the dual $E^*: \mathcal{S}\left(\mathcal{H}\right) \to \mathcal{S}\left(\mathcal{H}\right)$ of $E$ is defined by
    \begin{align}
        \tr\left[E^*\left(\omega\right) a\right] = \tr\left[\omega E\left(a\right)\right], \quad \omega \in \mathcal{S}\left(\mathcal{H}\right), \quad a \in \mathcal{B}\left(\mathcal{H}\right).
    \end{align}
    Now we use \autoref{prop:structure_prop_cond_expectation_mmwolf} and the cyclicity of the trace
    \begin{equation}\label{eq:calculation_lem_adjoint1}
    \begin{aligned}
        \tr\left[\omega E\left(a\right)\right] &= \sum_{i=1}^k \tr \left[\omega V_i^\dagger  \left( \tr_{i,2}  \left[V_i aV_i^\dagger \ 1_{d_i}\otimes \rho_i \right]\otimes 1_{m_i} \right)V_i \right]\\
        &=\sum_{i=1}^k \tr \left[\omega V_i^\dagger  \left(\tr_{i,2}\left[Y_i \left(1_{d_i}\otimes \rho_i\right)\right]\otimes 1_{m_i} \right)V_i  \right] \\
        &=\sum_{i=1}^k \tr  \left[V_i \omega V_i^\dagger  \left(\tr_{i,2}\left[Y_i \left(1_{d_i}\otimes \rho_i\right)\right]\otimes 1_{m_i} \right) \right] \\
        &=\sum_{i=1}^k \tr \left[\tr_{i,2}\left[X_i\right] \ \tr_{i,2}\left[Y_i \left(1_{d_i}\otimes \rho_i\right)\right] \right].
    \end{aligned}
    \end{equation}
    Now we use that the adjoint of the map $F_i:Z \mapsto \tr_{i,2}\left[Z\left(1\otimes \rho_i\right)\right]$ is given by $T \mapsto T\otimes \rho_i$ , i.e.\ for all $T \in \mathcal{B}\left(\mathcal{H}_{i,1}\right)$ and all $Z \in \mathcal{B}\left(\mathcal{H}_{i,1} \otimes \mathcal{H}_{i,2}\right)$ we have
    \begin{align}\label{eq:calculation_lem_adjoint2}
        \tr \left[T\tr_{i,2}\left[Z\left(1_{d_i}\otimes \rho_i\right)\right] \right] = \tr\left[\left(T\otimes \rho_i\right) Z\right].
    \end{align}
    Substituting \autoref{eq:calculation_lem_adjoint2} into \autoref{eq:calculation_lem_adjoint1} yields
    \begin{align}
        \tr\left[\omega E\left(a\right)\right] = \sum_{i=1}^k \tr \left[\left(\tr_{i,2}\left[X_i\right] \otimes \rho_i\right)\ Y_i\right].
    \end{align}
    Substituting $X_i$ and $Y_i$ back and using again cyclicity yields
    \begin{align}
        E^*\left(\omega\right) = \sum_{i=1}^k V_i^\dagger \left( \tr_{i,2}\left[V_i \omega V_i^\dagger\right]\otimes \rho_i \right)V_i, \quad \omega \in \mathcal{S}\left(\mathcal{H}\right).
    \end{align}
\end{proof}

The following lemma is a standard result and can be deduced from e.g. \cite{Cover2006} in classical information theory or \cite{Wilde2016} for quantum information theory.
\begin{lemma}[Double chain rule for $I\left(Y_1^n:Z_1^n\right)$]\label{lem:double_chain_rule_MI}
Let $\rho_{A_1^n B_1^n}\in \mathcal{S}\left(\mathcal{H}_{A}^{\otimes n} \otimes \mathcal{H}_B^{\otimes n}\right)$ be a quantum state and let
$\mathcal{M}_A:\mathcal{S}\left(\mathcal{H}_A\right) \to \mathcal{M}_1\left(\mathcal{Y}\right)$ and $\mathcal{M}_B:\mathcal{S}\left(\mathcal{H}_B\right)\to \mathcal{M}_1\left(\mathcal{Z}\right)$ be measurement channels with classical output registers.
Define the output state
\begin{align}
\omega_{Y_1^n Z_1^n}
  \coloneqq   
 \left(\mathcal{M}_A^{\otimes n}\otimes \mathcal{M}_B^{\otimes n}\right)  \left(\rho_{A_1^nB_1^n}\right).
\end{align}
Then
\begin{align}
I\left(Y_1^n:Z_1^n\right)_\omega
=
\sum_{m_A=0}^{n-1}\sum_{m_B=0}^{n-1}
I  \left(Y_{m_A+1}:Z_{m_B+1}  | Y_1^{m_A}Z_1^{m_B}\right)_\omega,
\end{align}
with the convention $Y_1^0$ respectively $Z_1^0$ are the trivial registers.
\end{lemma}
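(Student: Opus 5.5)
The plan is to reduce the double sum to iterated one-sided chain rules. First apply the ordinary chain rule for mutual information on the $Y$-side:
\begin{align}
    I\left(Y_1^n:Z_1^n\right)_\omega = \sum_{m_A=0}^{n-1} I\left(Y_{m_A+1}:Z_1^n\,\vert\, Y_1^{m_A}\right)_\omega .
\end{align}
Then, for each fixed $m_A$, apply the chain rule to the second argument, now conditioned on $Y_1^{m_A}$:
\begin{align}
    I\left(Y_{m_A+1}:Z_1^n\,\vert\, Y_1^{m_A}\right)_\omega = \sum_{m_B=0}^{n-1} I\left(Y_{m_A+1}:Z_{m_B+1}\,\vert\, Y_1^{m_A}Z_1^{m_B}\right)_\omega .
\end{align}
Substituting the second identity into the first gives exactly the claimed double sum, with the stated convention for empty strings.

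Both chain rules are identities of entropies. They follow by writing each conditional mutual information as a combination of conditional entropies and telescoping, for instance
\begin{align}
    I\left(X:Z_1^n\vert W\right)=\sum_{m}\left[H\left(Z_{m+1}\vert WZ_1^m\right)-H\left(Z_{m+1}\vert XWZ_1^m\right)\right].
\end{align}
No classicality or symmetry is needed here. Since $\omega$ is a classical state (a probability distribution on $\mathcal{Y}^n\times\mathcal{Z}^n$), one may also cite the Shannon versions in \cite{Cover2006} directly.

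The only points requiring care are bookkeeping. One must ensure the inner chain rule is applied to the conditional mutual information with the conditioning $Y_1^{m_A}$ kept fixed, which is legitimate since the chain rule holds conditionally. One must also handle the boundary terms $m_A=0$ and $m_B=0$, where the trivial registers give unconditional quantities. I expect no real obstacle; the main step is verifying the conditional chain rule identity via the telescoping entropy expansion, and everything is finite-valued since all registers are finite classical.
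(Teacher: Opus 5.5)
Your proposal is correct and matches the paper's proof: the paper also first applies the $Y$-side chain rule to get $\sum_{i} I(Y_i:Z_1^n\vert Y_1^{i-1})$, then for each fixed $i$ telescopes the conditional chain rule $I(X:UV\vert W)=I(X:U\vert W)+I(X:V\vert UW)$ over the $Z$-registers with the conditioning $Y_1^{i-1}$ held fixed. The only cosmetic difference is that the paper expands via $H(X\vert W)-H(X\vert UW)$, whereas your identity uses the entropies of the $Z$-registers; this does not affect the argument.
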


\begin{proof}
For any state $\sigma_{XUVW}$, the conditional mutual information admits
the representation
\begin{align}
I\left(X:U\vert W\right)_\sigma = H\left(X\vert W\right)_\sigma - H\left(X\vert UW\right)_\sigma,
\end{align}
hence
\begin{align}
I\left(X:UV\vert W\right)_\sigma
&= H\left(X\vert W\right)_\sigma - H\left(X\vert UVW\right)_\sigma \notag\\
&= \left(H\left(X\vert W\right)_\sigma - H\left(X\vert UW\right)_\sigma\right)
  + \left(H\left(X\vert UW\right)_\sigma - H\left(X\vert UVW\right)_\sigma\right)\notag\\
&= I\left(X:U\vert W\right)_\sigma + I\left(X:V\vert UW\right)_\sigma. \label{eq:chain_two_blocks}
\end{align}
We will use \autoref{eq:chain_two_blocks} repeatedly.

Using $I\left(XY:Z\right)=I\left(X:Z\right)+I\left(Y:Z\vert X\right)$ iteratively, we obtain
\begin{align}
I\left(Y_1^n:Z_1^n\right)_\omega= \sum_{i=1}^n I  \left(Y_i:Z_1^n \vert Y_1^{i-1}\right)_\omega. \label{eq:chain_Y}
\end{align}

Fix $i\in\{1,\dots,n\}$ and apply \autoref{eq:chain_two_blocks} iteratively to the triple
$\left(X,U,V,W\right) = \left(Y_i,  Z_j,  Z_{j+1}^n,  Y_1^{i-1}Z_1^{j-1}\right)$.
This yields, for each $j$,
\begin{align}\label{eqn:help_mutual_info_relation}
I  \left(Y_i:Z_j^n \vert Y_1^{i-1}Z_1^{j-1}\right)_\omega=I  \left(Y_i:Z_j \vert Y_1^{i-1}Z_1^{j-1}\right)_\omega+I  \left(Y_i:Z_{j+1}^n \vert Y_1^{i-1}Z_1^{j}\right)_\omega.
\end{align}
Telescoping from $j=1$ to $n$ together with \autoref{eqn:help_mutual_info_relation} gives
\begin{align}
I  \left(Y_i:Z_1^n \vert Y_1^{i-1}\right)_\omega 
&= I  \left(Y_i:Z_1^n \vert Y_1^{i-1}\right)_\omega + \sum_{j=2}^n I  \left(Y_i:Z_j^n \vert Y_1^{i-1}Z_1^{j-1}\right)_\omega - \sum_{j=1}^{n-1} I  \left(Y_i:Z_{j+1}^n \vert Y_1^{i-1}Z_1^{j}\right)_\omega \\
&=\sum_{j=1}^n I  \left(Y_i:Z_j \vert Y_1^{i-1}Z_1^{j-1}\right)_\omega. \label{eq:chain_Z_given_Y}
\end{align}

Plugging \autoref{eq:chain_Z_given_Y} into \autoref{eq:chain_Y} yields
\begin{align}
I\left(Y_1^n:Z_1^n\right)_\omega=\sum_{i=1}^n\sum_{j=1}^n I  \left(Y_i:Z_j \vert Y_1^{i-1}Z_1^{j-1}\right)_\omega.
\end{align}
Finally, set $m_A=i-1$ and $m_B=j-1$ to obtain exactly the claimed form
\begin{align}
I\left(Y_1^n:Z_1^n\right)_\omega=\sum_{m_A=0}^{n-1}\sum_{m_B=0}^{n-1}
I \left(Y_{m_A+1}:Z_{m_B+1}\vert Y_1^{m_A}Z_1^{m_B}\right)_\omega. 
\end{align}
\end{proof}

\section{Young diagrams and tableaux}\label{sec:Young_diagrams_tableaux}

In this appendix we collect the combinatorial objects underlying the representation theory of the symmetric and unitary groups, as used in the main text and in \autoref{sec:weyls_dimension_formula} and \autoref{sec:subgroup_adapted_basis}. These notions are standard, and no originality is claimed; we refer to \cite{sagan2013symmetric,Fulton2004,james2006representation} for textbook treatments and to \cite{grinko2025mixed} for a recent review geared towards quantum information theory. Recall from \autoref{sec:notation_preliminaries} that $\lambda \vdash n$ denotes a partition of $n$ with length $\ell\left(\lambda\right)$, that $\lambda \vdash_d n$ additionally requires $\ell\left(\lambda\right)\leq d$, and that $\mu \vDash n$ denotes a weak composition.

\paragraph{\textbf{Young diagrams.}}
The Young diagram (also Ferrers diagram) of a partition $\lambda \vdash n$ is the left-justified array of $n$ cells (boxes, nodes) containing $\lambda_i$ cells in its $i$-th row. We use the English convention, in which rows are indexed from top to bottom and columns from left to right, such that a cell is a pair $x = \left(i,j\right)$ with row index $i \in \left[\ell\left(\lambda\right)\right]$ and column index $j \in \left[\lambda_i\right]$. We identify a partition with its diagram and write $\left(i,j\right)\in \lambda$; the size of $\lambda$ is $\lvert \lambda \rvert \coloneqq \sum_i \lambda_i = n$, and $\varnothing$ denotes the empty partition, i.e.\ the unique partition of $0$. A partition $\mu$ is a subpartition of $\lambda$, denoted $\mu \subseteq \lambda$, if $\mu_i \leq \lambda_i$ for all $i$, i.e.\ if the diagram of $\mu$ is contained in the diagram of $\lambda$. The conjugate (or transpose) partition $\lambda^\prime$ is obtained by reflecting the diagram along its main diagonal; equivalently, $\lambda^\prime_j \coloneqq \lvert \lbrace i \ \vert \ \lambda_i \geq j\rbrace\rvert$ counts the cells in the $j$-th column of $\lambda$. Conjugation is an involution, $\left(\lambda^\prime\right)^\prime = \lambda$, and $\ell\left(\lambda\right) = \lambda_1^\prime$.

\paragraph{\textbf{Contents, hooks and axial distances.}}
The content of a cell $x = \left(i,j\right)$ is
\begin{align}
    \operatorname{cont}\left(x\right) \coloneqq j - i,
\end{align}
which is constant along the diagonals of the diagram. The hook of a cell $\left(i,j\right)\in\lambda$ consists of the cell itself, all cells to its right in row $i$ (the arm, of length $a\left(i,j\right)\coloneqq \lambda_i - j$) and all cells below it in column $j$ (the leg, of length $l\left(i,j\right)\coloneqq \lambda_j^\prime - i$); the hook length of $\left(i,j\right)\in \lambda$ is
\begin{align}
    h\left(i,j\right)\coloneqq a\left(i,j\right) + l\left(i,j\right) + 1 = \lambda_i + \lambda_j^\prime - i - j + 1.
\end{align}
Hook lengths determine the dimensions of Specht modules via the hook length formula, \autoref{eq:hook_length_formula}. The axial distance from a cell $x = \left(i,j\right)$ to a cell $y = \left(i^\prime, j^\prime\right)$ is
\begin{align}
    \operatorname{ax}\left(x,y\right)\coloneqq \operatorname{cont}\left(y\right) - \operatorname{cont}\left(x\right) = \left(j^\prime - j\right) - \left(i^\prime - i\right),
\end{align}
i.e.\ the signed number of steps of any Manhattan path from $x$ to $y$, where every step to the right or upwards contributes $+1$ and every step to the left or downwards contributes $-1$. Axial distances govern the action of adjacent transpositions on Young's orthogonal basis of the Specht modules (cf.\ \autoref{sec:subgroup_adapted_basis}). As an example, consider $\lambda = \left(4,2,1\right)\vdash 7$ with conjugate $\lambda^\prime = \left(3,2,1,1\right)$: the diagrams of $\lambda$ and $\lambda^\prime$, as well as the hook lengths and the contents of the cells of $\lambda$, are
\begin{align}\label{eq:example_young_diagrams_appendix}
    {\ytableausetup{boxsize=1.5em}
    \ydiagram{4,2,1}\ ,
    \qquad
    \ydiagram{3,2,1,1}\ ,
    \qquad
    \begin{ytableau}
        6 & 4 & 2 & 1 \\
        3 & 1 \\
        1
    \end{ytableau}\ ,
    \qquad
    \begin{ytableau}
        0 & 1 & 2 & 3 \\
        -1 & 0 \\
        -2
    \end{ytableau}}\ ,
\end{align}
respectively.

\paragraph{\textbf{Addable and removable cells.}}
A cell $\left(i,j\right)\in \lambda$ is called removable (or an inner corner) if deleting it from the diagram leaves the diagram of a partition of $n-1$, i.e.\ if $j = \lambda_i$ and $\lambda_{i+1}<\lambda_i$; a cell $\left(i,j\right)\notin \lambda$ is called addable (or an outer corner) if adjoining it to the diagram yields the diagram of a partition of $n+1$. We denote by $\lambda^-$ (respectively $\lambda^+$) the set of partitions obtained from $\lambda$ by removing a removable (adding an addable) cell, such that
\begin{align}
    \mu \in \lambda^- \quad \Longleftrightarrow \quad \lambda \in \mu^+.
\end{align}
The number of removable cells of $\lambda$ equals the number of distinct part sizes of $\lambda$, and the number of addable cells exceeds it by one; in particular, $\lvert \lambda^-\rvert \leq \ell\left(\lambda\right)$ and, since addable and removable cells lie in pairwise different rows and columns and hence on pairwise different diagonals, their contents are pairwise distinct. If $\ell\left(\lambda\right)\leq d$, then at most $d$ elements of $\lambda^+$ have length at most $d$. For the example in \autoref{eq:example_young_diagrams_appendix}, the removable cells of $\lambda = \left(4,2,1\right)$ are $\left(1,4\right), \left(2,2\right), \left(3,1\right)$ and the addable cells are $\left(1,5\right), \left(2,3\right), \left(3,2\right), \left(4,1\right)$, such that
\begin{align}
    \lambda^- = \lbrace \left(3,2,1\right), \left(4,1,1\right), \left(4,2\right)\rbrace, \qquad \lambda^+ = \lbrace \left(5,2,1\right), \left(4,3,1\right), \left(4,2,2\right), \left(4,2,1,1\right)\rbrace.
\end{align}
These single-cell modifications encode the multiplicity-free branching rules for Specht and Weyl modules (Young's branching rule and Pieri's rule) used in the main text in \autoref{sec:efficient_inner_sequence} and in \autoref{sec:subgroup_adapted_basis}.

\paragraph{\textbf{Young tableaux.}}
A Young tableau of shape $\lambda$ is a filling $T$ of the cells of $\lambda$ with positive integers; we write $T\left(x\right)$ for the entry of the cell $x \in \lambda$ and $\operatorname{sh}\left(T\right) = \lambda$ for the shape. A tableau is called semistandard if its entries weakly increase along each row (from left to right) and strictly increase down each column, and we denote by $\operatorname{SSYT}\left(\lambda, d\right)$ the set of semistandard Young tableaux of shape $\lambda$ with entries in $\left[d\right]$. Since the entries of the first column must increase strictly, $\operatorname{SSYT}\left(\lambda,d\right)\neq \emptyset$ if and only if $\ell\left(\lambda\right)\leq d$, consistent with the absence of partitions with more than $d$ rows in the Schur--Weyl decomposition (cf.\ \autoref{sec:weyls_dimension_formula}). The weight of $T \in \operatorname{SSYT}\left(\lambda,d\right)$ is the vector $w\left(T\right)\in \mathbb{N}_0^d$ with entries
\begin{align}
    w\left(T\right)_i \coloneqq \lvert \lbrace x \in \lambda \ \vert \ T\left(x\right) = i \rbrace \rvert,
\end{align}
i.e.\ a weak composition $w\left(T\right)\vDash n$ recording how often each value occurs (cf.\ the type maps in \autoref{sec:notation_preliminaries}). A standard Young tableau of shape $\lambda \vdash n$ is a semistandard tableau in which each of the entries $1,\ldots,n$ occurs exactly once, i.e.\ a tableau of weight $\left(1,\ldots,1\right)$ whose rows and columns increase strictly, and the set of these is denoted by $\operatorname{SYT}\left(\lambda\right)$. The cardinalities of both sets carry representation-theoretic meaning: for $\lambda \vdash_d n$,
\begin{align}\label{eq:tableaux_count_dimensions}
    \lvert \operatorname{SYT}\left(\lambda\right)\rvert = d_\lambda = \dim V_\lambda, \qquad \lvert \operatorname{SSYT}\left(\lambda, d\right)\rvert = m_\lambda = \dim U_\lambda,
\end{align}
given explicitly by the hook length formula \autoref{eq:hook_length_formula} and Weyl's dimension formula \autoref{eq:weyl_dimension_formula}, respectively (cf.\ \cite[Thm.~2.6.5]{sagan2013symmetric} and \cite[Thm.~6.3]{Fulton2004}). For instance,
\begin{align}
    {\ytableausetup{boxsize=1.5em}
    \begin{ytableau}
        1 & 2 & 5 & 7\\
        3 & 4 \\
        6
    \end{ytableau} \in \operatorname{SYT}\left(\left(4,2,1\right)\right),
    \qquad
    \begin{ytableau}
        1 & 1 & 2 & 3\\
        2 & 3 \\
        3
    \end{ytableau} \in \operatorname{SSYT}\left(\left(4,2,1\right), 3\right)},
\end{align}
where the semistandard tableau has weight $\left(2,2,3\right)\vDash 7$.

\paragraph{\textbf{Growth sequences and contents of standard tableaux.}}
For $T \in \operatorname{SYT}\left(\lambda\right)$ with $\lambda \vdash n$ and $k \in \lbrace 0,1,\ldots,n\rbrace$, let $T^{\left(k\right)}$ denote the diagram formed by the cells of $T$ with entries at most $k$. Then
\begin{align}\label{eq:SYT_growth_sequence}
    \varnothing = T^{\left(0\right)} \subseteq T^{\left(1\right)}\subseteq \cdots \subseteq T^{\left(n\right)} = \lambda, \qquad T^{\left(k\right)} \in \left(T^{\left(k-1\right)}\right)^+ \quad \text{for all} \ k \in \left[n\right],
\end{align}
i.e.\ every standard tableau describes a growth sequence of diagrams in which exactly one addable cell is adjoined at each step; conversely, every such chain of partitions determines a unique standard tableau, its removal history. This identifies $\operatorname{SYT}\left(\lambda\right)$ bijectively with the set of paths from $\varnothing$ to $\lambda$ in the Young lattice, i.e.\ in the Bratteli diagram of the tower $\mathbb{C}\left[S_0\right]\hookrightarrow \mathbb{C}\left[S_1\right]\hookrightarrow \cdots \hookrightarrow \mathbb{C}\left[S_n\right]$ (cf.\ \autoref{fig:Bratelli_S_n} and \autoref{sec:subgroup_adapted_basis}). For $k \in \left[n\right]$ we write $\operatorname{cont}_k\left(T\right)$ for the content of the cell of $T$ containing the entry $k$, i.e.\ of the unique cell of $T^{\left(k\right)}\setminus T^{\left(k-1\right)}$, and call
\begin{align}
    c_T \coloneqq \left(\operatorname{cont}_1\left(T\right), \operatorname{cont}_2\left(T\right),\ldots, \operatorname{cont}_n\left(T\right)\right)
\end{align}
the content vector of $T$. Since the addable cells of any diagram have pairwise distinct contents, the content vector determines $T$ uniquely. For the standard tableau displayed above, the growth sequence is $\varnothing \subseteq \left(1\right) \subseteq \left(2\right)\subseteq \left(2,1\right)\subseteq \left(2,2\right)\subseteq \left(3,2\right)\subseteq \left(3,2,1\right)\subseteq \left(4,2,1\right)$ with content vector $c_T = \left(0,1,-1,0,2,-2,3\right)$. Content vectors are precisely the joint spectra of the Jucys--Murphy elements on the Gelfand--Tsetlin basis, cf.\ \autoref{sec:subgroup_adapted_basis}.

\section{Dimension formulas and polynomial bounds for Schur--Weyl blocks}\label{sec:weyls_dimension_formula}

In this appendix we show bounds on the size of the multiplicity spaces and the number of irreducible representations occurring in a decomposition of the action 
\begin{align}
     S_n \to \operatorname{GL}\left( \mathcal{H}^{\otimes n}\right), \quad \sigma \mapsto \left[v_1\otimes v_2\otimes \cdots \otimes v_n \mapsto v_{\sigma^{-1}\left(1\right)}\otimes v_{\sigma^{-1}\left(2\right)}\otimes \cdots \otimes v_{\sigma^{-1}\left(n\right)}\right]. 
\end{align}
The irreducible representations of $S_n$ can be identified with partitions of $n$, denoted as $\lambda \vdash n$. Furthermore let $d \coloneqq \dim \mathcal{H}$. Then \cite[Thm.\ 6.3.\ (1)]{Fulton2004} states that the decomposition of $\mathcal{H}^{\otimes n}$ does not admit any contribution of a partition $\lambda_{1}\geq \lambda_2 \geq \ldots \geq \lambda_{d}\geq \lambda_{d+1} \geq 0$, if $\lambda_{d+1}\neq 0$. Thus we have
\begin{align}
    \mathcal{H}^{\otimes n} \cong  \bigoplus_{\lambda \vdash n, \ \lambda_{d+1}=0} \mathbb{C}^{m_{\lambda}} \otimes \mathbb{C}^{d_\lambda},
\end{align}
where $\mathbb{C}^{m_{\lambda}}$ corresponds to the multiplicity space of the irreducible representation $\lambda\vdash n$, whereby $\mathbb{C}^{d_\lambda}$ corresponds to the space isomorphic to the irreducible representation $\lambda \vdash n$, i.e.\ the action of $S_n$ on $\mathbb{C}^{d_\lambda}$ has just the trivial commutator. The dimension $d_{\lambda}$ of Specht modules index by $\lambda$ can be computed via the Hook-length formula (cf.\ \cite[Thm.\ 3.10.2]{Fulton2004})
\begin{align}\label{eq:hook_length_formula}
    d_{\lambda}=\frac{n!}{\prod_{(i,j)\in \lambda} h(i,j)}.
\end{align}
Here $h\left(i,j\right)$ denotes the hook length of the cell $\left(i,j\right)\in \lambda$, as introduced in \autoref{sec:Young_diagrams_tableaux}. The dimension of the multiplicity spaces is characterized by Weyl's dimension formula given by (Hook content formula, cf.\ e.g.\ \cite[Thm.\ 6.3\ (2)]{Fulton2004})
\begin{align}\label{eq:weyl_dimension_formula}
    m_\lambda   =  \prod_{1\leq i<j\leq d} \frac{\lambda_i - \lambda_j +j-i}{j-i}.
\end{align}
\begin{lemma}\label{lem:bounds_symmetric_group}
    Given $d< \infty $ then we have
    \begin{enumerate}
        \item $\#\{\lambda \vdash n \ \vert \ \lambda_{d+1}= 0\} \leq \left(n+d\right)^{d-1}$
        \item $m_\lambda \leq \left(n+d\right)^{d\left(d-1\right)/2}$ for all $\lambda \vdash n$ such that $\lambda_{d+1}=0$
        \item $m_{\left(n\right)} \leq \left(n+d\right)^{d-1}$ for the symmetric subspace corresponding to the partition $\left(n\right) \vdash n$.
    \end{enumerate}
\end{lemma}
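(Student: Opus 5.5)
The three bounds are elementary estimates of combinatorial quantities, and I would prove them in the order (1), (3), (2), since (2) is the only one requiring care with the product in Weyl's dimension formula \autoref{eq:weyl_dimension_formula}.

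For (1), a partition $\lambda\vdash n$ with $\lambda_{d+1}=0$ is determined by $(\lambda_1,\ldots,\lambda_d)$ with $\sum_i\lambda_i=n$. Its last entry is fixed by the first $d-1$, and each of those lies in $\{0,\ldots,n\}$. This already gives at most $(n+1)^{d-1}\leq (n+d)^{d-1}$ partitions. For $d=1$ both sides equal $1$, so no separate case is needed.

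For (3), the multiplicity $m_{(n)}$ equals $\dim \operatorname{Sym}^n(\mathcal{H})=\binom{n+d-1}{d-1}$. This is also the number of weak compositions of $n$ into $d$ parts, as used before \autoref{thm:bose_de_Finetti}. The standard bound $\lvert\mathcal{T}_{n,d}\rvert\leq (n+1)^{d-1}$ from \autoref{sec:notation_preliminaries} then gives the claim. Alternatively, one can insert $\lambda=(n,0,\ldots,0)$ into \autoref{eq:weyl_dimension_formula}: only the pairs with $i=1$ contribute, giving $\prod_{j=2}^{d}\frac{n+j-1}{j-1}$. Each factor is at most $n+d$, so the product is at most $(n+d)^{d-1}$.

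For (2), I would bound \autoref{eq:weyl_dimension_formula} factor by factor. For $1\leq i<j\leq d$ we have $0\leq \lambda_i-\lambda_j\leq n$ and $1\leq j-i\leq d-1$. Hence each factor satisfies
\begin{align}
\frac{\lambda_i-\lambda_j+j-i}{j-i}\leq \lambda_i-\lambda_j+j-i\leq n+d,
\end{align}
using that the denominator is at least $1$. There are $\binom{d}{2}=d(d-1)/2$ factors, which yields $m_\lambda\leq (n+d)^{d(d-1)/2}$.

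The only real obstacle is to make sure that the form \autoref{eq:weyl_dimension_formula} of Weyl's formula is used correctly for partitions padded with zeros up to length $d$. This holds by \cite[Thm.~6.3]{Fulton2004}, since the formula is stated for $\lambda$ with at most $d$ rows. Beyond that, the argument is just counting factors.
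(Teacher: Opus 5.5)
Your proof is correct and follows essentially the same route as the paper: counting weak compositions for (1) and (3), and bounding each of the $\binom{d}{2}$ factors in Weyl's dimension formula by $n+d$ for (2). The only cosmetic difference is in (1), where you use the injection $\lambda\mapsto(\lambda_1,\ldots,\lambda_{d-1})$ into $\{0,\ldots,n\}^{d-1}$ to get $(n+1)^{d-1}$ instead of bounding $\binom{n+d-1}{d-1}$; this is equally valid.
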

\begin{proof}
    \begin{enumerate}
        \item Dropping the assumption that there needs to be an order in $\left(\lambda_1,\lambda_2,\ldots,\lambda_d\right)$, yields that the number of $d$-tuples of non-negative integers summing up to $n$ is given by $\binom{n+d-1}{d-1}$, which can be bounded with $\binom{n}{k}\leq n^k$ to 
        \begin{align}
            \#\{\lambda \vdash n \ \vert \ \lambda_{d+1}= 0\} \leq \binom{n+d-1}{d-1} \leq \left(n+d-1\right)^{d-1} \leq \left(n+d\right)^{d-1}.
        \end{align}
        \item Since $\lambda_i \leq n$ and $\lambda_j\geq 0$, we have $\lambda_i-\lambda_j\leq n$ and $1\leq j-i\leq d$, such that we conclude with \autoref{eq:weyl_dimension_formula}
        \begin{align}
            m_{\lambda} \leq \prod_{1\leq i < j \leq d} \left(n+d-1\right),
        \end{align}
        but there are $\binom{d}{2}$ factors in the product, such that we have
        \begin{align}
            m_{\lambda} \leq \left(n+d\right)^{d\left(d-1\right)/2}.
        \end{align}
        \item This is a well-known formula for the symmetric subspace, see e.g.\ \cite{harrow2013churchsymmetricsubspace}, such that we have
        \begin{align}
            m_{\left(n\right)} \leq \binom{n+d-1}{d-1}\leq \left(n+d\right)^{d-1}.
        \end{align}
    \end{enumerate}
\end{proof}

\section{Gelfand-Tsetlin bases and Schur-Weyl duality}\label{sec:subgroup_adapted_basis}

In this appendix we assemble the representation-theoretic toolbox underlying the algorithmic results of \autoref{sec:efficient_inner_sequence}: the Gelfand-Tsetlin (or Gel'fand-Zetlin) construction \cite{gelfand1950matrix} of canonical orthonormal bases for multiplicity-free families of algebras (\autoref{subsec:gelfand_tsetlin_basis}), and its specialization to Schur-Weyl duality (\autoref{subsec:schur_weyl_orthonormal_basis}), including the Schur basis, the branching rules for Specht and Weyl modules, the Clebsch-Gordan series and transform, and the computational complexity of all objects involved. See e.g.\ \cite{bacon2005quantumschurtransformi, Bacon2006, grinko2023gelfandtsetlinbasispartiallytransposed, Okounkov1996, Vershik2005, Doty2019} for further information and \cite{harrow2005applicationscoherentclassicalcommunication, grinko2025mixed} for detailed overviews with reference to quantum information theory. We assume familiarity with the basic concepts of representation theory and claim no novelty for the contents of this appendix.

Before turning to the technical aspects, we briefly address the generalization of representation theory from groups to associative algebras, as both perspectives are prevalent in the literature and relevant to the considerations of this work; see e.g.\ \cite{cox2008representation, dlab1994finite} for details. By Maschke's theorem\footnote{$\operatorname{char}\left(\mathbb{C}\right)=0$ does not divide $\lvert G\rvert$.}, every finite-dimensional representation of a finite group $G$ over $\mathbb{C}$ is completely reducible. For general associative algebras $\Al$ over $\mathbb{C}$ this no longer holds: there can exist finite-dimensional $\Al$-modules\footnote{In the setting of associative algebras, one usually adopts the terminology of modules rather than that of representations.} that have submodules but no invariant complementary submodule,\footnote{Nilpotent elements in the Jacobson radical of the algebra yield reducible but non-split modules.} i.e.\ modules over associative algebras may be reducible yet indecomposable. The Krull--Schmidt theorem still ensures that every finite-dimensional module decomposes uniquely, up to isomorphism and reordering, into a direct sum of indecomposable modules, but complete reducibility requires additional structure: an algebra $\Al$ is semisimple, which can be characterized by the vanishing of its Jacobson radical, if and only if every finite-dimensional $\Al$-module is completely reducible. With the applications from finite-dimensional quantum information theory in mind, we restrict throughout to semisimple algebras; this covers all cases relevant to this work, since the group algebra $\mathbb{C}\left[G\right]$ of a finite group is semisimple and so is every finite-dimensional matrix $*$-algebra. By the Wedderburn--Artin theorem, a finite-dimensional semisimple algebra over $\mathbb{C}$ is isomorphic to a direct sum $\bigoplus_{i=1}^k \operatorname{End}\left(\mathbb{C}^{n_i}\right)$ of full matrix algebras for some integers $k, n_1,\ldots,n_k \geq 1$ \cite{cox2008representation}. Throughout, we use both terminologies, referring to representations in the group setting and to modules in the context of associative algebras, and we assume all groups to be either finite or compact Lie groups.

Concerning terminology, we reserve the term \emph{Gelfand-Tsetlin basis} for the canonical basis attached to a multiplicity-free family of algebras as constructed in \autoref{subsec:gelfand_tsetlin_basis}. The orthonormal basis of $\mathcal{H}^{\otimes n}$ obtained by applying this construction to both sides of Schur-Weyl duality is called the \emph{Schur basis}, cf.\ \autoref{subsubsec:CG_series_transform}.

\subsection{Gelfand-Tsetlin basis}\label{subsec:gelfand_tsetlin_basis}

The goal of this subsection is a canonical construction of orthonormal bases for all simple modules of a semisimple algebra, involving as few arbitrary choices as possible. Recall that an element $\epsilon$ of an algebra $\Al$ is an idempotent if $\epsilon^2 = \epsilon$, that two idempotents $a,b \in \Al$ are orthogonal if $ab = ba = 0$, that an idempotent is central if it commutes with every element of $\Al$, and that an idempotent (respectively, a central idempotent) is primitive if it cannot be written as a sum of two non-zero orthogonal idempotents (central idempotents). For a finite-dimensional semisimple algebra $\Al$ over $\mathbb{C}$ we denote by $\hat{\Al}$ the set of isomorphism classes of simple $\Al$-modules and by $V_\lambda$ the simple module corresponding to $\lambda \in \hat{\Al}$. The primitive central idempotents of $\Al$ are in one-to-one correspondence with $\hat{\Al}$, and the Wedderburn--Artin isomorphism refines to
\begin{align}\label{eq:wedderburn_idempotent_decomposition}
    \Al = \bigoplus_{\lambda \in \hat{\Al}} \epsilon_\lambda \Al \cong \bigoplus_{\lambda \in \hat{\Al}} \operatorname{End}\left(V_\lambda\right), \qquad \sum_{\lambda \in \hat{\Al}} \epsilon_\lambda = 1,
\end{align}
where the left-hand side is understood as a decomposition of the left-regular representation of $\Al$ and the right-hand side expresses the resolution of the unit element by the primitive central idempotents $\epsilon_\lambda$ (cf.\ \cite{cox2008representation} and \cite[Sec.~2.7.2]{grinko2025mixed}). The idempotent $\epsilon_\lambda$ projects onto the isotypic component of $V_\lambda$; our aim is to canonically refine each $\epsilon_\lambda$ into a sum of pairwise orthogonal primitive idempotents which are of rank one in the block $\operatorname{End}\left(V_\lambda\right)$ and thereby single out a distinguished basis of $V_\lambda$. This is possible whenever $\Al = \Al_n$ is the top of a tower of algebras with multiplicity-free restrictions: the approach originates in the Okounkov--Vershik treatment of the symmetric group algebras \cite{Okounkov1996, Vershik2005} and was formulated for general semisimple algebras by Doty, Lauve and Seelinger \cite{Doty2019}; our presentation follows \cite[Sec.~2.7]{grinko2025mixed}.

We first fix the terminology for towers. For algebras $\Al$ and $\Bl$, a map $\psi : \Al \to \Bl$ is an algebra embedding if $\psi$ is an injective algebra homomorphism, abbreviated $\Al \hookrightarrow \Bl$, and it is unity-preserving if $\psi\left(1_{\Al}\right) = 1_{\Bl}$; in this case we identify $\Al$ with the subalgebra $\psi\left(\Al\right)\subseteq \Bl$. For a $\Bl$-module $M$ and a subalgebra $\Al \subseteq \Bl$ we denote by $\operatorname{Res}_{\Al}^{\Bl} M$ the restriction of $M$ to the action of $\Al$. If $G_0 \subseteq G_1 \subseteq \cdots \subseteq G_n$ is a tower of finite groups, the group algebras $\mathbb{C}\left[G_0\right]\hookrightarrow \mathbb{C}\left[G_1\right] \hookrightarrow \cdots \hookrightarrow \mathbb{C}\left[G_n\right]$ provide such a chain of unity-preserving embeddings, and the construction below then reproduces the subgroup-adapted bases of \cite[Chap.~7]{harrow2005applicationscoherentclassicalcommunication}; the algebraic formulation covers, in addition, algebras which are not group algebras, such as the permutation matrix algebras $\Al_n^d$. The recursive structure along the tower is precisely what enables efficient implementations of the Clebsch--Gordan and Schur transforms \cite{bacon2005quantumschurtransformi, Bacon2006, harrow2005applicationscoherentclassicalcommunication}, cf.\ the next subsection.

\begin{definition}[{Multiplicity-free family, \cite[Def.~1.1]{Doty2019}; see also \cite[Def.~2.7.1]{grinko2025mixed}}]\label{def:multiplicity_free_family}
    A family $\Al_0, \Al_1, \ldots, \Al_n$ of finite-dimensional semisimple algebras over $\mathbb{C}$ is called a multiplicity-free family if the following axioms hold:
    \begin{enumerate}[label=(\alph*)]
        \item $\Al_0 \cong \mathbb{C}$,
        \item for each $k \in \left[n\right]$ there is a unity-preserving algebra embedding $\Al_{k-1}\hookrightarrow \Al_k$,
        \item for each $k \in \left[n\right]$, the restriction of any simple $\Al_k$-module to $\Al_{k-1}$ is isomorphic to a direct sum of pairwise non-isomorphic simple $\Al_{k-1}$-modules; the restriction (branching) from $\Al_k$ to $\Al_{k-1}$ is then called multiplicity-free.
    \end{enumerate}
\end{definition}

\begin{example}\label{ex:multiplicity_free_families}
    The canonical example is the family of symmetric group algebras $\mathbb{C}\left[S_0\right]\hookrightarrow \mathbb{C}\left[S_1\right]\hookrightarrow\cdots \hookrightarrow \mathbb{C}\left[S_n\right]$, where $S_{k-1}\subseteq S_k$ acts on $\left[k\right]$ as the permutations fixing $k$; the branching is multiplicity-free by Young's branching rule \cite[Chap.~2.8]{sagan2013symmetric} (cf.\ also \cite{Okounkov1996, Vershik2005}). A second example, central to this work, is the family of permutation matrix algebras $\Al_0^d \hookrightarrow \Al_1^d \hookrightarrow \cdots \hookrightarrow \Al_n^d$ with $\Al_k^d = \psi_k^d\left(\mathbb{C}\left[S_k\right]\right)\subseteq \mathcal{B}\left(\mathcal{H}^{\otimes k}\right)$ and $d = \dim \mathcal{H}$, embedded via the restriction of the unity-preserving map $X \mapsto X \otimes 1_{\mathcal{H}}$, which sends $\psi_k^d\left(\sigma\right)$ to $\psi_{k+1}^d\left(\sigma\right)$ for every $\sigma \in S_k$. Since the simple $\Al_k^d$-modules are exactly the Specht modules $V_\lambda$ with $\lambda \vdash_d k$, the branching is Young's rule truncated to partitions with at most $d$ rows and hence again multiplicity-free.
\end{example}

\subsubsection{Bratteli diagrams}\label{subsubsec:bratteli_diagrams}

To a multiplicity-free family one associates a graded graph which records all branchings along the tower \cite{Bratteli1972}.

\begin{definition}[{Bratteli diagram, cf.~\cite{Bratteli1972} and \cite[Def.~2.7.2]{grinko2025mixed}}]\label{def:bratteli_diagram}
    Let $\Al_0,\ldots,\Al_n$ be a multiplicity-free family of algebras. Its Bratteli diagram $\mathscr{A}$ is the directed acyclic graph whose vertex set is the disjoint union $\bigsqcup_{k=0}^n \hat{\Al}_k$ of the isomorphism classes of simple $\Al_k$-modules, with an edge $\lambda \rightarrow \mu$ from $\lambda \in \hat{\Al}_k$ to $\mu \in \hat{\Al}_{k+1}$ if and only if $V_\lambda$ is isomorphic to a direct summand of $\operatorname{Res}_{\Al_k}^{\Al_{k+1}} V_\mu$. The set $\hat{\Al}_k$ is called the $k$-th level of $\mathscr{A}$, the unique vertex $\varnothing \in \hat{\Al}_0$ is called the root, and the vertices in $\hat{\Al}_n$ are called leaves.
\end{definition}

By multiplicity-freeness the Bratteli diagram is a simple graph; for general families of semisimple algebras one keeps track of the branching multiplicities by assigning them to the edges, resulting in a graded multigraph.\footnote{The permutation matrix algebras of \autoref{ex:multiplicity_free_families} illustrate that a Bratteli diagram may terminate: for $d < n$, some vertices of level $k \leq d$ have no continuation to level $n$ once $\ell\left(\lambda\right)\leq d$ is violated; such vertices simply do not occur.} A (directed) path in $\mathscr{A}$ traverses edges from lower to higher levels, and for vertices $\lambda \in \hat{\Al}_i$ and $\mu \in \hat{\Al}_j$ with $i<j$ we denote by $\operatorname{Paths}_{i,j}\left(\lambda,\mu,\mathscr{A}\right)$ the set of all paths from $\lambda$ to $\mu$. Paths starting at the root are abbreviated by $\operatorname{Paths}_j\left(\lambda, \mathscr{A}\right)\coloneqq \operatorname{Paths}_{0,j}\left(\varnothing, \lambda, \mathscr{A}\right)$, and if $\lambda$ is a leaf we write $\operatorname{Paths}\left(\lambda,\mathscr{A}\right)\coloneqq \operatorname{Paths}_n\left(\lambda,\mathscr{A}\right)$; finally, $\operatorname{Paths}\left(\mathscr{A}\right)$ denotes the set of all root-to-leaf paths in $\mathscr{A}$. Whenever the Bratteli diagram is clear from the context, we drop it from the notation and write $\operatorname{Paths}_{i,j}\left(\lambda,\mu\right)$, $\operatorname{Paths}_j\left(\lambda\right)$ and $\operatorname{Paths}\left(\lambda\right)$. A path is written as its sequence of vertices, $T = T^0 \rightarrow T^1 \rightarrow \cdots \rightarrow T^n \in \operatorname{Paths}\left(\mathscr{A}\right)$, and as a summation index, $\lambda : \lambda \rightarrow \mu$ ranges over all vertices $\lambda$ of the previous level connected to $\mu$, while $\tilde{\mu} : \lambda \rightarrow \tilde{\mu}$ ranges over all vertices $\tilde{\mu}$ of the next level connected to $\lambda$ (cf.\ \cite[Sec.~2.7.1]{grinko2025mixed}). Iterating axiom (c) of \autoref{def:multiplicity_free_family} along the tower shows that dimensions are counted by paths: since $\dim V_\mu = \sum_{\lambda : \lambda \rightarrow \mu} \dim V_\lambda$ and the root module is one-dimensional, an induction over the levels yields
\begin{align}\label{eq:dimension_counts_paths}
    \dim V_\lambda = \lvert \operatorname{Paths}_k\left(\lambda, \mathscr{A}\right)\rvert \qquad \text{for all} \ k \in \lbrace 0,1,\ldots,n\rbrace \ \text{and} \ \lambda \in \hat{\Al}_k.
\end{align}

\begin{example}[Young lattice]\label{ex:young_lattice}
    For the family of symmetric group algebras, $\hat{\mathbb{C}\left[S_k\right]}$ is identified with the set of partitions $\lambda \vdash k$ labeling the Specht modules $V_\lambda$, and by Young's branching rule there is an edge $\mu \rightarrow \lambda$ between the levels $k-1$ and $k$ if and only if $\mu \in \lambda^-$, i.e.\ if $\mu$ is obtained from $\lambda$ by removing a removable cell (cf.\ \autoref{sec:Young_diagrams_tableaux}). The resulting Bratteli diagram is the Young lattice depicted in \autoref{fig:Bratelli_S_n}, and the root-to-$\lambda$ paths are exactly the growth sequences of standard Young tableaux, such that \autoref{eq:dimension_counts_paths} recovers $\dim V_\lambda = \lvert \operatorname{SYT}\left(\lambda\right)\rvert = d_\lambda$ from \autoref{eq:tableaux_count_dimensions}. The Bratteli diagram of the permutation matrix algebras $\left(\Al_k^d\right)_{k}$ is the truncation of the Young lattice to partitions with at most $d$ rows.
\end{example}

\begin{figure}[ht]
\centering
{\color{black}%
\begin{tikzpicture}[
  x=1cm,y=1cm,
  >={Latex[length=2mm]},
  every node/.style={inner sep=1pt},
  lab/.style={font=\large},
  diag/.style={inner sep=0pt}
]

\node[lab] at (0.0, 4.2) {$\mathbb{C}S_0$};
\node[lab] at (2.2, 4.2) {$\mathbb{C}S_1$};
\node[lab] at (4.6, 4.2) {$\mathbb{C}S_2$};
\node[lab] at (7.0, 4.2) {$\mathbb{C}S_3$};
\node[lab] at (9.4, 4.2) {$\mathbb{C}S_4$};

\node (e) at (0.0,0.0) {$\varnothing$};

\node[diag] (a) at (2.0,0.0) {$\ydiagram{1}$};

\node[diag] (b1) at (4.2,1.0) {$\ydiagram{2}$};
\node[diag] (b2) at (4.2,-1.0) {$\ydiagram{1,1}$};

\node[diag] (c1) at (6.6,2.0) {$\ydiagram{3}$};
\node[diag] (c2) at (6.6,0.0) {$\ydiagram{2,1}$};
\node[diag] (c3) at (6.6,-2.0) {$\ydiagram{1,1,1}$};

\node[diag] (d1) at (9.0,3) {$\ydiagram{4}$};
\node[diag] (d2) at (9.0,1.25) {$\ydiagram{3,1}$};
\node[diag] (d3) at (9.0,-0.5) {$\ydiagram{2,2}$};
\node[diag] (d4) at (9.0,-2.55) {$\ydiagram{2,1,1}$};
\node[diag] (d5) at (9.0,-5.05) {$\ydiagram{1,1,1,1}$};

\draw[->] (e) -- (a);

\draw[->] (a) -- (b1);
\draw[->] (a) -- (b2);

\draw[->] (b1) -- (c1);
\draw[->] (b1) -- (c2);

\draw[->] (b2) -- (c2);
\draw[->] (b2) -- (c3);

\draw[->] (c1) -- (d1);
\draw[->] (c1) -- (d2);

\draw[->] (c2) -- (d2);
\draw[->] (c2) -- (d3);
\draw[->] (c2) -- (d4);

\draw[->] (c3) -- (d4);
\draw[->] (c3) -- (d5);

\end{tikzpicture}%
}
\caption{The Bratteli diagram of the family of symmetric group algebras $\mathbb{C}\left[S_0\right]\hookrightarrow \mathbb{C}\left[S_1\right]\hookrightarrow \cdots \hookrightarrow \mathbb{C}\left[S_4\right]$, also known as the Young lattice (cf.\ \autoref{ex:young_lattice}). The Bratteli diagram of the permutation matrix algebras $\left(\Al_k^d\right)_{k}$ is obtained by removing all vertices violating $\ell\left(\lambda\right)\leq d$.}\label{fig:Bratelli_S_n} 
\end{figure}

\subsubsection{Gelfand-Tsetlin subalgebra and Jucys-Murphy elements}\label{subsubsec:GT_subalgebra_JM}

Throughout, let $\Al_0,\ldots,\Al_n$ denote a multiplicity-free family with Bratteli diagram $\mathscr{A}$. For a leaf $\lambda \in \hat{\Al}_n$, the restriction of $V_\lambda$ to $\Al_{n-1}$ decomposes multiplicity-free by axiom (c), and iterating the restriction along the tower towards the root, whose algebra $\Al_0\cong \mathbb{C}$ only admits one-dimensional simple modules, splits $V_\lambda$ into a direct sum of distinguished lines: one for each path $T \in \operatorname{Paths}\left(\lambda\right)$, in accordance with \autoref{eq:dimension_counts_paths}. Choosing a unit vector $\ket{T}$ in each line, which is unique up to a phase, yields the Gelfand-Tsetlin basis
\begin{align}\label{eq:def_GT_basis}
    \big\lbrace \ket{T} \ \big\vert \ T \in \operatorname{Paths}\left(\mathscr{A}\right)\big\rbrace \qquad \text{of} \qquad \bigoplus_{\lambda \in \hat{\Al}_n} V_\lambda
\end{align}
(cf.\ \cite[Sec.~2.7.3]{grinko2025mixed}), whose elements are thus labeled by the root-to-leaf paths of the Bratteli diagram. The projectors onto the Gelfand-Tsetlin basis vectors can be constructed explicitly inside $\Al_n$; the relevant subalgebra is generated by the centers $\mathcal{Z}\left(\Al_i\right)$ along the tower.

\begin{definition}[{Gelfand-Tsetlin subalgebra, \cite[Def.~2.7.5]{grinko2025mixed}}]\label{def:GT_subalgebra}
    For each $k \in \left[n\right]$, the Gelfand-Tsetlin subalgebra of $\Al_k$ is
    \begin{align}\label{eq:def_GT_subalgebra}
        \Xl_k \coloneqq \left\langle \mathcal{Z}\left(\Al_1\right),\ldots, \mathcal{Z}\left(\Al_k\right)\right\rangle \subseteq \Al_k,
    \end{align}
    i.e.\ the subalgebra generated by the centers of $\Al_1,\ldots,\Al_k$.
\end{definition}

By construction $\Xl_1 \subseteq \Xl_2 \subseteq \cdots \subseteq \Xl_n$, and each $\Xl_k$ is a maximal commutative subalgebra of $\Al_k$ \cite{Okounkov1996, Vershik2005}. To each path $T = T^0 \rightarrow T^1 \rightarrow \cdots \rightarrow T^n \in \operatorname{Paths}\left(\mathscr{A}\right)$ we associate the element
\begin{align}\label{eq:def_path_idempotent}
    \epsilon_T \coloneqq \epsilon_{T^1}\epsilon_{T^2}\cdots \epsilon_{T^n},
\end{align}
where $\epsilon_{T^i}$ denotes the primitive central idempotent of $\Al_i$ corresponding to $T^i \in \hat{\Al}_i$ as in \autoref{eq:wedderburn_idempotent_decomposition}; since $\epsilon_{T^i}\in \mathcal{Z}\left(\Al_i\right)$ for each $i$, the factors commute and $\epsilon_T \in \Xl_n$.

\begin{proposition}[{\cite[Prop.~1.6 \& Cor.~1.7]{Doty2019}; see also \cite[Prop.~2.7.6]{grinko2025mixed}}]\label{prop:canonical_primitive_idempotents}
    The collection $\lbrace \epsilon_T \ \vert \ T \in \operatorname{Paths}\left(\mathscr{A}\right)\rbrace$ is a family of pairwise orthogonal primitive idempotents in $\Al_n$ that sums to the identity and is a basis of the Gelfand-Tsetlin subalgebra $\Xl_n$. Moreover, the primitive central idempotents of $\Al_n$ are given by
    \begin{align}\label{eq:central_idempotents_from_paths}
        \epsilon_\lambda = \sum_{T \in \operatorname{Paths}\left(\lambda\right)} \epsilon_T.
    \end{align}
\end{proposition}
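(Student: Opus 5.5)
We plan to prove the statement by induction along the tower, using only the Wedderburn decompositions \autoref{eq:wedderburn_idempotent_decomposition} at each level and the multiplicity-free branching of \autoref{def:multiplicity_free_family}~(c). The central observation is how each factor acts on the restricted module. Under the unity-preserving embedding $\Al_{k-1}\hookrightarrow\Al_k$, consider the simple $\Al_n$-module $V_\lambda$ restricted to $\Al_k$. The central idempotent $\epsilon_{T^k}\in\mathcal{Z}\left(\Al_k\right)$ acts on this restriction as the projector onto the $T^k$-isotypic component. We therefore set up the following induction hypothesis: for every path $S$ from the root to level $k$, the product $\epsilon_{S^1}\cdots\epsilon_{S^k}$ acts on any $\Al_n$-module as the projector onto the joint component picked out by successively restricting along $S$.

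\textbf{Step 1 (commutation and idempotency).} For $i\le j$ we have $\epsilon_{T^i}\in\Al_i\subseteq\Al_j$, while $\epsilon_{T^j}$ is central in $\Al_j$. Hence all factors commute, and $\epsilon_T$ is an idempotent, being a product of commuting idempotents.

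\textbf{Step 2 (orthogonality and completeness).} Let $T\neq T'$ be distinct paths and let $i$ be the first level at which they differ. Then $\epsilon_{T^i}\epsilon_{T'^i}=0$ by orthogonality of the central idempotents of $\Al_i$, so $\epsilon_T\epsilon_{T'}=0$ after commuting that pair of factors together. For completeness, expand
\begin{align}
1=\prod_{i=1}^n\Big(\sum_{\mu\in\hat{\Al}_i}\epsilon_\mu\Big).
\end{align}
A term is nonzero only if each consecutive pair $\epsilon_{\mu_{i}}\epsilon_{\mu_{i+1}}$ is nonzero. This happens exactly when $V_{\mu_i}$ occurs in $\operatorname{Res}V_{\mu_{i+1}}$, because $\epsilon_{\mu_{i+1}}\Al_{i+1}\cong\operatorname{End}\left(V_{\mu_{i+1}}\right)$ and $\epsilon_{\mu_i}$ acts on it as the isotypic projector. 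The surviving terms are therefore exactly the $\epsilon_T$ for paths $T$. The same expansion, with the last factor fixed to $\epsilon_\lambda$, gives \autoref{eq:central_idempotents_from_paths}.

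\textbf{Step 3 (primitivity, the main obstacle).} We compute the image of $\epsilon_T$ in the block $\operatorname{End}\left(V_{T^n}\right)$. Restricting $V_{T^n}$ successively down the tower splits it into lines indexed by paths, and multiplicity-freeness ensures each isotypic component at every step is a single simple module rather than several copies. By induction, $\epsilon_{T^1}\cdots\epsilon_{T^n}$ acts as the projector onto the unique line of $V_{T^n}$ labeled by $T$, which has rank one by \autoref{eq:dimension_counts_paths}. The same element acts as zero on $V_\mu$ for $\mu\neq T^n$. A rank-one idempotent of a full matrix block is primitive. The delicate point is making this precise: the isotypic projector of $\Al_{k-1}$ acting inside $V_{T^k}$ must have image equal to a single copy of $V_{T^{k-1}}$, and it is exactly here that axiom (c) is used.

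\textbf{Step 4 (basis of $\Xl_n$).} Each $\epsilon_T$ lies in $\Xl_n$, and they are linearly independent because they are nonzero and pairwise orthogonal. Conversely, every generator $\epsilon_\mu\in\mathcal{Z}\left(\Al_i\right)$ equals the sum of the $\epsilon_T$ with $T^i=\mu$; this follows by inserting $1=\sum_T\epsilon_T$ and using Step 2. Since the centers $\mathcal{Z}\left(\Al_i\right)$ are spanned by their central idempotents, the span of the $\epsilon_T$ is a subalgebra containing all generators of $\Xl_n$. Hence it equals $\Xl_n$.
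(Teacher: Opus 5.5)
Your argument is correct. The paper gives no proof of this proposition and defers to \cite[Prop.~1.6 \& Cor.~1.7]{Doty2019}; your route is essentially the standard one behind that result: iterated multiplicity-free restriction identifies $\epsilon_T$ with the rank-one projector onto the Gelfand--Tsetlin line of $T$ in $V_{T^n}$, and orthogonality, completeness and the formula for $\epsilon_\lambda$ come from the commuting central idempotents and the expansion of $1=\prod_{i}\sum_{\mu\in\hat{\Al}_i}\epsilon_\mu$.
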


\begin{remark}\label{rem:idempotents_not_central}
    Under the isomorphism in \autoref{eq:wedderburn_idempotent_decomposition}, the idempotent $\epsilon_T$ with $T \in \operatorname{Paths}\left(\lambda\right)$ corresponds to the rank-one projector $\ketbra{T}{T}$ inside the block $\operatorname{End}\left(V_\lambda\right)$ and vanishes on all other blocks. In particular, while $\epsilon_\lambda$ is central and projects onto the full isotypic component, the $\epsilon_T$ are primitive but in general not central. For $\mathbb{C}\left[S_n\right]$ this distinction is familiar from the Young symmetrizers, which are primitive idempotents adapted to Young's natural (non-orthogonal) basis and central only for the trivial and the sign representation.
\end{remark}

The Gelfand-Tsetlin subalgebras, and with them the idempotents $\epsilon_T$, admit a convenient set of generators.

\begin{definition}[{Jucys-Murphy sequence, \cite[Def.~3.1]{Doty2019}; see also \cite[Def.~2.7.7]{grinko2025mixed}}]\label{def:jucys_murphy_sequence}
    Let $\Al_0,\ldots,\Al_n$ be a multiplicity-free family of algebras with Gelfand-Tsetlin subalgebras $\Xl_1,\ldots,\Xl_n$. A sequence of elements $J_1,\ldots,J_n \in \Al_n$ with $J_k \in \Xl_k$ for each $k\in\left[n\right]$ is called
    \begin{enumerate}[label=(\alph*)]
        \item additively central if $J_1 + \cdots + J_k \in \mathcal{Z}\left(\Al_k\right)$ for all $k \in \left[n\right]$,
        \item separating if $\Xl_k = \left\langle J_1,\ldots,J_k\right\rangle$ for all $k \in \left[n\right]$,
    \end{enumerate}
    and a Jucys-Murphy sequence if it is both additively central and separating.
\end{definition}

Since $\lbrace \epsilon_T \ \vert \ T \in \operatorname{Paths}\left(\mathscr{A}\right)\rbrace$ is a basis of $\Xl_n$ by \autoref{prop:canonical_primitive_idempotents}, every $J_k$ expands as
\begin{align}\label{eq:JM_content_expansion}
    J_k = \sum_{T \in \operatorname{Paths}\left(\mathscr{A}\right)} c_T\left(k\right)\, \epsilon_T, \qquad c_T\left(k\right)\in \mathbb{C},
\end{align}
and we call $c_T \coloneqq \left(c_T\left(1\right),\ldots,c_T\left(n\right)\right)$ the content (or weight) vector of the path $T$; equivalently, $c_T\left(k\right)$ is the eigenvalue of $J_k$ on the Gelfand-Tsetlin basis vector $\ket{T}$. By \cite[Lem.~3.9]{Doty2019}, a consequence of additive centrality, the value $c_T\left(k\right)$ does not depend on the whole path $T$ but only on the vertices $T^{k-1}$ and $T^k$, such that it can be assigned to the edge $T^{k-1}\rightarrow T^k$ of the Bratteli diagram,
\begin{align}\label{eq:edge_contents}
    c_{T^{k-1}\rightarrow T^k} \coloneqq c_T\left(k\right);
\end{align}
the separating property furthermore ensures that the values attached to the different outgoing edges of any fixed vertex are pairwise distinct (cf.\ \cite{Doty2019}). Following \cite{Doty2019}, we assign to each edge $\lambda \rightarrow \mu$ between the levels $k-1$ and $k$ the interpolation polynomial
\begin{align}\label{eq:def_interpolation_polynomial}
    P_{\lambda \rightarrow \mu}\left(x\right)\coloneqq \prod_{\tilde{\mu}\,:\, \lambda \rightarrow \tilde{\mu}\neq \mu} \frac{x - c_{\lambda \rightarrow \tilde{\mu}}}{c_{\lambda\rightarrow\mu} - c_{\lambda \rightarrow \tilde{\mu}}},
\end{align}
where the product runs over all edges outgoing from $\lambda$ other than $\lambda \rightarrow \mu$; it is well defined by the pairwise distinctness of the edge values. This yields an explicit algorithm for computing all idempotents from the spectral data of a Jucys-Murphy sequence.

\begin{proposition}[{\cite[Thm.~3.8 \& Thm.~3.11]{Doty2019}; see also \cite[Sec.~2.7.5]{grinko2025mixed}}]\label{prop:idempotent_algorithm}
    Let $J_1,\ldots,J_n$ be a Jucys-Murphy sequence for the multiplicity-free family $\Al_0,\ldots,\Al_n$. With the base case $\epsilon_{\varnothing} = 1$, the primitive central idempotents of $\Al_k$ can be computed recursively for all $k \in \left[n\right]$ and $\mu \in \hat{\Al}_k$ via
    \begin{align}\label{eq:recursion_central_idempotents}
        \epsilon_\mu = \sum_{\lambda\,:\, \lambda\rightarrow \mu} P_{\lambda \rightarrow \mu}\left(J_k\right)\, \epsilon_\lambda,
    \end{align}
    where the sum runs over all edges incoming into $\mu$. Moreover, the canonical primitive idempotents of \autoref{prop:canonical_primitive_idempotents} are given by
    \begin{align}\label{eq:path_idempotent_product_formula}
        \epsilon_T = \prod_{k=1}^n P_{T^{k-1}\rightarrow T^k}\left(J_k\right) = \prod_{k=1}^n \prod_{\mu\,:\, T^{k-1}\rightarrow \mu \neq T^k} \frac{J_k - c_{T^{k-1}\rightarrow \mu}}{c_{T^{k-1}\rightarrow T^k} - c_{T^{k-1}\rightarrow \mu}} \qquad \text{for all} \ T \in \operatorname{Paths}\left(\mathscr{A}\right).
    \end{align}
\end{proposition}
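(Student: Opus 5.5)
The plan is an induction along the tower, with the spectral decomposition of $J_k$ acting on the block of $\epsilon_\lambda$ as the central tool.

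\emph{Recursion for central idempotents.} Fix $\lambda\in\hat{\Al}_{k-1}$ and work inside $\Al_k$. By \autoref{prop:canonical_primitive_idempotents}, applied to the truncated family $\Al_0,\ldots,\Al_k$, the element $\epsilon_\lambda$ (viewed in $\Al_k$ via the embedding) decomposes as
\begin{align}
    \epsilon_\lambda=\sum_{\mu:\lambda\to\mu}\epsilon_\lambda\epsilon_\mu ,
\end{align}
a sum of pairwise orthogonal idempotents. This uses that $\sum_\mu\epsilon_\mu=1$ and that $\epsilon_\lambda\epsilon_\mu\neq0$ exactly when $\lambda\to\mu$, which is multiplicity-freeness. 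Next I use the content expansion \autoref{eq:JM_content_expansion} at level $k$ together with the edge dependence \autoref{eq:edge_contents}. These give
\begin{align}
    J_k\,\epsilon_\lambda\epsilon_\mu=c_{\lambda\to\mu}\,\epsilon_\lambda\epsilon_\mu .
\end{align}
Hence $J_k\epsilon_\lambda=\sum_\mu c_{\lambda\to\mu}\,\epsilon_\lambda\epsilon_\mu$, which is a spectral decomposition with pairwise distinct eigenvalues by the separating property. Lagrange interpolation then yields
\begin{align}
    P_{\lambda\to\mu}(J_k)\,\epsilon_\lambda=\epsilon_\lambda\epsilon_\mu ,
\end{align}
because $P_{\lambda\to\mu}(c_{\lambda\to\tilde\mu})=\delta_{\mu\tilde\mu}$ and polynomials in $J_k$ commute with $\epsilon_\lambda$ (both lie in the commutative algebra $\Xl_k$). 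Summing over the incoming edges $\lambda\to\mu$ and using $\sum_{\lambda\in\hat{\Al}_{k-1}}\epsilon_\lambda=1$ gives
\begin{align}
    \sum_{\lambda:\lambda\to\mu}\epsilon_\lambda\epsilon_\mu=\epsilon_\mu ,
\end{align}
which is \autoref{eq:recursion_central_idempotents}.

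\emph{Product formula.} I induct on $n$. Write $T'=T^0\to\cdots\to T^{n-1}$ for the truncated path. By \autoref{eq:def_path_idempotent} we have $\epsilon_T=\epsilon_{T'}\epsilon_{T^n}$. Since $\epsilon_{T'}\le\epsilon_{T^{n-1}}$, i.e. $\epsilon_{T'}=\epsilon_{T'}\epsilon_{T^{n-1}}$, the identity from the first step gives
\begin{align}
    \epsilon_{T'}\epsilon_{T^n}=\epsilon_{T'}\epsilon_{T^{n-1}}\epsilon_{T^n}=\epsilon_{T'}\,P_{T^{n-1}\to T^n}(J_n)\,\epsilon_{T^{n-1}}=\epsilon_{T'}\,P_{T^{n-1}\to T^n}(J_n).
\end{align}
The inductive hypothesis expresses $\epsilon_{T'}$ as the product of the first $n-1$ factors, and commutativity of $\Xl_n$ lets me order the product freely. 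The base case is $\epsilon_\varnothing=1$, together with $\epsilon_{T^1}=P(J_1)$, which holds trivially since there is a single edge and the empty product equals $1$.

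\emph{Main obstacle.} The delicate point is justifying $J_k\epsilon_\lambda\epsilon_\mu=c_{\lambda\to\mu}\epsilon_\lambda\epsilon_\mu$ rigorously at an intermediate level $k<n$. This requires the content expansion for the truncated family and the fact that $c$ depends only on the edge. Both are supplied by \cite[Lem.~3.9]{Doty2019}, which is assumed. I would verify this identity by summing \autoref{eq:JM_content_expansion} over all paths passing through the edge $\lambda\to\mu$, using \autoref{eq:central_idempotents_from_paths}.
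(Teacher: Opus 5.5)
Your proposal is correct, and it follows the standard argument of the cited source; the paper itself gives no proof beyond the reference to Doty--Lauve--Seelinger \cite{Doty2019}. That argument splits $\epsilon_\lambda$ into the orthogonal pieces $\epsilon_\lambda\epsilon_\mu$, lets $J_k$ act on them by the pairwise distinct edge values, applies Lagrange interpolation, and telescopes along the path.

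There are two small points to fix.
\begin{enumerate}
\item The root need not have a single outgoing edge. Multiplicity-freeness over $\Al_0\cong\mathbb{C}$ only forces $\Al_1$ to be commutative; for example, $\mathbb{C}\hookrightarrow\mathbb{C}\left[\mathbb{Z}_2\right]$ has two edges out of the root. You do not need this claim, however, since the case $k=1$ is just your first step with $\lambda=\varnothing$ and $\epsilon_\varnothing=1$.
\item Your ``main obstacle'' follows directly from additive centrality. Set $z_j\coloneqq J_1+\cdots+J_j\in\mathcal{Z}\left(\Al_j\right)$, so that $J_k=z_k-z_{k-1}$. Since $z_k\epsilon_\mu\in\mathbb{C}\epsilon_\mu$ and $z_{k-1}\epsilon_\lambda\in\mathbb{C}\epsilon_\lambda$, $J_k$ acts by a scalar on $\epsilon_\lambda\epsilon_\mu$.
\end{enumerate}
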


Evaluating these formulas requires exactly the following data about the family $\Al_0,\ldots,\Al_n$: the Bratteli diagram $\mathscr{A}$, a Jucys-Murphy sequence $J_1,\ldots,J_n$, and the edge values $c_{\lambda\rightarrow\mu}$ of \autoref{eq:edge_contents}. For the families relevant to this work, all three are explicit.

\begin{example}[Symmetric group algebras]\label{ex:JM_symmetric_group}
    For the family $\mathbb{C}\left[S_0\right]\hookrightarrow \cdots \hookrightarrow \mathbb{C}\left[S_n\right]$ of \autoref{ex:multiplicity_free_families}, the elements
    \begin{align}\label{eq:JM_elements_symmetric_group}
        J_1 \coloneqq 0, \qquad J_k \coloneqq \sum_{i=1}^{k-1}\left(i,k\right) \quad \text{for} \ 2 \leq k \leq n,
    \end{align}
    with $\left(i,k\right)\in S_k$ the transposition exchanging $i$ and $k$, form a Jucys-Murphy sequence \cite{Jucys1974, Murphy1981}: the partial sum $J_1 + \cdots + J_k$ is the sum of all transpositions in $S_k$ and hence central, and the separating property is established in \cite{Okounkov1996, Vershik2005}. The eigenvalues are given by the cell contents of standard Young tableaux: identifying paths in the Young lattice with standard Young tableaux as in \autoref{ex:young_lattice}, one has $J_k \ket{T} = \operatorname{cont}_k\left(T\right)\ket{T}$, i.e.\ $c_T\left(k\right) = \operatorname{cont}_k\left(T\right)$ and $c_{\mu\rightarrow\lambda} = \operatorname{cont}\left(\lambda \setminus \mu\right)$, such that the content vector of the path $T$ coincides with the content vector of the corresponding tableau introduced in \autoref{sec:Young_diagrams_tableaux} \cite{Okounkov1996, Vershik2005}. Since a surjective homomorphism of semisimple algebras maps the center onto the center, the images $\psi_n^d\left(J_k\right)$ form a Jucys-Murphy sequence for the family $\left(\Al_k^d\right)_k$ of permutation matrix algebras, whose Bratteli diagram is the truncated Young lattice of \autoref{ex:young_lattice} with the same edge values; all data entering \autoref{prop:idempotent_algorithm} are therefore explicit in this case as well (cf.\ \cite{grinko2025mixed}).
\end{example}

Finally, the primitive idempotents $\epsilon_T$ extend to a complete set of matrix units realizing the Wedderburn--Artin decomposition of $\Al_n$ explicitly: for every leaf $\lambda \in \hat{\Al}_n$ and all $T,S \in \operatorname{Paths}\left(\lambda\right)$ there exist elements $E_{T,S}\in \Al_n$ which, under the isomorphism in \autoref{eq:wedderburn_idempotent_decomposition}, act as $\ketbra{T}{S}$ on the block $\operatorname{End}\left(V_\lambda\right)$ and vanish on all other blocks, such that
\begin{align}\label{eq:matrix_units_relations}
\begin{split}
    \Al_n &= \operatorname{span}_{\mathbb{C}}\big\lbrace E_{T,S} \ \big\vert \ T,S \in \operatorname{Paths}\left(\lambda\right), \ \lambda \in \hat{\Al}_n\big\rbrace, \\
    E_{T,S}\, E_{T^\prime\!,S^\prime} &= \delta_{S,T^\prime}\, E_{T,S^\prime}, \qquad \sum_{T \in \operatorname{Paths}\left(\mathscr{A}\right)} E_{T,T} = 1,
\end{split}
\end{align}
with $E_{T,T} = \epsilon_T$ (cf.\ \cite[Sec.~2.7.6]{grinko2025mixed}). In the following subsection this framework is specialized to the algebras appearing in Schur--Weyl duality.

\subsection{Orthonormal basis for Schur-Weyl duality}\label{subsec:schur_weyl_orthonormal_basis}

In this subsection we specialize the framework of \autoref{subsec:gelfand_tsetlin_basis} to the two mutually commuting actions on the tensor space $\mathcal{H}^{\otimes n}$ with $d = \dim \mathcal{H}$: the tensor representation $\psi_n^d$ of $S_n$ from \autoref{eq:def_permutation_action_Sn_H_otimes_n} and the diagonal action of the unitary group,
\begin{align}\label{eq:def_diagonal_unitary_action}
    \phi_n^d : \mathcal{U}\left(d\right)\to \mathcal{B}\left(\mathcal{H}^{\otimes n}\right), \quad U \mapsto U^{\otimes n}.
\end{align}
Besides the permutation matrix algebra $\Al_n^d = \psi_n^d\left(\mathbb{C}\left[S_n\right]\right)$ we consider the algebra
\begin{align}\label{eq:def_diagonal_unitary_algebra}
    \mathcal{U}_n^d \coloneqq \operatorname{span}_{\mathbb{C}}\lbrace U^{\otimes n} \ \vert \ U \in \mathcal{U}\left(d\right)\rbrace,
\end{align}
which remains unchanged if the span is taken over $\operatorname{GL}\left(d\right)$, $\operatorname{SL}\left(d\right)$ or $\operatorname{SU}\left(d\right)$ instead of $\mathcal{U}\left(d\right)$ (cf.\ \cite[Lem.~2.10.2]{grinko2025mixed}); in particular, statements about $\mathcal{U}\left(d\right)$ and $\operatorname{GL}\left(d\right)$ are interchangeable in this context. The two algebras are full mutual commutants.

\begin{theorem}[{Schur-Weyl duality, cf.~\cite[Chap.~9]{Fulton2004}, \cite[Thm.~2.10.3]{grinko2025mixed} and \cite{harrow2005applicationscoherentclassicalcommunication}}]\label{thm:schur_weyl_duality}
    The algebras $\Al_n^d$ and $\mathcal{U}_n^d$ are each other's commutants in $\mathcal{B}\left(\mathcal{H}^{\otimes n}\right)$,
    \begin{align}\label{eq:sw_commutants}
        \mathcal{U}_n^d = \operatorname{End}_{\Al_n^d}\left(\mathcal{H}^{\otimes n}\right), \qquad \Al_n^d = \operatorname{End}_{\mathcal{U}_n^d}\left(\mathcal{H}^{\otimes n}\right),
    \end{align}
    and $\psi_n^d$ is faithful, i.e.\ $\Al_n^d \cong \mathbb{C}\left[S_n\right]$, if and only if $d \geq n$. As an $S_n\times \mathcal{U}\left(d\right)$-bimodule, the tensor space decomposes as
    \begin{align}\label{eq:schur_weyl_decomposition_appendix}
        \mathcal{H}^{\otimes n} \cong \bigoplus_{\lambda \vdash_d n} U_\lambda \otimes V_\lambda,
    \end{align}
    where $V_\lambda$ and $U_\lambda$ are the Specht and Weyl modules introduced in \autoref{sec:notation_preliminaries}, carrying the irreducible actions $\psi_\lambda : S_n \to \mathcal{B}\left(V_\lambda\right)$ and $\phi_\lambda : \mathcal{U}\left(d\right)\to \mathcal{B}\left(U_\lambda\right)$. Equivalently, there exists a unitary $U_{\operatorname{Sch}}$ on $\mathcal{H}^{\otimes n}$, the Schur transform, such that for all $\sigma \in S_n$ and $U \in \mathcal{U}\left(d\right)$
    \begin{align}\label{eq:schur_transform_intertwining}
        U_{\operatorname{Sch}}\, \psi_n^d\left(\sigma\right) U_{\operatorname{Sch}}^\dagger = \bigoplus_{\lambda \vdash_d n} 1_{U_\lambda}\otimes \psi_\lambda\left(\sigma\right), \qquad U_{\operatorname{Sch}}\, \phi_n^d\left(U\right) U_{\operatorname{Sch}}^\dagger = \bigoplus_{\lambda \vdash_d n} \phi_\lambda\left(U\right)\otimes 1_{V_\lambda}.
    \end{align}
\end{theorem}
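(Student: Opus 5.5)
The statement to be proved is Schur--Weyl duality (\autoref{thm:schur_weyl_duality}). I would proceed in three steps: the double commutant relations \autoref{eq:sw_commutants}, then the bimodule decomposition \autoref{eq:schur_weyl_decomposition_appendix} together with the Schur transform \autoref{eq:schur_transform_intertwining}, and finally the faithfulness criterion.

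\textbf{Commutants.} The inclusion $\mathcal{U}_n^d\subseteq \operatorname{End}_{\Al_n^d}\left(\mathcal{H}^{\otimes n}\right)$ is immediate, since $U^{\otimes n}$ commutes with every permutation of tensor factors. For the converse, observe that $\operatorname{End}_{\Al_n^d}\left(\mathcal{H}^{\otimes n}\right)$ consists of the permutation-invariant operators on $\mathcal{B}\left(\mathcal{H}\right)^{\otimes n}$, i.e.\ the symmetric tensors $\operatorname{Sym}^n\left(\mathcal{B}\left(\mathcal{H}\right)\right)$. By polarization, this space is spanned by the powers $X^{\otimes n}$ with $X\in\mathcal{B}\left(\mathcal{H}\right)$, and these may be taken in $\operatorname{GL}\left(d\right)$ because the invertible matrices are dense. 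The span of the $X^{\otimes n}$ is a subspace and hence closed, and by the cited \cite[Lem.~2.10.2]{grinko2025mixed} it coincides with $\mathcal{U}_n^d$. This step is the main obstacle: one must check carefully that the polarization identity expresses every symmetrized elementary tensor as a linear combination of the $X^{\otimes n}$, and that the passage from $\operatorname{GL}\left(d\right)$ to $\mathcal{U}\left(d\right)$ is valid. For the latter I would argue by analytic continuation: a linear functional that vanishes on all $U^{\otimes n}$ defines a holomorphic polynomial that vanishes on $\mathcal{U}\left(d\right)$, which is Zariski dense in $\operatorname{GL}\left(d\right)$. The second relation in \autoref{eq:sw_commutants} then follows from the von Neumann double commutant theorem, since $\Al_n^d$ is a unital $*$-subalgebra of the finite-dimensional algebra $\mathcal{B}\left(\mathcal{H}^{\otimes n}\right)$.

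\textbf{Decomposition and Schur transform.} Because $\Al_n^d$ is semisimple, the double commutant structure gives a decomposition $\mathcal{H}^{\otimes n}\cong\bigoplus_\lambda M_\lambda\otimes V_\lambda$. Here the $\lambda$ range over the Specht modules that actually occur, and the multiplicity spaces $M_\lambda$ are irreducible and pairwise inequivalent modules of the commutant $\mathcal{U}_n^d$. To identify $M_\lambda\cong U_\lambda$ and to see that exactly the $\lambda\vdash_d n$ occur, I would compute characters. Restricting to the torus, the trace of $\operatorname{diag}(x)^{\otimes n}\psi_n^d(\sigma)$ equals the power sum product $p_{\text{cycle type}(\sigma)}(x)$. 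The Frobenius formula $p_\rho=\sum_\lambda\chi_\lambda(\rho)s_\lambda$, together with the fact that $s_\lambda(x_1,\dots,x_d)=0$ exactly when $\ell(\lambda)>d$, then identifies the character of $M_\lambda$ as the Schur polynomial $s_\lambda$, which is the Weyl character of $U_\lambda$. Choosing orthonormal bases in each summand produces the unitary $U_{\operatorname{Sch}}$ satisfying \autoref{eq:schur_transform_intertwining}.

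\textbf{Faithfulness.} The kernel of $\psi_n^d$ is the sum of the blocks of $\mathbb{C}\left[S_n\right]$ belonging to those $\lambda$ that do not occur in the decomposition. By the previous step these are exactly the $\lambda\vdash n$ with $\ell(\lambda)>d$. Such partitions exist if and only if $d<n$, for instance the column $(1^n)$. Hence $\psi_n^d$ is faithful, i.e.\ $\Al_n^d\cong\mathbb{C}\left[S_n\right]$, if and only if $d\ge n$.
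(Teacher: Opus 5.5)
The paper gives no proof of this theorem and simply defers to the cited literature. Your argument is correct and is the classical textbook proof: symmetric tensors in $\mathcal{B}(\mathcal{H})^{\otimes n}$ are spanned by the $X^{\otimes n}$, you apply the double centralizer theorem, and you use Frobenius' formula $p_\rho=\sum_{\lambda\vdash n}\chi_\lambda(\rho)s_\lambda$, where $s_\lambda(x_1,\ldots,x_d)=0$ precisely when $\ell(\lambda)>d$. The one step to phrase more carefully is the unitarity of $U_{\operatorname{Sch}}$. It follows from the mutual orthogonality of the isotypic components of the unitary representation, together with the fact that equivalent unitary representations are unitarily equivalent (take the polar decomposition of an intertwiner); choosing orthonormal bases in each summand does not by itself give a unitary intertwiner.
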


Two consequences are used repeatedly in \autoref{sec:efficient_inner_sequence}. First, on the level of the algebras, \autoref{eq:schur_weyl_decomposition_appendix} realizes the Wedderburn--Artin decompositions
\begin{align}\label{eq:sw_isotypic_algebras}
    \Al_n^d \cong \bigoplus_{\lambda \vdash_d n} \operatorname{End}\left(V_\lambda\right), \qquad \mathcal{U}_n^d \cong \bigoplus_{\lambda \vdash_d n} \operatorname{End}\left(U_\lambda\right),
\end{align}
such that the commutant of the permutation action decomposes into blocks
\begin{align}\label{eq:sw_commutant_blocks}
    \operatorname{End}_{\mathbb{C}\left[S_n\right]}\left(\mathcal{H}^{\otimes n}\right) = \operatorname{End}_{\Al_n^d}\left(\mathcal{H}^{\otimes n}\right) \cong \bigoplus_{\lambda \vdash_d n} \operatorname{End}\left(U_\lambda\right)\otimes 1_{V_\lambda}.
\end{align}
Second, comparing dimensions in \autoref{eq:schur_weyl_decomposition_appendix} yields the identity $d^{\,n} = \sum_{\lambda \vdash_d n} m_\lambda\, d_\lambda$; the number of summands and the dimensions $m_\lambda$ are polynomial in $n$ for fixed $d$, whereas the $d_\lambda$ are in general exponential (cf.\ \autoref{lem:bounds_symmetric_group}). The remainder of this subsection equips both tensor factors of every block with their Gelfand-Tsetlin bases and describes the resulting Schur basis and the transforms between it and the computational basis.

\subsubsection{Representation theory of $S_n$}\label{subsubsec:rep_theory_Sn}

The irreducible representations of $S_n$, equivalently the simple $\mathbb{C}\left[S_n\right]$-modules, are exactly the Specht modules $V_\lambda$ with $\lambda \vdash n$ and $d_\lambda = \dim V_\lambda = \lvert \operatorname{SYT}\left(\lambda\right)\rvert$ given by the hook length formula (cf.\ \autoref{eq:tableaux_count_dimensions} and \autoref{eq:hook_length_formula}); this classification follows, for instance, from the Okounkov--Vershik approach outlined in \autoref{subsec:gelfand_tsetlin_basis} \cite{Okounkov1996, Vershik2005, sagan2013symmetric}. For the quotient algebra $\Al_n^d$ the simple modules are precisely the Specht modules surviving the truncation,
\begin{align}\label{eq:simple_modules_permutation_matrix_algebra}
    \hat{\Al}_n^d = \lbrace \lambda \vdash n \ \vert \ \ell\left(\lambda\right)\leq d\rbrace,
\end{align}
reflecting the non-trivial kernel of $\psi_n^d$ for $d < n$: all isotypic components of partitions with more than $d$ rows are annihilated, cf.\ \autoref{ex:multiplicity_free_families} and \cite[Sec.~2.8]{grinko2025mixed}.

By \autoref{ex:JM_symmetric_group}, the Gelfand-Tsetlin basis $\lbrace \ket{T}\rbrace$ of a Specht module $V_\lambda$ is labeled by the paths $T \in \operatorname{Paths}\left(\lambda\right)$ in the Young lattice, equivalently by the standard Young tableaux $\operatorname{SYT}\left(\lambda\right)$ via the growth sequences of \autoref{sec:Young_diagrams_tableaux}. A third equivalent encoding is the Yamanouchi word of $T$: the tuple $\left(w_1,\ldots,w_n\right)$ where $w_k$ is the row index of the cell added at the $k$-th step of the growth sequence. For example, the path
\begin{align}\label{eq:example_yamanouchi_word}
    T = \Big(\varnothing, \ \ydiagram{1}\,, \ \ydiagram{2}\,, \ \ydiagram{3}\,, \ \ydiagram{3,1}\, \Big)
\end{align}
has Yamanouchi word $\left(1,1,1,2\right)$ (cf.\ \cite[p.~21]{grinko2025mixed}). The action of $S_n$ on this basis is given by Young's orthogonal form, which is fully determined by the axial distances of \autoref{sec:Young_diagrams_tableaux}: writing $s_i \coloneqq \left(i, i+1\right)\in S_n$ for the adjacent transpositions, which generate $S_n$, one has the following classical result.

\begin{theorem}[{Young--Yamanouchi basis, \cite[Thm.~2.8.3]{grinko2025mixed}; cf.~\cite{james2006representation, Okounkov1996}}]\label{thm:young_yamanouchi_basis}
    Let $\lambda \vdash n$ and $T \in \operatorname{SYT}\left(\lambda\right)$, let $u_i, u_{i+1}$ denote the cells of $T$ containing $i$ and $i+1$, and set $r_i\left(T\right)\coloneqq \operatorname{ax}\left(u_i, u_{i+1}\right)$. Then for every $i \in \left[n-1\right]$
    \begin{align}\label{eq:youngs_orthogonal_form}
        \psi_\lambda\left(s_i\right)\ket{T} = \frac{1}{r_i\left(T\right)}\, \ket{T} + \sqrt{1 - \frac{1}{r_i\left(T\right)^2}}\ \ket{s_i T},
    \end{align}
    where $s_i T$ denotes the filling obtained from $T$ by exchanging the entries $i$ and $i+1$. If $i$ and $i+1$ lie in the same row or the same column of $T$, then $r_i\left(T\right) = 1$ or $r_i\left(T\right) = -1$, respectively, the second term vanishes, and $\ket{T}$ is an eigenvector with eigenvalue $\pm 1$; otherwise $\lvert r_i\left(T\right)\rvert \geq 2$ and $s_i T \in \operatorname{SYT}\left(\lambda\right)$.
\end{theorem}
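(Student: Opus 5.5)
The plan is the Okounkov--Vershik route. Let $T$ range over standard tableaux, so that $\ket{T}$ is the Gelfand--Tsetlin line picked out by the path $T$, and fix $i\in[n-1]$.

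\emph{Step 1: the span $\operatorname{span}\{\ket{T},\psi_\lambda(s_i)\ket{T}\}$ is at most two-dimensional and contains $\ket{s_iT}$.} Since $s_i$ commutes with $\mathbb{C}[S_{i-1}]$ and with $J_k$ for $k\notin\{i,i+1\}$, the vector $\psi_\lambda(s_i)\ket{T}$ is a joint eigenvector of $J_1,\ldots,J_{i-1}$ and of $J_{i+2},\ldots,J_n$, with the same eigenvalues as $\ket{T}$. By the separating property of \autoref{def:jucys_murphy_sequence}, together with \autoref{ex:JM_symmetric_group}, the content vector determines the tableau. Hence $\psi_\lambda(s_i)\ket{T}$ lies in the span of those $\ket{T'}$ whose content vector agrees with $c_T$ outside positions $i,i+1$. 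Additive centrality gives $J_i+J_{i+1}\in\mathcal{Z}(\mathbb{C}[S_{i+1}])$, so the multiset of contents at positions $i,i+1$ is also preserved. The only candidates are therefore $T$ itself and the tableau with contents at $i,i+1$ swapped, namely $s_iT$, when the latter is standard.

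\emph{Step 2: compute the $2\times 2$ matrix on this span.} The degenerate affine Hecke relation $s_iJ_i s_i=J_{i+1}-s_i$, equivalently $J_{i+1}s_i=s_iJ_i+1$, follows directly from $J_{i+1}=s_iJ_is_i+s_i$, since $s_i(j,i)s_i=(j,i+1)$. Write $a=\operatorname{cont}_i(T)$ and $b=\operatorname{cont}_{i+1}(T)$, so that $r_i(T)=b-a$. Set $v=\psi_\lambda(s_i)\ket{T}-\frac{1}{b-a}\ket{T}$. Applying the relation shows that $J_i v=bv$ and $J_{i+1}v=av$. Since $b\neq a$, the vector $v$ is a joint eigenvector with contents swapped, so $v$ is proportional to $\ket{s_iT}$.

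\emph{Step 3: identify the coefficient.} If $i$ and $i+1$ are adjacent in a row or a column, then $b-a=\pm1$ and no tableau $s_iT$ exists. In that case $v=0$, and $\ket{T}$ is an eigenvector with eigenvalue $1/r_i=\pm1$, consistent with $s_i^2=1$. Otherwise $|r_i|\ge2$, because cells that are not adjacent in a row or column lie on diagonals at distance at least two, and $s_iT$ is standard. Unitarity of $\psi_\lambda(s_i)$ in the orthonormal GT basis gives $\|v\|^2=1-1/r_i^2$. The remaining freedom is a phase, and I would fix it by the phase convention for the GT vectors: choose the phase of $\ket{s_iT}$ so that the coefficient is the positive square root. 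This phase choice is the main subtlety: one must argue that these choices are globally consistent across all tableaux, e.g.\ by fixing phases along reduced words from the row-reading tableau and checking consistency via the braid and Coxeter relations, or simply cite \cite[Thm.~2.8.3]{grinko2025mixed} for it. Self-adjointness of $\psi_\lambda(s_i)$ then forces the matching coefficient in $\psi_\lambda(s_i)\ket{s_iT}$ to be the same positive number, which is consistent with \eqref{eq:youngs_orthogonal_form} applied to $s_iT$, since $r_i(s_iT)=-r_i(T)$.
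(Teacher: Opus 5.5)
The paper gives no proof of this statement; it is quoted from \cite[Thm.~2.8.3]{grinko2025mixed} (cf.\ \cite{james2006representation, Okounkov1996}). Your proposal therefore supplies a derivation that the text only cites. Your route is the Okounkov--Vershik computation from $J_{i+1}s_i=s_iJ_i+1$, it uses only the Jucys--Murphy machinery the appendix already sets up, and its core is correct. Indeed, $v$ has $J_i$-eigenvalue $b$ and $J_{i+1}$-eigenvalue $a$, it is orthogonal to $\ket{T}$ because $a\neq b$, and unitarity gives $\lVert v\rVert^2=1-r_i(T)^{-2}$.

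Two slips need repair. First, $J_i+J_{i+1}$ is not central in $\mathbb{C}[S_{i+1}]$ for $i\geq 3$; for example, $J_3+J_4$ is the sum of all transpositions of $S_4$ minus $(1,2)$. It merely commutes with $s_i$, being a difference of central elements of $\mathbb{C}[S_{i+1}]$ and $\mathbb{C}[S_{i-1}]$. You do not need it at all. The contents at positions $\leq i-1$ fix $T^{(i-1)}$, and, peeling removable cells off $\lambda$, the contents at positions $\geq i+2$ fix $T^{(i+1)}$. Hence a tableau whose content vector agrees with $c_T$ off $\{i,i+1\}$ is either $T$ or $s_iT$. Second, non-adjacent cells can lie on neighbouring diagonals, e.g.\ $(1,1)$ and $(2,3)$. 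The correct reason for $\lvert r_i(T)\rvert\geq 2$ is that non-adjacent $u_i,u_{i+1}$ are both removable cells of $T^{(i+1)}$, and distinct removable cells have contents differing by at least $2$. This also shows that $s_iT$ is standard.

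The phase step is the one place where your text is not yet a proof, and it is precisely what the citation covers. Choosing phases edge by edge requires trivial holonomy around every cycle of the tableau graph. Your braid-relation sketch would additionally need Matsumoto's theorem and the fact that all reduced paths stay in $\operatorname{SYT}(\lambda)$.

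A cleaner closure goes as follows.
\begin{itemize}
\item Let $T_0$ be the tableau filled with $1,\ldots,n$ row by row, and let $\pi_T\in S_n$ be the unique permutation with $\pi_TT_0=T$ (acting on entries).
\item Fix a unit vector $w_{T_0}\in\epsilon_{T_0}V_\lambda$ and set $w_T\coloneqq\epsilon_T\psi_\lambda(\pi_T)w_{T_0}$. This is path-independent by definition.
\item By your Step~2, $\psi_\lambda(s_i)$ maps $\epsilon_{T'}V_\lambda$ into $\epsilon_{T'}V_\lambda+\epsilon_{s_iT'}V_\lambda$, and $\ell(\pi_{s_iT'})=\ell(\pi_{T'})\pm 1$. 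Induction on $\ell(\pi_T)$ then gives $\psi_\lambda(\pi_T)w_{T_0}\in w_T+\operatorname{span}\{w_{T'}:\ell(\pi_{T'})<\ell(\pi_T)\}$.
\item Applying $\epsilon_{s_iT}$ to $\psi_\lambda(s_i\pi_T)w_{T_0}$ yields $\psi_\lambda(s_i)w_T=r_i(T)^{-1}w_T+w_{s_iT}$ whenever $\ell(\pi_{s_iT})=\ell(\pi_T)+1$. Here $w_{s_iT}\neq 0$ by your norm computation.
\item Every $T\neq T_0$ is reached: it has some $i$ with $i+1$ in a strictly higher row than $i$, and then $s_iT$ is standard with $\ell(\pi_{s_iT})=\ell(\pi_T)-1$.
\item Setting $\ket{T}\coloneqq w_T/\lVert w_T\rVert$ makes every coefficient $\bra{s_iT}\psi_\lambda(s_i)\ket{T}$ with $\ell(\pi_{s_iT})=\ell(\pi_T)+1$ positive, hence equal to $\sqrt{1-r_i(T)^{-2}}$. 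Self-adjointness of $\psi_\lambda(s_i)$ gives the length-decreasing coefficients.
\end{itemize}
With this inserted, your argument is a complete proof of the cited theorem.
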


\begin{remark}[Relation to Young symmetrizers]\label{rem:young_symmetrizers}
    The Young--Yamanouchi basis is the unitarization of the classical Young symmetrizer
    construction. For $T \in \operatorname{SYT}\left(\lambda\right)$ with row group $R_T$ and column
    group $C_T$, the normalized Young symmetrizer
    $e_T \coloneqq \frac{d_\lambda}{n!}\sum_{q \in C_T}\sum_{p \in R_T}\operatorname{sgn}\left(q\right) q\, p$
    is a primitive idempotent realizing the Specht module as
    $\mathbb{C}\left[S_n\right]e_T \cong V_\lambda$, adapted to Young's natural, non-orthogonal basis
    \cite{Fulton2004, james2006representation}. For distinct standard tableaux of the same shape
    these idempotents are only triangularly orthogonal --- products vanish in one order of the
    last-letter order, but not in both --- and performing Gram--Schmidt along this order turns the
    family $\lbrace e_T \rbrace$ into Young's seminormal units \cite{james2006representation},
    i.e.\ precisely the pairwise orthogonal canonical idempotents
    $\lbrace \epsilon_T\rbrace$ of \autoref{prop:canonical_primitive_idempotents} under the
    identification of paths with standard tableaux; accordingly, $\ket{T}$ spans the image
    $\epsilon_T V_\lambda$. The two presentations differ sharply in computational cost: $e_T$ is a
    sum over $\lvert R_T\rvert \cdot \lvert C_T \rvert$ group elements and thus has, in general,
    exponentially many terms, whereas the interpolation formula of \autoref{prop:idempotent_algorithm}
    assembles $\epsilon_T$ from the Jucys-Murphy spectra in time polynomial in $n$. This is the
    $S_n$-counterpart of the comparison between the semistandard and the Gelfand-Tsetlin bases of
    the Weyl modules in the discussion preceding \autoref{thm:bose_de_Finetti}.
\end{remark}

In particular, all representing matrices in this basis are real orthogonal with entries that are square roots of rational numbers. Together with the Jucys-Murphy elements and the idempotent algorithm of \autoref{subsubsec:GT_subalgebra_JM}, whose spectral data for $S_n$ are the cell contents, \autoref{thm:young_yamanouchi_basis} makes the representation theory of $S_n$ and of $\Al_n^d$ completely explicit. The branching along the tower is Young's rule:

\begin{proposition}[{Young's branching rule, \cite[Thm.~2.8.3]{sagan2013symmetric}}]\label{prop:young_branching_rule}
    For every $\lambda \vdash k$ the restriction of the Specht module $V_\lambda$ to $S_{k-1}$ decomposes multiplicity-free as
    \begin{align}\label{eq:young_branching_rule}
        \operatorname{Res}^{S_k}_{S_{k-1}} V_\lambda \cong \bigoplus_{\mu \in \lambda^-} V_\mu, \qquad \text{in particular} \qquad \dim V_\lambda = \sum_{\mu \in \lambda^-} \dim V_\mu.
    \end{align}
    Consequently, for every $\mu \in \lambda^-$ there is an isometric inclusion $V_\mu \hookrightarrow V_\lambda$ of $S_{k-1}$-modules, unique up to a phase, and the inclusions for different $\mu$ have pairwise orthogonal ranges.
\end{proposition}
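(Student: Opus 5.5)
We prove Young's branching rule through the Gelfand--Tsetlin machinery of \autoref{subsec:gelfand_tsetlin_basis}, with the Jucys--Murphy elements of \autoref{ex:JM_symmetric_group} as the main tool. The decomposition of $\operatorname{Res}^{S_k}_{S_{k-1}}V_\lambda$ is read off from the spectra of $J_1,\ldots,J_{k-1}$, which lie in $\mathbb{C}\left[S_{k-1}\right]$. Taking the Okounkov--Vershik classification of Specht modules as known, a basis of $V_\lambda$ diagonalizes the commutative Gelfand--Tsetlin subalgebra $\Xl_k$. That basis is indexed by standard tableaux $T\in\operatorname{SYT}\left(\lambda\right)$, and $J_j$ acts on $\ket{T}$ by the scalar $\operatorname{cont}_j\left(T\right)$.

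The first step is to group this basis by the position of the entry $k$. Removing the cell containing $k$ from $T$ leaves a standard tableau $T^{\left(k-1\right)}$ of shape $\mu\in\lambda^-$. Conversely, every pair consisting of $\mu\in\lambda^-$ and $S\in\operatorname{SYT}\left(\mu\right)$ extends uniquely to some $T$. This is the growth-sequence bijection \autoref{eq:SYT_growth_sequence}, and it gives the counting identity $\lvert\operatorname{SYT}\left(\lambda\right)\rvert=\sum_{\mu\in\lambda^-}\lvert\operatorname{SYT}\left(\mu\right)\rvert$, hence the dimension formula.

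Next, for fixed $\mu\in\lambda^-$ let $W_\mu\coloneqq\operatorname{span}\lbrace\ket{T}:T^{\left(k-1\right)}\text{ has shape }\mu\rbrace$. We need $W_\mu$ to be $S_{k-1}$-invariant and isomorphic to $V_\mu$. For invariance we use Young's orthogonal form \autoref{thm:young_yamanouchi_basis} with $i\leq k-2$. Since $s_i$ exchanges only the entries $i$ and $i+1$, it leaves the cell of $k$ fixed, so $W_\mu$ is stable under the adjacent transpositions generating $S_{k-1}$. The matrix coefficients of $s_i$ on $W_\mu$ depend only on the axial distances of the cells of $i$ and $i+1$, and these cells lie in $T^{\left(k-1\right)}$. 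The map $\ket{T}\mapsto\ket{T^{\left(k-1\right)}}$ therefore intertwines the action on $W_\mu$ with Young's orthogonal form for $V_\mu$, which gives $W_\mu\cong V_\mu$.

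The main obstacle is that this route depends on Young's orthogonal form already being established for all shapes; otherwise the argument is circular. If that form is not available, we argue spectrally instead. The joint $\left(J_1,\ldots,J_{k-1}\right)$-eigenvalues on $W_\mu$ are the content vectors of tableaux of shape $\mu$, and content vectors determine tableaux. Therefore the $S_{k-1}$-isotypic type of each cyclic piece is $\mu$, because the central element $J_1+\cdots+J_{k-1}$ together with the separating property pins down the isotype. Comparing dimensions then forces $W_\mu\cong V_\mu$ with multiplicity one.

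The remaining claims follow from $W_\mu\perp W_{\mu'}$ for $\mu\neq\mu'$, since the spaces are spanned by disjoint subsets of an orthonormal basis, together with Schur's lemma. An isometric $S_{k-1}$-equivariant map $V_\mu\to V_\lambda$ has image in the $\mu$-isotypic component $W_\mu$, and that component is simple, so the map is unique up to a scalar. Being an isometry fixes the modulus of that scalar, leaving only a phase. Ranges for different $\mu$ are orthogonal.
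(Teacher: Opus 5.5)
Your proposal is correct in substance, but it takes a different route from the paper. The paper gives no proof of its own and imports the statement from \cite[Thm.~2.8.3]{sagan2013symmetric}. That argument is elementary and stays in the polytabloid picture:
\begin{itemize}
\item filter $V_\lambda$ by spans of standard polytabloids, according to which removable corner contains $k$;
\item map each successive quotient onto $V_\mu$, $\mu\in\lambda^-$, by the $S_{k-1}$-equivariant map that deletes $k$;
\item the counting identity $\lvert\operatorname{SYT}(\lambda)\rvert=\sum_{\mu\in\lambda^-}\lvert\operatorname{SYT}(\mu)\rvert$ forces these surjections to be isomorphisms, and Maschke's theorem splits the filtration.
\end{itemize}
The isometry, phase and orthogonality consequences are left implicit and follow from Schur's lemma, as in your last paragraph.

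You instead read the branching off from Gelfand--Tsetlin spectral theory. Your first variant is complete once Young's orthogonal form is granted for all shapes. The generators $s_1,\dots,s_{k-2}$ fix the cell of $k$, and $r_i(T)=r_i(T^{(k-1)})$, so $\ket{T}\mapsto\ket{T^{(k-1)}}$ is a unitary $S_{k-1}$-intertwiner $W_\mu\to V_\mu$. This gives more than the textbook proof: explicit isometric inclusions $\ket{S}\mapsto\ket{S\rightarrow\lambda}$ with phases fixed, and orthogonal ranges for free. That is exactly the form used in the Schur-basis cascade of \autoref{sec:efficient_inner_sequence}.

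The price is a much heavier input, and you must take it from the right place. Inside the paper, \autoref{ex:young_lattice} uses Young's branching rule to identify the Bratteli diagram with the Young lattice. \autoref{ex:JM_symmetric_group} then labels the Gelfand--Tsetlin basis by tableaux through that identification, so citing either would be circular. You need Okounkov--Vershik's theorem in its standalone spectral form: the joint spectrum of $J_1,\dots,J_k$ on $V_\lambda$ is simple and equals $\lbrace c_T : T\in\operatorname{SYT}(\lambda)\rbrace$. They prove this before, and independently of, identifying the branching graph.

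In your spectral fallback, the claim that the central element $J_1+\cdots+J_{k-1}$ pins down the isotype is false as stated. Its eigenvalue is the content sum, which already agrees on $(4,1,1)$ and $(3,3)$. The argument that works is as follows.
\begin{itemize}
\item The central idempotents $\epsilon_\nu$ of $\mathbb{C}[S_{k-1}]$ commute with $J_1,\dots,J_{k-1}$. So $\epsilon_\nu\ket{T}$ is a joint eigenvector with eigenvalue tuple $c_{T^{(k-1)}}$ inside the $\nu$-isotypic component.
\item By the spectral theorem at level $k-1$, and since content vectors determine tableaux, this vector vanishes unless $\nu$ is the shape of $T^{(k-1)}$.
\item Hence $W_\mu$ lies in the $\mu$-isotypic component. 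Equality follows from $\bigoplus_\mu W_\mu=V_\lambda$, which also gives $S_{k-1}$-invariance without the orthogonal form.
\item Finally $\dim W_\mu=\dim V_\mu$ gives multiplicity one.
\end{itemize}
No cyclic submodules are needed.
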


Since the number of removable and addable cells is controlled by \autoref{sec:Young_diagrams_tableaux}, we have $\lvert \lambda^-\rvert \leq \ell\left(\lambda\right)\leq d$ and $\lvert \lbrace \nu \in \lambda^+ \ \vert \ \ell\left(\nu\right)\leq d\rbrace\rvert \leq d$ for every $\lambda \vdash_d k$, i.e.\ within Schur-Weyl duality every vertex of the branching graph has at most $d$ incoming and at most $d$ outgoing edges.

\begin{remark}[Branching and truncation]\label{rem:branching_truncation}
    The truncation to $\lambda \vdash_d k$ is compatible with the branching in both directions. Restricting downwards, removing a cell can never increase the number of rows, such that $\lambda \vdash_d k$ implies $\mu \vdash_d \left(k-1\right)$ for every $\mu \in \lambda^-$; \autoref{eq:young_branching_rule} therefore holds verbatim for the simple modules of the permutation matrix algebras. Inducing upwards, i.e.\ tensoring with an additional factor $\mathcal{H}$, partitions $\nu \in \mu^+$ with $\ell\left(\nu\right) = d+1$ would appear on the level of $\mathbb{C}\left[S_k\right]$, but the corresponding Weyl modules vanish on $\mathbb{C}^d$, $U_\nu = 0$ (cf.\ \autoref{sec:Young_diagrams_tableaux}), such that these summands are absent from $\mathcal{H}^{\otimes k}$; this is precisely the truncation of the Young lattice in \autoref{fig:Bratelli_S_n}.
\end{remark}

\subsubsection{Representation theory of $\mathcal{U}(d)$}\label{subsubsec:rep_theory_Ud}

The irreducible polynomial representations of $\operatorname{GL}\left(d\right)$ are classified by partitions $\lambda \vdash_d n$, $n \in \mathbb{N}_0$, realized by the Weyl modules $U_\lambda$ with highest weight $\lambda$ and $\dim U_\lambda = m_\lambda$ given by Weyl's dimension formula (cf.\ \autoref{eq:weyl_dimension_formula} and \cite[Chap.~8]{Fulton2004}). Since $\mathcal{U}\left(d\right)$ is a maximal compact subgroup of $\operatorname{GL}\left(d\right)$, every finite-dimensional continuous representation of $\mathcal{U}\left(d\right)$ arises as the restriction of a unique rational representation of $\operatorname{GL}\left(d\right)$, and on the tensor spaces $\left(\mathbb{C}^d\right)^{\otimes n}$ only the polynomial representations occur; the representation theories of $\mathcal{U}\left(d\right)$ and $\operatorname{GL}\left(d\right)$ are therefore interchangeable for our purposes \cite[Sec.~2.9]{grinko2025mixed}. We denote the irreducible action on $U_\lambda$ by $\phi_\lambda$, as in \autoref{thm:schur_weyl_duality}.

The Gelfand-Tsetlin construction applies to the unitary side as well: the chain $\operatorname{GL}\left(1\right)\hookrightarrow \operatorname{GL}\left(2\right)\hookrightarrow \cdots \hookrightarrow \operatorname{GL}\left(d\right)$, embedded as upper-left blocks, is multiplicity-free \cite{Vilenkin1992}. Concretely, the restriction of $U_\lambda$ to $\operatorname{GL}\left(d-1\right)$ decomposes multiplicity-free into the Weyl modules $U_\mu$ of all highest weights $\mu = \left(\mu_1,\ldots,\mu_{d-1}\right)$ interlacing $\lambda$,
\begin{align}\label{eq:GL_branching_interlacing}
    \lambda_1 \geq \mu_1 \geq \lambda_2 \geq \mu_2 \geq \cdots \geq \mu_{d-1}\geq \lambda_d, \qquad \text{abbreviated} \quad \mu \sqsubseteq \lambda.
\end{align}
Iterating along the chain, the resulting Gelfand-Tsetlin basis of $U_\lambda$ is labeled by the tuples $\left(m_1, m_2,\ldots,m_d\right)$ of interlacing highest weights with $m_d = \lambda$, arranged into triangular arrays:
\begin{align}\label{eq:def_GT_pattern}
    M = \begin{pNiceArray}{ccccccc}
        m_{1,d} & & m_{2,d} & & \ldots & & m_{d,d} \\
        & m_{1,d-1} & & \ldots & & m_{d-1,d-1} & \\
        & & & \ddots & & & \\
        & & m_{1,2} & & m_{2,2} & & \\
        & & & m_{1,1} & & &
    \end{pNiceArray},
\end{align}
where $m_j \coloneqq \left(m_{1,j},\ldots,m_{j,j}\right)\in \mathbb{N}_0^j$ denotes the $j$-th row. A Gelfand-Tsetlin pattern of shape $\lambda \vdash_d n$ and length $d$ is such an array with top row $m_d = \left(\lambda_1,\ldots,\lambda_d\right)$, where $\lambda$ is padded with zeros to length $d$, whose entries satisfy the interlacing (in-betweenness) conditions
\begin{align}\label{eq:def_interlacing_condition}
    m_{i,j}\geq m_{i,j-1}\geq m_{i+1,j} \qquad \text{for all} \ 1 \leq i < j \leq d,
\end{align}
i.e.\ $m_1 \sqsubseteq m_2 \sqsubseteq \cdots \sqsubseteq m_d = \lambda$; the set of these is denoted by $\operatorname{GT}\left(\lambda, d\right)$, or $\operatorname{GT}\left(\lambda\right)$ if $d$ is clear from the context. Gelfand-Tsetlin patterns are in bijection with semistandard Young tableaux: given $T \in \operatorname{SSYT}\left(\lambda, d\right)$, the pattern entries
\begin{align}\label{eq:GT_SSYT_bijection}
\begin{split}
    m_{i,j} &= \lvert \lbrace \text{entries} \leq j \ \text{in row} \ i \ \text{of} \ T\rbrace\rvert, \qquad \text{equivalently}\\
    m_{i,j}-m_{i,j-1} &= \lvert\lbrace \text{entries} = j \ \text{in row} \ i \ \text{of} \ T\rbrace\rvert,
\end{split}
\end{align}
define the corresponding $M \in \operatorname{GT}\left(\lambda,d\right)$, such that $m_j$ is the shape of the subtableau of entries at most $j$; in particular,
\begin{align}\label{eq:GT_pattern_count}
    \lvert \operatorname{GT}\left(\lambda, d\right)\rvert = \lvert \operatorname{SSYT}\left(\lambda,d\right)\rvert = m_\lambda,
\end{align}
consistent with \autoref{eq:tableaux_count_dimensions}. The weight of a pattern $M$ is the vector of consecutive row-sum differences,
\begin{align}\label{eq:def_GT_weight}
    w\left(M\right)_j \coloneqq \lvert m_j \rvert - \lvert m_{j-1}\rvert \quad \text{with} \quad \lvert m_j\rvert \coloneqq \sum_{i=1}^j m_{i,j}, \qquad j \in \left[d\right],
\end{align}
and coincides with the weight of the corresponding semistandard tableau from \autoref{sec:Young_diagrams_tableaux}. The Gelfand-Tsetlin basis vectors $\ket{M}$, $M \in \operatorname{GT}\left(\lambda,d\right)$, diagonalize the maximal torus,
\begin{align}\label{eq:GT_torus_weight_action}
    \phi_\lambda\left(\operatorname{diag}\left(u_1,\ldots,u_d\right)\right)\ket{M} = \Big(\prod_{j=1}^d u_j^{\,w\left(M\right)_j}\Big)\ket{M}, \qquad u_1,\ldots,u_d \in \mathcal{U}\left(1\right).
\end{align}
Explicit formulas for the action of the remaining generators of the Lie algebra $\mathfrak{gl}_d$ in this basis go back to Gelfand and Tsetlin \cite{gelfand1950matrix} and can be found in \cite[p.\ 363]{Vilenkin1992} and \cite[Thm.\ 2.9.5]{grinko2025mixed}; we do not need them here. A concrete realization of the basis, which also fixes the phases, is obtained recursively in the next two sections: starting from the computational basis of $ U_{(1)}\simeq\mathbb{C}^d$, the isometries $C_{\mu \leftarrow \lambda}$ with the explicit real coefficients of \autoref{subsubsec:CG_coefficients_GT} embed every Gelfand--Tsetlin vector into $\HS^{\otimes k}$, cf.\ \autoref{lem:schur_transform_efficiency}.

\begin{remark}[Gelfand-Tsetlin versus semistandard bases]\label{rem:GT_vs_semistandard}
    The bijection \autoref{eq:GT_SSYT_bijection} identifies the index sets of two natural bases
    of the Weyl module $U_\lambda$, which should not be confused. The classical construction
    realizes $U_\lambda$ inside $\mathcal{H}^{\otimes n}$ as the image of a Young symmetrizer
    associated with a fixed standard tableau of shape $\lambda$, and the vectors obtained by
    applying the symmetrizer to the computational basis vectors labeled by the semistandard
    tableaux $S \in \operatorname{SSYT}\left(\lambda,d\right)$ form the \emph{semistandard basis}
    \cite[Chap.~8]{Fulton2004}. Both bases consist of weight vectors, with weights matched by
    \autoref{eq:GT_SSYT_bijection} and \autoref{eq:def_GT_weight}, so the transition matrix is
    block diagonal along weight spaces and, with respect to a suitable order on
    $\operatorname{SSYT}\left(\lambda,d\right)$, triangular. The difference lies in the metric
    properties: the semistandard basis has integral coordinates in the computational basis, but
    it is neither orthogonal nor canonical --- the Young symmetrizer is not self-adjoint, its
    image depends on the chosen standard tableau, and inner products are encoded in a nontrivial
    Gramian (cf.\ \cite[Sec.~2.10]{sagan2013symmetric} for the analogous discussion for Specht
    modules). Consequently, coordinates with respect to the semistandard basis define an algebra
    isomorphism but not a $*$-isomorphism: adjoints, positivity, POVM elements and isometries are
    not preserved coordinatewise. The Gelfand-Tsetlin basis is the orthonormalization singled out
    by the multiplicity-free chain \autoref{eq:GL_branching_interlacing} --- canonical up to
    phases, with no auxiliary choice of tableau --- and it is precisely this property that the
    algorithms of \autoref{sec:efficient_inner_sequence} rely on, e.g.\ for the positivity of the
    measurement blocks in \autoref{prop:coarse_grained_blocks} and for the isometry property of
    the Pieri inclusions $C_{\mu\leftarrow\lambda}$ of \autoref{subsubsec:CG_series_transform};
    working semistandardly instead would reintroduce exactly the Gramian bookkeeping alluded to
    in the discussion preceding \autoref{thm:bose_de_Finetti}.
\end{remark}

\subsubsection{The Clebsch-Gordan series and transform}\label{subsubsec:CG_series_transform}

Combining the two preceding subsubsections, every block $U_\lambda\otimes V_\lambda$ of \autoref{eq:schur_weyl_decomposition_appendix} carries the orthonormal basis $\lbrace \ket{M}\otimes \ket{T}\rbrace$ with $M \in \operatorname{GT}\left(\lambda,d\right)$ and $T \in \operatorname{SYT}\left(\lambda\right)$, and we call
\begin{align}\label{eq:def_schur_basis}
    \big\lbrace \ket{\lambda, M, T} \ \big\vert \ \lambda \vdash_d n, \ M \in \operatorname{GT}\left(\lambda, d\right), \ T \in \operatorname{SYT}\left(\lambda\right)\big\rbrace
\end{align}
the Schur basis of $\mathcal{H}^{\otimes n}$; the Schur transform $U_{\operatorname{Sch}}$ of \autoref{thm:schur_weyl_duality} is the basis change from the computational basis to \autoref{eq:def_schur_basis}. Its standard construction is recursive and rests on the decomposition of tensor products of Weyl modules. For highest weights $\lambda, \mu$ of polynomial representations of $\operatorname{GL}\left(d\right)$, the Littlewood--Richardson rule gives
\begin{align}\label{eq:littlewood_richardson}
    U_\lambda \otimes U_\mu \cong \bigoplus_{\nu \vdash_d \lvert \lambda\rvert + \lvert \mu\rvert} U_\nu \otimes \mathbb{C}^{c_{\lambda\mu}^{\nu}}
\end{align}
with multiplicities $c_{\lambda\mu}^{\nu}\in \mathbb{N}_0$, the Littlewood--Richardson coefficients \cite[Chap.~5]{Fulton2004}; the right-hand side is referred to as the Clebsch-Gordan series. A unitary $\operatorname{CG}_{\lambda,\mu}$ achieving this decomposition explicitly, i.e.\ satisfying
\begin{align}\label{eq:def_CG_transform}
    \operatorname{CG}_{\lambda,\mu}\left(\phi_\lambda\left(U\right)\otimes \phi_\mu\left(U\right)\right)\operatorname{CG}_{\lambda,\mu}^\dagger = \bigoplus_{\nu \vdash_d \lvert\lambda\rvert + \lvert \mu\rvert} \phi_\nu\left(U\right)\otimes 1_{c_{\lambda\mu}^{\nu}} \qquad \text{for all} \ U \in \mathcal{U}\left(d\right),
\end{align}
is called a Clebsch-Gordan transform, and its matrix entries with respect to Gelfand-Tsetlin bases are the Clebsch-Gordan (or Wigner) coefficients \cite{Biedenharn1968, Vilenkin1992}. For the recursive construction only the special case of Pieri's rule is needed: if $\mu = \left(m\right)$ is a single row, then $c_{\lambda\mu}^\nu = 1$ if $\nu$ is obtained from $\lambda$ by adding $m$ cells with no two in the same column, and $c_{\lambda\mu}^\nu = 0$ otherwise (cf.\ \cite[Thm.~2.9.6]{grinko2025mixed}); dually for a single column and rows. In particular, for the defining representation $\mu = \left(1\right)$, i.e.\ $U_{\left(1\right)}\cong_{\mathcal{U}\left(d\right)} \mathbb{C}^d$, the tensor product decomposes multiplicity-free according to the addable cells,
\begin{align}\label{eq:pieri_single_box}
    U_\mu \otimes \mathbb{C}^d \cong \bigoplus_{\substack{\lambda \in \mu^+ \\ \ell\left(\lambda\right)\leq d}} U_\lambda,
\end{align}
which is exactly the branching used in \autoref{sec:efficient_inner_sequence}: for every $\mu \vdash_d \left(k-1\right)$ and $\lambda \in \mu^+$ with $\ell\left(\lambda\right)\leq d$ there is an isometric inclusion $C_{\mu \leftarrow \lambda}: U_\lambda \hookrightarrow U_\mu \otimes \mathbb{C}^d$, unique up to a phase, and the inclusions for different $\lambda$ have pairwise orthogonal ranges.

The Schur transform is now built by coupling one tensor factor at a time \cite{bacon2005quantumschurtransformi, Bacon2006, harrow2005applicationscoherentclassicalcommunication}: identifying $\mathcal{H}^{\otimes k} \cong \mathcal{H}^{\otimes\left(k-1\right)}\otimes \mathcal{H}$ and applying \autoref{eq:pieri_single_box} blockwise defines, for $k = 2,\ldots,n$, the unitary Clebsch-Gordan step
\begin{align}\label{eq:CG_step_spaces}
    \operatorname{CG}^{\left(k\right)}: \Big(\bigoplus_{\mu \in \hat{\Al}_{k-1}^d} U_\mu \otimes V_\mu\Big)\otimes \mathbb{C}^d \longrightarrow \bigoplus_{\lambda \in \hat{\Al}_{k}^d} U_\lambda \otimes V_\lambda,
\end{align}
acting on the Schur basis of level $k-1$ and a computational basis vector $\ket{x}$, $x \in \left[d\right]$, as
\begin{align}\label{eq:CG_transform_action}
    \operatorname{CG}^{\left(k\right)}\left(\ket{\mu, M, T}\otimes \ket{x}\right) = \sum_{\substack{\lambda\,:\, \mu \rightarrow \lambda}} \ \sum_{\substack{N \in \operatorname{GT}\left(\lambda, d\right)\\ w\left(N\right) = w\left(M\right) + e_x}} c^{x}_{N,M}\, \ket{\lambda, N, T \rightarrow \lambda},
\end{align}
where $\mu \rightarrow \lambda$ ranges over the edges of the truncated Young lattice, $T \rightarrow \lambda$ denotes the path $T$ extended by the vertex $\lambda$, $e_x \in \mathbb{N}_0^d$ is the $x$-th standard basis vector, and the $c^x_{N,M}\in \mathbb{R}$ are the Clebsch-Gordan coefficients of \autoref{subsubsec:CG_coefficients_GT}; this is the specialization of \cite[Eq.~4.13]{grinko2025mixed} to the non-mixed case. The full transform is the cascade
\begin{align}\label{eq:schur_transform_cascade}
    U_{\operatorname{Sch}} = \operatorname{CG}^{\left(n\right)}\left(\operatorname{CG}^{\left(n-1\right)}\otimes 1\right)\cdots \left(\operatorname{CG}^{\left(2\right)}\otimes 1^{\otimes\left(n-2\right)}\right),
\end{align}
cf.\ \cite[Eq.~4.14]{grinko2025mixed}. Each step leaves the Specht registers untouched apart from the path relabeling: for every $\sigma \in S_{k-1}$,
\begin{align}\label{eq:CG_intertwining_Sk}
    \operatorname{CG}^{\left(k\right)}\bigg(\Big(\bigoplus_{\mu \in \hat{\Al}_{k-1}^d} 1_{U_\mu}\otimes \psi_\mu\left(\sigma\right)\Big)\otimes 1_{\mathbb{C}^d}\bigg)\operatorname{CG}^{\left(k\right)\dagger} = \bigoplus_{\lambda \in \hat{\Al}_k^d} 1_{U_\lambda}\otimes \Big(\bigoplus_{\substack{\mu\,:\, \mu\rightarrow \lambda}}\psi_\mu\left(\sigma\right)\Big),
\end{align}
where the inner direct sum is the decomposition of $\operatorname{Res}^{S_k}_{S_{k-1}}V_\lambda$ from \autoref{prop:young_branching_rule}; this is the specialization of \cite[Eq.~4.20]{grinko2025mixed}. Consequently, the cascade implements the Gelfand-Tsetlin construction of \autoref{subsec:gelfand_tsetlin_basis} on both registers simultaneously, and the basis realized on the Specht registers coincides, including phases, with the Young--Yamanouchi basis of \autoref{thm:young_yamanouchi_basis} \cite[Sec.~4.2.2]{grinko2025mixed, harrow2005applicationscoherentclassicalcommunication}. Reading \autoref{eq:CG_transform_action} backwards also identifies the reverse construction used in \autoref{sec:efficient_inner_sequence}: the matrix elements of the isometries $C_{\mu\leftarrow \lambda}$ of \autoref{eq:pieri_single_box} with respect to Gelfand-Tsetlin bases are
\begin{align}\label{eq:def_C_isometry_matrix_elements}
    \left(\bra{M}\otimes \bra{x}\right) C_{\mu \leftarrow \lambda} \ket{N} = c^x_{N,M}, \qquad M \in \operatorname{GT}\left(\mu, d\right), \ x \in \left[d\right], \ N \in \operatorname{GT}\left(\lambda,d\right),
\end{align}
i.e.\ the reverse (dual) Clebsch-Gordan transform requires no data beyond the coefficients $c^x_{N,M}$ themselves.

The following efficiency statements summarize why all of the above is computationally accessible.

\begin{lemma}[{Efficiency of the Schur transform, \cite{bacon2005quantumschurtransformi, Bacon2006, harrow2005applicationscoherentclassicalcommunication} and \cite[Lem.~4.2.1]{grinko2025mixed}}]\label{lem:schur_transform_efficiency}
    The Schur transform $U_{\operatorname{Sch}}$ on $\mathcal{H}^{\otimes n}$ with $d = \dim \mathcal{H}$ can be implemented by a quantum circuit of size polynomial in $n$, $d$ and $\log\left(1/\epsilon\right)$, where $\epsilon$ is the accuracy in operator norm. Moreover, every individual matrix entry $\bra{\lambda, M, T} U_{\operatorname{Sch}}\ket{x_1,\ldots,x_n}$ can be computed classically in time $n^{O\left(d^2\right)}$, i.e.\ in time polynomial in $n$ for fixed $d$.
\end{lemma}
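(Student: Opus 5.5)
The quantum-circuit part I would not reprove: it is the classical construction of \cite{bacon2005quantumschurtransformi, Bacon2006}. The plan is to realize the cascade \autoref{eq:schur_transform_cascade} as $n-1$ Clebsch--Gordan steps $\operatorname{CG}^{\left(k\right)}$. Each step is controlled on the registers $\left(\mu, M\right)$ and leaves the Specht path register untouched apart from appending $\lambda$, by \autoref{eq:CG_intertwining_Sk}. Each $\operatorname{CG}^{\left(k\right)}$ I would decompose further along the $\operatorname{GL}\left(d\right)\supset\operatorname{GL}\left(d-1\right)\supset\cdots$ chain into $d$ reduced Wigner operators. These are unitaries acting on registers of dimension $\operatorname{poly}\left(n,d\right)$, and their entries are the explicit real coefficients $c^x_{N,M}$.

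The key counting fact is that the relevant registers are small. One register holds $\lambda$, at most $\left(n+d\right)^{d-1}$ values by \autoref{lem:bounds_symmetric_group}. Another holds the GT pattern, encoded row by row as $d$ integers each at most $n$. The path $T$ is encoded by its Yamanouchi word, i.e.\ $n$ symbols in $\left[d\right]$. Each reduced Wigner operator is then a controlled unitary on $O\left(\log n + \log d\right)$ qubits. By standard compilation (Solovay--Kitaev plus a union bound over $\operatorname{poly}\left(n,d\right)$ gates), it is approximable to accuracy $\epsilon/\operatorname{poly}\left(n,d\right)$ with $\operatorname{poly}\left(n,d,\log\left(1/\epsilon\right)\right)$ gates. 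Errors add linearly in operator norm along the cascade, which gives the first claim.

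For the classical entry formula, I would expand the cascade. The entry $\bra{\lambda,M,T}U_{\operatorname{Sch}}\ket{x_1,\ldots,x_n}$ is forced along the path $T=\varnothing\to T^1\to\cdots\to T^n=\lambda$, since each step appends a vertex and is otherwise diagonal in $T$. It is therefore a sum over sequences of intermediate GT patterns $M^{\left(k\right)}\in\operatorname{GT}\left(T^k,d\right)$ with weights $w\left(M^{\left(k\right)}\right)=w\left(M^{\left(k-1\right)}\right)+e_{x_k}$ and $M^{\left(n\right)}=M$, of the product $\prod_k c^{x_k}_{M^{\left(k\right)},M^{\left(k-1\right)}}$. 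Rather than enumerate all sequences, I would evaluate this by dynamic programming as a product of $n$ transfer matrices. The state at level $k$ is a pattern in $\operatorname{GT}\left(T^k,d\right)$, at most $m_{T^k}\le\left(n+d\right)^{d\left(d-1\right)/2}$ states by \autoref{lem:bounds_symmetric_group}. Each update costs $\operatorname{poly}$ in this number times the cost of one coefficient, so the total is $n\cdot\left(n+d\right)^{O\left(d^2\right)}=n^{O\left(d^2\right)}$ for $n\ge d$.

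The main obstacle is the cost of a single Clebsch--Gordan coefficient $c^x_{N,M}$. Here I would invoke the multiplicity-free factorization, whose explicit formulas are deferred to \autoref{subsubsec:CG_coefficients_GT}. Each $c^x_{N,M}$ factors into a product over the chain levels $j=x,\ldots,d$ of reduced Wigner coefficients. These are closed-form square roots of rational expressions in the pattern entries, i.e.\ products of $O\left(d\right)$ ratios of linear forms $\left(m_{i,j}-i\right)-\left(m_{l,j'}-l\right)$. Each coefficient is thus computable in $\operatorname{poly}\left(d\right)$ arithmetic operations, with numbers of polynomial bit size, and that closes the bound.
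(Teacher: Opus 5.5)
The paper gives no proof of this lemma; both claims are imported from \cite{bacon2005quantumschurtransformi, Bacon2006, harrow2005applicationscoherentclassicalcommunication} and \cite{grinko2025mixed}. Your classical argument is a sound justification of the $n^{O(d^2)}$ bound, and it uses the same ingredients as \autoref{prop:cascade_data_poly}:
\begin{itemize}
\item the Young path of shapes is fixed by $T$;
\item the weights at each level are forced by $x_1,\ldots,x_k$;
\item the dynamic program carries at most $m_{T^k}\le (n+d)^{d(d-1)/2}$ states per level, each with at most $d!$ non-zero successors by \autoref{eq:CG_support_bound};
\item each coefficient costs $O(d^2)$ operations.
\end{itemize}
Read this count in the arithmetic model with $\sqrt{\cdot}$ at unit cost, as in the paper's footnote. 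Each $c^x_{N,M}$ is $\pm$ the square root of a rational of polynomial bit size, but the dynamic program adds such terms with different radicands. So polynomial bit size of the intermediate values is not automatic, contrary to your closing remark.

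The step that fails is your sketch of the circuit bound. A reduced Wigner operator for $\operatorname{GL}(j)\supset\operatorname{GL}(j-1)$ is not a controlled unitary on $O(\log n+\log d)$ qubits. Its target is only the $O(\log d)$-qubit register recording which row receives the new box. Its entries, however, depend on all entries of two adjacent Gelfand--Tsetlin rows, so it is controlled by $\Theta(j\log n)$ qubits. Likewise, the register holding $\lambda$ does not have dimension $\operatorname{poly}(n,d)$ uniformly in $d$.

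Compiling such a multiplexed unitary generically via Solovay--Kitaev costs on the order of $2^{\Theta(d\log n)}=n^{\Theta(d)}$ gates. That is polynomial only for fixed $d$ and misses the claimed $\operatorname{poly}(n,d,\log(1/\epsilon))$.

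The missing idea, which is what \cite{bacon2005quantumschurtransformi, Bacon2006} actually do, has three steps:
\begin{itemize}
\item compute the reduced Wigner coefficients, i.e.\ the Givens angles, reversibly from the control rows into an ancilla to $O(\log(nd/\epsilon))$ bits;
\item apply $O(d^2)$ two-level rotations on the $d$-dimensional target, controlled bitwise by these angles;
\item uncompute the ancilla.
\end{itemize}
This costs $\operatorname{poly}(d,\log n,\log(1/\epsilon))$ per reduced Wigner operator, and the cascade uses $O(nd)$ of them. Since you cite that construction anyway, the first claim rests on the citation exactly as in the paper, but your sketch should be replaced by this argument.
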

 
 Alternatively, the block structure \autoref{eq:sw_commutant_blocks} of the commutant can be accessed directly: for fixed $d$, the algorithms of \cite{Litjens2016, polak2020new} block-diagonalize $\operatorname{End}_{\mathbb{C}\left[S_n\right]}\left(\mathcal{H}^{\otimes n}\right)\cong \bigoplus_{\lambda\vdash_d n} \mathbb{C}^{m_\lambda \times m_\lambda}$ in time polynomial in $n$, albeit with respect to non-orthonormal representative sets constructed from Young symmetrizers and semistandard tableaux (cf.\ \autoref{rem:GT_vs_semistandard}). The compressions in \autoref{sec:efficient_inner_sequence} instead compute all blocks directly in the Gelfand-Tsetlin bases, recursively along the cascade from the data of \autoref{prop:cascade_data_poly}; cf.\ the proof of \autoref{prop:coarse_grained_blocks}.

\subsubsection{Clebsch-Gordan coefficients via Gelfand-Tsetlin patterns}\label{subsubsec:CG_coefficients_GT}

It remains to discuss the computation of the Clebsch-Gordan coefficients $c^x_{N,M}$ entering \autoref{eq:CG_transform_action} and \autoref{eq:def_C_isometry_matrix_elements}. Explicit closed-form expressions, going back to \cite[Chap.~18]{Vilenkin1992}, are collected in \cite[Sec.~4.A.1]{grinko2025mixed}, \cite[App.~A]{grinko2023gelfandtsetlinbasispartiallytransposed} and \cite[App.~C]{Bastin_2025}; we summarize the structural facts needed in \autoref{sec:efficient_inner_sequence}.

Fix $\mu \vdash_d \left(k-1\right)$, $M \in \operatorname{GT}\left(\mu,d\right)$ and $x \in \left[d\right]$. The coefficient $c^x_{N,M}$ can only be non-zero if the shape $\lambda$ of $N$ satisfies $\lambda \in \mu^+$ with $\ell\left(\lambda\right)\leq d$ and if the weight selection rule $w\left(N\right) = w\left(M\right) + e_x$ of \autoref{eq:CG_transform_action} holds. Combinatorially, every such $N$ is obtained from $M$ by choosing integers $i_x, i_{x+1},\ldots,i_d$ with $1 \leq i_j \leq j$ and incrementing the entry $m_{i_j, j}$ by one in each of the rows $j = x,\ldots,d$, i.e.\ the difference $N - M$ is a triangular $\lbrace 0,1\rbrace$-shift pattern supported in the top $d-x+1$ rows; the interpretation on the level of semistandard tableaux is the insertion of a cell with entry $x$ followed by consecutive bumping \cite[Sec.~4.A.1]{grinko2025mixed}, cf.\ also \cite[App.~C]{Bastin_2025}. In particular, for fixed $M$ and $x$ there are at most
\begin{align}\label{eq:CG_support_bound}
    \prod_{j=x}^{d} j \ \leq \ d!
\end{align}
patterns $N$ with $c^x_{N,M}\neq 0$, a constant for fixed $d$. Each non-zero coefficient factorizes into a product of at most $d - x + 1$ reduced Wigner coefficients, one per row of the shift pattern, and each factor is an explicit square root of a rational function of the shifted pattern entries $\ell_{i,j}\coloneqq m_{i,j} - i$ \cite[Chap.~18]{Vilenkin1992}; in particular all $c^x_{N,M}$ are real, a single coefficient can be evaluated exactly using $O\left(d^2\right)$ arithmetic operations, and unitarity of the Clebsch-Gordan transform yields the orthogonality relation
\begin{align}\label{eq:CG_orthogonality}
    \sum_{x \in \left[d\right]}\ \sum_{\mu \in \lambda^- }\ \sum_{M \in \operatorname{GT}\left(\mu, d\right)} c^{x}_{N,M}\, c^{x}_{N^\prime\!,M} = \delta_{\lambda,\lambda^\prime}\,\delta_{N,N^\prime}, \qquad N \in \operatorname{GT}\left(\lambda,d\right), \ N^\prime \in \operatorname{GT}\left(\lambda^\prime, d\right).
\end{align}
We combine these observations into the complexity statement used in the main text.

\begin{proposition}[Polynomial-time data for the Schur-Weyl cascade]\label{prop:cascade_data_poly}
    Fix $d \in \mathbb{N}$. Given $n \in \mathbb{N}$, the following data can be computed in time polynomial in $n$:
    \begin{enumerate}[label=(\alph*)]
        \item the truncated Young lattice up to level $n$, i.e.\ all vertices $\lambda \vdash_d k$, $k \leq n$, together with their edges, the dimensions $d_\lambda$ and $m_\lambda$, and the Jucys-Murphy edge contents of \autoref{ex:JM_symmetric_group};
        \item all Clebsch-Gordan coefficients $\lbrace c^x_{N,M}\rbrace$ that appear in the steps $\operatorname{CG}^{\left(2\right)},\ldots,\operatorname{CG}^{\left(n\right)}$ of the cascade \autoref{eq:schur_transform_cascade};
        \item explicit matrix representations, of size $m_\mu d \times m_\lambda$ each, of all isometries $C_{\mu\leftarrow \lambda}$ with $\lambda \vdash_d k$, $\mu \in \lambda^-$ and $k \leq n$.
    \end{enumerate}
\end{proposition}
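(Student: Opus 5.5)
The approach is to handle the three items in order, treating $d$ as a constant throughout. All bounds rest on \autoref{lem:bounds_symmetric_group}. The number of vertices at each level is at most $(k+d)^{d-1}$, and each vertex has at most $d$ incoming and at most $d$ outgoing edges by \autoref{prop:young_branching_rule} and \autoref{rem:branching_truncation}. Each Weyl dimension satisfies $m_\lambda\leq (n+d)^{d(d-1)/2}$. Since every object is assembled level by level, it is enough to show that each level costs $\operatorname{poly}(n)$ arithmetic operations and produces numbers of polynomial bit size.

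For (a), I would enumerate the partitions $\lambda\vdash_d k$ for $k=0,\dots,n$ by listing the weakly decreasing $d$-tuples with sum $k$. There are $\operatorname{poly}(n)$ of these. The edges are found by deleting each removable cell, of which there are at most $d$.
\begin{itemize}
\item $m_\lambda$ comes from Weyl's formula \autoref{eq:weyl_dimension_formula}, which is a product of $\binom d2$ rationals.
\item $d_\lambda$ comes either from the hook length formula \autoref{eq:hook_length_formula} or from the recursion $d_\lambda=\sum_{\mu\in\lambda^-}d_\mu$. The recursion avoids factorials and uses only additions of integers with $O(n\log d)$ bits, so it is polynomial even in the bit model.
\item The edge contents are $\operatorname{cont}(\lambda\setminus\mu)=j-i$ for the added cell $(i,j)$, by \autoref{ex:JM_symmetric_group}.
\end{itemize}

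For (b), the cascade step $\operatorname{CG}^{(k)}$ uses only the coefficients $c^x_{N,M}$ with $M\in\operatorname{GT}(\mu,d)$, $\mu\vdash_d(k-1)$, and $x\in[d]$. I would enumerate the patterns $M$: there are at most $\sum_\mu m_\mu=\operatorname{poly}(n)$ of them, generated row by row from the interlacing conditions \autoref{eq:def_interlacing_condition}. For each pair $(M,x)$, list the at most $d!$ shift patterns $N$ from \autoref{eq:CG_support_bound}, discarding any that violate interlacing or have length larger than $d$. Then evaluate each coefficient as the product of at most $d$ reduced Wigner factors, at $O(d^2)$ operations each. Summed over $k\leq n$, the total is $n\cdot\operatorname{poly}(n)\cdot d\cdot d!\cdot O(d^2)=\operatorname{poly}(n)$. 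The output consists of square roots of rationals whose numerators and denominators are polynomials in entries bounded by $n+d$, so their bit size is $O(\log n)$.

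For (c), I would assemble each $C_{\mu\leftarrow\lambda}$ entrywise from \autoref{eq:def_C_isometry_matrix_elements}. Fill the $m_\mu d\times m_\lambda$ matrix with the coefficients computed in (b) and set all other entries to zero. The matrix has $\operatorname{poly}(n)$ entries, and there are $\operatorname{poly}(n)$ pairs $(\mu,\lambda)$.

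The main obstacle is correctness rather than cost. The closed-form coefficients must carry a phase convention consistent with the recursive cascade, so that the matrices are genuinely isometries with pairwise orthogonal ranges expressed in one common Gelfand--Tsetlin basis. I would rely on the Vilenkin--Klimyk/Grinko formulas, which fix real phases. As a check, I would verify the orthogonality relation \autoref{eq:CG_orthogonality}, i.e.\ $C^\dagger_{\mu\leftarrow\lambda}C_{\mu^\prime\leftarrow\lambda^\prime}=\delta\cdot 1$ after summing over $\mu$. That relation follows from the unitarity of the Clebsch--Gordan transform.
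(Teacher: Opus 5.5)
Your proposal is correct and follows the paper's proof essentially step for step. For (a) you use the polynomial vertex count with at most $d$ edges per vertex plus the explicit dimension and content formulas; for (b) you count patterns per shape, times at most $d!$ admissible shifts, times a constant-cost closed-form product of reduced Wigner coefficients; for (c) you read the matrices off from \autoref{eq:def_C_isometry_matrix_elements}. Your recursion for $d_\lambda$ and the bit-size remarks are harmless refinements, and the paper likewise relies on the cited closed forms to settle the phase convention.

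One small precision on your sanity check: unitarity of $\operatorname{CG}^{(k)}$ gives the relation for each fixed $\mu$, namely $C^\dagger_{\mu\leftarrow\lambda}C_{\mu\leftarrow\lambda'}=\delta_{\lambda\lambda'}1_{U_\lambda}$ for $\lambda,\lambda'\in\mu^+$. Summing over $\mu\in\lambda^-$, as \autoref{eq:CG_orthogonality} is literally written, would give $\lvert\lambda^-\rvert$ rather than $1$ on the diagonal.
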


\begin{proof}
    The number of vertices at level $k$ is at most $\left(k+d\right)^{d-1}$ and each has at most $d$ incoming and outgoing edges (\autoref{lem:bounds_symmetric_group} and \autoref{prop:young_branching_rule}), the dimensions are given by the explicit formulas \autoref{eq:hook_length_formula} and \autoref{eq:weyl_dimension_formula}, and the edge contents are cell contents, which proves (a). For (b), the number of patterns per shape is $m_\lambda \leq \left(n+d\right)^{d\left(d-1\right)/2}$ by \autoref{lem:bounds_symmetric_group}, for each pattern and each $x \in \left[d\right]$ there are at most $d!$ non-vanishing coefficients by \autoref{eq:CG_support_bound}, and each is evaluated with $O\left(d^2\right)$ operations from the closed-form product of reduced Wigner coefficients \cite[Chap.~18]{Vilenkin1992}, \cite[App.~C]{Bastin_2025}. Claim (c) follows from (b) via \autoref{eq:def_C_isometry_matrix_elements}.
\end{proof}

\end{document}